\documentclass[journal,twocolumn,nofootinbib]{IEEEtran}
\pdfoutput=1


\usepackage{times}
\usepackage[T1]{fontenc}
\usepackage{microtype}
\usepackage[utf8]{inputenc}
\usepackage{amsmath,amsthm,amsfonts,amsbsy,amssymb}
\usepackage{newtxmath,mathrsfs,bbold,bbm,dsfont}
\usepackage{enumitem}
\usepackage{graphicx,float}
\usepackage[caption=false]{subfig}
\usepackage{array,longtable,booktabs,makecell}
\usepackage{tikz,xcolor}
\usepackage{verbatim,algpseudocode,listings}
\usepackage{csquotes}
\usepackage{hyperref}
\usepackage{lipsum}


\newtheorem{theorem}{Theorem}
\newtheorem{corollary}[theorem]{Corollary}

\newtheorem{lemma}[theorem]{Lemma}
\newtheorem{proposition}[theorem]{Proposition}

\theoremstyle{definition}
\newtheorem{remark}[theorem]{Remark}
\newtheorem{example}[theorem]{Example}


\newcommand{\1}{I}
\newcommand{\de}{\mathrm{d}}
\newcommand{\aff}{\mathrm{aff}}
\newcommand{\itr}{\mathrm{int}}
\newcommand{\supp}{\mathrm{supp}}
\newcommand{\tr}{\mathrm{Tr}}
\DeclareMathOperator*{\argmax}{argmax}
\DeclareMathOperator*{\argmin}{argmin}

\newcommand{\abb}[1]{{\textnormal{#1}}} 
\newcommand{\en}[1]{{\mathscr{#1}}} 
\newcommand{\ch}[1]{{\mathcal{#1}}} 
\newcommand{\cq}[1]{{\hat{#1}}} 
\newcommand{\g}[1]{{\widehat{#1}}} 
\newcommand{\gd}[1]{{\boldsymbol{#1}}} 
\newcommand{\s}[1]{{\mathsf{#1}}} 
\newcommand{\sw}[1]{{\widetilde{#1}}} 
\newcommand{\spa}[1]{{\mathds{#1}}} 

\newcommand{\bra}[1]{\langle#1\rvert}
\newcommand{\ket}[1]{\lvert#1\rangle}
\newcommand{\ip}[2]{\langle#1|#2\rangle}
\newcommand{\op}[2]{\ket{#1}\!\bra{#2}}

\newcommand{\mclose}{\mathclose{}}
\newcommand{\fleft}{\mathopen{}\left}
\newcommand{\fright}{\aftergroup\mclose\right}


\definecolor{darkblue}{rgb}{0,0,0.6}
\definecolor{darkgreen}{rgb}{0,0.45,0.1}
\definecolor{darkred}{rgb}{0.5,0,0}
\hypersetup{
	colorlinks = true,
	citecolor = darkgreen, 
	linkcolor = darkblue, 
	urlcolor  = darkblue, 
	filecolor = darkred 
}


\usetikzlibrary{quantikz2}
\allowdisplaybreaks

\begin{document}


\title{Barycentric Bounds on the Error Exponents of Quantum Hypothesis Exclusion

\author{Kaiyuan Ji, Hemant K. Mishra, Milán Mosonyi, and Mark M. Wilde}
\thanks{Kaiyuan Ji is with the School of Electrical and Computer Engineering, Cornell University, Ithaca, New York 14850, USA (email: kj264@cornell.edu).}
\thanks{Hemant K. Mishra was with the School of Electrical and Computer Engineering, Cornell University, Ithaca, New York 14850, USA.  He is now with the Department of Mathematics and Computing, Indian Institute of Technology (Indian School of Mines) Dhanbad, Jharkhand 826004, India (email: hemantmishra1124@iitism.ac.in).}
\thanks{Milán Mosonyi is with the HUN-REN Alfréd Rényi Institute of Mathematics, Reáltanoda Street 13-15, H-1053, Budapest, Hungary, and also with the Department of Analysis and Operations Research, Institute of Mathematics, Budapest University of Technology and Economics, Műegyetem rkp.\ 3., H-1111 Budapest, Hungary (email: milan.mosonyi@gmail.com).}
\thanks{Mark M. Wilde is with the School of Electrical and Computer Engineering, Cornell University, Ithaca, New York 14850, USA (email: wilde@cornell.edu).}
}

\maketitle



\begin{abstract}
Quantum state exclusion is an operational task with application to ontological interpretations of quantum states.  In such a task, one is given a system whose state is randomly selected from a finite set, and the goal is to identify a state from the set that is \emph{not} the true state of the system.  An error occurs if and only if the state identified is the true state.  In this paper, we study the optimal error probability of quantum state exclusion and its error exponent from an information-theoretic perspective.  Our main finding is a single-letter upper bound on the error exponent of state exclusion given by the multivariate log-Euclidean Chernoff divergence, and we prove that this improves upon the best previously known upper bound.  We also extend our analysis to quantum channel exclusion, and we establish a single-letter and efficiently computable upper bound on its error exponent, admitting the use of adaptive strategies.  We derive both upper bounds, for state and channel exclusion, based on one-shot analysis and formulate them as a type of multivariate divergence measure called a barycentric Chernoff divergence.  Moreover, our result on channel exclusion has implications in two important special cases.  First, when there are two hypotheses, our result provides the first known efficiently computable upper bound on the error exponent of symmetric binary channel discrimination.  Second, when all channels are classical, we show that our upper bound is achievable by a parallel strategy, thus solving the exact error exponent of classical channel exclusion.
\end{abstract}

\begin{IEEEkeywords}
Quantum hypothesis exclusion, error exponents, barycentric Chernoff divergence, divergence radii, multivariate divergence measures, extended sandwiched Rényi divergence, adaptive strategies
\end{IEEEkeywords}



\tableofcontents

\section{Introduction}
\label{sec:introduction}

Hypothesis testing has a fundamental and ubiquitous role in quantum information theory~\cite{wilde2017QuantumInformationTheory,hayashi2017QuantumInformationTheory,watrous2018TheoryQuantumInformation,khatri2024PrinciplesQuantumCommunication}.  In most contexts, it takes the form of a \emph{discrimination task}, in which an experimenter tries to figure out the true hypothesis among a number of candidates.  However, in some other contexts, it may be of interest to consider variations of hypothesis testing in which one is not expected to figure out the true hypothesis completely but is only required to take a meaningful step towards determining the truth.  The most lenient such variation is an \emph{exclusion task} (which may also be referred to as ``hypothesis exclusion''), where the experimenter is merely asked to choose one false hypothesis to rule out.\footnote{Throughout this paper, by ``hypothesis exclusion'' we specifically mean $1$-hypothesis exclusion.  More generally, one can conceive of a task of $m$-hypothesis exclusion, where the experimenter is asked to choose $m$ false hypotheses (instead of just one) to rule out~\cite{bandyopadhyay2014ConclusiveExclusionQuantum}.}  For instance, in quantum state exclusion, the experimenter is given a quantum system and is asked to choose a state from a given tuple $(\rho_1,\rho_2,\dots,\rho_r)$ such that the chosen state is not the true state of the system.  The given set of states is said to be ``perfectly antidistinguishable'' whenever perfect exclusion can be achieved, that is, whenever there exists a measurement such that on observing any of the possible measurement outcomes, the experimenter is able to rule out some state with certainty.  Mathematically, this amounts to the existence of a positive operator-valued measure (POVM) $(\Lambda_1,\Lambda_2,\dots,\Lambda_r)$ such that $\tr[\Lambda_x\rho_x]=0$ for all $x\in\{1,2,\dots,r\}$.

\subsection{Literature review}
\label{sec:literature}

Early treatments of quantum state exclusion (and state antidistinguishability) can be traced back to the study of compatibility of quantum-state assignments~\cite{brun2002HowMuchState,caves2002ConditionsCompatibilityQuantumstate}.  To elaborate, consider a set of quantum states, each representing the belief of a party about a common system.  If the set of states is perfectly antidistinguishable, then we know for sure that at least one party's belief is inconsistent with the reality, regardless of the true state of the system.  This is because, once the measurement that achieves perfect exclusion is applied to the system, by definition, the measurement outcome would always contradict some party's belief in the sense that this party would not believe such an outcome is possible.  This reveals that state antidistinguishability indicates ``incompatibility'' of beliefs.  Understood as a test of belief incompatibility, state exclusion then found applications in the ontological interpretation of quantum states~\cite{pusey2012RealityQuantumState,leifer2014QuantumStateReal,barrett2014NoPsepistemicModel}.  Specifically, it serves as a crucial ingredient in the proof of the seminal Pusey--Barrett--Rudolph theorem~\cite{pusey2012RealityQuantumState}, which, under fairly minimal assumptions, disproves $\psi$-epistemic models of quantum theory and essentially concludes that distinct pure quantum states correspond to distinct objective realities rather than different beliefs about a common reality.  

A remarkable fact revealed by the aforementioned studies about state exclusion is that, in contrast to perfect distinguishability, perfect antidistinguishability does not require orthogonality in any sense as a precondition, even for pure states~\cite{caves2002ConditionsCompatibilityQuantumstate,pusey2012RealityQuantumState}.  This stimulated a number of works focusing on characterizing the conditions for perfect exclusion between pure states~\cite{caves2002ConditionsCompatibilityQuantumstate,bandyopadhyay2014ConclusiveExclusionQuantum,heinosaari2018AntidistinguishabilityPureQuantum,crickmore2020UnambiguousQuantumState,russo2023InnerProductsPure,johnston2025TightBoundsAntidistinguishability}.  It also inspired a spectrum of applications of state exclusion in constructing noncontextuality inequalities~\cite{leifer2020NoncontextualityInequalitiesAntidistinguishability}, proving advantage in communication complexity~\cite{havlicek2020SimpleCommunicationComplexity}, and devising new cryptographic schemes~\cite{collins2014RealizationQuantumDigital,amiri2021Imperfect1outof2Quantum}.  Exclusion between mixed states with finite errors was much less studied in comparison, but in this setting the task is useful in providing operational interpretations for a type of so-called ``weight-based'' resource measures in quantum resource theories~\cite{ducuara2020OperationalInterpretationWeightbased,uola2020AllQuantumResources}.  

The information-theoretic treatment of state exclusion, especially the study of its asymptotic features, was initiated recently~\cite{mishra2024OptimalErrorExponents}.  The authors of Ref.~\cite{mishra2024OptimalErrorExponents} proved that the asymptotic error exponent of classical state exclusion is exactly characterized by the multivariate classical Chernoff divergence and established various lower and upper bounds on the error exponent of general quantum state exclusion, including a single-letter upper bound that is computable via a semidefinite program (SDP).  This related the task of state exclusion with multivariate divergence measures and, in particular, with a recently proposed methodology for constructing such measures, known as a ``barycentric Rényi divergence''~\cite{mosonyi2024GeometricRelativeEntropies}.

\subsection{Main results}
\label{sec:main}

In this paper, we make a number of contributions to the information-theoretic study of imperfect exclusion between general quantum states, as well as exclusion between quantum channels.
\begin{itemize}
	\item For quantum state exclusion, our main finding is a single-letter upper bound on the asymptotic error exponent, given by the multivariate log-Euclidean Chernoff divergence (Theorem~\ref{thm:asymptotic-state}):
	\begin{align}
		\label{eq:introduction-1}
		&\limsup_{n\to\infty}-\frac{1}{n}\ln P_\abb{err}\fleft(\en{E}^n\fright) \notag\\
		&\leq C^\flat\fleft(\rho_{[r]}\fright)=\sup_{s_{[r]}\in\s{P}_r}\inf_{\tau\in\s{D}_A}\sum_{x\in[r]}s_xD\fleft(\tau\middle\|\rho_x\fright),
	\end{align}
	where $\en{E}^n$ denotes an $n$-fold ensemble of a tuple of states $\rho_{[r]}\equiv(\rho_1,\rho_2,\dots,\rho_r)$.  The quantity on the right-hand side of \eqref{eq:introduction-1} is termed the barycentric Chernoff divergence of $\rho_{[r]}$ based on the Umegaki divergence $D$, with the supremum over probability distributions and the infimum over states.  This upper bound is tight for classical states, and we further show that it improves upon the aforementioned SDP upper bound in Ref.~\cite{mishra2024OptimalErrorExponents} for general quantum states.
	\item For quantum channel exclusion, we establish a single-letter upper bound on the asymptotic error exponent, even assuming that adaptive strategies are allowed (Theorem~\ref{thm:asymptotic-channel}):
	\begin{align}
		\label{eq:introduction-2}
		&\limsup_{n\to\infty}-\frac{1}{n}\ln P_\abb{err}\fleft(n;\en{N}\fright) \notag\\
		&\leq\sup_{s_{[r]}\in\s{P}_r}\inf_{\ch{T}\in\s{C}_{A\to B}}\sum_{x\in[r]}s_x\g{D}\fleft(\ch{T}\middle\|\ch{N}_x\fright),
	\end{align}
	where $\en{N}$ denotes an ensemble of a tuple of channels $\ch{N}_{[r]}\equiv(\ch{N}_1,\ch{N}_2,\dots,\ch{N}_r)$.  The quantity on the right-hand side of \eqref{eq:introduction-2} is termed the barycentric Chernoff channel divergence of $\ch{N}_{[r]}$ based on the Belavkin--Staszewski divergence $\g{D}$, with the infimum over channels.
	\item We show that for classical channels, the upper bound in \eqref{eq:introduction-2} is achievable by a parallel strategy and thus characterizes the exact asymptotic error exponent of classical channel exclusion (Theorem~\ref{thm:asymptotic-channel-classical}):
	\begin{align}
		\label{eq:introduction-3}
		\lim_{n\to\infty}-\frac{1}{n}\ln P_\abb{err}\fleft(n;\en{P}\fright)&=\max_{y\in[d_Y]}C\fleft(\varrho_{y,[r]}\fright),
	\end{align}
	where $\en{P}$ denotes an ensemble of a tuple of classical channels, $\varrho_{y,[r]}\equiv(\varrho_{y,1},\varrho_{y,2},\dots,\varrho_{y,r})$ denotes the tuple of classical output states of the channels on the classical input $y$, and $C$ is the multivariate classical Chernoff divergence.
	\item Given that channel exclusion encompasses symmetric binary channel discrimination~\cite{harrow2010AdaptiveNonadaptiveStrategies,wilde2020AmortizedChannelDivergence} as a special case, our upper bound in \eqref{eq:introduction-2} for general quantum channels also provides the first known efficiently computable upper bound for symmetric binary channel discrimination.  Furthermore, our characterization of the error exponent of classical channel exclusion in \eqref{eq:introduction-3} generalizes Hayashi's result on symmetric binary classical channel discrimination~\cite{hayashi2009DiscriminationTwoChannels}.
\end{itemize}
Our upper bounds in \eqref{eq:introduction-1} and \eqref{eq:introduction-2} are both derived based on one-shot analysis, and they both have the desirable features of being single-letter and efficiently computable.

A distinctive aspect of our approach is the substantial use of an \emph{extended} notion of divergence measures, as considered previously in the context of entanglement theory \cite{wang2020AlogarithmicNegativity}.  For an extended divergence, we do not require its first argument to be a quantum state but allow it more generally to be an affine combination of states or even a general Hermitian operator.  The use of extended divergences is essential to our derivation of the new bounds, as well as a comparison with old ones, in both the nonasymptotic and asymptotic regimes of quantum state exclusion.  Since the need for extended divergences arises only in the quantum setting (and is absent in the classical setting), it marks a key distinction between the classical and quantum theories of information.

\begin{table*}[t]
\centering
\caption{Summary of notation and corresponding definitions}
\label{tab:notation}
\renewcommand{\arraystretch}{1.2}
\begin{tabular}
[c]{c|l|l}
\toprule
Symbol & Meaning & Definition \\\midrule
$[r]$ & set of $r$ smallest, distinct positive integers & $\{1,2,\dots,r\}$ \\
$\gamma_{[r]}$ & a tuple of entities with indices from $[r]$ & $(\gamma_1,\gamma_2,\dots,\gamma_r)$ \\
$\s{S}^{[r]}$ & set of tuples whose entities belong to a set $\s{S}$ & $\{\gamma_{[r]}\colon\gamma_x\in\s{S}\;\forall x\in[r]\}$ \\\midrule
$\spa{H}_A$ & Hilbert space associated with a quantum system $A$ & \\
$d_A$ & dimension of $A$ & $\dim(\spa{H}_A)$ \\
$\spa{B}_A$ & space of bounded operators acting on $\spa{H}_A$ & \\
$\s{Herm}_A$ & space of Hermitian operators in $\spa{B}_A$ & $\{\gamma\in\spa{B}_A\colon\gamma=\gamma^\dagger\}$ \\
$\s{PSD}_A$ & space of positive semidefinite operators in $\spa{B}_A$ & $\{\gamma\in\s{Herm}_A\colon\bra{\psi}\gamma\ket{\psi}\in[0,\infty)\;\forall\ket{\psi}\in\spa{H}_A\}$ \\
$\s{D}_A$ & set of quantum states of $A$ & $\{\rho\in\s{PSD}_A\colon\tr[\rho]=1\}$ \\
$\aff(\s{D}_A)$ & set of unit-trace Hermitian operators in $\spa{B}_A$ & $\{\tau\in\s{Herm}_A\colon\tr[\tau]=1\}$ \\
$\s{CP}_{A\to B}$ & set of completely positive maps from $\spa{B}_A$ to $\spa{B}_B$ & $\{\ch{M}\colon\spa{B}_A\to\spa{B}_B,\;\ch{M}\text{ is completely positive}\}$ \\
$\s{C}_{A\to B}$ & set of quantum channels from $A$ to $B$ & $\{\ch{N}\in\s{CP}_{A\to B},\;\tr\circ\ch{N}=\tr\}$ \\
$\s{M}_{A,r}$ & set of POVMs on $A$ with $r$ outcomes & $\{\Lambda_{[r]}\colon\Lambda_x\in\s{PSD}_A\;\forall x\in[r],\;\sum_{x\in[r]}\Lambda_x=\1\}$ \\
$\s{P}_r$ & set of probability distributions over $[r]$ & $\{s_{[r]}\colon s_x\in[0,1]\;\forall x\in[r],\;\sum_{x\in[r]}s_x=1\}$ \\
$\itr(\s{P}_r)$ & set of probability distributions whose support is $[r]$ & $\{p_{[r]}\in\s{P}_r\colon p_x\in(0,1)\;\forall x\in[r]\}$ 
\\\bottomrule
\end{tabular}
\end{table*}

\subsection{Outline of the paper}
\label{sec:outline}

The rest of our paper is outlined as follows.  In Section~\ref{sec:preliminaries}, we begin by establishing notation (Section~\ref{sec:notation}), followed by a review of the definitions and properties of various divergence measures for states (Section~\ref{sec:divergence-state}) and channels (Section~\ref{sec:divergence-channel}).  We also discuss a class of multivariate divergence measures known as left divergence radii in Section~\ref{sec:radius}.  In Section~\ref{sec:extension}, we systematically investigate the mathematical properties of extensions of the sandwiched Rényi divergence and the hypothesis-testing divergence, as a technical preparation for applications in subsequent sections.  Our study of quantum state exclusion is presented in Section~\ref{sec:exclusion-state}.  Section~\ref{sec:setting-state} explains the setting of the task in detail and formally defines its error probability and error exponent.  Sections~\ref{sec:oneshot-state} and \ref{sec:asymptotic-state} contain the one-shot and asymptotic analyses, respectively, which lead to our log-Euclidean upper bound on the error exponent of state discrimination.  In Section~\ref{sec:exclusion-channel}, we extend our study to  quantum channel exclusion.  The setting of the task is explained in Section~\ref{sec:setting-channel}, and the pertaining results are presented in Sections~\ref{sec:nonasymptotic-channel} and \ref{sec:asymptotic-channel}.  In Section~\ref{sec:conclusion}, we summarize our findings (Section~\ref{sec:summary}) and discuss avenues of future research (Section~\ref{sec:future}).  Appendices~\ref{app:divergence}--\ref{app:indefinite} complement the main text with proofs of technical lemmas and additional discussions.

\section{Preliminaries}
\label{sec:preliminaries}

\subsection{Notation}
\label{sec:notation}

We provide a summary of notation and corresponding mathematical definitions in Table~\ref{tab:notation}.

Let $[r]\equiv\{1,2,\dots,r\}$ denote the set of the $r$ smallest distinct positive integers.  Throughout the paper, we use $\gamma_{[r]}\equiv(\gamma_1,\gamma_2,\dots,\gamma_r)$ to denote a tuple of entities with indices from $[r]$, regardless of the nature of these entities.  We also use $\s{S}^{[r]}\equiv\{\gamma_{[r]}\colon\gamma_x\in\s{S}\;\forall x\in[r]\}$ to denote the set of tuples each of whose entities belongs to a set $\s{S}$.

Let $\spa{H}_A$ denote the (finite-dimensional) Hilbert space associated with a quantum system $A$, and let $d_A\equiv\dim(\spa{H}_A)$ denote the dimension of $A$.  Let $\spa{B}_A$ denote the space of bounded operators acting on $\spa{H}_A$.  Let $\s{Herm}_A$ and $\s{PSD}_A$ denote the set of Hermitian operators and the set of positive semidefinite operators in $\spa{B}_A$, respectively.  An operator inequality $\gamma\geq\sigma$ regarding $\gamma,\sigma\in\s{Herm}_A$ should be understood as $\gamma-\sigma\in\s{PSD}_A$.  For a Hermitian operator $\gamma\in\s{Herm}_A$, let $\supp(\gamma)$ denote its support, and let $\gamma^0\in\s{PSD}_A$ denote the projector onto $\supp(\gamma)$.  For a tuple of Hermitian operators $\gamma_{[r]}\in\s{Herm}_A^{[r]}$, let $\bigwedge_{x\in[r]}\gamma_x^0\in\s{PSD}_A$ denote the projector onto $\bigcap_{x\in[r]}\supp(\gamma_x)$.  For a positive semidefinite operator $\sigma\in\s{PSD}_A$, the negative power $\sigma^{-\alpha}$ for $\alpha\in(0,\infty)$ and the logarithm $\ln\sigma$ should both be understood as taken on $\supp(\sigma)$.  

A quantum state is a unit-trace positive semidefinite operator.  Let $\s{D}_A$ denote the set of quantum states of a system $A$.  Let $\aff(\s{D}_A)$ denote the set of unit-trace Hermitian operators in $\spa{B}_A$, which is also the affine hull of $\s{D}_A$.

A quantum channel is a completely positive trace-preserving (CPTP) linear map.  Let $\s{CP}_{A\to B}$ and $\s{C}_{A\to B}$ denote the set of completely positive linear maps from $\spa{B}_A$ to $\spa{B}_B$ and the set of quantum channels from $A$ to $B$, respectively.  Every quantum channel realizes a transformation between quantum states; that is, for every $\ch{N}\in\s{C}_{A\to B}$ and $\rho\in\s{D}_{RA}$, we always have that $\ch{N}_{A\to B}[\rho_{RA}]\in\s{D}_{RB}$, where $RA$ is the composite system identified by the Hilbert space $\spa{H}_{RA}\equiv\spa{H}_R\otimes\spa{H}_A$ and similarly for $RB$.  Let $J_\ch{M}\equiv\sum_{i,j\in[d_A]}\op{i}{j}_R\otimes\ch{M}_{A\to B}[\op{i}{j}_A]\in\s{PSD}_{RB}$ denote the Choi operator of a completely positive map $\ch{M}\in\s{CP}_{A\to B}$, where $R$ is a system such that $d_R=d_A$.

A POVM is a tuple of positive semidefinite operators summing to the identity operator.  Let $\s{M}_{A,r}$ denote the set of POVMs on a system $A$ with $r$ possible outcomes.  According to the Born rule, when applying a POVM $\Lambda_{[r]}\in\s{M}_{A,r}$ to a state $\rho\in\s{D}_A$, the probability of obtaining the $x$th outcome is given by $\tr[\Lambda_x\rho]$ for each $x\in[r]$.  Every POVM $\Lambda_{[r]}$ can be understood as representing a quantum-to-classical channel:
\begin{align}
	\ch{M}\in\s{C}_{A\to X}&\colon\rho\mapsto\sum_{x\in[r]}\tr\fleft[\Lambda_x\rho\fright]\op{x}{x}.
\end{align}

Let $\s{P}_r$ denote the set of probability distributions over $[r]$.  Let $\itr(\s{P}_r)$ denote the set of probability distributions whose support is $[r]$, which is also the interior of $\s{P}_r$.

\subsection{Bivariate divergence measures for states}
\label{sec:divergence-state}

We provide a glossary of divergence measures in Table~\ref{tab:divergence}.

\begin{table*}[t]
\centering
\caption{Glossary of divergence measures}
\label{tab:divergence}
\renewcommand{\arraystretch}{1.2}
\begin{tabular}
[c]{c|l|l}
\toprule
Symbol & Meaning & Definition \\\midrule
$\gd{D}(\rho\|\sigma)$ & generalized divergence & Eqs.~\eqref{eq:divergence} and \eqref{eq:DPI} for $\rho\in\s{D}_A$, $\sigma\in\s{PSD}_A$ \\
$\sw{D}_\alpha(\rho\|\sigma)$ & sandwiched Rényi divergence & Eq.~\eqref{eq:sandwiched} for $\alpha\in(1,\infty)$, $\rho\in\s{D}_A$, $\sigma\in\s{PSD}_A$ \\
$D(\rho\|\sigma)$ & Umegaki divergence & Eq.~\eqref{eq:umegaki} for $\rho\in\s{D}_A$, $\sigma\in\s{PSD}_A$ \\
$\g{D}_\alpha(\rho\|\sigma)$ & geometric Rényi divergence & Eq.~\eqref{eq:geometric} for $\alpha\in(1,2]$, $\rho\in\s{D}_A$, $\sigma\in\s{PSD}_A$ \\
$\g{D}(\rho\|\sigma)$ & Belavkin--Staszewski divergence & Eq.~\eqref{eq:belavkin} for $\rho\in\s{D}_A$, $\sigma\in\s{PSD}_A$ \\
$\sw{D}_\alpha(\gamma\|\sigma)$ & extended sandwiched Rényi divergence & Eq.~\eqref{eq:sandwiched-extended} for $\alpha\in(1,\infty)$, $\gamma\in\s{Herm}_A$, $\sigma\in\s{PSD}_A$ \\
$D_{\max}(\gamma\|\sigma)$ & extended max-divergence & Eq.~\eqref{eq:max-extended} for $\gamma\in\s{Herm}_A$, $\sigma\in\s{PSD}_A$ \\
$D_\abb{H}^\varepsilon(\tau\|\sigma)$ & extended hypothesis-testing divergence & Eq.~\eqref{eq:hypothesis-extended} for $\varepsilon\in[0,1]$, $\tau\in\aff(\s{D}_A)$, $\sigma\in\s{PSD}_A$ \\
$D_\alpha(\rho\|\sigma)$ & Petz--Rényi (pseudo-)divergence & Eq.~\eqref{eq:petz} for $\alpha\in(0,1)\cup(1,\infty)$, $\rho\in\s{D}_A$, $\sigma\in\s{PSD}_A$ \\
$\gd{D}(\ch{N}\|\ch{M})$ & channel $\gd{D}$-divergence & Eq.~\eqref{eq:divergence-channel} for $\ch{N}\in\s{C}_{A\to B}$, $\ch{M}\in\s{CP}_{A\to B}$ \\
$\g{D}_\alpha(\ch{N}\|\ch{M})$ & geometric Rényi channel divergence & Eq.~\eqref{eq:geometric-channel} for $\alpha\in(1,2]$, $\ch{N}\in\s{C}_{A\to B}$, $\ch{M}\in\s{CP}_{A\to B}$ \\
$\g{D}(\ch{N}\|\ch{M})$ & Belavkin--Staszewski channel divergence & Eq.~\eqref{eq:belavkin-channel} for $\ch{N}\in\s{C}_{A\to B}$, $\ch{M}\in\s{CP}_{A\to B}$ \\
$R^\gd{D}(\rho_{[r]})$ & left $\gd{D}$-radius & Eq.~\eqref{eq:radius-1} for $\rho_{[r]}\in\s{PSD}_A^{[r]}$ \\
$C(\varrho_{[r]})$ & multivariate classical Chernoff divergence & Eq.~\eqref{eq:chernoff-classical} for $\varrho_{[r]}\in\s{PSD}_Y^{[r]}$ \\
$C^\flat(\rho_{[r]})$ & multivariate log-Euclidean Chernoff divergence & Eq.~\eqref{eq:euclidean} for $\rho_{[r]}\in\s{PSD}_A^{[r]}$ \\
$R^\gd{D}(\ch{N}_{[r]})$ & left channel $\gd{D}$-radius & Eq.~\eqref{eq:radius-channel-1} for $\ch{N}_{[r]}\in\s{CP}_{A\to B}^{[r]}$
\\\bottomrule
\end{tabular}
\end{table*}

Divergence measures are functions that quantify the dissimilarity of states, or more generally, positive semidefinite operators acting on the same Hilbert space.  Consider a bivariate function
\begin{align}
	\label{eq:divergence}
	\gd{D}\colon\bigcup_{A}\left(\s{D}_A\times\s{PSD}_A\right)\to\spa{R}\cup\{\infty\}
\end{align}
on a pair consisting of a state and a positive semidefinite operator, where the union is over all quantum systems.  The function $\gd{D}$ is called a \emph{(bivariate) generalized divergence} whenever it obeys the \emph{data-processing inequality (DPI)}~\cite{polyanskiy2010ArimotoChannelCoding,sharma2012StrongConversesQuantum,wilde2014StrongConverseClassical}:
\begin{align}
	\label{eq:DPI}
	&\gd{D}\fleft(\ch{N}\fleft[\rho\fright]\middle\|\ch{N}\fleft[\sigma\fright]\fright)\leq\gd{D}\fleft(\rho\middle\|\sigma\fright) \notag\\
	&\qquad\forall\ch{N}\in\s{C}_{A\to B},\;\rho\in\s{D}_A,\;\sigma\in\s{PSD}_A,
\end{align}
that is, whenever its value does not increase as both of its arguments are acted upon by the same channel.  A generalized divergence $\gd{D}$ is said to be \emph{additive} whenever
\begin{align}
	\label{eq:additivity}
	&\gd{D}\fleft(\rho\otimes\rho'\middle\|\sigma\otimes\sigma'\fright)=\gd{D}\fleft(\rho\middle\|\sigma\fright)+\gd{D}\fleft(\rho'\middle\|\sigma'\fright) \notag\\
	&\qquad\forall\rho\in\s{D}_A,\;\rho'\in\s{D}_B,\;\sigma\in\s{PSD}_A,\;\sigma'\in\s{PSD}_B.
\end{align}
It is said to be \emph{jointly quasiconvex} whenever
\begin{align}
	\label{eq:quasiconvexity}
	&\gd{D}\fleft(\sum_{x\in[r]}p_x\rho_x\middle\|\sum_{x\in[r]}p_x\sigma_x\fright)\leq\max_{x\in[r]}\gd{D}\fleft(\rho_x\middle\|\sigma_x\fright) \notag\\
	&\qquad\forall\rho_{[r]}\in\s{D}_A^{[r]},\;\sigma_{[r]}\in\s{PSD}_A^{[r]},\;p_{[r]}\in\s{P}_r.
\end{align}
It is said to be \emph{monotonically nonincreasing in its second argument} whenever
\begin{align}
	\label{eq:antimonotonicity}
	\gd{D}\fleft(\rho\middle\|\sigma+\sigma'\fright)&\leq\gd{D}\fleft(\rho\middle\|\sigma\fright)\quad\forall\rho\in\s{D}_A,\;\sigma,\sigma'\in\s{PSD}_A.
\end{align}
It is said to be \emph{regular in its second argument} whenever
\begin{align}
	\label{eq:regularity}
	\gd{D}\fleft(\rho\middle\|\sigma\fright)&=\lim_{\varepsilon\searrow0}\gd{D}\fleft(\rho\middle\|\sigma+\varepsilon\1\fright)\quad\forall\rho\in\s{D}_A,\;\sigma\in\s{PSD}_A.
\end{align}

In what follows, we review the definitions and properties of several generalized divergences that are relevant to this work.  Let $\rho\in\s{D}_A$ be a state, and let $\sigma\in\s{PSD}_A$ be a positive semidefinite operator.  For $\alpha\in(1,\infty)$, the \emph{sandwiched Rényi divergence} is defined as~\cite{muller-lennert2013QuantumRenyiEntropies,wilde2014StrongConverseClassical}
\begin{align}
	\label{eq:sandwiched}
	\sw{D}_\alpha\fleft(\rho\middle\|\sigma\fright)&\coloneq\begin{cases}
		\frac{1}{\alpha-1}\ln\left\lVert\sigma^\frac{1-\alpha}{2\alpha}\rho\sigma^\frac{1-\alpha}{2\alpha}\right\rVert_\alpha^\alpha&\text{if }\rho^0\leq\sigma^0, \\
		\infty&\text{otherwise},
	\end{cases}
\end{align}
where
\begin{align}
	\label{eq:norm}
	\left\lVert\gamma\right\rVert_\alpha&\coloneq\left(\tr\fleft[\left(\gamma^\dagger\gamma\right)^\frac{\alpha}{2}\fright]\right)^\frac{1}{\alpha}
\end{align}
denotes the $\alpha$-norm of an operator $\gamma\in\spa{B}_A$ for $\alpha\in[1,\infty)$.  Apart from obeying the DPI, the sandwiched Rényi divergence is additive, jointly quasiconvex, monotonically nonincreasing and regular in its second argument, and monotonically nondecreasing in $\alpha$~\cite{muller-lennert2013QuantumRenyiEntropies,wilde2014StrongConverseClassical,beigi2013SandwichedRenyiDivergence,frank2013MonotonicityRelativeRenyi,mosonyi2015QuantumHypothesisTesting} (see Refs.~\cite{tomamichel2016QuantumInformationProcessing,khatri2024PrinciplesQuantumCommunication} for textbooks on the topic).  We defer the discussion of these properties to Theorem~\ref{thm:sandwiched-extended} of Section~\ref{sec:extension}, where we introduce an extended definition of the sandwiched Rényi divergence whose first argument can be a general Hermitian operator.  As shown in Refs.~\cite{muller-lennert2013QuantumRenyiEntropies,wilde2014StrongConverseClassical}, the limit of the sandwiched Rényi divergence as $\alpha\searrow1$ is given by the \emph{Umegaki divergence} (also known as the \emph{quantum relative entropy})~\cite{umegaki1962ConditionalExpectationOperator}, which is defined as
\begin{align}
	D\fleft(\rho\middle\|\sigma\fright)&\coloneq\begin{cases}
		\tr\fleft[\rho\left(\ln\rho-\ln\sigma\right)\fright]&\text{if }\rho^0\leq\sigma^0, \\
		\infty&\text{otherwise},
	\end{cases} \label{eq:umegaki}\\
	&=\lim_{\alpha\searrow1}\sw{D}_\alpha\fleft(\rho\middle\|\sigma\fright). \label{eq:umegaki-limit}
\end{align}
Another useful generalized divergence is the \emph{geometric Rényi divergence}~\cite{matsumoto2018NewQuantumVersion}, which for $\alpha\in(1,2]$ is defined as
\begin{align}
	\label{eq:geometric}
	\g{D}_\alpha\fleft(\rho\middle\|\sigma\fright)&\coloneq\begin{cases}
		\frac{1}{\alpha-1}\ln\tr\fleft[\sigma\left(\sigma^{-\frac{1}{2}}\rho\sigma^{-\frac{1}{2}}\right)^\alpha\fright]&\text{if }\rho^0\leq\sigma^0, \\
		\infty&\text{otherwise}.
	\end{cases}
\end{align}
Apart from obeying the DPI, the geometric Rényi divergence is additive, jointly quasiconvex, monotonically nonincreasing and regular in its second argument~\cite{petz1998ContractionGeneralizedRelative,matsumoto2018NewQuantumVersion,fang2021GeometricRenyiDivergence,katariya2021GeometricDistinguishabilityMeasures,khatri2024PrinciplesQuantumCommunication}.  Moreover, it is monotonically nondecreasing in $\alpha$~\cite[Proposition~72.2]{katariya2021GeometricDistinguishabilityMeasures}: for all $\alpha,\beta\in(1,2]$ such that $\alpha\leq\beta$,
\begin{align}
	\label{eq:monotonicity-geometric}
	\g{D}_\alpha\fleft(\rho\middle\|\sigma\fright)&\leq\g{D}_{\beta}\fleft(\rho\|\sigma\fright)\quad\forall\rho\in\s{D}_A,\;\sigma\in\s{PSD}_A.
\end{align}
The geometric Rényi divergence provides an upper bound on the sandwiched Rényi divergence~\cite{tomamichel2016QuantumInformationProcessing,matsumoto2018NewQuantumVersion}: for all $\alpha\in(1,2]$,
\begin{align}
	\label{eq:order}
	\sw{D}_\alpha\fleft(\rho\middle\|\sigma\fright)&\leq\g{D}_\alpha\fleft(\rho\middle\|\sigma\fright)\quad\forall\rho\in\s{D}_A,\;\sigma\in\s{PSD}_A.
\end{align}
When defined on classical states, the above inequality becomes an equality, and both sides reduce to the classical Rényi divergence.  As shown in Ref.~\cite[Proposition~79]{katariya2021GeometricDistinguishabilityMeasures}, the limit of the geometric Rényi divergence as $\alpha\searrow1$ is given by the \emph{Belavkin--Staszewski divergence}~\cite{belavkin1982$C^$algebraicGeneralizationRelative}, which is defined as
\begin{align}
	\label{eq:belavkin}
	\g{D}\fleft(\rho\middle\|\sigma\fright)&\coloneq\begin{cases}
		\tr\fleft[\rho\ln\left(\rho^\frac{1}{2}\sigma^{-1}\rho^\frac{1}{2}\right)\fright]&\text{if }\rho^0\leq\sigma^0, \\
		\infty&\text{otherwise},
	\end{cases} \\
	&=\lim_{\alpha\searrow1}\g{D}_\alpha\fleft(\rho\middle\|\sigma\fright).
\end{align}
When defined on classical states, the Belavkin--Staszewski divergence and the Umegaki divergence are equal, both reducing to the Kullback--Leibler divergence (also known as the classical relative entropy).

\subsection{Bivariate divergence measures for channels}
\label{sec:divergence-channel}

Divergence measures can also be used to quantify the dissimilarity of channels, based on their definitions for states.  Let $\ch{N}\in\s{C}_{A\to B}$ be a channel, and let $\ch{M}\in\s{CP}_{A\to B}$ be a completely positive map.  For a (bivariate) generalized divergence $\gd{D}$, the \emph{channel $\gd{D}$-divergence} is defined as~\cite{leditzky2018ApproachesApproximateAdditivity}
\begin{align}
	\label{eq:divergence-channel}
	\gd{D}\fleft(\ch{N}\middle\|\ch{M}\fright)&\coloneq\sup_{\rho\in\s{D}_{RA}}\gd{D}\fleft(\ch{N}_{A\to B}\fleft[\rho_{RA}\fright]\middle\|\ch{M}_{A\to B}\fleft[\rho_{RA}\fright]\fright),
\end{align}
where $R$ is a system with $d_R$ allowed to be arbitrarily large.  As shown in Ref.~\cite{leditzky2018ApproachesApproximateAdditivity}, the supremum in \eqref{eq:divergence-channel} can equivalently be taken over pure states such that $d_R=d_A$.  A channel $\gd{D}$-divergence is said to be \emph{monotonically nonincreasing in its second argument} whenever
\begin{align}
	\label{eq:antimonotonicity-channel}
	&\gd{D}\fleft(\ch{N}\middle\|\ch{M}+\ch{M}'\fright)\leq\gd{D}\fleft(\ch{N}\middle\|\ch{M}\fright) \notag\\
	&\qquad\forall\ch{N}\in\s{C}_{A\to B},\;\ch{M},\ch{M}'\in\s{CP}_{A\to B}.
\end{align}
It is said to be \emph{regular in its second argument} whenever
\begin{align}
	\label{eq:regularity-channel}
	&\gd{D}\fleft(\ch{N}\middle\|\ch{M}\fright)=\lim_{\varepsilon\searrow0}\gd{D}\fleft(\ch{N}\middle\|\ch{M}+\varepsilon\ch{I}\fright) \notag\\
	&\qquad\forall\ch{N}\in\s{C}_{A\to B},\;\ch{M}\in\s{CP}_{A\to B},
\end{align}
where $\ch{I}\in\s{CP}_{A\to B}\colon\rho\mapsto\tr[\rho]\1$ denotes the unnormalized completely depolarizing map.  The following lemma provides some basic properties which, if possessed by a bivariate generalized divergence $\gd{D}$, are inherited by the channel $\gd{D}$-divergence.

\begin{lemma}[Inheritance of properties by the channel $\gd{D}$-divergence]
\label{lem:inheritance}
Let $\gd{D}$ be a bivariate generalized divergence.  If $\gd{D}$ has any of the following properties, then the channel $\gd{D}$-divergence has the same property: \textnormal{(i)} quasiconvexity in either of its arguments or jointly in both arguments; \textnormal{(ii)} lower semicontinuity in either of its arguments or jointly in both arguments; \textnormal{(iii)} nonincreasing monotonicity in its second argument; and \textnormal{(iv)} nonincreasing monotonicity and regularity in its second argument.
\end{lemma}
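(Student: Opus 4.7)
The plan is to prove each of the four properties by unpacking the definition \eqref{eq:divergence-channel} of the channel $\gd{D}$-divergence as a supremum over bipartite input states and then exchanging the supremum with the appropriate operation on the state-level divergence. All four parts are essentially routine manipulations of supremum, so I will focus on the key inequalities.

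For (i), I would fix any convex combination $(p_{[r]},\ch{N}_{[r]})$ (and $(\ch{M}_{[r]})$ for the joint case). For every $\rho\in\s{D}_{RA}$, linearity of CP maps gives $(\sum_x p_x\ch{N}_x)[\rho]=\sum_x p_x\ch{N}_x[\rho]$, and analogously for $\ch{M}_x$, so joint quasiconvexity of $\gd{D}$ at the state level yields $\gd{D}(\sum_x p_x\ch{N}_x[\rho]\|\sum_x p_x\ch{M}_x[\rho])\leq\max_x\gd{D}(\ch{N}_x[\rho]\|\ch{M}_x[\rho])\leq\max_x\gd{D}(\ch{N}_x\|\ch{M}_x)$; taking the supremum over $\rho$ on the left gives the claim. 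The one-sided versions are handled by fixing the non-convex argument and repeating the same argument. For (ii), I would take a sequence $\ch{N}_n\to\ch{N}$ (and/or $\ch{M}_n\to\ch{M}$), note that for fixed $\rho$ the images $\ch{N}_n[\rho],\ch{M}_n[\rho]$ converge, and apply the state-level lower semicontinuity pointwise in $\rho$, combined with $\liminf_n\gd{D}(\ch{N}_n[\rho]\|\ch{M}_n[\rho])\leq\liminf_n\gd{D}(\ch{N}_n\|\ch{M}_n)$; then pass to the supremum over $\rho$.

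For (iii), I would use that for every $\rho\in\s{D}_{RA}$ the operator $(\ch{M}+\ch{M}')[\rho]=\ch{M}[\rho]+\ch{M}'[\rho]$ is of the form $\sigma+\sigma'$ with $\sigma,\sigma'\in\s{PSD}_{RB}$, so nonincreasing monotonicity of $\gd{D}$ in the second argument gives $\gd{D}(\ch{N}[\rho]\|(\ch{M}+\ch{M}')[\rho])\leq\gd{D}(\ch{N}[\rho]\|\ch{M}[\rho])$, and taking the supremum over $\rho$ finishes the claim. For (iv), monotonicity just proved in (iii) (applied to $\ch{M}'=\varepsilon\ch{I}$) gives one direction: $\lim_{\varepsilon\searrow0}\gd{D}(\ch{N}\|\ch{M}+\varepsilon\ch{I})\leq\gd{D}(\ch{N}\|\ch{M})$.

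The main obstacle is the converse direction in (iv), where one has to be careful because $\ch{I}_{A\to B}[\rho_{RA}]=\rho_R\otimes\1_B$, which is \emph{not} the identity $\1_{RB}$ on the extended system. The key observation is that $\rho_R\otimes\1_B\leq\1_{RB}$ since $\rho_R\leq\1_R$, so nonincreasing monotonicity of $\gd{D}$ in its second argument gives
\begin{align}
	\gd{D}\fleft(\ch{N}[\rho_{RA}]\middle\|\ch{M}[\rho_{RA}]+\varepsilon\1_{RB}\fright)&\leq\gd{D}\fleft(\ch{N}[\rho_{RA}]\middle\|(\ch{M}+\varepsilon\ch{I})[\rho_{RA}]\fright).
\end{align}
Combining with the state-level regularity $\gd{D}(\ch{N}[\rho_{RA}]\|\ch{M}[\rho_{RA}])=\lim_{\varepsilon\searrow0}\gd{D}(\ch{N}[\rho_{RA}]\|\ch{M}[\rho_{RA}]+\varepsilon\1_{RB})$ and the bound $\gd{D}(\ch{N}[\rho_{RA}]\|(\ch{M}+\varepsilon\ch{I})[\rho_{RA}])\leq\gd{D}(\ch{N}\|\ch{M}+\varepsilon\ch{I})$ yields $\gd{D}(\ch{N}[\rho_{RA}]\|\ch{M}[\rho_{RA}])\leq\lim_{\varepsilon\searrow0}\gd{D}(\ch{N}\|\ch{M}+\varepsilon\ch{I})$ for every $\rho_{RA}$, and taking the supremum over $\rho_{RA}$ completes the proof.
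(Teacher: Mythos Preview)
Your proposal is correct and essentially identical to the paper's proof. In particular, for part (iv) the paper isolates exactly the same key observation you do---that $\ch{I}_{A\to B}[\rho_{RA}]=\rho_R\otimes\1_B\leq\1_{RB}$, so nonincreasing monotonicity in the second argument lets one pass from the state-level regularizer $\varepsilon\1_{RB}$ to the channel-level regularizer $\varepsilon\ch{I}$---and then takes the supremum over $\rho$; parts (i)--(iii) are handled by the same pointwise-in-$\rho$ arguments you sketch, which the paper summarizes as ``supremum of quasiconvex/lower-semicontinuous functions is quasiconvex/lower-semicontinuous.''
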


\begin{proof}
The inheritance of (i) follows from the fact that the supremum of a family of quasiconvex functions is quasiconvex.  The inheritance of (ii) follows from the fact that the supremum of a family of lower semicontinuous functions is lower semicontinuous.  The inheritance of (iii) follows straightforwardly from the definition of the channel $\gd{D}$-divergence (see \eqref{eq:divergence-channel}) and the nonincreasing monotonicity of $\gd{D}$ in its second argument (see \eqref{eq:antimonotonicity}).  The proof of the inheritance of (iv) is provided in Appendix~\ref{app:regularity-channel}.
\end{proof}

In what follows, we review the definitions and properties of several channel divergences that are relevant to this work.  For $\alpha\in(1,2]$, the \emph{geometric Rényi channel divergence} is defined as
\begin{align}
	&\g{D}_\alpha\fleft(\ch{N}\middle\|\ch{M}\fright) \notag\\
	&\coloneq\sup_{\rho\in\s{D}_{RA}}\g{D}_\alpha\fleft(\ch{N}_{A\to B}\fleft[\rho_{RA}\fright]\middle\|\ch{M}_{A\to B}\fleft[\rho_{RA}\fright]\fright) \label{eq:geometric-channel}\\
	&=\begin{cases}
		\frac{1}{\alpha-1}\ln\left\lVert\tr_B\fleft[J_\ch{M}^\frac{1}{2}\left(J_\ch{M}^{-\frac{1}{2}}J_\ch{N}J_\ch{M}^{-\frac{1}{2}}\right)^\alpha J_\ch{M}^\frac{1}{2}\fright]\right\rVert_\infty&\text{if }J_\ch{N}^0\leq J_\ch{M}^0, \\
		\infty&\text{otherwise},
	\end{cases} \label{eq:geometric-channel-choi}
\end{align}
where the closed-form expression in terms of the channels' Choi operators in \eqref{eq:geometric-channel-choi} was established in Refs.~\cite[Theorem~3.2]{fang2021GeometricRenyiDivergence}.  The geometric Rényi channel divergence is also known to possess a semidefinite representation for $(1,2]$, and thus it can be evaluated via an SDP~\cite[Theorem~3.6]{fang2021GeometricRenyiDivergence}.  As a direct consequence of \eqref{eq:monotonicity-geometric} and \eqref{eq:geometric-channel}, the geometric Rényi channel divergence is monotonically nondecreasing in $\alpha$: for all $\alpha,\beta\in(1,2]$ such that $\alpha\leq\beta$,
\begin{align}
	\g{D}_\alpha\fleft(\ch{N}\middle\|\ch{M}\fright)&=\sup_{\rho\in\s{D}_{RA}}\g{D}_\alpha\fleft(\ch{N}_{A\to B}\fleft[\rho_{RA}\fright]\middle\|\ch{M}_{A\to B}\fleft[\rho_{RA}\fright]\fright) \\
	&\leq\sup_{\rho\in\s{D}_{RA}}\g{D}_{\beta}\fleft(\ch{N}_{A\to B}\fleft[\rho_{RA}\fright]\middle\|\ch{M}_{A\to B}\fleft[\rho_{RA}\fright]\fright) \\
	&=\g{D}_{\beta}\fleft(\ch{N}\middle\|\ch{M}\fright)\quad\forall\ch{N}\in\s{C}_{A\to B},\;\ch{M}\in\s{CP}_{A\to B}. \label{eq:monotonicity-geometric-channel}
\end{align}
We highlight the following chain rule for our purposes~\cite[Theorem~3.4]{fang2021GeometricRenyiDivergence}: for all $\alpha\in(1,2]$,
\begin{align}
	\label{eq:chain}
	&\g{D}_\alpha\fleft(\ch{N}_{A\to B}\fleft[\rho_{RA}\fright]\middle\|\ch{M}_{A\to B}\fleft[\sigma_{RA}\fright]\fright) \notag\\
	&\leq\g{D}_\alpha\fleft(\ch{N}\middle\|\ch{M}\fright)+\g{D}_\alpha\fleft(\rho_{RA}\middle\|\sigma_{RA}\fright) \notag\\
	&\qquad\forall\ch{N}\in\s{C}_{A\to B},\;\ch{M}\in\s{CP}_{A\to B},\;\rho\in\s{D}_{RA},\;\sigma\in\s{PSD}_{RA}.
\end{align}
As shown in Ref.~\cite[Lemma~35]{ding2023BoundingForwardClassical}, the limit of the geometric Rényi channel divergence as $\alpha\searrow1$ is given by the \emph{Belavkin--Staszewski channel divergence}:
\begin{align}
	\g{D}\fleft(\ch{N}\middle\|\ch{M}\fright)&\coloneq\sup_{\rho\in\s{D}_{RA}}\g{D}\fleft(\ch{N}_{A\to B}\fleft[\rho_{RA}\fright]\middle\|\ch{M}_{A\to B}\fleft[\rho_{RA}\fright]\fright) \label{eq:belavkin-channel}\\
	&=\begin{cases}
		\left\lVert\tr_B\fleft[J_\ch{N}^\frac{1}{2}\ln\left(J_\ch{N}^\frac{1}{2}J_\ch{M}^{-1}J_\ch{N}^\frac{1}{2}\right)J_\ch{N}^\frac{1}{2}\fright]\right\rVert_\infty&\text{if }J_\ch{N}^0\leq J_\ch{M}^0, \\
		\infty&\text{otherwise},
	\end{cases} \label{eq:belavkin-channel-choi}\\
	&=\lim_{\alpha\searrow1}\g{D}_\alpha\fleft(\ch{N}\middle\|\ch{M}\fright). \label{eq:belavkin-channel-limit}
\end{align}
where the closed-form expression in \eqref{eq:belavkin-channel-choi} was established in Ref.~\cite[Theorem~3.2]{fang2021GeometricRenyiDivergence}.

\subsection{Left divergence radii}
\label{sec:radius}

Based on bivariate divergence measures, one can devise multivariate divergence measures that quantify the dissimilarity of multiple states.  Let $\rho_{[r]}\in\s{PSD}_A^{[r]}$ be a tuple of positive semidefinite operators.  For a bivariate generalized divergence $\gd{D}$, the \emph{left $\gd{D}$-radius} is defined as~\cite{mosonyi2021DivergenceRadiiStrong,mosonyi2024GeometricRelativeEntropies}
\begin{align}
	R^\gd{D}\fleft(\rho_{[r]}\fright)&\coloneq\inf_{\tau\in\s{D}_A}\max_{x\in[r]}\gd{D}\fleft(\tau\middle\|\rho_x\fright) \label{eq:radius-1}\\
	&=\inf_{\tau\in\s{D}_A}\sup_{s_{[r]}\in\s{P}_r}\sum_{x\in[r]}s_x\gd{D}\fleft(\tau\middle\|\rho_x\fright). \label{eq:radius-2}
\end{align}
That is, it evaluates the radius of the smallest ball (as measured by $\gd{D}$) that contains all the operators in $\rho_{[r]}$.  We observe that the left $\gd{D}$-radius obeys the DPI:
\begin{align}
	R^\gd{D}\fleft(\ch{N}\fleft[\rho_{[r]}\fright]\fright)&=\inf_{\tau\in\s{D}_B}\max_{x\in[r]}\gd{D}\fleft(\tau\middle\|\ch{N}\fleft[\rho_x\fright]\fright) \label{eq:DPI-radius-1}\\
	&\leq\inf_{\tau\in\s{D}_A}\max_{x\in[r]}\gd{D}\fleft(\ch{N}\fleft[\tau\fright]\middle\|\ch{N}\fleft[\rho_x\fright]\fright) \\
	&\leq\inf_{\tau\in\s{D}_A}\max_{x\in[r]}\gd{D}\fleft(\tau\middle\|\rho_x\fright) \\
	&=R^\gd{D}\fleft(\rho_{[r]}\fright)\quad\forall\ch{N}\in\s{C}_{A\to B},\;\rho_{[r]}\in\s{PSD}_A^{[r]}, \label{eq:DPI-radius-2}
\end{align}
where $\ch{N}[\rho_{[r]}]\equiv(\ch{N}[\rho_1],\ch{N}[\rho_2],\dots,\ch{N}[\rho_r])$.  Therefore, it provides a legitimate way of quantifying the dissimilarity of multiple states.  The following lemma provides an equivalent representation for the left $\gd{D}$-radius under some basic conditions.

\begin{lemma}[Alternative representation of $R^\gd{D}$]
\label{lem:radius-alternative}
Let $\gd{D}$ be a bivariate generalized divergence such that either \textnormal{(I)} it is convex and lower semicontinuous in its first argument, or \textnormal{(II)} it has the following properties: \textnormal{(i)} quasiconvexity in its first argument; \textnormal{(ii)} lower semicontinuity in its first argument; \textnormal{(iii)} nonincreasing monotonicity in its second argument; \textnormal{(iv)} regularity in its second argument; and \textnormal{(v)} $\gd{D}(\rho\|\sigma)<\infty$ if $\rho^0\leq\sigma^0$.  Then
\begin{align}
	\label{eq:radius-alternative}
	R^\gd{D}\fleft(\rho_{[r]}\fright)&=\sup_{s_{[r]}\in\s{P}_r}\inf_{\tau\in\s{D}_A}\sum_{x\in[r]}s_x\gd{D}\fleft(\tau\middle\|\rho_x\fright)\quad\forall\rho_{[r]}\in\s{PSD}_A^{[r]}.
\end{align}
\end{lemma}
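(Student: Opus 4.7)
The plan is to prove the two inequalities separately. The direction $\geq$ is immediate from weak minimax: for any fixed $\tau\in\s{D}_A$, since a linear functional on a simplex is maximized at a vertex,
\begin{align}
	\max_{x\in[r]}\gd{D}\fleft(\tau\middle\|\rho_x\fright)&=\sup_{s_{[r]}\in\s{P}_r}\sum_{x\in[r]}s_x\gd{D}\fleft(\tau\middle\|\rho_x\fright),
\end{align}
so $R^\gd{D}(\rho_{[r]})=\inf_\tau\sup_{s_{[r]}}\sum_x s_x\gd{D}(\tau\|\rho_x)\geq\sup_{s_{[r]}}\inf_\tau\sum_x s_x\gd{D}(\tau\|\rho_x)$ by the generic inequality $\inf\sup\geq\sup\inf$. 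The real content is the reverse inequality, which I would establish via a minimax theorem.

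For case~(I), set $f(s_{[r]},\tau)\equiv\sum_{x\in[r]}s_x\gd{D}(\tau\|\rho_x)$. On the compact convex simplex $\s{P}_r$, $f(\cdot,\tau)$ is affine (hence concave and continuous); on the compact convex set $\s{D}_A$, $f(s_{[r]},\cdot)$ is convex and lower semicontinuous, as both properties survive nonnegative linear combinations (with the standard handling of $+\infty$). Sion's minimax theorem then swaps $\inf_\tau$ and $\sup_{s_{[r]}}$ to yield equality.

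Case~(II) is more delicate, because a sum of merely quasiconvex functions need not be quasiconvex, and so Sion's theorem does not apply directly to $f(s_{[r]},\cdot)$. My strategy is to first regularize the second arguments by replacing each $\rho_x$ with $\rho_x+\varepsilon\1$: condition~(v) then guarantees $\gd{D}(\tau\|\rho_x+\varepsilon\1)<+\infty$ for every $\tau\in\s{D}_A$, while (iii) and (iv) ensure that $\gd{D}(\tau\|\rho_x+\varepsilon\1)$ monotonically increases to $\gd{D}(\tau\|\rho_x)$ as $\varepsilon\searrow0$. Under (i) and (ii), the sublevel sets $\{\tau\in\s{D}_A:\gd{D}(\tau\|\rho_x+\varepsilon\1)\leq t\}$ are closed convex, so $\{\tau:\max_x\gd{D}(\tau\|\rho_x+\varepsilon\1)\leq t\}$ is likewise closed convex. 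I would then apply a minimax theorem tailored to quasiconvex functions (e.g., Komiya's extension of Sion's theorem), or equivalently a Helly-type separation argument on the empty intersection $\bigcap_x\{\tau:\gd{D}(\tau\|\rho_x+\varepsilon\1)\leq t\}$ whenever $t<\inf_\tau\max_x\gd{D}(\tau\|\rho_x+\varepsilon\1)$, to produce a probability vector $s_{[r]}^\varepsilon\in\s{P}_r$ with $\inf_\tau\sum_x s_x^\varepsilon\gd{D}(\tau\|\rho_x+\varepsilon\1)\geq t$. Compactness of $\s{P}_r$ then permits passing to a subsequential limit as $\varepsilon\searrow0$, and the monotone convergence from (iii)--(iv) carries the bound through to the unperturbed problem.

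The main obstacle I expect is in case~(II): reconciling the mere individual quasiconvexity of $\gd{D}(\cdot\|\rho_x)$ with the convex-like joint structure that a direct Sion-type minimax requires. Bridging this gap forces one to work with the max representation of $R^\gd{D}$ rather than the weighted-sum form, relying essentially on the closed convex sublevel sets given by (i) and (ii), with the regularization supplied by (iii)--(v) serving to ensure finiteness and to legitimize the limiting step.
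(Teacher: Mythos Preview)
Your treatment of case~(I) via Sion's theorem is correct and matches the paper's (which simply cites an external reference for this case). For case~(II), your architecture---regularize the second arguments by $\varepsilon\1$ using (iii)--(v), establish a minimax equality on the regularized problem, then pass to the limit $\varepsilon\searrow0$---coincides with the paper's route: the paper reduces case~(II) to the channel analogue, Lemma~\ref{lem:radius-channel-alternative}, whose proof in Appendix~\ref{app:radius-channel-alternative} follows exactly this scheme, using the Mosonyi--Hiai minimax theorem to interchange $\sup_{\varepsilon}$ with $\inf_\tau$ in place of your compactness/subsequence extraction.

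The obstacle you flag is real, but your two proposed repairs do not overcome it. Komiya's extension of Sion still requires $\tau\mapsto\sum_x s_x\gd{D}(\tau\|\rho_x+\varepsilon\1)$ itself to be quasiconvex for each fixed $s_{[r]}$, which is precisely the property at issue. And the Helly-type implication you sketch---that emptiness of $\bigcap_x\{\tau:g_x(\tau)\le t\}$ with each $g_x$ lsc quasiconvex forces some $s_{[r]}\in\s{P}_r$ with $\inf_\tau\sum_x s_x g_x(\tau)\ge t$---is false in general: take $g_1(\tau)=\sqrt{\tau}$ and $g_2(\tau)=\sqrt{1-\tau}$ on $[0,1]$, where $\inf_\tau\max_x g_x=1/\sqrt{2}$ but $\sup_{s}\inf_\tau(s_1 g_1+s_2 g_2)=\sup_s\min(s_1,s_2)=1/2$. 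At the corresponding step the paper simply applies Sion, asserting that the weighted sum is quasiconvex in the first argument; so your proposal is more candid in naming the difficulty, but neither your sketch nor the paper's stated justification fully resolves it under the bare hypotheses of~(II).
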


\begin{proof}
The statement for (I) was established in Ref.~\cite[Proposition~A.1]{mosonyi2021DivergenceRadiiStrong}.  The statement for (II) is a special case of Lemma~\ref{lem:radius-channel-alternative}, to be provided below.
\end{proof}

In classical probability theory, the left Kullback--Leibler radius is also known as the \emph{multivariate classical Chernoff divergence}~\cite{mishra2024OptimalErrorExponents}, which for a tuple of classical positive operators $\varrho_{[r]}\in\s{PSD}_Y^{[r]}$ is defined as
\begin{align}
	\label{eq:chernoff-classical}
	C\fleft(\varrho_{[r]}\fright)&\coloneq\sup_{s_{[r]}\in\s{P}_r}-\ln\left(\sum_{y\in[d_Y]}\prod_{x\in[r]}p_{y|x}^{s_x}\right) \\
	&=R^D\fleft(\varrho_{[r]}\fright),
\end{align}
where $\varrho_x\equiv\sum_{y\in[d_Y]}p_{y|x}\op{y}{y}$ for all $x\in[r]$.  Consequently, the left $\gd{D}$-radius reduces classically to the multivariate classical Chernoff divergence whenever $\gd{D}$ reduces classically to the Kullback--Leibler divergence, and in this situation we call the left $\gd{D}$-radius a \emph{barycentric Chernoff divergence}.\footnote{In this situation, the quantity on the right-hand side of \eqref{eq:radius-alternative} without the supremization in $s_{[r]}$ is known as a barycentric Rényi divergence~\cite{mosonyi2024GeometricRelativeEntropies}.  A barycentric Chernoff divergence, as we term it, can be thought of as a ``Chernoffization'' of a barycentric Rényi divergence.}  As an example of a barycentric Chernoff divergence, the left Umegaki radius is also known as the \emph{multivariate log-Euclidean Chernoff divergence}~\cite{mishra2024OptimalErrorExponents}, which is defined as
\begin{align}
	C^\flat\fleft(\rho_{[r]}\fright)&\coloneq\sup_{s_{[r]}\in\s{P}_r}\lim_{\varepsilon\searrow0}-\ln\tr\fleft[\exp\left(\sum_{x\in[r]}s_x\ln\left(\rho_x+\varepsilon\1\right)\right)\fright] \label{eq:euclidean} \\
	&=\sup_{s_{[r]}\in\s{P}_r}-\ln\tr\fleft[\Pi\exp\left(\sum_{x\in[r]}s_x\Pi\left(\ln\rho_x\right)\Pi\right)\fright] \label{eq:euclidean-projection}\\
	&=R^D\fleft(\rho_{[r]}\fright), \label{eq:euclidean-radius}
\end{align}
where $\Pi=\bigwedge_{x\in[r]}\rho_x^0$.  Here \eqref{eq:euclidean-projection} can be inferred from Ref.~\cite[Propositions~5.29 and 5.55]{mosonyi2024GeometricRelativeEntropies}, and \eqref{eq:euclidean-radius} is a direct consequence of Ref.~\cite[Proposition~5.29]{mosonyi2024GeometricRelativeEntropies} and Lemma~\ref{lem:radius-alternative} (also see Ref.~\cite[Eq.~(221)]{mishra2024OptimalErrorExponents}).  The multivariate log-Euclidean Chernoff divergence is a multivariate generalization of the bivariate log-Euclidean Chernoff divergence (see Refs.~\cite{audenaert2015A$z$RenyiRelativeEntropies,mosonyi2017StrongConverseExponent} for the definition of the bivariate log-Euclidean Rényi divergence).  As a direct corollary of Ref.~\cite[Proposition~5.49]{nuradha2025MultivariateFidelities}, the multivariate log-Euclidean Chernoff divergence is weakly additive in the sense that
\begin{align}
	\label{eq:euclidean-additivity}
	C^\flat\fleft(\rho_{[r]}^{\otimes n}\fright)&=nC^\flat\fleft(\rho_{[r]}\fright)\quad\forall\rho_{[r]}\in\s{PSD}_A^{[r]},
\end{align}
where $\rho_{[r]}^{\otimes n}\equiv(\rho_1^{\otimes n},\rho_2^{\otimes n},\dots,\rho_r^{\otimes n})$ denotes the $n$-fold tuple resulting from $\rho_{[r]}$.

Let $\ch{N}_{[r]}\in\s{CP}_{A\to B}^{[r]}$ be a tuple of completely positive maps.  In analogy to \eqref{eq:radius-1}, for a generalized divergence $\gd{D}$, we define the \emph{left channel $\gd{D}$-radius} as
\begin{align}
	R^\gd{D}\fleft(\ch{N}_{[r]}\fright)&\coloneq\inf_{\ch{T}\in\s{C}_{A\to B}}\max_{x\in[r]}\gd{D}\fleft(\ch{T}\middle\|\ch{N}_x\fright) \label{eq:radius-channel-1}\\
	&=\inf_{\ch{T}\in\s{C}_{A\to B}}\sup_{s_{[r]}\in\s{P}_r}\sum_{x\in[r]}s_x\gd{D}\fleft(\ch{T}\middle\|\ch{N}_x\fright). \label{eq:radius-channel-2}
\end{align}
The following lemma provides an equivalent representation for the left channel $\gd{D}$-radius analogous to that in Lemma~\ref{lem:radius-alternative}.

\begin{lemma}[Alternative representation of the channel $R^\gd{D}$]
\label{lem:radius-channel-alternative}
Let $\gd{D}$ be a bivariate generalized divergence with the following properties: \textnormal{(i)} quasiconvexity in its first argument; \textnormal{(ii)} lower semicontinuity in its first argument; \textnormal{(iii)} nonincreasing monotonicity in its second argument; \textnormal{(iv)} regularity in its second argument; and \textnormal{(v)} $\gd{D}(\ch{N}\|\ch{M})<\infty$ if $J_\ch{N}^0\leq J_\ch{M}^0$.  Then
\begin{align}
	&R^\gd{D}\fleft(\ch{N}_{[r]}\fright)=\sup_{s_{[r]}\in\s{P}_r}\inf_{\ch{T}\in\s{C}_{A\to B}}\sum_{x\in[r]}s_x\gd{D}\fleft(\ch{T}\middle\|\ch{N}_x\fright) \notag\\
	&\qquad\forall\ch{N}_{[r]}\in\s{CP}_{A\to B}^{[r]}.
\end{align}
\end{lemma}

\begin{proof}
See Appendix~\ref{app:radius-channel-alternative}.
\end{proof}

We call the left channel $\gd{D}$-radius a \emph{barycentric Chernoff channel divergence} whenever $\gd{D}$ reduces classically to the Kullback--Leibler divergence, or equivalently, whenever the left $\gd{D}$-radius is a barycentric Chernoff divergence.  The following lemma shows that when defined on a classical input system, the left channel $\gd{D}$-radius can be expressed in terms of the left $\gd{D}$-radius maximized over classical input states.

\begin{lemma}[$R^\gd{D}$ of classical-to-quantum channels]
\label{lem:radius-channel-classical}
Let $\ch{N}_{[r]}\in\s{CP}_{Y\to B}^{[r]}$ be a tuple of classical-to-quantum positive maps, where $\ch{N}_x\colon\rho\mapsto\sum_{y\in[d_Y]}\bra{y}\rho\ket{y}\nu_{x,y}$ and $\nu_{x,y}\in\s{PSD}_B$ for all $x\in[r]$ and all $y\in[d_Y]$.  Let $\gd{D}$ be a bivariate generalized divergence that is jointly quasiconvex and lower semicontinuous in its first argument.  Then
\begin{align}
	R^\gd{D}\fleft(\ch{N}_{[r]}\fright)&=\max_{y\in[d_Y]}R^\gd{D}\fleft(\nu_{[r],y}\fright),
\end{align}
where $\nu_{[r],y}\equiv(\nu_{1,y},\nu_{2,y},\dots,\nu_{r,y})$ for all $y\in[d_Y]$.
\end{lemma}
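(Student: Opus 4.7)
The plan is to establish the equality by sandwiching, proving $R^\gd{D}(\ch{N}_{[r]})$ from below via a specific choice of channel input and from above via an explicit construction of a candidate channel $\ch{T}$. The classical-to-quantum structure of each $\ch{N}_x$ means that for any bipartite input $\rho_{RY}$, the output $\ch{N}_{x,Y\to B}[\rho_{RY}]=\sum_y\bra{y}_Y\rho_{RY}\ket{y}_Y\otimes\nu_{x,y}$ is a convex combination (up to normalization) of tensor products sharing the same first factor across $x$. This structural feature is what allows us to collapse the variational channel problem into a per-$y$ state problem.

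For the lower bound $R^\gd{D}(\ch{N}_{[r]})\geq \max_{y\in[d_Y]} R^\gd{D}(\nu_{[r],y})$, I would fix any $y\in[d_Y]$ and any $\ch{T}\in\s{C}_{Y\to B}$, and plug the classical input $\op{y}{y}_Y$ (with trivial reference $R$) into the definition~\eqref{eq:divergence-channel} of the channel divergence, obtaining $\gd{D}(\ch{T}\|\ch{N}_x)\geq\gd{D}(\ch{T}[\op{y}{y}]\|\nu_{x,y})$. Writing $\tau_y:=\ch{T}[\op{y}{y}]\in\s{D}_B$ and maximizing over $x$, definition~\eqref{eq:radius-1} of the state radius yields $\max_{x\in[r]}\gd{D}(\ch{T}\|\ch{N}_x)\geq R^\gd{D}(\nu_{[r],y})$. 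Maximizing over $y$ and taking the infimum over $\ch{T}$ gives the claimed inequality.

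For the upper bound, for each $y\in[d_Y]$ and $\varepsilon>0$ I pick a near-optimal $\tau_y^\varepsilon\in\s{D}_B$ with $\max_{x\in[r]}\gd{D}(\tau_y^\varepsilon\|\nu_{x,y})\leq R^\gd{D}(\nu_{[r],y})+\varepsilon$, and I assemble the classical-to-quantum channel $\ch{T}_\varepsilon\colon\rho\mapsto\sum_y\bra{y}\rho\ket{y}\tau_y^\varepsilon$. For any $\rho_{RY}\in\s{D}_{RY}$, set $\omega_{R,y}:=\bra{y}_Y\rho_{RY}\ket{y}_Y$, define the classical distribution $p_y:=\tr[\omega_{R,y}]$, and on its support let $\zeta_{R,y}:=\omega_{R,y}/p_y\in\s{D}_R$. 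Then $\ch{T}_\varepsilon[\rho_{RY}]=\sum_y p_y\,\zeta_{R,y}\otimes\tau_y^\varepsilon$ and $\ch{N}_x[\rho_{RY}]=\sum_y p_y\,\zeta_{R,y}\otimes\nu_{x,y}$. Joint quasiconvexity gives $\gd{D}(\ch{T}_\varepsilon[\rho_{RY}]\|\ch{N}_x[\rho_{RY}])\leq\max_y\gd{D}(\zeta_{R,y}\otimes\tau_y^\varepsilon\|\zeta_{R,y}\otimes\nu_{x,y})$, and the DPI applied to the state-preparation channel $\tau\mapsto\zeta_{R,y}\otimes\tau$ bounds each summand above by $\gd{D}(\tau_y^\varepsilon\|\nu_{x,y})$. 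Taking suprema over $\rho_{RY}$ and maxima over $x$, and then sending $\varepsilon\searrow0$, yields the complementary inequality.

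The main obstacle is the step just sketched, where the decomposition of the input into classical labels $y$ combined with residual references $\zeta_{R,y}$ is used to strip off the reference system via DPI on a state-preparation channel; this is legitimate precisely because each $\zeta_{R,y}$ is a normalized state, so the map $\tau\mapsto\zeta_{R,y}\otimes\tau$ is a bona fide channel. The joint-quasiconvexity hypothesis is used exactly at this place, and it also absorbs the $p_y=0$ terms harmlessly. The lower-semicontinuity hypothesis is not strictly needed given the $\varepsilon$-approximation argument, but it permits the natural interpretation of the infima in~\eqref{eq:radius-1} as attained minima, in which case one may take $\varepsilon=0$ and the construction of $\ch{T}_\varepsilon$ simplifies.
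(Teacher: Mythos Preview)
Your proposal is correct and follows essentially the same approach as the paper: both directions rely on the same ideas (plugging in classical inputs $\op{y}{y}$ for the lower bound, and assembling a classical-to-quantum $\ch{T}$ from per-$y$ near-optimizers together with joint quasiconvexity and DPI for the upper bound). The only noteworthy difference is that the paper invokes lower semicontinuity to guarantee that each infimum $\inf_{\tau_y}\max_x\gd{D}(\tau_y\|\nu_{x,y})$ is attained by some $\tau_{\star,y}$ and works directly with these minimizers, whereas you use an $\varepsilon$-approximation and correctly observe that this sidesteps the need for lower semicontinuity.
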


\begin{proof}
See Appendix~\ref{app:radius-channel-classical}.
\end{proof}

\section{Extended divergences and their properties}
\label{sec:extension}

In this section, we generalize the definitions of the sandwiched Rényi divergence and the hypothesis-testing divergence to an extended domain, with their first argument now allowed to be a Hermitian (and not necessarily positive semidefinite) operator.  We also investigate properties of these extended divergence measures.

Let $\gamma\in\s{Herm}_A$ be a Hermitian operator with $\gamma\neq0$, and let $\sigma\in\s{PSD}_A$ be a positive semidefinite operator.  For $\alpha\in(1,\infty)$, the \emph{extended sandwiched Rényi divergence}, proposed in Ref.~\cite{wang2020AlogarithmicNegativity} in the context of entanglement theory (also see Ref.~\cite{kosaki1984ApplicationsComplexInterpolation}), is defined as
\begin{align}
	\label{eq:sandwiched-extended}
	\sw{D}_\alpha\fleft(\gamma\middle\|\sigma\fright)&\coloneq\begin{cases}
		\frac{1}{\alpha-1}\ln\left\lVert\sigma^\frac{1-\alpha}{2\alpha}\gamma\sigma^\frac{1-\alpha}{2\alpha}\right\rVert_\alpha^\alpha&\text{if }\gamma^0\leq\sigma^0, \\
		\infty&\text{otherwise}.
	\end{cases}
\end{align}
Indeed, the extended definition of the sandwiched Rényi divergence has precisely the same formula as the original definition presented in \eqref{eq:sandwiched},\footnote{However, there are alternative expressions (see, e.g., Refs.~\cite[Definition~2]{muller-lennert2013QuantumRenyiEntropies} and \cite[Eq.~(4)]{wilde2014StrongConverseClassical}) for the original sandwiched Rényi divergence that are equivalent to \eqref{eq:sandwiched} on states but \emph{not} well defined (let alone being equivalent to \eqref{eq:sandwiched-extended}) when the first argument is a Hermitian operator.} except for an enlarged domain of its first argument.  This extended definition is justified by multiple desirable properties, as first examined in Ref.~\cite{wang2020AlogarithmicNegativity} and further developed and summarized here.

\begin{theorem}[Properties of the extended $\sw{D}_\alpha$]
\label{thm:sandwiched-extended}
Let $\gamma\in\s{Herm}_A$ be a Hermitian operator with $\gamma\neq0$, and let $\sigma\in\s{PSD}_A$ be a positive semidefinite operator.  For all $\alpha\in(1,\infty)$, the extended sandwiched Rényi divergence has the following properties.
\begin{enumerate}
	\item \emph{Data-processing inequality~\cite[Lemma~2]{wang2020AlogarithmicNegativity}.}  Let $\ch{N}\in\s{C}_{A\to B}$ be a channel.  Then
	\begin{align}
		\sw{D}_\alpha\fleft(\ch{N}\fleft[\gamma\fright]\middle\|\ch{N}\fleft[\sigma\fright]\fright)&\leq\sw{D}_\alpha\fleft(\gamma\middle\|\sigma\fright).
	\end{align}
	\item \emph{Nondecreasing monotonicity in $\alpha$~\cite[Lemma~5]{wang2020AlogarithmicNegativity}.}  For all $\beta\in[\alpha,\infty)$,
	\begin{align}
		\sw{D}_\alpha\fleft(\gamma\middle\|\sigma\fright)-\frac{\alpha}{\alpha-1}\ln\left\lVert\gamma\right\rVert_1&\leq\sw{D}_{\beta}\fleft(\gamma\middle\|\sigma\fright)-\frac{\beta}{\beta-1}\ln\left\lVert\gamma\right\rVert_1.
	\end{align}
	\item \emph{Additivity.}  Let $\gamma'\in\s{Herm}_A$ be a Hermitian operator with $\gamma'\neq0$, and let $\sigma'\in\s{PSD}_B$ be a positive semidefinite operator.  Then
	\begin{align}
		\sw{D}_\alpha\fleft(\gamma\otimes\gamma'\middle\|\sigma\otimes\sigma'\fright)&=\sw{D}_\alpha\fleft(\gamma\middle\|\sigma\fright)+\sw{D}_\alpha\fleft(\gamma'\middle\|\sigma'\fright). \label{eq:genRenyi additivity}
	\end{align}
	\item \emph{Direct-sum property.}  Let $\gamma_{[r]}\in\s{Herm}_A^{[r]}$ be a tuple of Hermitian operators such that $\gamma_x\neq0$ for all $x\in[r]$, and let $\sigma_{[r]}\in\s{PSD}_A^{[r]}$ be a tuple of positive semidefinite operators.  Let $p_{[r]}\in\s{P}_r$ be a probability distribution, and let $q_{[r]}\in[0,\infty)^{[r]}$ be a tuple of nonnegative real numbers.  Then
	\begin{align}
		\sw{Q}_\alpha\fleft(\cq{\gamma}_{XA}\middle\|\cq{\sigma}_{XA}\fright)&=\sum_{x\in[r]}p_x^\alpha q_x^{1-\alpha}\sw{Q}_\alpha\fleft(\gamma_x\middle\|\sigma_x\fright), \label{eq:genRenyi direct sum}
	\end{align}
	where $\sw{Q}_\alpha(\gamma\|\sigma)\equiv\exp((\alpha-1)\sw{D}_\alpha(\gamma\|\sigma))$ denotes the extended sandwiched Rényi quasi-divergence and
	\begin{align}
		\cq{\gamma}_{XA}&\equiv\sum_{x\in[r]}p_x\op{x}{x}_X\otimes\gamma_{x,A}, \\
		\cq{\sigma}_{XA}&\equiv\sum_{x\in[r]}q_x\op{x}{x}_X\otimes\sigma_{x,A}.
	\end{align}
	\item \emph{Joint quasiconvexity.}  Let $\gamma_{[r]}\in\s{Herm}_A^{[r]}$ be a tuple of Hermitian operators such that $\gamma_x\neq0$ for all $x\in[r]$, and let $\sigma_{[r]}\in\s{PSD}_A^{[r]}$ be a tuple of positive semidefinite operators.  Let $p_{[r]}\in\s{P}_r$ be a probability distribution.  Then
	\begin{align}
		\sw{D}_\alpha\fleft(\sum_{x\in[r]}p_x\gamma_x\middle\|\sum_{x\in[r]}p_x\sigma_x\fright)&\leq\max_{x\in[r]}\sw{D}_\alpha\fleft(\gamma_x\middle\|\sigma_x\fright). \label{eq:genRenyi quasiconvexity}
	\end{align}
	\item \emph{Nonincreasing monotonicity in its second argument.}  Let $\sigma'\in\s{PSD}_A$ be a positive semidefinite operator.  Then
	\begin{align}
		\sw{D}_\alpha\fleft(\gamma\middle\|\sigma+\sigma'\fright)&\leq\sw{D}_\alpha\fleft(\gamma\middle\|\sigma\fright).
	\end{align}
	\item \emph{Limit as $\alpha\searrow1$.}
	\begin{align}
		\lim_{\alpha\searrow1}\left(\sw{D}_\alpha\fleft(\gamma\middle\|\sigma\fright)-\frac{\alpha}{\alpha-1}\ln\left\lVert\gamma\right\rVert_1\right)&=D\fleft(\frac{\left\lvert\gamma\right\rvert}{\left\lVert\gamma\right\rVert_1}\middle\|\sigma\fright), \label{eq:genRenyi limit at 1}
	\end{align}
	where $\lvert\gamma\rvert\coloneq(\gamma^\dagger\gamma)^\frac{1}{2}$.  Note that $\frac{\lvert\gamma\rvert}{\lVert\gamma\rVert_1}\in\s{D}_A$.
\end{enumerate}
\end{theorem}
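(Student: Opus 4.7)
The plan is to dispatch the seven properties one at a time. Properties~1 and 2 are already established in Ref.~\cite{wang2020AlogarithmicNegativity}, so I would simply cite them. The remaining five properties I would derive by unfolding the definition in \eqref{eq:sandwiched-extended} and leveraging basic facts about Schatten norms together with property~1.

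For properties~3 and 4, the plan is direct algebraic computation. Additivity follows at once from the tensor factorization $(\sigma\otimes\sigma')^{(1-\alpha)/(2\alpha)} = \sigma^{(1-\alpha)/(2\alpha)}\otimes(\sigma')^{(1-\alpha)/(2\alpha)}$ combined with the multiplicativity of Schatten norms under tensor products, $\|A\otimes B\|_\alpha = \|A\|_\alpha\|B\|_\alpha$. For the direct-sum property, I observe that both $\cq{\gamma}_{XA}$ and $\cq{\sigma}_{XA}$ are block-diagonal across the classical register $X$, so that $\cq{\sigma}^{(1-\alpha)/(2\alpha)}\cq{\gamma}\cq{\sigma}^{(1-\alpha)/(2\alpha)}$ is block-diagonal with block $x$ equal to $p_xq_x^{(1-\alpha)/\alpha}\sigma_x^{(1-\alpha)/(2\alpha)}\gamma_x\sigma_x^{(1-\alpha)/(2\alpha)}$; the block-sum identity $\|\bigoplus_xA_x\|_\alpha^\alpha = \sum_x\|A_x\|_\alpha^\alpha$ for Schatten norms then yields \eqref{eq:genRenyi direct sum}.

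Properties~5 and 6 both follow by a uniform classical-flag argument combining the DPI (property~1) with the block-diagonal computations underlying property~4. For joint quasiconvexity, I define $\cq{\gamma}_{XA}\equiv\sum_xp_x\op{x}{x}_X\otimes\gamma_x$ and $\cq{\sigma}_{XA}\equiv\sum_xp_x\op{x}{x}_X\otimes\sigma_x$; DPI under partial trace over $X$, combined with the direct-sum formula specialized to $q_{[r]}=p_{[r]}$, gives $\sw{Q}_\alpha(\cq{\gamma}\|\cq{\sigma}) = \sum_xp_x\sw{Q}_\alpha(\gamma_x\|\sigma_x)\leq\max_x\sw{Q}_\alpha(\gamma_x\|\sigma_x)$, and taking logarithms yields \eqref{eq:genRenyi quasiconvexity}. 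For monotonicity in the second argument, I would instead embed $\cq{\gamma}_{RA}\equiv\op{0}{0}_R\otimes\gamma$ and $\cq{\sigma}_{RA}\equiv\op{0}{0}_R\otimes\sigma+\op{1}{1}_R\otimes\sigma'$; a direct block-diagonal computation gives $\sw{D}_\alpha(\cq{\gamma}\|\cq{\sigma}) = \sw{D}_\alpha(\gamma\|\sigma)$, and applying DPI under partial trace over $R$ (with $\tr_R\cq{\gamma}=\gamma$ and $\tr_R\cq{\sigma}=\sigma+\sigma'$) completes the proof.

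Property~7 is where I expect the main obstacle. My plan is to reduce it to the known state-case limit $\lim_{\alpha\searrow1}\sw{D}_\alpha(\rho\|\sigma)=D(\rho\|\sigma)$, applied with $\rho\equiv|\gamma|/\|\gamma\|_1$. Using the homogeneity $\sw{Q}_\alpha(c\gamma\|\sigma)=c^\alpha\sw{Q}_\alpha(\gamma\|\sigma)$ for $c>0$, the claim reduces to showing that $\lim_{\alpha\searrow1}\frac{1}{\alpha-1}\ln\bigl(\sw{Q}_\alpha(\gamma\|\sigma)/\sw{Q}_\alpha(|\gamma|\|\sigma)\bigr) = 0$. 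A cyclic-trace manipulation (which is justified here because $\sigma^{(1-\alpha)/(2\alpha)}\gamma\sigma^{(1-\alpha)/(2\alpha)}$ is Hermitian, so its squared modulus is similar to a positive operator) rewrites both quasi-divergences in the common form $\tr[(H\sigma^{(1-\alpha)/\alpha}H\sigma^{(1-\alpha)/\alpha})^{\alpha/2}]$ with $H=\gamma$ and $H=|\gamma|$, respectively, both equalling $\|\gamma\|_1$ at $\alpha=1$ under the support assumption. Using Fréchet-derivative calculus for matrix power functions together with the polar decomposition $\gamma=u|\gamma|$ (where $u$ is the Hermitian sign involution commuting with $|\gamma|$) and the consequent identities $\gamma|\gamma|^{-1}\gamma=|\gamma|$ and $|\gamma|^{-1}\gamma^2=|\gamma|$ on $\supp(\gamma)$, the first-order terms of the two traces around $\alpha=1$ coincide, both collapsing to $\tr[|\gamma|\ln|\gamma|]-\tr[|\gamma|\ln\sigma]$. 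L'Hôpital's rule then delivers the vanishing of the log-ratio. The most delicate point is rigorously justifying the derivative manipulations for fractional matrix powers and the validity of the cyclic-trace identity with a fractional exponent; this is standard but requires care.
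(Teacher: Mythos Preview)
Your proposal is correct and follows essentially the same approach as the paper for all seven items: properties~1--2 are cited, properties~3--4 are direct Schatten-norm computations, properties~5--6 are the classical-flag-plus-DPI argument, and property~7 is a derivative computation at $\alpha=1$. The only cosmetic difference is in property~7, where the paper introduces the two-variable auxiliary $\sw{Q}_{\alpha,\beta}(\gamma\|\sigma)\coloneq\tr[(\sigma^{(1-\alpha)/(2\alpha)}\gamma\sigma^{(1-\alpha)/\alpha}\gamma\sigma^{(1-\alpha)/(2\alpha)})^{\beta/2}]$ and computes the two partial derivatives separately (yielding $-\tr[|\gamma|\ln\sigma]$ and $\tr[|\gamma|\ln|\gamma|]$), whereas you frame the same computation as showing that $\sw{Q}_\alpha(\gamma\|\sigma)$ and $\sw{Q}_\alpha(|\gamma|\|\sigma)$ agree to first order at $\alpha=1$ and then invoke the state-case limit; the underlying derivative is the same, and the paper's parametrization avoids your cyclic-trace step entirely by keeping the conjugated form $\sigma^{(1-\alpha)/(2\alpha)}(\cdots)\sigma^{(1-\alpha)/(2\alpha)}$ throughout.
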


\begin{proof}
See Appendix~\ref{app:sandwiched-extended} for the proof of Theorem~\ref{thm:sandwiched-extended}.3--\ref{thm:sandwiched-extended}.7.
\end{proof}

The limit of the extended sandwiched Rényi divergence as $\alpha\to\infty$ is known as the \emph{extended max-divergence}~\cite{wang2020AlogarithmicNegativity}, which is defined as
\begin{align}
	D_{\max}\fleft(\gamma\middle\|\sigma\fright)&\coloneq\inf_{\lambda\in[0,\infty)}\left\{\ln\lambda\colon-\lambda\sigma\leq\gamma\leq\lambda\sigma\right\} \label{eq:max-extended}\\
	&=\begin{cases}
		\ln\left\lVert\sigma^{-\frac{1}{2}}\gamma\sigma^{-\frac{1}{2}}\right\rVert_\infty&\text{if }\gamma^0\leq\sigma^0, \\
		\infty&\text{otherwise},
	\end{cases} \\
	&=\lim_{\alpha\to\infty}\sw{D}_\alpha\fleft(\gamma\middle\|\sigma\fright). \label{eq:max-extended-limit}
\end{align}
The extended max-divergence reduces to the original max-divergence~\cite{datta2009MinMaxrelativeEntropies} when its first argument is a state.  A systematic study of the properties of the extended max-divergence can be found in Ref.~\cite[Appendix~H]{mishra2024OptimalErrorExponents}.

Let $\tau\in\aff(\s{D}_A)$ be a unit-trace Hermitian operator, and let $\sigma\in\s{PSD}_A$ be a positive semidefinite operator.  For $\varepsilon\in[0,1]$, we define the \emph{extended hypothesis-testing divergence} as
\begin{align}
	\label{eq:hypothesis-extended}
	D_\abb{H}^\varepsilon\fleft(\tau\middle\|\sigma\fright)&\coloneq\sup_{\Lambda\in\s{PSD}_A}\left\{-\ln\tr\fleft[\Lambda\sigma\fright]\colon\Lambda\leq\1,\;\tr\fleft[\Lambda\tau\fright]\geq1-\varepsilon\right\}.
\end{align}
The extended hypothesis-testing divergence reduces to the original hypothesis-testing divergence~\cite{buscemi2010QuantumCapacityChannels,wang2012OneshotClassicalquantumCapacity} when its first argument is a state.  The following lemma is a generalization of a useful bound on the hypothesis-testing divergence in terms of the sandwiched Rényi divergence in Ref.~\cite[Lemma~5]{cooney2016StrongConverseExponents}, now applicable to their extended definitions.

\begin{lemma}[Connection between the extended $D_\abb{H}^\varepsilon$ and $\sw{D}_\alpha$]
\label{lem:hypothesis-extended}
Let $\tau\in\aff(\s{D}_A)$ be a unit-trace Hermitian operator, and let $\sigma\in\s{PSD}_A$ be a positive semidefinite operator.  Then for all $\varepsilon\in[0,1)$ and all $\alpha\in(1,\infty)$,
\begin{align}
	D_\abb{H}^\varepsilon\fleft(\tau\middle\|\sigma\fright)&\leq\sw{D}_\alpha\fleft(\tau\middle\|\sigma\fright)+\frac{\alpha}{\alpha-1}\ln\left(\frac{1}{1-\varepsilon}\right).
\end{align}
\end{lemma}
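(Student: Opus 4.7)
The plan is to imitate the standard two-outcome measurement reduction used by Cooney, Mosonyi, and Wilde~\cite{cooney2016StrongConverseExponents} for proving the state-level version of this bound, now applied in the enlarged Hermitian setting of \eqref{eq:sandwiched-extended}. First I would dispose of the trivial case: if $\supp(\tau)\not\subseteq\supp(\sigma)$, the right-hand side is $+\infty$ by \eqref{eq:sandwiched-extended} and there is nothing to prove. So I would assume $\supp(\tau)\subseteq\supp(\sigma)$ henceforth.

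Next, I would fix any test operator $\Lambda\in\s{PSD}_A$ with $\Lambda\leq\1$ and $\tr[\Lambda\tau]\geq1-\varepsilon$ from the feasible set of \eqref{eq:hypothesis-extended}, and construct the two-outcome measurement channel
\begin{align*}
	\ch{M}\in\s{C}_{A\to X}\colon\gamma\mapsto\tr\fleft[\Lambda\gamma\fright]\op{0}{0}+\tr\fleft[\fleft(\1-\Lambda\fright)\gamma\fright]\op{1}{1}.
\end{align*}
Then $\ch{M}[\tau]$ is a classical Hermitian operator with diagonal entries $a\equiv\tr[\Lambda\tau]$ and $1-a$, while $\ch{M}[\sigma]$ is a classical positive semidefinite operator with diagonal entries $b\equiv\tr[\Lambda\sigma]$ and $\tr[\sigma]-b$. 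Note that $a\geq1-\varepsilon>0$ by assumption, while $b\geq0$ and $\tr[\sigma]-b\geq0$. Applying the DPI for the extended sandwiched Rényi divergence (Theorem~\ref{thm:sandwiched-extended}.1) to $\ch{M}$ yields
\begin{align*}
	\sw{D}_\alpha\fleft(\ch{M}\fleft[\tau\fright]\middle\|\ch{M}\fleft[\sigma\fright]\fright)&\leq\sw{D}_\alpha\fleft(\tau\middle\|\sigma\fright).
\end{align*}

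Because both outputs are diagonal in the same basis, the left-hand side can be computed directly from \eqref{eq:sandwiched-extended}: taking powers commutes with the diagonal structure, and the resulting $\alpha$-norm to the $\alpha$-th power is exactly $a^\alpha b^{1-\alpha}+\lvert1-a\rvert^\alpha(\tr[\sigma]-b)^{1-\alpha}$, with the convention that a term equals $+\infty$ when its base appears with exponent $1-\alpha<0$ at value $0$ unless its numerator is also $0$. Dropping the nonnegative second summand gives the one-sided bound
\begin{align*}
	\exp\fleft(\fleft(\alpha-1\fright)\sw{D}_\alpha\fleft(\ch{M}\fleft[\tau\fright]\middle\|\ch{M}\fleft[\sigma\fright]\fright)\fright)&\geq a^\alpha b^{1-\alpha}\geq\fleft(1-\varepsilon\fright)^\alpha b^{1-\alpha},
\end{align*}
where in the last step I used $\alpha>1$ and $a\geq1-\varepsilon>0$. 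Taking logarithms, dividing by $\alpha-1$, and combining with the DPI inequality above yields
\begin{align*}
	-\ln\tr\fleft[\Lambda\sigma\fright]&=-\ln b\leq\sw{D}_\alpha\fleft(\tau\middle\|\sigma\fright)+\frac{\alpha}{\alpha-1}\ln\fleft(\frac{1}{1-\varepsilon}\fright).
\end{align*}
Since $\Lambda$ was an arbitrary feasible test, taking the supremum over $\Lambda$ on the left produces $D_\abb{H}^\varepsilon(\tau\|\sigma)$ and completes the proof.

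The only place where the extension beyond states matters is at the DPI step, since $\ch{M}[\tau]$ can have a negative eigenvalue $1-a$ whenever $a>1$; however, \eqref{eq:sandwiched-extended} is well defined precisely because the definition uses $\|\cdot\|_\alpha$, which depends on $\lvert1-a\rvert$ rather than $1-a$. So the main (very mild) obstacle is just bookkeeping: verifying that the classical evaluation of the extended divergence is compatible with potentially negative diagonal entries, and ensuring that the $\tr[\sigma]-b=0$ edge case does not obstruct the one-sided bound (which it does not, because we discard that term).
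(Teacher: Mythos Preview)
Your proof is correct and follows essentially the same route as the paper's: dispose of the support-mismatch case, apply the extended DPI (Theorem~\ref{thm:sandwiched-extended}.1) to the two-outcome measurement channel, evaluate the resulting classical extended sandwiched Rényi divergence via the $\alpha$-norm (absorbing the possible sign of $1-a$ into the absolute value), drop the second summand, and optimise over $\Lambda$. Your closing remarks about where the Hermitian extension actually bites are accurate and match the paper's treatment.
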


\begin{proof}
See Appendix~\ref{app:hypothesis-extended}.
\end{proof}

\section{Quantum state exclusion}
\label{sec:exclusion-state}

In this section, we define and analyze the task of quantum state exclusion from an information-theoretic perspective.  Our analysis includes an exact characterization of the one-shot error probability in terms of the extended hypothesis-testing divergence and a converse bound in terms of an extended version of the left sandwiched Rényi radius.  The one-shot analysis then leads to an upper bound on the asymptotic error exponent of state exclusion, precisely given by the multivariate log-Euclidean Chernoff divergence.

\subsection{Setting}
\label{sec:setting-state}

In contrast to a discrimination task, which aims at finding out the true hypothesis, an exclusion task is concerned with ruling out a false hypothesis.  Despite their equivalence in the binary setting (i.e., when there are only two hypotheses), the two tasks are substantially different when more hypotheses are involved.  In quantum state exclusion, the experimenter receives a system $A$ in an unknown state.  The source of the system is represented by an ensemble of states, $\en{E}\equiv(p_{[r]},\rho_{[r]})$ with $r\geq2$, and this indicates that for each $x\in[r]$, there is a prior probability $p_x$ with which the unknown state is $\rho_x\in\s{D}_A$.  Without loss of generality, we henceforth always assume that $p_{[r]}\in\itr(\s{P}_r)$, i.e., that $p_x\in(0,1)$ for all $x\in[r]$.  The experimenter's goal is to submit an index $x'\in[r]$ that \emph{differs} from the actual label of the state they received.\footnote{In comparison, in quantum state discrimination, the experimenter's goal is to submit an index \emph{equal} to the actual label of the state.}  We can readily see that when $r\geq3$, state exclusion is in general easier than state discrimination, in the sense that the experimenter has a higher chance of success in the former task than in the latter when faced with the same ensemble.

The most general strategy of the experimenter for state exclusion is to perform a measurement, represented by a general POVM $\Lambda_{[r]}\in\s{M}_{A,r}$, on the system $A$ and submit the measurement outcome.  An error occurs if and only if the outcome coincides with the actual label of the state.  Consequently, the \emph{(one-shot) error probability} of state exclusion for the ensemble $\en{E}$ is given by
\begin{align}
	\label{eq:error-probability-state}
	P_\abb{err}\fleft(\en{E}\fright)&\coloneq\inf_{\Lambda_{[r]}\in\s{M}_{A,r}}\sum_{x\in[r]}p_x\tr\fleft[\Lambda_x\rho_x\fright],
\end{align}
where the infimization over all POVMs is due to the experimenter's motivation to avoid an error.  We say that the tuple of states $\rho_{[r]}$ is \emph{perfectly antidistinguishable} whenever $P_\abb{err}(\en{E})=0$.\footnote{Note that $P_\abb{err}(\en{E})=0$ if and only if there exists a POVM $\Lambda_{[r]}\in\s{M}_{A,r}$ such that $\tr[\Lambda_x\rho_x]=0$ for all $x\in[r]$.  This implies that perfect antidistinguishability is a property of the tuple of states $\rho_{[r]}$, independent of the prior probability distribution $p_{[r]}$.}

When the experimenter collects $n$ copies of the same unknown state generated by the source, state exclusion is in effect conducted on the $n$-fold ensemble $\en{E}^n\equiv(p_{[r]},\rho_{[r]}^{\otimes n})$, where $\rho_{[r]}^{\otimes n}\equiv(\rho_1^{\otimes n},\rho_2^{\otimes n},\dots,\rho_r^{\otimes n})$.  As $n$ increases, the error probability decays exponentially fast in general~\cite{mishra2024OptimalErrorExponents}, and this motivates us to investigate the \emph{(asymptotic) error exponent} of state exclusion, which for the ensemble $\en{E}$ is defined as
\begin{align}
	\label{eq:error-exponent-state}
	&\liminf_{n\to\infty}-\frac{1}{n}\ln P_\abb{err}\fleft(\en{E}^n\fright) \notag\\
	&=\liminf_{n\to\infty}\sup_{\Lambda_{[r]}^{(n)}\in\s{M}_{A^n,r}}-\frac{1}{n}\ln\left(\sum_{x\in[r]}p_x\tr\fleft[\Lambda_x^{(n)}\rho_x^{\otimes n}\fright]\right),
\end{align}
where $A^n$ denotes the composite system consisting of $n$ copies of $A$.  As shown in Ref.~\cite[Eq.~(30)]{mishra2024OptimalErrorExponents}, the error exponent of state exclusion for the ensemble $\en{E}$ does not depend on the prior probability distribution $p_{[r]}$.  Analytical expressions for the error exponent have been derived in several special cases: when $r=2$, it is simply given by the error exponent of binary state discrimination, namely, the quantum Chernoff divergence~\cite{audenaert2007DiscriminatingStatesQuantum,nussbaum2009ChernoffLowerBound}; when the states of concern are classical, it is given by the multivariate classical Chernoff divergence~\cite{mishra2024OptimalErrorExponents}.  However, a universal formula that applies to general ensembles with $r\geq3$ has been lacking, although multiple relevant bounds have been established in Ref.~\cite{mishra2024OptimalErrorExponents}.  A major contribution of this paper is to derive a universal single-letter upper bound on the error exponent of state exclusion that is tighter than the best previously known efficiently computable upper bound.

\subsection{Characterization of the one-shot error probability}
\label{sec:oneshot-state}

Our upper bound on the asymptotic error exponent is derived based on one-shot analysis.  The following proposition characterizes the error probability of state exclusion in the one-shot regime, in terms of the extended hypothesis-testing divergence.

\begin{proposition}[Characterization of the error probability]
\label{prop:oneshot-hypothesis}
Let $\en{E}\equiv(p_{[r]},\rho_{[r]})$ be an ensemble of states with $p_{[r]}\in\itr(\s{P}_r)$ and $\rho_{[r]}\in\s{D}_A^{[r]}$.  Then
\begin{align}
	\label{eq:oneshot-hypothesis}
	-\ln P_\abb{err}\fleft(\en{E}\fright)&=\inf_{\tau\in\aff\fleft(\s{D}_A\fright)}D_\abb{H}^{1-\frac{1}{r}}\fleft(\pi_X\otimes\tau_A\middle\|\cq{\rho}_{XA}\fright),
\end{align}
where $\pi_X\equiv\frac{1}{r}\sum_{x\in[r]}\op{x}{x}_X$ denotes the uniform state of a system $X$ and
\begin{align}
	\label{eq:cq}
	\cq{\rho}_{XA}&\equiv\sum_{x\in[r]}p_x\op{x}{x}_X\otimes\rho_{x,A}.
\end{align}
\end{proposition}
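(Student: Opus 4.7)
The plan is to prove the equality by establishing both inequalities via SDP duality, the key device being a natural lifting of POVMs on $A$ to block-diagonal operators on $XA$ that are feasible test operators inside the extended hypothesis-testing divergence.

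For the direction $-\ln P_\abb{err}(\en{E})\leq\inf_{\tau\in\aff(\s{D}_A)}D_\abb{H}^{1-\frac{1}{r}}(\pi_X\otimes\tau_A\|\cq{\rho}_{XA})$, I fix an arbitrary POVM $\Lambda_{[r]}\in\s{M}_{A,r}$ and lift it to the block-diagonal operator $\Lambda_{XA}\equiv\sum_{x\in[r]}\op{x}{x}_X\otimes\Lambda_{x,A}$. One checks that $0\leq\Lambda_{XA}\leq\1_{XA}$ (the upper bound holds because each $\Lambda_x\leq\sum_{y\in[r]}\Lambda_y=\1_A$), and for every $\tau\in\aff(\s{D}_A)$,
\begin{align*}
\tr[\Lambda_{XA}(\pi_X\otimes\tau_A)]=\frac{1}{r}\tr[\tau]=\frac{1}{r},\qquad \tr[\Lambda_{XA}\cq{\rho}_{XA}]=\sum_{x\in[r]}p_x\tr[\Lambda_x\rho_x].
\end{align*}
Hence $\Lambda_{XA}$ is feasible in the SDP defining $D_\abb{H}^{1-\frac{1}{r}}(\pi_X\otimes\tau_A\|\cq{\rho}_{XA})$ and attains the value $-\ln\sum_{x\in[r]}p_x\tr[\Lambda_x\rho_x]$. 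Taking the sup over POVMs, then the inf over $\tau$, yields this direction.

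For the reverse, I express $P_\abb{err}(\en{E})$ as the SDP $\inf\{\sum_{x\in[r]}p_x\tr[\Lambda_x\rho_x]:\Lambda_x\geq 0,\sum_{x\in[r]}\Lambda_x=\1_A\}$ and dualize to obtain $P_\abb{err}(\en{E})=\sup\{\tr[M]:M\in\s{Herm}_A,\;M\leq p_x\rho_x\;\forall x\in[r]\}$, with strong duality and dual attainment ensured by Slater's condition (the primal is strictly feasible at $\Lambda_x=\1_A/r$). Assuming $P_\abb{err}(\en{E})>0$, let $M^*$ attain the dual and set $\tau^*\equiv M^*/P_\abb{err}(\en{E})$. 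Since $\tr[M^*]=P_\abb{err}(\en{E})$, we have $\tau^*\in\aff(\s{D}_A)$. For any $\Lambda_{XA}$ feasible in the SDP for $D_\abb{H}^{1-\frac{1}{r}}(\pi_X\otimes\tau^*\|\cq{\rho}_{XA})$, I chain
\begin{align*}
\tr[\Lambda_{XA}\cq{\rho}_{XA}]&\geq\tr[\Lambda_{XA}(\1_X\otimes M^*)]=\tr[\tr_X[\Lambda_{XA}]\,M^*]\\
&=P_\abb{err}(\en{E})\tr[\tr_X[\Lambda_{XA}]\,\tau^*]=rP_\abb{err}(\en{E})\tr[\Lambda_{XA}(\pi_X\otimes\tau^*)]\geq P_\abb{err}(\en{E}),
\end{align*}
where the first inequality uses $\cq{\rho}_{XA}\geq\1_X\otimes M^*$ (equivalent to $p_x\rho_x\geq M^*$ for all $x\in[r]$) together with $\Lambda_{XA}\geq 0$, and the last inequality is the feasibility constraint $\tr[\Lambda_{XA}(\pi_X\otimes\tau^*)]\geq\frac{1}{r}$. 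Taking the inf over $\Lambda_{XA}$ gives $D_\abb{H}^{1-\frac{1}{r}}(\pi_X\otimes\tau^*\|\cq{\rho}_{XA})\leq -\ln P_\abb{err}(\en{E})$, proving this direction. The case $P_\abb{err}(\en{E})=0$ is already subsumed by the first direction, since then both sides equal $+\infty$.

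The main subtlety is that the optimal dual variable $M^*$ need not be positive semidefinite, so its rescaling $\tau^*$ is generally only a unit-trace Hermitian operator rather than a state. This is precisely what forces the infimum in the statement to range over $\aff(\s{D}_A)$ and makes the extension of the hypothesis-testing divergence to non-state first arguments, developed in Section~\ref{sec:extension}, essential; without this extension, the equality could not even be formulated.
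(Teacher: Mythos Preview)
Your proof is correct and rests on the same two pillars as the paper's: the block-diagonal lifting of a POVM to a feasible test operator for the first direction, and the SDP dual $P_\abb{err}(\en{E})=\sup\{\tr[\gamma]:\gamma\leq p_x\rho_x\;\forall x\}$ for the reverse. The first direction is essentially identical to the paper's.

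Where you differ is in the execution of the reverse direction. The paper rewrites the dual via the substitution $\gamma=\eta\tau$, introduces a Lagrange multiplier $\cq{\Lambda}\in\s{PSD}_{XA}$ for the constraint $\eta\1_X\otimes\tau_A\leq\cq{\rho}_{XA}$, and then swaps a $\sup$ and an $\inf$ (weak duality) to land on the hypothesis-testing form. You instead invoke dual attainment, pick the optimal $M^*$, normalize it to $\tau^*=M^*/P_\abb{err}(\en{E})$, and verify directly that this single $\tau^*$ already witnesses the bound $D_\abb{H}^{1-\frac{1}{r}}(\pi_X\otimes\tau^*\|\cq{\rho}_{XA})\leq-\ln P_\abb{err}(\en{E})$ via the one-line chain $\tr[\Lambda_{XA}\cq{\rho}_{XA}]\geq\tr[\Lambda_{XA}(\1_X\otimes M^*)]\geq P_\abb{err}(\en{E})$. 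Your route is more constructive: it explicitly identifies the optimizing $\tau$ in the outer infimum as the (rescaled) dual optimizer, which also makes transparent why $\tau$ must range over $\aff(\s{D}_A)$ rather than $\s{D}_A$. The paper's Lagrangian route, on the other hand, avoids the need to invoke dual attainment and would go through verbatim even if the dual supremum were not attained.
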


\begin{proof}
For a POVM $\Lambda_{[r]}\in\s{M}_{A,r}$, let
\begin{align}
	\cq{\Lambda}_{XA}&\equiv\sum_{x\in[r]}\op{x}{x}_X\otimes\Lambda_{x,A}.
\end{align}
We observe that $\cq{\Lambda}\in\s{PSD}_{XA}$ and that
\begin{align}
	\cq{\Lambda}_{XA}&\leq\1_{XA}, \label{pf:oneshot-hypothesis-1}\\
	\tr\fleft[\cq{\Lambda}_{XA}\cq{\rho}_{XA}\fright]&=\sum_{x\in[r]}p_x\tr\fleft[\Lambda_x\rho_x\fright], \label{pf:oneshot-hypothesis-2}\\
	\tr\fleft[\cq{\Lambda}_{XA}\left(\pi_X\otimes\tau_A\right)\fright]&=\frac{1}{r}\sum_{x\in[r]}\tr\fleft[\Lambda_x\tau\fright]=\frac{1}{r}\quad\forall\tau\in\aff\fleft(\s{D}_A\fright). \label{pf:oneshot-hypothesis-3}
\end{align}
Then for every unit-trace Hermitian operator $\tau\in\aff(\s{D}_A)$, we have that
\begin{align}
	P_\abb{err}\fleft(\en{E}\fright)&=\inf_{\Lambda_{[r]}\in\s{M}_{A,r}}\sum_{x\in[r]}p_x\tr\fleft[\Lambda_x\rho_x\fright] \\
	&\geq\inf_{\cq{\Lambda}\in\s{PSD}_{XA}}\left\{\tr\fleft[\cq{\Lambda}_{XA}\cq{\rho}_{XA}\fright]\colon\cq{\Lambda}_{XA}\leq\1_{XA},\vphantom{\tr\fleft[\cq{\Lambda}_{XA}\left(\pi_X\otimes\tau_A\right)\fright]\geq\frac{1}{r}}\right. \notag\\
	&\qquad\left.\tr\fleft[\cq{\Lambda}_{XA}\left(\pi_X\otimes\tau_A\right)\fright]\geq\frac{1}{r}\right\} \label{pf:oneshot-hypothesis-4}\\
	&=\exp\left(-D_\abb{H}^{1-\frac{1}{r}}\fleft(\pi_X\otimes\tau_A\middle\|\cq{\rho}_{XA}\fright)\right). \label{pf:oneshot-hypothesis-5}
\end{align}
Here \eqref{pf:oneshot-hypothesis-4} follows from \eqref{pf:oneshot-hypothesis-1}--\eqref{pf:oneshot-hypothesis-3}; Eq.~\eqref{pf:oneshot-hypothesis-5} follows from the definition of the extended hypothesis-testing divergence (see \eqref{eq:hypothesis-extended}).  This shows that
\begin{align}
	-\ln P_\abb{err}\fleft(\en{E}\fright)&\leq\inf_{\tau\in\aff\fleft(\s{D}_A\fright)}D_\abb{H}^{1-\frac{1}{r}}\fleft(\pi_X\otimes\tau_A\middle\|\cq{\rho}_{XA}\fright). \label{pf:oneshot-hypothesis-6}
\end{align}
To show the opposite inequality, it follows from the dual SDP formulation of $P_\abb{err}(\en{E})$~\cite[Eq.~(III.15)]{yuen1975OptimumTestingMultiple} (also see Ref.~\cite[Eq.~(4)]{bandyopadhyay2014ConclusiveExclusionQuantum}) that
\begin{align}
	&P_\abb{err}\fleft(\en{E}\fright) \notag\\
	&=\sup_{\gamma\in\s{Herm}_A}\left\{\tr\fleft[\gamma\fright]\colon\gamma\leq p_x\rho_x\;\forall x\in[r]\right\} \label{pf:oneshot-hypothesis-7}\\
	&=\sup_{\gamma\in\s{Herm}_A}\left\{\tr\fleft[\gamma\fright]\colon\1_X\otimes\gamma_A\leq\cq{\rho}_{XA}\right\} \\
	&=\sup_{\tau\in\aff\fleft(\s{D}_A\fright)}\sup_{\eta\in[0,\infty)}\left\{\eta\colon\eta\1_X\otimes\tau_A\leq\cq{\rho}_{XA}\right\} \label{pf:oneshot-hypothesis-8}\\
	&=\sup_{\tau\in\aff\fleft(\s{D}_A\fright)}\sup_{\eta\in[0,\infty)}\inf_{\cq{\Lambda}\in\s{PSD}_{XA}}\left(\eta+\tr\fleft[\cq{\Lambda}_{XA}\left(\cq{\rho}_{XA}-\eta\1_X\otimes\tau_A\right)\fright]\right) \label{pf:oneshot-hypothesis-9}\\
	&=\sup_{\tau\in\aff\fleft(\s{D}_A\fright)}\sup_{\eta\in[0,\infty)}\inf_{\cq{\Lambda}\in\s{PSD}_{XA}}\left(\tr\fleft[\cq{\Lambda}_{XA}\cq{\rho}_{XA}\fright]\right. \notag\\
	&\qquad\left.\vphantom{}+\eta\left(1-\tr\fleft[\cq{\Lambda}_{XA}\left(\1_X\otimes\tau_A\right)\fright]\right)\right) \\
	&\leq\sup_{\tau\in\aff\fleft(\s{D}_A\fright)}\inf_{\cq{\Lambda}\in\s{PSD}_A}\sup_{\eta\in[0,\infty)}\left(\tr\fleft[\cq{\Lambda}_{XA}\cq{\rho}_{XA}\fright]\right. \notag\\
	&\qquad\left.\vphantom{}+\eta\left(1-\tr\fleft[\cq{\Lambda}_{XA}\left(\1_X\otimes\tau_A\right)\fright]\right)\right) \label{pf:oneshot-hypothesis-10}\\
	&=\sup_{\tau\in\aff\fleft(\s{D}_A\fright)}\inf_{\cq{\Lambda}\in\s{PSD}_{XA}}\left\{\tr\fleft[\cq{\Lambda}_{XA}\cq{\rho}_{XA}\fright]\colon\tr\fleft[\cq{\Lambda}_{XA}\left(\1_X\otimes\tau_A\right)\fright]\geq1\right\} \label{pf:oneshot-hypothesis-11}\\
	&\leq\sup_{\tau\in\aff\fleft(\s{D}_A\fright)}\inf_{\cq{\Lambda}\in\s{PSD}_{XA}}\left\{\tr\fleft[\cq{\Lambda}_{XA}\cq{\rho}_{XA}\fright]\colon\cq{\Lambda}_{XA}\leq\1_{XA},\vphantom{\tr\fleft[\cq{\Lambda}_{XA}\left(\pi_X\otimes\tau_A\right)\fright]\geq\frac{1}{r}}\right. \notag\\
	&\qquad\left.\tr\fleft[\cq{\Lambda}_{XA}\left(\pi_X\otimes\tau_A\right)\fright]\geq\frac{1}{r}\right\} \label{pf:oneshot-hypothesis-12}\\
	&=\sup_{\tau\in\aff\fleft(\s{D}_A\fright)}\exp\left(-D_\abb{H}^{1-\frac{1}{r}}\fleft(\pi_X\otimes\tau_A\middle\|\cq{\rho}_{XA}\fright)\right). \label{pf:oneshot-hypothesis-13}
\end{align}
Here \eqref{pf:oneshot-hypothesis-7} is the dual SDP formulation of $P_\abb{err}(\en{E})$; Eq.~\eqref{pf:oneshot-hypothesis-8} applies the substitution $\sigma=\eta\tau$ and follows from the fact that $\eta=0$ is a feasible solution to the infimization; Eq.~\eqref{pf:oneshot-hypothesis-9} introduces $\cq{\Lambda}\in\s{PSD}_{XA}$ as a Lagrange multiplier corresponding to the constraint $\eta\1_X\otimes\tau_A\leq\cq{\rho}_{XA}$; Eq.~\eqref{pf:oneshot-hypothesis-10} applies an exchange in order between the supremization and the infimization; Eq.~\eqref{pf:oneshot-hypothesis-11} follows from a reverse Lagrangian reasoning that eliminates the multiplier $\eta$; Eq.~\eqref{pf:oneshot-hypothesis-12} follows from the fact that an additional constraint on $\cq{\Lambda}$ does not decrease the optimal value of the objective function.  Combining \eqref{pf:oneshot-hypothesis-6} and \eqref{pf:oneshot-hypothesis-13} leads to the desired statement.
\end{proof}

\begin{remark}[Necessity of taking the infimum over $\aff(\s{D}_A)$ in Proposition~\ref{prop:oneshot-hypothesis}]
\label{rem:affine}
For an ensemble of classical states, the infimum on the right-hand side of \eqref{eq:oneshot-hypothesis} can equivalently be taken over states.  However, for a general ensemble of quantum states, the infimum has to be taken over general unit-trace Hermitian operators rather than over states for the equality in \eqref{eq:oneshot-hypothesis} to hold, as Example~\ref{ex:affine} provides an ensemble for which taking the infimum over states would result in a violation of the equality.  As such, the general need for the infimum to be taken over unit-trace Hermitian operators represents a sharp distinction between the classical and quantum theories of information.
\end{remark}

\begin{example}
\label{ex:affine}
Consider an ensemble of qubit states $\en{E}\equiv(p_{[7]},\rho_{[7]})$ with $\rho_{[7]}\in\s{D}_A^{[7]}$, where
\begin{align}
	p_x&\coloneq\frac{1}{7}\quad\forall x\in[7], \\
	\rho_x&\coloneq\begin{cases}
		\op{1}{1}&\text{if }x\in[3], \\
		\frac{1}{2}\left(\ket{1}+\ket{2}\right)\left(\bra{1}+\bra{2}\right)&\text{if }x\in[7]\setminus[3].
	\end{cases}
\end{align}
Define the following operator:
\begin{align}
	\cq{\Lambda}_{\star,XA}&\coloneq\sum_{x\in[3]}\op{x}{x}_X\otimes\op{2}{2}_A \notag\\
	&\qquad+\sum_{x\in[7]\setminus[3]}\op{x}{x}_X\otimes\frac{1}{2}\left(\ket{1}-\ket{2}\right)\left(\bra{1}-\bra{2}\right)_A.
\end{align}
We observe that $\cq{\Lambda}_\star\in\s{PSD}_{XA}$ and that
\begin{align}
	\cq{\Lambda}_{\star,XA}&\leq\1_{XA}, \label{pf:affine-1}\\
	\tr\fleft[\cq{\Lambda}_{\star,XA}\cq{\rho}_{XA}\fright]&=0, \label{pf:affine-2}\\
	\tr_X\fleft[\cq{\Lambda}_{\star,XA}\fright]&=3\op{2}{2}_A+2\left(\ket{1}-\ket{2}\right)\left(\bra{1}-\bra{2}\right)_A\geq\1_A, \label{pf:affine-3}
\end{align}
where \eqref{pf:affine-2} follows the notation of \eqref{eq:cq}.  It follows from \eqref{pf:affine-3} that
\begin{align}
	\tr\fleft[\cq{\Lambda}_{\star,XA}\left(\pi_X\otimes\tau_A\right)\fright]&\geq\frac{1}{7}\quad\forall\tau\in\s{D}_A. \label{pf:affine-4}
\end{align}
Then it follows from the definition of the hypothesis-testing divergence (see \eqref{eq:hypothesis-extended}) that
\begin{align}
	&\inf_{\tau\in\s{D}_A}D_\abb{H}^{1-\frac{1}{r}}\fleft(\pi_X\otimes\tau_A\middle\|\cq{\rho}_{XA}\fright) \notag\\
	&=\inf_{\tau\in\s{D}_A}\sup_{\cq{\Lambda}\in\s{PSD}_{XA}}\left\{-\ln\tr\fleft[\cq{\Lambda}_{XA}\cq{\rho}_{XA}\fright]\colon\cq{\Lambda}_{XA}\leq\1_{XA},\vphantom{\tr\fleft[\cq{\Lambda}_{XA}\left(\pi_X\otimes\tau_A\right)\fright]\geq\frac{1}{7}}\right. \notag\\
	&\qquad\left.\tr\fleft[\cq{\Lambda}_{XA}\left(\pi_X\otimes\tau_A\right)\fright]\geq\frac{1}{7}\right\} \\
	&\geq\inf_{\tau\in\s{D}_A}-\ln\tr\fleft[\cq{\Lambda}_{\star,XA}\cq{\rho}_{XA}\fright] \label{pf:affine-5}\\
	&=\infty \label{pf:affine-6}\\
	&>-\ln P_\abb{err}\fleft(\en{E}\fright). \label{pf:affine-7}
\end{align}
Here \eqref{pf:affine-5} follows from \eqref{pf:affine-1} and \eqref{pf:affine-4}; Eq.~\eqref{pf:affine-6} follows from \eqref{pf:affine-2}; Eq.~\eqref{pf:affine-7} follows from the observation that the tuple of states $\rho_{[7]}$ is not perfectly antidistinguishable and thus $P_\abb{err}(\en{E})>0$.  This shows that taking the infimum on the right-hand side of \eqref{eq:oneshot-hypothesis} over states instead of over unit-trace Hermitian operators can result in a violation of the equality in \eqref{eq:oneshot-hypothesis}, as claimed in Remark~\ref{rem:affine}.
\end{example}

\begin{remark}[Connection with state discrimination]
\label{rem:discrimination}
Proposition~\ref{prop:oneshot-hypothesis} can be understood as a counterpart of Ref.~\cite[Theorem~1]{vazquez-vilar2016MultipleQuantumHypothesis} in state exclusion, while Ref.~\cite[Theorem~1]{vazquez-vilar2016MultipleQuantumHypothesis} provides an exact characterization of the one-shot error probability of quantum state discrimination in terms of the hypothesis-testing divergence.  We restate Ref.~\cite[Theorem~1]{vazquez-vilar2016MultipleQuantumHypothesis} in Proposition~\ref{prop:discrimination} of Appendix~\ref{app:discrimination} and present an alternative proof thereof, following an approach similar to the proof of Proposition~\ref{prop:oneshot-hypothesis}.
\end{remark}

\begin{remark}[Connection with $D_{\max}$]
\label{rem:dmax}
It follows from \eqref{pf:oneshot-hypothesis-8} that
\begin{align}
	-\ln P_\abb{err}\fleft(\en{E}\fright)&=\inf_{\tau\in\aff\fleft(\s{D}_A\fright)}D_{\max}'\fleft(\1_X\otimes\tau_A\middle\|\cq{\rho}_{XA}\fright) \label{pf:prior-1}\\
	&=\inf_{\tau\in\aff\fleft(\s{D}_A\fright)}\max_{x\in[r]}D_{\max}'\fleft(\tau\middle\|p_x\rho_x\fright) \label{pf:prior-2}\\
	&\leq\inf_{\tau\in\aff\fleft(\s{D}_A\fright)}\max_{x\in[r]}D_{\max}'\fleft(\tau\middle\|\rho_x\fright)-\ln\min_{x\in[r]}p_x \label{pf:prior-3}\\
	&\leq\inf_{\tau\in\aff\fleft(\s{D}_A\fright)}\max_{x\in[r]}D_{\max}\fleft(\tau\middle\|\rho_x\fright)-\ln\min_{x\in[r]}p_x, \label{pf:prior-4}
\end{align}
where for $\gamma\in\s{Herm}_A$ and $\sigma\in\s{PSD}_A$,
\begin{align}
	D_{\max}'\fleft(\gamma\middle\|\sigma\fright)&\coloneq\inf_{\lambda\in[0,\infty)}\left\{\ln\lambda\colon\gamma\leq\lambda\sigma\right\}
\end{align}
provides a lower bound on the extended max-divergence in \eqref{eq:max-extended}.  We note that \eqref{pf:prior-3} and \eqref{pf:prior-4} provide alternative representations for Ref.~\cite[Eqs.~(185) and (188)]{mishra2024OptimalErrorExponents}, respectively.  In addition, the right-hand side of \eqref{pf:prior-1} can be understood as a conditional entropy induced by the reverse $D_{\max}'$, and the right-hand side of \eqref{pf:prior-2} can be understood as the left $D_{\max}'$-radius of the tuple of subnormalized states $(p_x\rho_x)_{x\in[r]}$.  Analogous expressions for the success probability of state discrimination have been shown before in Ref.~\cite{konig2009OperationalMeaningMin} and Refs.~\cite{mosonyi2009GeneralizedRelativeEntropies,mosonyi2021DivergenceRadiiStrong}, respectively:
\begin{align}
	\ln P_\abb{succ}^\abb{disc}\fleft(\en{E}\fright)&\coloneq\sup_{\Lambda_{[r]}\in\s{M}_{A,r}}\sum_{x\in[r]}p_x\tr\fleft[\Lambda_x\rho_x\fright] \\
	&=\underbrace{\inf_{\tau\in\s{D}_A}D_{\max}\fleft(\cq{\rho}_{XA}\middle\|\1_X\otimes\tau_A\fright)}_{\eqcolon-H_{\min}\fleft(X\middle|A\fright)_\cq{\rho}} \label{pf:prior-5}\\
	&=\inf_{\tau\in\s{D}_A}\max_{x\in[r]}D_{\max}\fleft(p_x\rho_x\middle\|\tau\fright) \\
	&\leq\inf_{\tau\in\s{D}_A}\max_{x\in[r]}D_{\max}\fleft(\rho_x\middle\|\tau\fright)+\ln\max_{x\in[r]}p_x, \label{pf:prior-6}
\end{align}
where $H_{\min}(X|A)_\cq{\rho}$ is the conditional min-entropy.  Note the reverse order of the arguments in \eqref{pf:prior-5}--\eqref{pf:prior-6} compared to \eqref{pf:prior-1}--\eqref{pf:prior-4}.
\end{remark}

Recall that Lemma~\ref{lem:hypothesis-extended} provides an upper bound on the extended hypothesis-testing divergence in terms of the extended sandwiched Rényi divergence.  Applying this connection to Proposition~\ref{prop:oneshot-hypothesis}, we obtain a converse bound on the error probability in terms of an extended version of the left sandwiched Rényi radius.

\begin{proposition}[Converse bound on the error probability]
\label{prop:oneshot-state}
Let $\en{E}\equiv(p_{[r]},\rho_{[r]})$ be an ensemble of states with $p_{[r]}\in\itr(\s{P}_r)$ and $\rho_{[r]}\in\s{D}_A^{[r]}$.  Then for all $\alpha\in(1,\infty)$,
\begin{align}
	\label{eq:oneshot-state}
	&-\ln P_\abb{err}\fleft(\en{E}\fright) \notag\\
	&\leq\sup_{s_{[r]}\in\s{P}_r}\inf_{\tau\in\aff\fleft(\s{D}_A\fright)}\sum_{x\in[r]}s_x\sw{D}_\alpha\fleft(\tau\middle\|\rho_x\fright)+\frac{\alpha}{\alpha-1}\ln\left(\frac{1}{p_{\min}}\right),
\end{align}
where $p_{\min}\equiv\min_{x\in[r]}p_x$.
\end{proposition}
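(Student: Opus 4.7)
The plan is to combine the exact one-shot characterization in Proposition~\ref{prop:oneshot-hypothesis}, the bound relating the extended hypothesis-testing and sandwiched Rényi divergences in Lemma~\ref{lem:hypothesis-extended}, the direct-sum identity for the extended sandwiched Rényi quasi-divergence (Theorem~\ref{thm:sandwiched-extended}.4), and finally a minimax exchange in the spirit of Lemma~\ref{lem:radius-alternative}.

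To start, Proposition~\ref{prop:oneshot-hypothesis} rewrites $-\ln P_\abb{err}(\en{E})$ as $\inf_{\tau\in\aff(\s{D}_A)}D_\abb{H}^{1-1/r}(\pi_X\otimes\tau_A\|\cq{\rho}_{XA})$, and Lemma~\ref{lem:hypothesis-extended} applied with $\varepsilon=1-1/r$ gives, for every $\tau\in\aff(\s{D}_A)$,
\begin{align*}
	D_\abb{H}^{1-1/r}\fleft(\pi_X\otimes\tau_A\middle\|\cq{\rho}_{XA}\fright)\leq\sw{D}_\alpha\fleft(\pi_X\otimes\tau_A\middle\|\cq{\rho}_{XA}\fright)+\frac{\alpha}{\alpha-1}\ln r.
\end{align*}
Next, the direct-sum identity (Theorem~\ref{thm:sandwiched-extended}.4), applied with weights $1/r$ for the first argument and $p_x$ for the second, yields
\begin{align*}
	\sw{Q}_\alpha\fleft(\pi_X\otimes\tau_A\middle\|\cq{\rho}_{XA}\fright)=\sum_{x\in[r]}\fleft(\tfrac{1}{r}\fright)^\alpha p_x^{1-\alpha}\sw{Q}_\alpha\fleft(\tau\middle\|\rho_x\fright).
\end{align*}
Taking the logarithm and dividing by $\alpha-1$, then bounding $p_x^{1-\alpha}\leq p_{\min}^{1-\alpha}$ (legitimate since $1-\alpha<0$) and $\sum_{x\in[r]}\sw{Q}_\alpha(\tau\|\rho_x)\leq r\max_{x\in[r]}\sw{Q}_\alpha(\tau\|\rho_x)$, and finally using $\ln r\leq\ln(1/p_{\min})$ (which holds because $p_{\min}\leq 1/r$), a short calculation leads to the uniform-in-$\tau$ bound
\begin{align*}
	D_\abb{H}^{1-1/r}\fleft(\pi_X\otimes\tau_A\middle\|\cq{\rho}_{XA}\fright)\leq\max_{x\in[r]}\sw{D}_\alpha\fleft(\tau\middle\|\rho_x\fright)+\frac{\alpha}{\alpha-1}\ln\fleft(\tfrac{1}{p_{\min}}\fright).
\end{align*}

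Infimizing over $\tau\in\aff(\s{D}_A)$ and rewriting $\max_{x\in[r]}(\cdot)=\sup_{s_{[r]}\in\s{P}_r}\sum_{x\in[r]}s_x(\cdot)$ puts the bound into $\inf_\tau\sup_s$ form, and the final step is to swap $\inf_\tau$ and $\sup_s$, which I expect to be the main technical obstacle. The objective $\sum_{x\in[r]}s_x\sw{D}_\alpha(\tau\|\rho_x)$ is affine (hence continuous and concave) in $s_{[r]}\in\s{P}_r$, a compact convex set, so a Sion-type minimax theorem will apply provided this function is quasiconvex and lower semicontinuous in $\tau$ on $\aff(\s{D}_A)$. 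Each individual summand $\sw{D}_\alpha(\cdot\|\rho_x)$ is quasiconvex and lower semicontinuous by Theorem~\ref{thm:sandwiched-extended}, but since quasiconvexity is not in general preserved under positive linear combinations, the cleanest route will be either to verify outright convexity of $\sw{D}_\alpha(\cdot\|\rho_x)$ on $\aff(\s{D}_A)$ for $\alpha>1$ (so that the sum is convex and hence quasiconvex) or to adapt the argument of Lemma~\ref{lem:radius-alternative}(II) to the extended divergence over the affine hull of states.
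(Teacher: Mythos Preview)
Your outline is the same as the paper's in skeleton---Proposition~\ref{prop:oneshot-hypothesis}, then Lemma~\ref{lem:hypothesis-extended}, then the direct-sum identity, then bounding by the max, then a minimax swap---but you are missing one structural ingredient that the paper inserts at the very start: a depolarizing regularization. The paper does not work with $\en{E}$ directly; it replaces $\en{E}$ by $\en{E}_\delta=(p_{[r]},\ch{D}_\delta[\rho_{[r]}])$ with $\ch{D}_\delta[\rho]=(1-\delta)\rho+\delta\pi$, runs your entire chain for $\en{E}_\delta$, and only at the end lets $\delta\searrow0$. Two things are bought by this. First, with full-rank second arguments every $\sw{D}_\alpha(\tau\|\ch{D}_\delta[\rho_x])$ is finite and equals the $\alpha$-norm expression, so lower semicontinuity in $\tau$ follows at once from continuity of $\lVert\cdot\rVert_\alpha$; without it you have to manage the $+\infty$ values and support conditions by hand. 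Second, to remove $\delta$ afterwards the paper uses the nonincreasing monotonicity of $\sw{D}_\alpha$ in its second argument (Theorem~\ref{thm:sandwiched-extended}.6) to pass from $\ch{D}_\delta[\rho_x]$ to $(1-\delta)\rho_x$, and a short estimate $|P_{\mathrm{err}}(\en{E}_\delta)-P_{\mathrm{err}}(\en{E})|\le2\delta$ to recover $P_{\mathrm{err}}(\en{E})$.

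On your two proposed routes for the minimax: route (a) does not work, because $\sw{D}_\alpha(\cdot\|\sigma)$ is \emph{not} convex on $\aff(\s{D}_A)$ in general. Already for $\sigma=\pi=I/d$ one has $\sw{D}_\alpha(\tau\|\pi)=\mathrm{const}+\tfrac{\alpha}{\alpha-1}\ln\lVert\tau\rVert_\alpha$, and for qubits along the line $\tau=\mathrm{diag}(\tfrac{1+a}{2},\tfrac{1-a}{2})$ with $\alpha=2$ this is $\mathrm{const}+\ln(1+a^2)$, whose second derivative is negative for $|a|>1$; so convexity fails on the affine hull even though quasiconvexity holds. Route (b) is essentially what the paper does: after the depolarizing step it applies Sion's theorem, citing the joint quasiconvexity of the extended $\sw{D}_\alpha$ (Theorem~\ref{thm:sandwiched-extended}.5) for the quasiconvexity-in-$\tau$ hypothesis and continuity of the $\alpha$-norm for lower semicontinuity. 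So the paper's resolution of the step you flagged is exactly your option (b), implemented via the depolarizing regularization.
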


\begin{proof}
For $\delta\in(0,1)$, define an ensemble of states $\en{E}_\delta\coloneq(p_{[r]},\ch{D}_\delta[\rho_{[r]}])$, where 
\begin{align}
	\ch{D}_\delta\in\s{C}_{A\to A}&\colon\rho\mapsto\left(1-\delta\right)\rho+\delta\tr\fleft[\rho\fright]\pi
\end{align}
denotes the depolarizing channel with parameter $\delta$.  Recall that $\sw{Q}_\alpha(\rho\|\sigma)\equiv\exp((\alpha-1)\sw{D}_\alpha(\rho\|\sigma))$ denotes the extended sandwiched Rényi quasi-divergence.  Applying Proposition~\ref{prop:oneshot-hypothesis} to the ensemble $\en{E}_\delta$, for all $\delta\in(0,1)$ and $\alpha\in(1,\infty)$, we have that
\begin{align}
	&-\ln P_\abb{err}\fleft(\en{E}_\delta\fright) \notag\\
	&=\inf_{\tau\in\aff\fleft(\s{D}_A\fright)}D_\abb{H}^{1-\frac{1}{r}}\fleft(\pi_X\otimes\tau_A\middle\|\ch{D}_{\delta,A\to A}\fleft[\cq{\rho}_{XA}\fright]\fright) \\
	&\leq\inf_{\tau\in\aff\fleft(\s{D}_A\fright)}\sw{D}_\alpha\fleft(\pi_X\otimes\tau_A\middle\|\ch{D}_{\delta,A\to A}\fleft[\cq{\rho}_{XA}\fright]\fright)+\frac{\alpha}{\alpha-1}\ln r \label{pf:oneshot-state-1}\\
	&=\inf_{\tau\in\aff\fleft(\s{D}_A\fright)}\frac{1}{\alpha-1}\ln\sw{Q}_\alpha\fleft(\pi_X\otimes\tau_A\middle\|\ch{D}_{\delta,A\to A}\fleft[\cq{\rho}_{XA}\fright]\fright) \notag\\
	&\qquad+\frac{\alpha}{\alpha-1}\ln r \\
	&=\inf_{\tau\in\aff\fleft(\s{D}_A\fright)}\frac{1}{\alpha-1}\ln\left(\sum_{x\in[r]}\frac{p_x^{1-\alpha}}{r^\alpha}\sw{Q}_\alpha\fleft(\tau\middle\|\ch{D}_\delta\fleft[\rho_x\fright]\fright)\right) \notag\\
	&\qquad+\frac{\alpha}{\alpha-1}\ln r \label{pf:oneshot-state-2}\\
	&=\inf_{\tau\in\aff\fleft(\s{D}_A\fright)}\frac{1}{\alpha-1}\ln\left(\sum_{x\in[r]}p_xp_x^{-\alpha}\sw{Q}_\alpha\fleft(\tau\middle\|\ch{D}_\delta\fleft[\rho_x\fright]\fright)\right) \\
	&\leq\inf_{\tau\in\aff\fleft(\s{D}_A\fright)}\max_{x\in[r]}\frac{1}{\alpha-1}\ln\left(p_x^{-\alpha}\sw{Q}_\alpha\fleft(\tau\middle\|\ch{D}_\delta\fleft[\rho_x\fright]\fright)\right) \\
	&=\inf_{\tau\in\aff\fleft(\s{D}_A\fright)}\max_{x\in[r]}\left(\sw{D}_\alpha\fleft(\tau\middle\|\ch{D}_\delta\fleft[\rho_x\fright]\fright)+\frac{\alpha}{\alpha-1}\ln\left(\frac{1}{p_x}\right)\right) \\
	&\leq\inf_{\tau\in\aff\fleft(\s{D}_A\fright)}\max_{x\in[r]}\sw{D}_\alpha\fleft(\tau\middle\|\ch{D}_\delta\fleft[\rho_x\fright]\fright)+\frac{\alpha}{\alpha-1}\ln\left(\frac{1}{p_{\min}}\right) \\
	&=\inf_{\tau\in\aff\fleft(\s{D}_A\fright)}\max_{x\in[r]}\sw{D}_\alpha\fleft(\tau\middle\|\ch{D}_\delta\fleft[\rho_x\fright]\fright)+\frac{\alpha}{\alpha-1}\ln\left(\frac{1}{p_{\min}}\right) \\
	&=\inf_{\tau\in\aff\fleft(\s{D}_A\fright)}\sup_{s_{[r]}\in\s{P}_r}\sum_{x\in[r]}s_x\sw{D}_\alpha\fleft(\tau\middle\|\ch{D}_\delta\fleft[\rho_x\fright]\fright)+\frac{\alpha}{\alpha-1}\ln\left(\frac{1}{p_{\min}}\right) \\
	&=\sup_{s_{[r]}\in\s{P}_r}\inf_{\tau\in\aff\fleft(\s{D}_A\fright)}\sum_{x\in[r]}s_x\sw{D}_\alpha\fleft(\tau\middle\|\ch{D}_\delta\fleft[\rho_x\fright]\fright)+\frac{\alpha}{\alpha-1}\ln\left(\frac{1}{p_{\min}}\right) \label{pf:oneshot-state-3}\\
	&\leq\sup_{s_{[r]}\in\s{P}_r}\inf_{\tau\in\aff\fleft(\s{D}_A\fright)}\sum_{x\in[r]}s_x\sw{D}_\alpha\fleft(\tau\middle\|\left(1-\delta\right)\rho_x\fright) \notag\\
	&\qquad+\frac{\alpha}{\alpha-1}\ln\left(\frac{1}{p_{\min}}\right) \label{pf:oneshot-state-4}\\
	&=\sup_{s_{[r]}\in\s{P}_r}\inf_{\tau\in\aff\fleft(\s{D}_A\fright)}\sum_{x\in[r]}s_x\sw{D}_\alpha\fleft(\tau\middle\|\rho_x\fright)+\ln\left(\frac{1}{1-\delta}\right) \notag\\
	&\qquad+\frac{\alpha}{\alpha-1}\ln\left(\frac{1}{p_{\min}}\right). \label{pf:oneshot-state-5}
\end{align}
Here \eqref{pf:oneshot-state-1} follows from Lemma~\ref{lem:hypothesis-extended}; Eq.~\eqref{pf:oneshot-state-2} uses the direct-sum property of the extended sandwiched Rényi quasi-divergence (see Theorem~\ref{thm:sandwiched-extended}.4); Eq.~\eqref{pf:oneshot-state-3} applies the Sion minimax theorem~\cite{sion1958GeneralMinimaxTheorems}; Eq.~\eqref{pf:oneshot-state-4} uses the nonincreasing monotonicity of the extended sandwiched Rényi divergence in its second argument (see Theorem~\ref{thm:sandwiched-extended}.6); Eq.~\eqref{pf:oneshot-state-5} follows from the definition of the extended sandwiched Rényi divergence (see \eqref{eq:sandwiched-extended}).  The application of the Sion minimax theorem in \eqref{pf:oneshot-state-3} is justified by the following observations: the feasible regions of $s_{[r]}$ and $\tau$ are both convex with the former being compact; the objective function is linear in $s_{[r]}$, quasiconvex in $\tau$ due to the joint quasiconvexity of the extended sandwiched Rényi divergence (see Theorem~\ref{thm:sandwiched-extended}.5), and lower semicontinuous in $\tau$ due to the continuity of the $\alpha$-norm (see \eqref{eq:sandwiched-extended} and \eqref{eq:norm}).  We thus conclude that the following inequality holds for all $\delta\in(0,1)$ and $\alpha\in(1,\infty)$:
\begin{align}
	-\ln P_\abb{err}\fleft(\en{E}_\delta\fright)&\leq\sup_{s_{[r]}\in\s{P}_r}\inf_{\tau\in\aff\fleft(\s{D}_A\fright)}\sum_{x\in[r]}s_x\sw{D}_\alpha\fleft(\tau\middle\|\rho_x\fright)+\ln\left(\frac{1}{1-\delta}\right) \notag\\
	&\qquad+\frac{\alpha}{\alpha-1}\ln\left(\frac{1}{p_{\min}}\right). \label{pf:oneshot-state-6}
\end{align}
Note that for every POVM $\Lambda_{[r]}\in\s{M}_{A,r}$ and all $\delta\in(0,1)$, we have that
\begin{align}
	&\left\lvert\sum_{x\in[r]}p_x\tr\fleft[\Lambda_x\rho_x\fright]-\sum_{x\in[r]}p_x\tr\fleft[\Lambda_x\ch{D}_\delta\fleft[\rho_x\fright]\fright]\right\rvert \notag\\
	&=\delta\left\lvert\sum_{x\in[r]}p_x\tr\fleft[\Lambda_x\left(\rho_x-\pi\right)\fright]\right\rvert \\
	&\leq\delta\sum_{x\in[r]}p_x\left\lvert\tr\fleft[\Lambda_x\left(\rho_x-\pi\right)\fright]\right\rvert \label{pf:oneshot-state-7}\\
	&\leq\delta\sum_{x\in[r]}p_x\left\lVert\Lambda_x\right\rVert_\infty\left\lVert\rho_x-\pi\right\rVert_1 \label{pf:oneshot-state-8}\\
	&\leq2\delta.
\end{align}
Here \eqref{pf:oneshot-state-7} applies Hölder's inequality; Eq.~\eqref{pf:oneshot-state-8} follows from the facts that $\lVert\Lambda_x\rVert_\infty\leq1$ and $\frac{1}{2}\lVert\rho_x-\pi\rVert_1\leq1$ for all $x\in[r]$.  Then it follows from the definition of the error probability (see \eqref{eq:error-probability-state}) that
\begin{align}
	&\left\lvert P_\abb{err}\fleft(\en{E}_\delta\fright)-P_\abb{err}\fleft(\en{E}\fright)\right\rvert \notag\\
	&\leq\sup_{\Lambda_{[r]}\in\s{M}_{A,r}}\left\lvert\sum_{x\in[r]}p_x\tr\fleft[\Lambda_x\rho_x\fright]-\sum_{x\in[r]}p_x\tr\fleft[\Lambda_x\ch{D}_\delta\fleft[\rho_x\fright]\fright]\right\rvert \\
	&\leq2\delta,
\end{align}
and thus
\begin{align}
	\lim_{\delta\searrow0}P_\abb{err}\fleft(\en{E}_\delta\fright)&=P_\abb{err}\fleft(\en{E}\fright).
\end{align}
Taking the limit as $\delta\searrow0$ on both sides of \eqref{pf:oneshot-state-6} leads to the desired statement.
\end{proof}

\begin{remark}[Extended left divergence radii]
\label{rem:radius-extended}
The first term on the right-hand side of \eqref{eq:oneshot-state} can be understood as an extended version of the left sandwiched Rényi radius in the representation of \eqref{eq:radius-alternative}.  Specifically, the original infimum over states on the right-hand side of \eqref{eq:radius-alternative} is replaced with an infimum over unit-trace Hermitian operators in \eqref{eq:oneshot-state}.  More generally, for a bivariate generalized divergence $\gd{D}$ with a well-defined extension that allows its first argument to be a Hermitian operator, the \emph{extended left $\gd{D}$-radius} of $\rho_{[r]}\in\s{PSD}_A^{[r]}$ can be accordingly defined as
\begin{align}
	R^{\gd{D},\abb{ext}}\fleft(\rho_{[r]}\fright)&\coloneq\inf_{\tau\in\aff\fleft(\s{D}_A\fright)}\max_{x\in[r]}\gd{D}\fleft(\tau\middle\|\rho_x\fright).
\end{align}
Following the same reasoning as \eqref{eq:DPI-radius-1}--\eqref{eq:DPI-radius-2}, the extended left $\gd{D}$-radius can also be shown to obey the DPI as long as the extension of $\gd{D}$ does.  It remains an open question whether the right-hand side of \eqref{eq:oneshot-state} provides a strictly tighter converse bound compared to replacing the extended left sandwiched Rényi radius with the nonextended one.  More generally, we leave for future work to investigate the quantitative difference between the extended and nonextended left $\gd{D}$-radii.
\end{remark}

\begin{remark}[A closed-form converse bound on the error probability]
\label{rem:closed}
In Proposition~\ref{prop:closed} of Appendix~\ref{app:closed}, we present a closed-form converse bound on the error probability different from Proposition~\ref{prop:oneshot-state}.  Specifically, we employ a relaxation of Proposition~\ref{prop:oneshot-hypothesis} in terms of the Petz--Rényi divergence~\cite{petz1985QuasientropiesStatesNeumann,petz1986QuasientropiesFiniteQuantum} and show that the resulting converse bound has a closed-form expression.  However, this converse bound is not as useful as the one in Proposition~\ref{prop:oneshot-state} in terms of deriving upper bounds on the asymptotic error exponent due to observed subtleties regarding exchange of limits.
\end{remark}

\subsection{Upper bound on the asymptotic error exponent}
\label{sec:asymptotic-state}

An upper bound on the asymptotic error exponent arises as a natural consequence of the converse bound on the one-shot error probability provided in Proposition~\ref{prop:oneshot-state}.  In what follows, we show that this ensuing upper bound can be simplified and is precisely given by the multivariate log-Euclidean Chernoff divergence.

\begin{theorem}[Log-Euclidean upper bound on the error exponent]
\label{thm:asymptotic-state}
Let $\en{E}\equiv(p_{[r]},\rho_{[r]})$ be an ensemble of states with $p_{[r]}\in\itr(\s{P}_r)$ and $\rho_{[r]}\in\s{D}_A^{[r]}$.  Then
\begin{align}
	\label{eq:asymptotic-state}
	&\limsup_{n\to\infty}-\frac{1}{n}\ln P_\abb{err}\fleft(\en{E}^n\fright) \notag\\
	&\leq\inf_{\alpha\in(1,\infty)}\limsup_{n\to\infty}\sup_{s_{[r]}\in\s{P}_r}\inf_{\tau_n\in\aff\fleft(\s{D}_{A^n}\fright)}\frac{1}{n}\sum_{x\in[r]}s_x\sw{D}_\alpha\fleft(\tau_n\middle\|\rho_x^{\otimes n}\fright).
\end{align}
Furthermore, the right-hand side of \eqref{eq:asymptotic-state} is bounded from above by the multivariate log-Euclidean Chernoff divergence of the tuple of states $\rho_{[r]}$, as defined in \eqref{eq:euclidean}.  Consequently,
\begin{align}
	\limsup_{n\to\infty}-\frac{1}{n}\ln P_\abb{err}\fleft(\en{E}^n\fright)&\leq C^\flat\fleft(\rho_{[r]}\fright) \label{eq:asymptotic-euclidean-1}\\
	&=\sup_{s_{[r]}\in\s{P}_r}\inf_{\tau\in\s{D}_A}\sum_{x\in[r]}s_xD\fleft(\tau\middle\|\rho_x\fright). \label{eq:asymptotic-euclidean-2}
\end{align}
\end{theorem}

\begin{proof}
Applying Proposition~\ref{prop:oneshot-state} to the $n$-fold ensemble $\en{E}^n$, for every positive integer $n$ and all $\alpha\in(1,\infty)$, we have that
\begin{align}
	&-\frac{1}{n}\ln P_\abb{err}\fleft(\en{E}^n\fright) \notag\\
	&\leq\sup_{s_{[r]}\in\s{P}_r}\inf_{\tau_n\in\aff\fleft(\s{D}_{A^n}\fright)}\frac{1}{n}\sum_{x\in[r]}s_x\sw{D}_\alpha\fleft(\tau_n\middle\|\rho_x^{\otimes n}\fright) \notag\\
	&\qquad+\frac{\alpha}{n\left(\alpha-1\right)}\ln\left(\frac{1}{p_{\min}}\right) \label{pf:asymptotic-euclidean-1}\\
	&\leq\sup_{s_{[r]}\in\s{P}_r}\inf_{\tau\in\aff\fleft(\s{D}_A\fright)}\frac{1}{n}\sum_{x\in[r]}s_x\sw{D}_\alpha\fleft(\tau^{\otimes n}\middle\|\rho_x^{\otimes n}\fright) \notag\\
	&\qquad+\frac{\alpha}{n\left(\alpha-1\right)}\ln\left(\frac{1}{p_{\min}}\right) \label{pf:asymptotic-euclidean-2}\\
	&=\sup_{s_{[r]}\in\s{P}_r}\inf_{\tau\in\aff\fleft(\s{D}_A\fright)}\sum_{x\in[r]}s_x\sw{D}_\alpha\fleft(\tau\middle\|\rho_x\fright)+\frac{\alpha}{n\left(\alpha-1\right)}\ln\left(\frac{1}{p_{\min}}\right). \label{pf:asymptotic-euclidean-3} 
\end{align}
Here \eqref{pf:asymptotic-euclidean-2} follows from the fact that $\tau^{\otimes n}\in\aff(\s{D}_{A^n})$ for all $\tau\in\aff(\s{D}_A)$; Eq.~\eqref{pf:asymptotic-euclidean-3} uses the additivity of the extended sandwiched Rényi divergence (see Theorem~\ref{thm:sandwiched-extended}.3).  Taking the limit superior as $n\to\infty$ and the infimum over $\alpha\in(1,\infty)$ on both sides of \eqref{pf:asymptotic-euclidean-1}, we obtain \eqref{eq:asymptotic-state}.  To show that the right-hand side of \eqref{eq:asymptotic-state} is bounded from above by the multivariate log-Euclidean Chernoff divergence of $\rho_{[r]}$ and consequently \eqref{eq:asymptotic-euclidean-1}, it follows from \eqref{eq:asymptotic-state} that
\begin{align}
	&\limsup_{n\to\infty}-\frac{1}{n}\ln P_\abb{err}\fleft(\en{E}^n\fright) \notag\\
	&\leq\inf_{\alpha\in(1,\infty)}\limsup_{n\to\infty}\sup_{s_{[r]}\in\s{P}_r}\inf_{\tau_n\in\aff\fleft(\s{D}_{A^n}\fright)}\frac{1}{n}\sum_{x\in[r]}s_x\sw{D}_\alpha\fleft(\tau_n\middle\|\rho_x^{\otimes n}\fright) \\
	&\leq\inf_{\alpha\in(1,\infty)}\sup_{s_{[r]}\in\s{P}_r}\inf_{\tau\in\aff\fleft(\s{D}_A\fright)}\sum_{x\in[r]}s_x\sw{D}_\alpha\fleft(\tau\middle\|\rho_x\fright) \label{pf:asymptotic-euclidean-4}\\
	&\leq\inf_{\alpha\in(1,\infty)}\sup_{s_{[r]}\in\s{P}_r}\inf_{\tau\in\s{D}_A}\sum_{x\in[r]}s_x\sw{D}_\alpha\fleft(\tau\middle\|\rho_x\fright) \\
	&=\inf_{\alpha\in(1,\infty)}\inf_{\tau\in\s{D}_A}\sup_{s_{[r]}\in\s{P}_r}\sum_{x\in[r]}s_x\sw{D}_\alpha\fleft(\tau\middle\|\rho_x\fright) \label{pf:asymptotic-euclidean-5}\\
	&=\inf_{\tau\in\s{D}_A'}\inf_{\alpha\in(1,\infty)}\sup_{s_{[r]}\in\s{P}_r}\sum_{x\in[r]}s_x\sw{D}_\alpha\fleft(\tau\middle\|\rho_x\fright) \label{pf:asymptotic-euclidean-6}\\
	&=\inf_{\tau\in\s{D}_A'}\sup_{s_{[r]}\in\s{P}_r}\inf_{\alpha\in(1,\infty)}\sum_{x\in[r]}s_x\sw{D}_\alpha\fleft(\tau\middle\|\rho_x\fright) \label{pf:asymptotic-euclidean-7}\\
	&=\inf_{\tau\in\s{D}_A'}\sup_{s_{[r]}\in\s{P}_r}\sum_{x\in[r]}s_xD\fleft(\tau\middle\|\rho_x\fright) \label{pf:asymptotic-euclidean-8}\\
	&=\inf_{\tau\in\s{D}_A}\sup_{s_{[r]}\in\s{P}_r}\sum_{x\in[r]}s_xD\fleft(\tau\middle\|\rho_x\fright) \\
	&=C^\flat\fleft(\rho_{[r]}\fright), \label{pf:asymptotic-euclidean-9}
\end{align}
where in \eqref{pf:asymptotic-euclidean-6}--\eqref{pf:asymptotic-euclidean-8} we denote $\s{D}_A'\equiv\{\tau\in\s{D}_A\colon\tau^0\leq\bigwedge_{x\in[r]}\rho_x^0\}$.  Here \eqref{pf:asymptotic-euclidean-4} follows from the inequality between the right-hand side of \eqref{pf:asymptotic-euclidean-1} and that of \eqref{pf:asymptotic-euclidean-3}; Eq.~\eqref{pf:asymptotic-euclidean-5} follows from applying Lemma~\ref{lem:radius-alternative} to the left sandwiched Rényi radius and \eqref{eq:radius-2}; Eq.~\eqref{pf:asymptotic-euclidean-7} applies the Mosonyi--Hiai minimax theorem~\cite[Corollary~A.2]{mosonyi2011QuantumRenyiRelative}; Eq.~\eqref{pf:asymptotic-euclidean-8} uses the nondecreasing monotonicity of the sandwiched Rényi divergence in $\alpha$ and its limit as $\alpha\searrow1$, which is given by the Umegaki divergence (see \eqref{eq:umegaki-limit}); Eq.~\eqref{pf:asymptotic-euclidean-9} follows from the equality between the left Umegaki radius and the multivariate log-Euclidean Chernoff divergence (see \eqref{eq:radius-2} and \eqref{eq:euclidean-radius}).  The application of the Mosonyi--Hiai minimax theorem in \eqref{pf:asymptotic-euclidean-7} is justified by the following observations: the feasible region of $s_{[r]}$ is compact; the objective function is monotonically nondecreasing in $\alpha$ due to the nondecreasing monotonicity of the sandwiched Rényi divergence in $\alpha$, and it is linear in $s_{[r]}$.  Thus we obtain \eqref{eq:asymptotic-euclidean-1}.  Following this and the equality between the left Umegaki radius and the multivariate log-Euclidean Chernoff divergence (see \eqref{eq:euclidean-radius}), applying Lemma~\ref{lem:radius-alternative} to the left Umegaki radius leads to \eqref{eq:asymptotic-euclidean-2}.
\end{proof}

\begin{remark}[Log-Euclidean converse bound on the nonasymptotic error probability]
\label{rem:nonasymptotic-euclidean}
We note that \eqref{pf:asymptotic-euclidean-3} provides a single-letter converse bound on the nonasymptotic error probability.  In Proposition~\ref{prop:nonasymptotic-euclidean} of Appendix~\ref{app:nonasymptotic-state}, we present a relaxation of \eqref{pf:asymptotic-euclidean-3} in terms of the multivariate log-Euclidean Chernoff divergence with a second-order correction term, thus establishing an efficiently computable converse bound on the nonasymptotic error probability.  Due to \eqref{eq:euclidean-projection}, the multivariate log-Euclidean Chernoff divergence is efficiently computable via convex programming~\cite{boyd2004ConvexOptimization}.
\end{remark}

While Theorem~\ref{thm:asymptotic-state} already establishes the desired log-Euclidean upper bound on the error exponent of state exclusion, it remains unclear whether the upper bound provided on the right-hand side of \eqref{eq:asymptotic-state} is tighter than the log-Euclidean one due to its use of the extended sandwiched Rényi divergence and regularization.  The following proposition addresses this question, showing that the two bounds are in fact equal; that is, no improvement can be made from the use of the extended sandwiched Rényi divergence or regularization in the asymptotic regime.  To prove this, we apply the properties established in Theorem~\ref{thm:sandwiched-extended}.

\begin{proposition}[Equality between the right-hand side of \eqref{eq:asymptotic-state} and $C^\flat$]
\label{prop:equality}
Let $\rho_{[r]}\in\s{D}_A^{[r]}$ be a tuple of states.  Then \eqref{eq:asymptotic-state} and \eqref{eq:asymptotic-euclidean-1} are equal; i.e.,
\begin{align}
	\label{eq:equality}
	&\inf_{\alpha\in(1,\infty)}\limsup_{n\to\infty}\sup_{s_{[r]}\in\s{P}_r}\inf_{\tau_n\in\aff\fleft(\s{D}_{A^n}\fright)}\frac{1}{n}\sum_{x\in[r]}s_x\sw{D}_\alpha\fleft(\tau_n\middle\|\rho_x^{\otimes n}\fright) \notag\\
	&=C^\flat\fleft(\rho_{[r]}\fright).
\end{align}
\end{proposition}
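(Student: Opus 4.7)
The plan is to establish the equality in \eqref{eq:equality} by proving the two inequalities separately. The direction $\leq$ is essentially a reprise of the chain of estimates already carried out in the proof of Theorem~\ref{thm:asymptotic-state}, obtained by restricting the inner infimum to tensor-product form. The direction $\geq$ is the substantive new content: it converts the extended sandwiched R\'enyi quantity on $\aff(\s{D}_{A^n})$ into an Umegaki-radius quantity on $\s{D}_{A^n}$, which is then identified with $C^\flat(\rho_{[r]}^{\otimes n})$ and reduced to $C^\flat(\rho_{[r]})$ via weak additivity.

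For $\leq$, I would exploit the fact that $\tau^{\otimes n}\in\aff(\s{D}_{A^n})$ whenever $\tau\in\aff(\s{D}_A)$, so that additivity of the extended sandwiched R\'enyi divergence (Theorem~\ref{thm:sandwiched-extended}.3) yields
\begin{align}
\inf_{\tau_n\in\aff(\s{D}_{A^n})}\frac{1}{n}\sum_{x\in[r]}s_x\sw{D}_\alpha(\tau_n\|\rho_x^{\otimes n})\leq\inf_{\tau\in\aff(\s{D}_A)}\sum_{x\in[r]}s_x\sw{D}_\alpha(\tau\|\rho_x),
\end{align}
with the right-hand side independent of $n$. Supremizing over $s_{[r]}$, taking the $\limsup$ in $n$ and then the infimum in $\alpha$, relaxing $\aff(\s{D}_A)$ to $\s{D}_A$, and reusing the identifications \eqref{pf:asymptotic-euclidean-4}--\eqref{pf:asymptotic-euclidean-8} from the proof of Theorem~\ref{thm:asymptotic-state} delivers $C^\flat(\rho_{[r]})$.

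For $\geq$, the crucial ingredient is Theorem~\ref{thm:sandwiched-extended}.2, which together with the $\alpha\searrow 1$ limit in Theorem~\ref{thm:sandwiched-extended}.7 gives, for every $\tau_n\in\aff(\s{D}_{A^n})$ and every $\alpha\in(1,+\infty)$,
\begin{align}
\sw{D}_\alpha(\tau_n\|\rho_x^{\otimes n})\geq D\!\left(\frac{\lvert\tau_n\rvert}{\lVert\tau_n\rVert_1}\middle\|\rho_x^{\otimes n}\right)+\frac{\alpha}{\alpha-1}\ln\lVert\tau_n\rVert_1\geq D(\hat{\tau}_n\|\rho_x^{\otimes n}),
\end{align}
where $\hat{\tau}_n\equiv\lvert\tau_n\rvert/\lVert\tau_n\rVert_1\in\s{D}_{A^n}$ and the final inequality uses $\lVert\tau_n\rVert_1\geq\lvert\tr[\tau_n]\rvert=1$. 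Weighted summation over $x$ and infimization over $\tau_n\in\aff(\s{D}_{A^n})$ then give
\begin{align}
\inf_{\tau_n\in\aff(\s{D}_{A^n})}\sum_{x\in[r]}s_x\sw{D}_\alpha(\tau_n\|\rho_x^{\otimes n})\geq\inf_{\sigma_n\in\s{D}_{A^n}}\sum_{x\in[r]}s_xD(\sigma_n\|\rho_x^{\otimes n}).
\end{align}
Supremizing over $s_{[r]}$ and invoking Lemma~\ref{lem:radius-alternative} together with the identification $R^D=C^\flat$ from \eqref{eq:euclidean-radius} identifies the right-hand side as $C^\flat(\rho_{[r]}^{\otimes n})$, which equals $nC^\flat(\rho_{[r]})$ by weak additivity \eqref{eq:euclidean-additivity}. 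Dividing by $n$ and passing to $\limsup_n$ and $\inf_\alpha$ yields $\geq C^\flat(\rho_{[r]})$, completing the proof.

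The main obstacle is precisely this $\geq$ direction. A priori, enlarging the inner domain from $\s{D}_{A^n}$ to $\aff(\s{D}_{A^n})$ could shrink the infimum strictly, so one must lower-bound $\sw{D}_\alpha(\tau_n\|\cdot)$ in a way that remains informative for non-positive $\tau_n$. The nonnegative correction $\tfrac{\alpha}{\alpha-1}\ln\lVert\tau_n\rVert_1$ supplied by Theorem~\ref{thm:sandwiched-extended}.2 does exactly this: it vanishes when $\tau_n\geq 0$ and absorbs the gap between $\lVert\tau_n\rVert_1$ and $1=\tr[\tau_n]$, allowing the extended infimum to be pushed back to an infimum over genuine states with no loss in the asymptotic regime.
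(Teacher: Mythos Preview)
Your proof is correct and follows essentially the same approach as the paper's own proof: both directions rely on the same ingredients (additivity for $\leq$; Theorem~\ref{thm:sandwiched-extended}.2 and \ref{thm:sandwiched-extended}.7 together with $\lVert\tau_n\rVert_1\geq1$, the identification $R^D=C^\flat$, and weak additivity for $\geq$). The only cosmetic difference is that you package the monotonicity-and-limit step into a single pointwise inequality $\sw{D}_\alpha(\tau_n\|\rho_x^{\otimes n})\geq D(\hat\tau_n\|\rho_x^{\otimes n})$, whereas the paper writes the subtraction of $\tfrac{\alpha}{\alpha-1}\ln\lVert\tau_n\rVert_1$ and the $\alpha\searrow1$ limit as separate steps in the chain \eqref{pf:equality-2}--\eqref{pf:equality-7}.
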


\begin{proof}
It is implied through \eqref{pf:asymptotic-euclidean-1}--\eqref{pf:asymptotic-euclidean-8} that
\begin{align}
	&\inf_{\alpha\in(1,\infty)}\limsup_{n\to\infty}\sup_{s_{[r]}\in\s{P}_r}\inf_{\tau_n\in\aff\fleft(\s{D}_{A^n}\fright)}\frac{1}{n}\sum_{x\in[r]}s_x\sw{D}_\alpha\fleft(\tau_n\middle\|\rho_x^{\otimes n}\fright) \notag\\
	&\leq C^\flat\fleft(\rho_{[r]}\fright). \label{pf:equality-1}
\end{align}
To show the opposite inequality, for every positive integer $n$ and all $\alpha\in(1,\infty)$, we have that
\begin{align}
	&\sup_{s_{[r]}\in\s{P}_r}\inf_{\tau_n\in\aff\fleft(\s{D}_{A^n}\fright)}\frac{1}{n}\sum_{x\in[r]}s_x\sw{D}_\alpha\fleft(\tau_n\middle\|\rho_x^{\otimes n}\fright) \notag\\
	&\geq\sup_{s_{[r]}\in\s{P}_r}\inf_{\tau_n\in\aff\fleft(\s{D}_{A^n}\fright)}\frac{1}{n}\sum_{x\in[r]}s_x\left(\sw{D}_\alpha\fleft(\tau_n\middle\|\rho_x^{\otimes n}\fright)\right. \notag\\
	&\qquad\left.\vphantom{}-\frac{\alpha}{\alpha-1}\ln\left\lVert\tau_n\right\rVert_1\right) \label{pf:equality-2}\\
	&\geq\sup_{s_{[r]}\in\s{P}_r}\inf_{\tau_n\in\aff\fleft(\s{D}_{A^n}\fright)}\lim_{\alpha\searrow1}\frac{1}{n}\sum_{x\in[r]}s_x\left(\sw{D}_\alpha\fleft(\tau_n\middle\|\rho_x^{\otimes n}\fright)\right. \notag\\
	&\qquad\left.\vphantom{}-\frac{\alpha}{\alpha-1}\ln\left\lVert\tau_n\right\rVert_1\right) \label{pf:equality-3}\\
	&=\sup_{s_{[r]}\in\s{P}_r}\inf_{\tau_n\in\aff\fleft(\s{D}_{A^n}\fright)}\frac{1}{n}\sum_{x\in[r]}s_xD\fleft(\frac{\left\lvert\tau_n\right\rvert}{\left\lVert\tau_n\right\rVert_1}\middle\|\rho_x^{\otimes n}\fright) \label{pf:equality-4}\\
	&=\sup_{s_{[r]}\in\s{P}_r}\inf_{\tau_n'\in\s{D}_{A^n}}\frac{1}{n}\sum_{x\in[r]}s_xD\fleft(\tau_n'\middle\|\rho_x^{\otimes n}\fright) \label{pf:equality-5}\\
	&=\frac{1}{n}C^\flat\fleft(\rho_{[r]}^{\otimes n}\fright) \label{pf:equality-6}\\
	&=C^\flat\fleft(\rho_{[r]}\fright). \label{pf:equality-7}
\end{align}
Here \eqref{pf:equality-2} follows from the facts that $\frac{\alpha}{\alpha-1}>0$ for all $\alpha\in(1,\infty)$ and that $\lVert\tau_n\rVert_1=\tr[|\tau_n|]\geq\tr[\tau_n]=1$ for all $\tau_n\in\aff(\s{D}_{A^n})$; Eq.~\eqref{pf:equality-3} uses the nondecreasing monotonicity of the extended sandwiched Rényi divergence in $\alpha$ (see Theorem~\ref{thm:sandwiched-extended}.2); Eq.~\eqref{pf:equality-4} uses the limit of the extended sandwiched Rényi divergence as $\alpha\searrow1$ (see Theorem~\ref{thm:sandwiched-extended}.7); Eq.~\eqref{pf:equality-5} follows from the fact that $\frac{\lvert\tau_n\rvert}{\lVert\tau_n\rVert_1}\in\s{D}_{A^n}$ for all $\tau_n\in\aff(\s{D}_{A^n})$; Eq.~\eqref{pf:equality-6} follows from the equality between the left Umegaki radius and the multivariate log-Euclidean Chernoff divergence (see Lemma~\ref{lem:radius-alternative} and \eqref{eq:euclidean-radius}); Eq.~\eqref{pf:equality-7} uses the weak additivity of the multivariate log-Euclidean Chernoff divergence (see \eqref{eq:euclidean-additivity}).  Taking the limit superior as $n\to\infty$ and the infimum over $\alpha\in(1,\infty)$ on the left-hand side of \eqref{pf:equality-2} and on the right-hand side of \eqref{pf:equality-7} leads to the desired inequality, which combined with \eqref{pf:equality-1} leads to the desired statement.
\end{proof}

The establishment of Proposition~\ref{prop:equality} also enables a transparent comparison between our log-Euclidean upper bound and an SDP upper bound previously derived in Ref.~\cite[Theorem~19]{mishra2024OptimalErrorExponents}, which can be equivalently formulated as the extended left max-radius~\cite[Theorem~20]{mishra2024OptimalErrorExponents} (see Remark~\ref{rem:radius-extended}).  As shown below, the log-Euclidean upper bound is an improvement upon the SDP upper bound.

\begin{corollary}[Comparison between Theorem~\ref{thm:asymptotic-state} and Ref.~{\cite[Theorems~19 and 20]{mishra2024OptimalErrorExponents}}]
\label{cor:comparison}
Let $\rho_{[r]}\in\s{D}_A^{[r]}$ be a tuple of states.  Then
\begin{align}
	C^\flat\fleft(\rho_{[r]}\fright)&\leq-\ln\kappa\fleft(\rho_{[r]}\fright) \label{eq:comparison-1}\\
	&=\sup_{s_{[r]}\in\s{P}_r}\inf_{\tau\in\aff(\s{D}_A)}\sum_{x\in[r]}s_xD_{\max}\fleft(\tau\middle\|\rho_x\fright), \label{eq:comparison-2}
\end{align}
where
\begin{align}
	\label{eq:kappa}
	\kappa\fleft(\rho_{[r]}\fright)&\coloneq\sup_{\gamma\in\s{Herm}_A}\left\{\tr\fleft[\gamma\fright]\colon-\rho_x\leq\gamma\leq\rho_x\;\forall x\in[r]\right\},
\end{align}
as defined in Ref.~\cite[Eq.~(189)]{mishra2024OptimalErrorExponents}.
\end{corollary}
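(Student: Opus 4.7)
The plan is to establish the two claims of the corollary separately: the inequality $C^\flat(\rho_{[r]})\leq -\ln\kappa(\rho_{[r]})$ of~\eqref{eq:comparison-1}, and the equality $-\ln\kappa(\rho_{[r]})=\sup_{s_{[r]}\in\s{P}_r}\inf_{\tau\in\aff(\s{D}_A)}\sum_{x\in[r]}s_xD_{\max}(\tau\|\rho_x)$ of~\eqref{eq:comparison-2}.

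For the inequality, I would invoke Proposition~\ref{prop:equality} to express $C^\flat(\rho_{[r]})$ as an infimum over $\alpha\in(1,+\infty)$ of a limit superior over $n$ involving the extended sandwiched Rényi divergence. The key step is the product ansatz $\tau_n\coloneq\gamma^{\otimes n}/\tr[\gamma]^n$, where $\gamma\in\s{Herm}_A$ is any feasible solution of the SDP defining $\kappa(\rho_{[r]})$ with $\tr[\gamma]>0$; this $\tau_n$ lies in $\aff(\s{D}_{A^n})$, and the feasibility condition $-\rho_x\leq\gamma\leq\rho_x$ guarantees both $\supp(\gamma)\subseteq\supp(\rho_x)$ and $D_{\max}(\gamma\|\rho_x)\leq 0$. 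Combining the scaling identity $\sw{D}_\alpha(c\gamma\|\sigma)=\frac{\alpha}{\alpha-1}\ln c+\sw{D}_\alpha(\gamma\|\sigma)$ (for $c>0$) with additivity (Theorem~\ref{thm:sandwiched-extended}.3) collapses the $n$-fold expression to the single-letter form
\begin{align*}
\frac{1}{n}\sw{D}_\alpha\fleft(\tau_n\middle\|\rho_x^{\otimes n}\fright)=-\frac{\alpha\ln\tr[\gamma]}{\alpha-1}+\sw{D}_\alpha\fleft(\gamma\middle\|\rho_x\fright).
\end{align*}
The monotonicity in $\alpha$ from Theorem~\ref{thm:sandwiched-extended}.2 (taken in the limit $\beta\to+\infty$) then gives $\sw{D}_\alpha(\gamma\|\rho_x)\leq D_{\max}(\gamma\|\rho_x)+\frac{1}{\alpha-1}\ln\|\gamma\|_1$, which combined with $D_{\max}(\gamma\|\rho_x)\leq 0$ and the uniform bound $\|\gamma\|_1\leq d_A$ drives the upper bound to $-\ln\tr[\gamma]$ as $\alpha\to+\infty$. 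Selecting a sequence of feasible $\gamma$ with $\tr[\gamma]\nearrow\kappa(\rho_{[r]})$ then yields $C^\flat(\rho_{[r]})\leq -\ln\kappa(\rho_{[r]})$.

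For the equality, I would first derive the intermediate identity $-\ln\kappa(\rho_{[r]})=\inf_{\tau\in\aff(\s{D}_A)}\max_xD_{\max}(\tau\|\rho_x)$, which is the extended left $D_{\max}$-radius of Remark~\ref{rem:radius-extended}, via the substitution $\gamma=\eta\tau$ with $\eta=\tr[\gamma]\geq 0$ and $\tau\in\aff(\s{D}_A)$: by the scaling identity $D_{\max}(\eta\tau\|\rho_x)=\ln\eta+D_{\max}(\tau\|\rho_x)$, the constraints $-\rho_x\leq\eta\tau\leq\rho_x$ translate to $\eta\leq e^{-D_{\max}(\tau\|\rho_x)}$ for every $x$, whence $\kappa(\rho_{[r]})=\sup_\tau e^{-\max_xD_{\max}(\tau\|\rho_x)}$. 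Converting this $\inf_\tau\max_x$ form into the $\sup_s\inf_\tau\sum_xs_x$ form of~\eqref{eq:comparison-2} requires a minimax exchange. The main obstacle is that $D_{\max}(\cdot\|\rho_x)$ is only quasi-convex (not convex) in its first argument, so $\sum_xs_xD_{\max}(\tau\|\rho_x)$ need not be quasi-convex in $\tau$ and Sion's minimax theorem does not apply directly. To circumvent this, I would exploit the strong duality of the SDP defining $\kappa(\rho_{[r]})$: its dual problem minimizes $\sum_x\tr[|Z_x|\rho_x]$ over Hermitian $Z_x$ with $\sum_xZ_x=\1$, and the dual optimum, combined with the Hölder-type inequality $|\tr[Z_x\tau]|\leq\tr[|Z_x|\rho_x]\,e^{D_{\max}(\tau\|\rho_x)}$ (which follows from the trace contraction $\tr|\sigma^{1/2}Z\sigma^{1/2}|\leq\tr[|Z|\sigma]$), should induce a distribution $s^*\in\s{P}_r$ that certifies the desired equality without invoking general minimax machinery.
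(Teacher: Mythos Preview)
Your treatment of the inequality~\eqref{eq:comparison-1} is correct but takes a longer route than the paper's. The paper works entirely at the single-letter level: it starts from the representation $C^\flat(\rho_{[r]})=\sup_{s}\inf_{\tau\in\s{D}_A}\sum_x s_x D(\tau\|\rho_x)$, enlarges the infimum to $\tau\in\aff(\s{D}_A)$ via the substitution $\tau\mapsto\lvert\tau\rvert/\lVert\tau\rVert_1$, invokes Theorem~\ref{thm:sandwiched-extended}.7 to rewrite the Umegaki term as the $\alpha\searrow1$ limit of $\sw{D}_\alpha(\tau\|\rho_x)-\tfrac{\alpha}{\alpha-1}\ln\lVert\tau\rVert_1$, uses the monotonicity of Theorem~\ref{thm:sandwiched-extended}.2 to send $\alpha\to+\infty$, and arrives at $D_{\max}(\tau\|\rho_x)-\ln\lVert\tau\rVert_1\leq D_{\max}(\tau\|\rho_x)$. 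Your detour through Proposition~\ref{prop:equality} and the product ansatz $\tau_n=\gamma^{\otimes n}/\tr[\gamma]^n$ is valid, and it has the minor structural advantage of reaching $-\ln\kappa$ directly rather than via the right-hand side of~\eqref{eq:comparison-2}; but the regularization machinery is unnecessary, since the same monotonicity-in-$\alpha$ step you invoke already operates at the single-letter level.

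Your plan for the equality~\eqref{eq:comparison-2} has a genuine gap in the final step. The identification $-\ln\kappa(\rho_{[r]})=\inf_{\tau\in\aff(\s{D}_A)}\max_x D_{\max}(\tau\|\rho_x)$ is correct, as are the SDP dual and the H\"older-type bound. But combining the dual optimum (with $\sum_x Z_x^\ast=\1$ and $\sum_x\tr[\lvert Z_x^\ast\rvert\rho_x]=\kappa$) with that bound and setting $s_x^\ast\coloneq\tr[\lvert Z_x^\ast\rvert\rho_x]/\kappa$ yields only
\[
1=\tr[\tau]\;\leq\;\kappa\sum_{x\in[r]} s_x^\ast\,e^{D_{\max}(\tau\|\rho_x)},
\]
which lower-bounds $\sum_x s_x^\ast e^{D_x}$, not $\sum_x s_x^\ast D_x$. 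Concavity of the logarithm gives $\sum_x s_x^\ast D_x\leq\ln\bigl(\sum_x s_x^\ast e^{D_x}\bigr)$, the wrong direction: one large $D_x$ can make the exponential sum exceed $1/\kappa$ while the linear sum stays well below $-\ln\kappa$. So the dual certificate does not, by itself, produce a distribution $s^\ast$ witnessing $\inf_\tau\sum_x s_x^\ast D_{\max}(\tau\|\rho_x)\geq-\ln\kappa$, and the minimax exchange does not follow from this argument. The paper does not attempt this exchange from scratch; it instead invokes Ref.~\cite[Theorem~20]{mishra2024OptimalErrorExponents} and observes that certain inequalities in that reference's derivation are in fact equalities.
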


\begin{proof}
Although Ref.~\cite[Theorem~20]{mishra2024OptimalErrorExponents} only claimed that the right-hand side of \eqref{eq:comparison-2} is an upper bound on the right-hand side of \eqref{eq:comparison-1}, we observe that the two quantities are in fact equal.  Specifically, if $\bigwedge_{x\in[r]}\rho_x^0\neq0$, then the inequality in Ref.~\cite[Eq.~(211)]{mishra2024OptimalErrorExponents} becomes an equality and so does the one in Ref.~\cite[Eq.~(204)]{mishra2024OptimalErrorExponents}; otherwise, both sides of Ref.~\cite[Eq.~(204)]{mishra2024OptimalErrorExponents} are equal to $0$.  Therefore, to prove Corollary~\ref{cor:comparison}, it remains to compare the multivariate log-Euclidean Chernoff divergence and the extended left max-radius on the right-hand side of \eqref{eq:comparison-2}.  Consider that
\begin{align}
	&C^\flat\fleft(\rho_{[r]}\fright) \notag\\
	&=\sup_{s_{[r]}\in\s{P}_r}\inf_{\tau\in\s{D}_A}\sum_{x\in[r]}s_xD\fleft(\tau\middle\|\rho_x\fright) \label{pf:comparison-1}\\
	&=\sup_{s_{[r]}\in\s{P}_r}\inf_{\tau\in\aff\fleft(\s{D}_A\fright)}\sum_{x\in[r]}s_xD\fleft(\frac{\left\lvert\tau\right\rvert}{\left\lVert\tau\right\rVert_1}\middle\|\rho_x\fright) \label{pf:comparison-2}\\
	&=\sup_{s_{[r]}\in\s{P}_r}\inf_{\tau\in\aff\fleft(\s{D}_A\fright)}\lim_{\alpha\searrow1}\sum_{x\in[r]}s_x\left(\sw{D}_\alpha\fleft(\tau\middle\|\rho_x\fright)-\frac{\alpha}{\alpha-1}\ln\left\lVert\tau\right\rVert_1\right) \label{pf:comparison-3}\\
	&\leq\sup_{s_{[r]}\in\s{P}_r}\inf_{\tau\in\aff\fleft(\s{D}_A\fright)}\lim_{\alpha\to\infty}\sum_{x\in[r]}s_x\left(\sw{D}_\alpha\fleft(\tau\middle\|\rho_x\fright)-\frac{\alpha}{\alpha-1}\ln\left\lVert\tau\right\rVert_1\right) \label{pf:comparison-4}\\
	&=\sup_{s_{[r]}\in\s{P}_r}\inf_{\tau\in\aff\fleft(\s{D}_A\fright)}\sum_{x\in[r]}s_x\left(D_{\max}\fleft(\tau\middle\|\rho_x\fright)-\ln\left\lVert\tau\right\rVert_1\right) \label{pf:comparison-5}\\
	&\leq\sup_{s_{[r]}\in\s{P}_r}\inf_{\tau\in\aff\fleft(\s{D}_A\fright)}\sum_{x\in[r]}s_xD_{\max}\fleft(\tau\middle\|\rho_x\fright). \label{pf:comparison-6}
\end{align}
Here \eqref{pf:comparison-1} follows from the equality between the left Umegaki radius and the multivariate log-Euclidean Chernoff divergence (see \eqref{eq:euclidean-radius} and Lemma~\ref{lem:radius-alternative}); Eq.~\eqref{pf:comparison-2} follows from the fact that $\frac{\lvert\tau\rvert}{\lVert\tau\rVert_1}\in\s{D}_A$ for all $\tau\in\aff(\s{D}_A)$; Eq.~\eqref{pf:comparison-3} uses the limit of the extended sandwiched Rényi divergence as $\alpha\searrow1$ (see Theorem~\ref{thm:sandwiched-extended}.7); Eq.~\eqref{pf:comparison-4} uses the nondecreasing monotonicity of the extended sandwiched Rényi divergence in $\alpha$ (see Theorem~\ref{thm:sandwiched-extended}.2); Eq.~\eqref{pf:comparison-5} uses the limit of the extended sandwiched Rényi divergence as $\alpha\to\infty$ (see \eqref{eq:max-extended-limit}); Eq.~\eqref{pf:comparison-6} follows from the fact that $\lVert\tau\rVert_1\geq1$ for all $\tau\in\aff(\s{D}_A)$.
\end{proof}

\begin{remark}[Strict inequality between $C^\flat$ and $-\ln\kappa$]
\label{rem:strict}
The inequality in \eqref{eq:comparison-1} is strict for certain tuples of states, as demonstrated in Example~\ref{ex:strict}.  Given Corollary~\ref{cor:comparison}, this implies that our log-Euclidean upper bound in Theorem~\ref{thm:asymptotic-state} is strictly tighter than the SDP upper bound in Ref.~\cite[Theorems~19 and 20]{mishra2024OptimalErrorExponents}.
\end{remark}

\begin{example}
\label{ex:strict}
Consider a tuple of classical states $\varrho_{[r]}\in\s{D}_Y^{[r]}$ with $r\geq2$ and $d_Y\geq 2$, where $\varrho_x\equiv\sum_{y\in[d_Y]}p_{y|x}\op{y}{y}$ for all $x\in[r]$, such that (i) $\varrho_x^0=\1$ for all $x\in[r]$ and (ii) $\varrho_x\neq\varrho_{x'}$ if $x\neq x'$.  It follows from (i) that
\begin{align}
	t&\coloneq\min_{x\in[r],y\in[d_Y]}p_{y|x}>0. \label{pf:strict-1}
\end{align}
It follows from (ii) that
\begin{align}
	u&\coloneq\min_{x\in[r]}\max_{y\in[d_Y]}\left(p_{y|x}-\min_{x'\in[r]}p_{y|x'}\right) \label{pf:strict-2}\\
	&\geq\frac{1}{d_Y}\min_{x\in[r]}\sum_{y\in[d_Y]}\left(p_{y|x}-\min_{x'\in[r]}p_{y|x'}\right) \\
	&=\frac{1}{d_Y}\min_{x\in[r]}\sum_{y\in[d_Y]}\max_{x'\in[r]}\left\lvert p_{y|x}-p_{y|x'}\right\rvert \\
	&\geq\frac{1}{d_Y}\min_{x\in[r]}\max_{x'\in[r]}\sum_{y\in[d_Y]}\left\lvert p_{y|x}-p_{y|x'}\right\rvert \\
	&=\frac{1}{d_Y}\min_{x\in[r]}\max_{x'\in[r]}\left\lVert\varrho_x-\varrho_{x'}\right\rVert_1 \\
	&=\frac{1}{d_Y}\min_{x\in[r]}\max_{x'\in[r]\setminus\{x\}}\left\lVert\varrho_x-\varrho_{x'}\right\rVert_1 \\
	&>0. \label{pf:strict-3}
\end{align}
For a probability distribution $s_{[r]}\in\s{P}_r$, define 
\begin{align}
	x_\star&\coloneq\argmax_{x\in[r]}s_x, \label{pf:strict-4}\\
	y_\star&\coloneq\argmax_{y\in[d_Y]}\left(p_{y|x_\star}-\min_{x'\in[r]}p_{y|x'}\right). \label{pf:strict-5}
\end{align}
Then we have that
\begin{align}
	&\left(\min_{x'\in[r]}p_{y|x'}\right)^{s_x} \notag\\
	&\leq p_{y|x}^{s_x}+s_xp_{y|x}^{s_x-1}\left(\min_{x'\in[r]}p_{y|x'}-p_{y|x}\right) \label{pf:strict-6}\\
	&\leq\left(1-s_x\left(p_{y|x}-\min_{x'\in[r]}p_{y|x'}\right)\right)p_{y|x}^{s_x} \label{pf:strict-7}\\
	&\leq\begin{cases}
		\left(1-\frac{u}{r}\right)p_{y_\star|x_\star}^{s_{x_\star}}&\text{if }x=x_\star\text{ and }y=y_\star, \\
		p_{y|x}^{s_x}&\text{if }x\in[r]\setminus\{x_\star\}\text{ or }y\in[d_Y]\setminus\{y_\star\}.
	\end{cases} \label{pf:strict-8}
\end{align}
Here \eqref{pf:strict-6} follows from the fact that $\lambda^s\leq\lambda_0^s+s\lambda_0^{s-1}(\lambda-\lambda_0)$ for all $\lambda,\lambda_0,s\in[0,1]$; Eq.~\eqref{pf:strict-7} follows from the fact that $p_{y|x}\in(0,1)$ for all $x\in[r]$ and $y\in[d_Y]$; Eq.~\eqref{pf:strict-8} follows from \eqref{pf:strict-2}, \eqref{pf:strict-4}, \eqref{pf:strict-5}, and the observation that $s_{x_\star}\geq\frac{1}{r}$.  It follows that
\begin{align}
	&\sum_{y\in[d_Y]}\min_{x'\in[r]}p_{y|x} \notag\\
	&=\min_{x'\in[r]}p_{y_\star|x'}+\sum_{y\in[d_Y]\setminus\{y_\star\}}\min_{x'\in[r]}p_{y|x'} \\
	&=\prod_{x\in[r]}\left(\min_{x'\in[r]}p_{y_\star|x'}\right)^{s_x}+\sum_{y\in[d_Y]\setminus\{y_\star\}}\prod_{x\in[r]}\left(\min_{x'\in[r]}p_{y|x'}\right)^{s_x} \\
	&=\left(\min_{x'\in[r]}p_{y_\star|x'}\right)^{s_{x_\star}}\prod_{x\in[r]\setminus\{x_\star\}}\left(\min_{x'\in[r]}p_{y_\star|x'}\right)^{s_x} \notag\\
	&\qquad+\sum_{y\in[d_Y]\setminus\{y_\star\}}\prod_{x\in[r]}\left(\min_{x'\in[r]}p_{y|x'}\right)^{s_x} \\
	&\leq\left(1-\frac{u}{r}\right)p_{y_\star|x_\star}^{s_{x_\star}}\prod_{x\in[r]\setminus\{x_\star\}}p_{y_\star|x}^{s_x}+\sum_{y\in[d_Y]\setminus\{y_\star\}}\prod_{x\in[r]}p_{y|x}^{s_x} \label{pf:strict-9}\\
	&=\sum_{y\in[d_Y]}\prod_{x\in[r]}p_{y|x}^{s_x}-\frac{u}{r}\prod_{x\in[r]}p_{y_\star|x}^{s_x} \\
	&\leq\sum_{y\in[d_Y]}\prod_{x\in[r]}p_{y|x}^{s_x}-\frac{ut}{r}. \label{pf:strict-10}
\end{align}
Here \eqref{pf:strict-9} follows from \eqref{pf:strict-8}; Eq.~\eqref{pf:strict-10} follows from \eqref{pf:strict-1}.  Then it follows from \eqref{eq:kappa} that
\begin{align}
	&-\ln\kappa\fleft(\varrho_{[r]}\fright) \notag\\
	&=\inf_{\gamma\in\s{Herm}_A}\left\{-\ln\tr\fleft[\gamma\fright]\colon-\varrho_x\leq\gamma\leq\varrho_x\;\forall x\in[r]\right\} \\
	&=\inf_{c_{[d_Y]}\in\spa{R}^{[d_Y]}}\left\{-\ln\left(\sum_{y\in[d_Y]}c_y\right)\colon-p_{y|x}\leq c_y\leq p_{y|x}\right. \notag\\
	&\qquad\left.\vphantom{-\ln\left(\sum_{y\in[d_Y]}c_y\right)\colon-p_{y|x}\leq c_y\leq p_{y|x}}\forall x\in[r],\;y\in[d_Y]\right\} \\
	&=-\ln\left(\sum_{y\in[d_Y]}\min_{x\in[r]}p_{y|x}\right) \\
	&\geq-\ln\inf_{s_{[r]}\in\s{P}_r}\left(\sum_{y\in[d_Y]}\prod_{x\in[r]}p_{y|x}^{s_x}-\frac{ut}{r}\right) \label{pf:strict-11}\\
	&\geq-\ln\inf_{s_{[r]}\in\s{P}_r}\left(\sum_{y\in[d_Y]}\prod_{x\in[r]}p_{y|x}^{s_x}\right) \notag\\
	&\qquad+\frac{ut}{r}\inf_{s_{[r]}\in\s{P}_r}\left(\sum_{y\in[d_Y]}\prod_{x\in[r]}p_{y|x}^{s_x}\right)^{-1} \label{pf:strict-12}\\
	&=C^\flat\fleft(\varrho_{[r]}\fright)+\frac{ut}{r}\exp\left(-C^\flat\fleft(\varrho_{[r]}\fright)\right) \label{pf:strict-13}\\
	&>C^\flat\fleft(\varrho_{[r]}\fright). \label{pf:strict-14}
\end{align}
Here \eqref{pf:strict-11} follows from \eqref{pf:strict-10}; Eq.~\eqref{pf:strict-12} follows from the fact that $\ln\lambda\leq\ln\lambda_0+\frac{\lambda-\lambda_0}{\lambda_0}$ for all $\lambda,\lambda_0\in(0,\infty)$; Eq.~\eqref{pf:strict-13} follows from the definition of the multivariate classical Chernoff divergence (see \eqref{eq:chernoff-classical}) and the fact that the multivariate log-Euclidean Chernoff divergence is a barycentric Chernoff divergence and thus reduces classically to the multivariate classical Chernoff divergence; Eq.~\eqref{pf:strict-14} follows from \eqref{pf:strict-1} and \eqref{pf:strict-3}.  Given Corollary~\ref{cor:comparison}, this shows that our log-Euclidean upper bound in Theorem~\ref{thm:asymptotic-state} is strictly tighter than the SDP upper bound in Ref.~\cite[Theorems~19 and 20]{mishra2024OptimalErrorExponents}, as claimed in Remark~\ref{rem:strict}.
\end{example}

\begin{remark}[On the (un)achievability of the log-Euclidean upper bound]
\label{rem:unachievability}
For an ensemble of classical states (or equivalently, probability distributions), it is known that the error exponent of state exclusion is exactly characterized by the multivariate classical Chernoff divergence~\cite[Theorem~6]{mishra2024OptimalErrorExponents}.  This indicates that the log-Euclidean upper bound in Theorem~\ref{thm:asymptotic-state} is achievable when the states of concern are classical.  Aside from that, the log-Euclidean upper bound is achievable when the states of concern include at least three distinct pure states, as in such a case both the multivariate log-Euclidean Chernoff divergence and the error exponent of state exclusion are infinite, a fact we show in Proposition~\ref{prop:pure} of Appendix~\ref{app:pure}.  However, there are evident (counter)examples in which the log-Euclidean upper bound is not achievable.  For instance, in the special case of $r=2$, the bivariate log-Euclidean Chernoff divergence $C^\flat(\rho_1,\rho_2)$ deviates from the actual error exponent since the latter is known to be equal to the quantum Chernoff divergence~\cite{audenaert2007DiscriminatingStatesQuantum,nussbaum2009ChernoffLowerBound}, which is defined as
\begin{align}
	C\fleft(\rho_1,\rho_2\fright)&\coloneq\sup_{s\in[0,1]}-\ln\tr\fleft[\rho_1^s\rho_2^{1-s}\fright].
\end{align}
As implied by Ref.~\cite[Theorem~2.1]{hiai1994EqualityCasesMatrix}, $C^\flat(\rho_1,\rho_2)=C(\rho_1,\rho_2)$ if and only if $\rho_1$ and $\rho_2$ commute.
\end{remark}

\section{Quantum channel exclusion}
\label{sec:exclusion-channel}

In this section, we introduce the task of quantum channel exclusion and analyze its information-theoretic limit by applying our results from the previous section.  We present a single-letter upper bound on the asymptotic error exponent of channel exclusion, given by a barycentric Chernoff channel divergence based on the Belavkin--Staszewski divergence, and it is efficiently computable via an SDP.  Our finding also implies, in the special case of $r=2$, an efficiently computable upper bound on the error exponent of symmetric binary channel discrimination.  Finally, for classical channels, we show that the upper bound is achievable and thus characterizes the exact error exponent of classical channel exclusion, which is given by the multivariate classical Chernoff divergence maximized over classical input states.

\subsection{Setting}
\label{sec:setting-channel}

\begin{figure*}[t]
\centering
\begin{quantikz}
\makeebit[angle=-30]{$\rho$}\slice[style=brown,label style=brown]{$\rho_{x,1}$} & & \wire[l][1]["R_1\qquad"{above,pos=0.5}]{a}\slice[style=brown,label style=brown]{$\rho_{x,1}'$} & & \gate[2,style={fill=red!20}]{\ch{A}_1}\slice[style=brown,label style=brown]{$\rho_{x,2}$} & \wire[l][1]["\qquad R_2"{above,pos=0.5}]{a} & \quad\dots\quad\slice[style=brown,label style=brown]{$\rho_{x,i}$} & & \wire[l][1]["R_i\qquad"{above,pos=0.5}]{a}\slice[style=brown,label style=brown]{$\rho_{x,i}'$} & & \gate[2,style={fill=red!20}]{\ch{A}_i}\slice[style=brown,label style=brown]{$\rho_{x,i+1}$} & \wire[l][1]["\qquad R_{i+1}"{above,pos=0.5}]{a} & \quad\dots\quad\slice[style=brown,label style=brown]{$\rho_{x,n}$} & & \wire[l][1]["R_n\qquad"{above,pos=0.5}]{a}\slice[style=brown,label style=brown]{$\rho_{x,n}'$} & & \meter[2,style={fill=red!20}]{\Lambda_{[r]}} \\
& & \gate[style={fill=blue!20}]{\ch{N}_x}\wire[l][1]["A_1\quad"{above,pos=0.5}]{a} & & \wire[l][1]["B_1\quad"{above,pos=0.5}]{a} & \wire[l][1]["\qquad A_2"{above,pos=0.5}]{a} & \quad\dots\quad & & \gate[style={fill=blue!20}]{\ch{N}_x}\wire[l][1]["A_i\quad"{above,pos=0.5}]{a} & & \wire[l][1]["B_i\quad"{above,pos=0.5}]{a} & \wire[l][1]["\qquad A_{i+1}"{above,pos=0.5}]{a} & \quad\dots\quad & & \gate[style={fill=blue!20}]{\ch{N}_x}\wire[l][1]["A_n\quad"{above,pos=0.5}]{a} & & \wire[l][1]["B_n\quad"{above,pos=0.5}]{a} & \setwiretype{c}
\end{quantikz}
\caption{A general adaptive strategy for quantum channel exclusion involving $n$ invocations.  Each blue box represents an invocation of the processing device being tested.  The red boxes and the initial state $\rho$ are components of the strategy that the experimenter is free to choose.}
\label{fig:exclusion-channel}
\end{figure*}
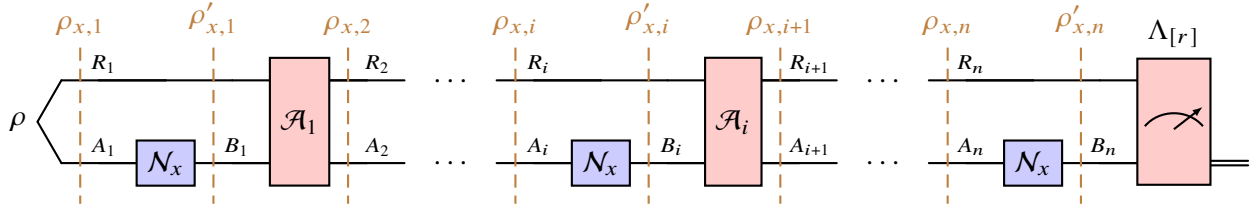

In \emph{quantum channel exclusion}, as defined in Ref.~\cite[Appendix~E~4]{huang2024ExactQuantumSensing}, the experimenter is faced with a processing device that implements an unknown channel.  The device is represented by an ensemble of channels, $\en{N}\equiv(p_{[r]},\ch{N}_{[r]})$ with $r\geq2$, and this indicates that for each $x\in[r]$, there is a prior probability $p_x$ with which the device always implements the channel $\ch{N}_x\in\s{C}_{A\to B}$.  Without loss of generality, we assume that $p_{[r]}\in\itr(\s{P}_r)$.  The experimenter's goal is to submit an index $x'\in[r]$ that \emph{differs} from the label of the channel that the device actually implements.  When $r=2$, the task reduces to symmetric binary channel discrimination (see, e.g., Refs.~\cite{harrow2010AdaptiveNonadaptiveStrategies,wilde2020AmortizedChannelDivergence}).

When the experimenter is allowed to invoke the processing device $n$ times, the most general strategy for channel exclusion is a so-called \emph{adaptive} strategy, represented by a quantum comb with $n$ empty slots~\cite{gutoski2007GeneralTheoryQuantum,chiribella2008MemoryEffectsQuantum,chiribella2008QuantumCircuitArchitecture}.\footnote{Throughout Sec.~\ref{sec:exclusion-channel}, we focus on the setting where the experimenter can invoke a single processing device $n$ times; in this setting the most general strategy is an adaptive strategy, since the $n$ invocations are intrinsically associated with a definite causal order.  In Appendix~\ref{app:indefinite}, we consider a more permissive setting where the experimenter has access to $n$ copies of a processing device and can invoke each copy a single time; in this setting, strategies beyond adaptive ones are possible, as the $n$ copies of the device can be arranged following an indefinite causal order~\cite{chiribella2013QuantumComputationsDefinite,bavaresco2021StrictHierarchyParallel} (see Remark~\ref{rem:indefinite}).}  Specifically, such a strategy consists of (i) preparing a bipartite state, (ii) passing one system of the state through the device in each of the $n$ invocations with ancillary global processing between invocations, and (iii) measuring the state after the final invocation and submitting the outcome (see Figure~\ref{fig:exclusion-channel} for an illustration).  The strategy is adaptive in the sense that the input of each invocation is allowed to depend on the outputs of all previous invocations.  When such dependence between invocations is absent, we say that the strategy is a \emph{parallel} strategy.  For $i\in[n]$, let $A_i\cong A$ and $B_i\cong B$ denote the input and output system of the $i$th invocation of the processing device, respectively.  Formally, an adaptive strategy with $n$ invocations is represented by a tuple 
\begin{align}
	\ch{S}^{(n)}&\equiv\left(\rho,\ch{A}_{[n-1]},\Lambda_{[r]}\right),
\end{align}
where $\rho\in\s{D}_{R_1A_1}$ is a state, $\ch{A}_i\in\s{C}_{R_iB_i\to R_{i+1}A_{i+1}}$ is a channel for each $i\in[n-1]$, and $\Lambda_{[r]}\in\s{M}_{R_nB_n,r}$ is a POVM.  We introduce the following shorthand for intermediate states, as also marked in Figure~\ref{fig:exclusion-channel}:
\begin{align}
	\rho_{x,1,R_1A_1}&\equiv\rho_{R_1A_1}\quad\forall x\in[r], \label{eq:adaptive-1}\\
	\rho_{x,i,R_iB_i}'&\equiv\ch{N}_{x,A_i\to B_i}\fleft[\rho_{x,i,R_iA_i}\fright]\quad\forall i\in[n],\;x\in[r], \label{eq:adaptive-2}\\
	\rho_{x,i+1,R_{i+1}A_{i+1}}&\equiv\ch{A}_{i,R_iB_i\to R_{i+1}A_{i+1}}\fleft[\rho_{x,i,R_iB_i}'\fright] \notag\\
	&\qquad\forall i\in[n-1],\;x\in[r]. \label{eq:adaptive-3}
\end{align}
The \emph{(nonasymptotic) error probability} of channel exclusion \emph{with $n$ invocations} for the ensemble $\en{N}$ is thus given by
\begin{align}
	\label{eq:error-probability-channel}
	P_\abb{err}\fleft(n;\en{N}\fright)&\coloneq\inf_{\ch{S}^{(n)}}\sum_{x\in[r]}p_x\tr\fleft[\Lambda_x\rho_{x,n}'\fright].
\end{align}
The \emph{(asymptotic) error exponent} of channel exclusion for the ensemble $\en{N}$ is defined as
\begin{align}
	\label{eq:error-exponent-channel}
	\liminf_{n\to\infty}-\frac{1}{n}\ln P_\abb{err}\fleft(n;\en{N}\fright).
\end{align}
Following the same reasoning as in state exclusion~\cite[Eq.~(30)]{mishra2024OptimalErrorExponents}, it can be shown that the error exponent of channel exclusion concerning the ensemble $\en{N}$ does not depend on the prior probability distribution $p_{[r]}$.

\subsection{Converse bound on the nonasymptotic error probability}
\label{sec:nonasymptotic-channel}

As suggested by the resemblance between \eqref{eq:error-probability-channel} and \eqref{eq:error-probability-state}, performing channel exclusion on the ensemble of channels $\en{N}$ with $n$ invocations can in part be translated into performing state exclusion on the induced ensemble of states $\en{E}'\coloneq(p_{[r]},\rho_{[r],n}')$ in the one-shot regime, where $\rho_{[r],n}'\equiv(\rho_{1,n}',\rho_{2,n}',\dots,\rho_{r,n}')$.  Specifically,
\begin{align}
	\label{eq:error-probability-channel-state}
	&P_\abb{err}\fleft(n;\en{N}\fright) \notag\\
	&=\inf_{\substack{\rho\in\s{D}_{R_1A_1}, \\ \ch{A}_{[n-1]}}}\left\{P_\abb{err}\fleft(\en{E}'\fright)\colon\ch{A}_i\in\s{C}_{R_iB_i\to R_{i+1}A_{i+1}}\;\forall i\in[n-1]\right\}.
\end{align}
This connection serves as a bridge that enables us to construct a converse bound on the error probability of channel exclusion with $n$ invocations based on the converse bound on the one-shot error probability of state exclusion, which we already established in Proposition~\ref{prop:oneshot-state}.  In what follows, we construct a converse bound in terms of the left geometric Rényi channel radius, and we further show that it is efficiently computable via an SDP.

\begin{proposition}[Converse bound on the nonasymptotic error probability for channels]
\label{prop:nonasymptotic-channel}
Let $\en{N}\equiv(p_{[r]},\ch{N}_{[r]})$ be an ensemble of channels with $p_{[r]}\in\itr(\s{P}_r)$ and $\ch{N}_{[r]}\in\s{C}_{A\to B}^{[r]}$.  Then for every positive integer $n$ and all $\alpha\in(1,2]$,
\begin{align}
	\label{eq:nonasymptotic-channel}
	&-\frac{1}{n}\ln P_\abb{err}\fleft(n;\en{N}\fright) \notag\\
	&\leq\sup_{s_{[r]}\in\s{P}_r}\inf_{\ch{T}\in\s{C}_{A\to B}}\sum_{x\in[r]}s_x\g{D}_\alpha\fleft(\ch{T}\middle\|\ch{N}_x\fright)+\frac{\alpha}{n\left(\alpha-1\right)}\ln\left(\frac{1}{p_{\min}}\right),
\end{align}
where $p_{\min}\equiv\min_{x\in[r]}p_x$.
\end{proposition}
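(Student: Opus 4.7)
The plan is to reduce channel exclusion to state exclusion on the induced output ensemble via \eqref{eq:error-probability-channel-state} and then lift the resulting state-level bound from Proposition~\ref{prop:oneshot-state} to the claimed channel-level bound via the chain rule \eqref{eq:chain} for $\g{D}_\alpha$. Concretely, I would fix an arbitrary adaptive strategy $\ch{S}^{(n)} \equiv (\rho,\ch{A}_{[n-1]},\Lambda_{[r]})$, consider the induced one-shot ensemble $\en{E}' \equiv (p_{[r]},\rho_{[r],n}')$ on $R_nB_n$, and invoke Proposition~\ref{prop:oneshot-state} to obtain
\[
-\ln P_\abb{err}(\en{E}') \leq \sup_{s_{[r]}\in\s{P}_r}\inf_{\tau\in\aff(\s{D}_{R_nB_n})}\sum_{x\in[r]} s_x\,\sw{D}_\alpha(\tau\|\rho_{x,n}') + \frac{\alpha}{\alpha-1}\ln(1/p_{\min}).
\]

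The heart of the argument is to exhibit, for every reference channel $\ch{T}\in\s{C}_{A\to B}$, a single feasible $\tau$ that simultaneously controls all $r$ sandwiched divergences. The natural choice is to take $\tau = \tau_n'$ produced by running the \emph{same} strategy but with every invocation of $\ch{N}_x$ replaced by $\ch{T}$, i.e., $\tau_1 \equiv \rho$, $\tau_i' \equiv \ch{T}_{A_i\to B_i}(\tau_i)$, and $\tau_{i+1} \equiv \ch{A}_{i,R_iB_i\to R_{i+1}A_{i+1}}(\tau_i')$. Starting from $\sw{D}_\alpha(\tau_n'\|\rho_{x,n}') \leq \g{D}_\alpha(\tau_n'\|\rho_{x,n}')$ via \eqref{eq:order}, I would then telescope by alternating the chain rule \eqref{eq:chain} across each invocation of $\ch{T}$ vs.\ $\ch{N}_x$ (picking up one copy of $\g{D}_\alpha(\ch{T}\|\ch{N}_x)$ each time) with the DPI for $\g{D}_\alpha$ across each common processing map $\ch{A}_i$. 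Since the recursion bottoms out at $\tau_1 = \rho_{x,1} = \rho$ with $\g{D}_\alpha(\rho\|\rho) = 0$, this telescoping yields
\[
\sw{D}_\alpha(\tau_n'\|\rho_{x,n}') \leq n\,\g{D}_\alpha(\ch{T}\|\ch{N}_x).
\]

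Plugging this feasible $\tau$ into the state-level bound and using linearity in $s_{[r]}$ gives, for every $s_{[r]}\in\s{P}_r$ and every $\ch{T}\in\s{C}_{A\to B}$,
\[
\inf_{\tau\in\aff(\s{D}_{R_nB_n})}\sum_{x\in[r]} s_x\,\sw{D}_\alpha(\tau\|\rho_{x,n}') \leq n\sum_{x\in[r]} s_x\,\g{D}_\alpha(\ch{T}\|\ch{N}_x).
\]
Minimizing the right-hand side over $\ch{T}$, then taking the supremum over $s_{[r]}$, and dividing by $n$ yields a bound on $-\frac{1}{n}\ln P_\abb{err}(\en{E}')$ that is independent of the chosen adaptive strategy. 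Passing to the supremum of $-\ln P_\abb{err}(\en{E}')$ over all strategies, which by \eqref{eq:error-probability-channel-state} equals $-\ln P_\abb{err}(n;\en{N})$, closes the argument.

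The main technical subtlety I anticipate is the bookkeeping in the telescoping step: one must verify that $\tau_n'$ is constructed from the \emph{same} initial state $\rho$ and the \emph{same} intermediate processing maps $\ch{A}_i$ as $\rho_{x,n}'$, so that the telescoped sum collapses to exactly $n$ copies of $\g{D}_\alpha(\ch{T}\|\ch{N}_x)$ and no residual ``initial state'' discrepancy remains. A minor but essential check is that the chain rule \eqref{eq:chain} is valid only for $\alpha\in(1,2]$, which matches the hypothesis on $\alpha$, and that $\tau_n'\in\s{D}_{R_nB_n}\subseteq\aff(\s{D}_{R_nB_n})$ is a legitimate feasible point for the infimum appearing in Proposition~\ref{prop:oneshot-state}.
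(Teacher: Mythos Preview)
Your proposal is correct and matches the paper's proof essentially step for step: the paper likewise fixes an adaptive strategy, applies Proposition~\ref{prop:oneshot-state} to the induced ensemble $\en{E}'$, plugs in the ``reference'' state $\omega_n'$ obtained by replacing every call to $\ch{N}_x$ with $\ch{T}$, passes from $\sw{D}_\alpha$ to $\g{D}_\alpha$ via \eqref{eq:order}, and telescopes using the chain rule \eqref{eq:chain} together with the DPI across each $\ch{A}_i$ to obtain $\g{D}_\alpha(\omega_n'\|\rho_{x,n}')\leq n\,\g{D}_\alpha(\ch{T}\|\ch{N}_x)$. The only cosmetic difference is that the paper writes the infimum over $\ch{T}$ inside the sup-inf chain immediately, whereas you bound pointwise in $\ch{T}$ first and then optimize; both orderings are equivalent here.
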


\begin{proof}
Let $\ch{S}^{(n)}\equiv(\rho,\ch{A}_{[n-1]},\Lambda_{[r]})$ be an adaptive strategy with $n$ invocations, and let $\ch{T}\in\s{C}_{A\to B}$ be a channel.  Define the following states according to \eqref{eq:adaptive-1}--\eqref{eq:adaptive-3} but with $\ch{N}_x$ substituted with $\ch{T}$:
\begin{align}
	\omega_{1,R_1A_1}&\coloneq\rho_{R_1A_1}, \label{pf:nonasymptotic-channel-1}\\
	\omega_{i,R_iB_i}'&\coloneq\ch{T}_{A_i\to B_i}\fleft[\omega_{i,R_iA_i}\fright]\quad\forall i\in[n], \label{pf:nonasymptotic-channel-2}\\
	\omega_{i+1,R_{i+1}A_{i+1}}&\coloneq\ch{A}_{i,R_iB_i\to R_{i+1}A_{i+1}}\fleft[\omega_{i,R_iB_i}'\fright]\quad\forall i\in[n-1]. \label{pf:nonasymptotic-channel-3}
\end{align}
Note that these would be the intermediate states if the adaptive strategy $\ch{S}^{(n)}$ were to invoke the channel $\ch{T}$ instead of the processing device each time.  As such, the intermediate state just before the final measurement $\Lambda_{[r]}$ would be $\omega_n'\in\s{D}_{R_nB_n}$.  Applying Proposition~\ref{prop:oneshot-state} to the ensemble of states $\en{E}'\equiv(p_{[r]},\rho_{[r],n}')$, for all $\alpha\in(1,2]$, we have that
\begin{align}
	&-\ln P_\abb{err}\fleft(\en{E}'\fright) \notag\\
	&\leq\sup_{s_{[r]}\in\s{P}_r}\inf_{\tau\in\aff\fleft(\s{D}_{R_nB_n}\fright)}\sum_{x\in[r]}s_x\sw{D}_\alpha\fleft(\tau\middle\|\rho_{x,n}'\fright)+\frac{\alpha}{\alpha-1}\ln\left(\frac{1}{p_{\min}}\right) \\
	&\leq\sup_{s_{[r]}\in\s{P}_r}\inf_{\ch{T}\in\s{C}_{A\to B}}\sum_{x\in[r]}s_x\sw{D}_\alpha\fleft(\omega_n'\middle\|\rho_{x,n}'\fright)+\frac{\alpha}{\alpha-1}\ln\left(\frac{1}{p_{\min}}\right) \label{pf:nonasymptotic-channel-4}\\
	&\leq\sup_{s_{[r]}\in\s{P}_r}\inf_{\ch{T}\in\s{C}_{A\to B}}\sum_{x\in[r]}s_x\g{D}_\alpha\fleft(\omega_n'\middle\|\rho_{x,n}'\fright)+\frac{\alpha}{\alpha-1}\ln\left(\frac{1}{p_{\min}}\right). \label{pf:nonasymptotic-channel-5}
\end{align}
Here \eqref{pf:nonasymptotic-channel-4} follows from the fact that $\omega_n'\in\aff(\s{D}_{R_nB_n})$ for all $\ch{T}\in\s{C}_{A\to B}$; Eq.~\eqref{pf:nonasymptotic-channel-5} follows from the order between the sandwiched and geometric Rényi divergences (see \eqref{eq:order}).  Applying the chain rule for the geometric Rényi channel divergence (see \eqref{eq:chain}), we have that
\begin{align}
	\g{D}_\alpha\fleft(\omega_i'\middle\|\rho_{x,i}'\fright)&\leq\g{D}_\alpha\fleft(\ch{T}\middle\|\ch{N}_x\fright)+\g{D}_\alpha\fleft(\omega_i\middle\|\rho_{x,i}\fright)\quad\forall i\in[n], \label{pf:nonasymptotic-channel-6}\\
	\g{D}_\alpha\fleft(\omega_{i+1}\middle\|\rho_{x,i+1}\fright)&\leq\g{D}_\alpha\fleft(\omega_i'\middle\|\rho_{x,i}'\fright)\quad\forall i\in[n-1]. \label{pf:nonasymptotic-channel-7}
\end{align}
Here \eqref{pf:nonasymptotic-channel-6} follows from \eqref{eq:adaptive-2} and \eqref{pf:nonasymptotic-channel-2}; Eq.~\eqref{pf:nonasymptotic-channel-7} follows from \eqref{eq:adaptive-3} and \eqref{pf:nonasymptotic-channel-3} and the fact that $\g{D}_\alpha(\ch{A}\|\ch{A})=0$ for every channel $\ch{A}$.  It follows recursively from \eqref{pf:nonasymptotic-channel-6} and \eqref{pf:nonasymptotic-channel-7} that
\begin{align}
	\g{D}_\alpha\fleft(\omega_n'\middle\|\rho_{x,n}'\fright)&\leq n\g{D}_\alpha\fleft(\ch{T}\middle\|\ch{N}_x\fright)+\g{D}_\alpha\fleft(\omega_1\middle\|\rho_{x,1}\fright) \\
	&=n\g{D}_\alpha\fleft(\ch{T}\middle\|\ch{N}_x\fright)+\g{D}_\alpha\fleft(\rho\middle\|\rho\fright) \label{pf:nonasymptotic-channel-8}\\
	&=n\g{D}_\alpha\fleft(\ch{T}\middle\|\ch{N}_x\fright). \label{pf:nonasymptotic-channel-9}
\end{align}
Here \eqref{pf:nonasymptotic-channel-8} follows from \eqref{eq:adaptive-1} and \eqref{pf:nonasymptotic-channel-1}.  Inserting \eqref{pf:nonasymptotic-channel-9} to \eqref{pf:nonasymptotic-channel-5} and dividing both sides of the inequality by $n$, we obtain that
\begin{align}
	&-\frac{1}{n}\ln P_\abb{err}\fleft(\en{E}'\fright) \notag\\
	&\leq\sup_{s_{[r]}\in\s{P}_r}\inf_{\ch{T}\in\s{C}_{A\to B}}\sum_{x\in[r]}s_x\g{D}_\alpha\fleft(\ch{T}\middle\|\ch{N}_x\fright)+\frac{\alpha}{n\left(\alpha-1\right)}\ln\left(\frac{1}{p_{\min}}\right). \label{pf:nonasymptotic-channel-10}
\end{align}
Finally, since \eqref{pf:nonasymptotic-channel-10} holds for every adaptive strategy $\ch{S}^{(n)}$, it follows from \eqref{eq:error-probability-channel-state} that
\begin{align}
	&-\frac{1}{n}\ln P_\abb{err}\fleft(n;\en{N}\fright) \notag\\
	&=\sup_{\rho\in\s{D}_{R_1A},\ch{A}_{[n-1]}}\left\{-\frac{1}{n}\ln P_\abb{err}\fleft(\en{E}'\fright)\colon\ch{A}_i\in\s{C}_{R_iA\to R_{i+1}B}\right. \notag\\
	&\qquad\left.\vphantom{-\frac{1}{n}\ln P_\abb{err}\fleft(\en{E}'\fright)\colon\ch{A}_i\in\s{C}_{R_iA\to R_{i+1}B}}\forall i\in[n-1]\right\} \\
	&\leq\sup_{s_{[r]}\in\s{P}_r}\inf_{\ch{T}\in\s{C}_{A\to B}}\sum_{x\in[r]}s_x\g{D}_\alpha\fleft(\ch{T}\middle\|\ch{N}_x\fright)+\frac{\alpha}{n\left(\alpha-1\right)}\ln\left(\frac{1}{p_{\min}}\right),
\end{align}
thus concluding the proof.
\end{proof}

\begin{remark}[Alternative converse bound on the nonasymptotic error probability for channels]
\label{rem:nonasymptotic-alternative}
The chain rule of the geometric Rényi channel divergence (see \eqref{eq:chain}) is essential to the proof of Proposition~\ref{prop:nonasymptotic-channel}, as employed in establishing \eqref{pf:nonasymptotic-channel-9}.  On the other hand, as the regularized sandwiched Rényi channel divergence is known to follow a similar chain rule~\cite[Corollary~5.2]{fawzi2021DefiningQuantumDivergences}, putting it in the same place as the geometric Rényi channel divergence in the proof immediately leads to an alternative converse bound on the nonasymptotic error probability for channels: for every integer $n$ and all $\alpha\in(1,\infty)$,
\begin{align}
	\label{eq:nonasymptotic-alternative}
	&-\frac{1}{n}\ln P_\abb{err}\fleft(n;\en{N}\fright) \notag\\
	&\leq\sup_{s_{[r]}\in\s{P}_r}\inf_{\ch{T}\in\s{C}_{A\to B}}\sum_{x\in[r]}s_x\sw{D}_\alpha^\abb{reg}\fleft(\ch{T}\middle\|\ch{N}_x\fright)+\frac{\alpha}{n\left(\alpha-1\right)}\ln\left(\frac{1}{p_{\min}}\right),
\end{align}
where
\begin{align}
	\label{eq:sandwiched-channel-regularised}
	\sw{D}_\alpha^\abb{reg}\fleft(\ch{N}\middle\|\ch{M}\fright)&\coloneq\lim_{n\to\infty}\frac{1}{n}\sw{D}_\alpha\fleft(\ch{N}^{\otimes n}\middle\|\ch{M}^{\otimes n}\fright).
\end{align}
The existence of the limit in \eqref{eq:sandwiched-channel-regularised} was argued in the discussion below Ref.~\cite[Eq.~(21)]{fawzi2021DefiningQuantumDivergences}.  The converse bound in \eqref{eq:nonasymptotic-alternative} is at least as strong as that in Proposition~\ref{prop:nonasymptotic-channel} for $\alpha\in(1,2]$ due to the observation that $\sw{D}_\alpha^\abb{reg}(\ch{N}\|\ch{M})\leq\g{D}_\alpha(\ch{N}\|\ch{M})$, which follows from the order between the sandwiched and geometric Rényi divergences (see \eqref{eq:order}) and the additivity of the geometric Rényi channel divergence~\cite[Theorem~3.3]{fang2021GeometricRenyiDivergence}.  However, to the best of our knowledge, there is no known guarantee on the efficient computability of the regularized sandwiched Rényi channel divergence and hence that of the alternative converse bound (see Ref.~\cite[Section~5.1]{fawzi2021DefiningQuantumDivergences} for further discussions).
\end{remark}

\begin{remark}[Semidefinite representation of the channel $R^{\g{D}_\alpha}$]
\label{rem:SDP}
Applying Lemma~\ref{lem:radius-channel-alternative} to the left geometric Rényi channel radius, we know that the first term on the right-hand side of \eqref{eq:nonasymptotic-channel} is equal to the left geometric Rényi channel radius.  We thus argue that the converse bound in Proposition~\ref{prop:nonasymptotic-channel} is efficiently computable via an SDP due to the existence of a semidefinite representation of the left geometric Rényi channel radius.  To show the latter, it follows from the definition of the left geometric Rényi channel radius (see \eqref{eq:radius-channel-1}) that
\begin{align}
	&R^{\g{D}_\alpha}\fleft(\ch{N}_{[r]}\fright) \notag\\
	&=\inf_{\ch{T}\in\s{C}_{A\to B}}\max_{x\in[r]}\g{D}_\alpha\fleft(\ch{T}\middle\|\ch{N}_x\fright) \\
	&=\inf_{\ch{T}\in\s{C}_{A\to B}}\inf_{\lambda\in(0,\infty)}\left\{\ln\lambda\colon\ln\lambda\geq\g{D}_\alpha\fleft(\ch{T}\middle\|\ch{N}_x\fright)\;\forall x\in[r]\right\}. \label{pf:SDP-1}
\end{align}
Combining \eqref{pf:SDP-1} with Ref.~\cite[Theorem~3.6]{fang2021GeometricRenyiDivergence}, we obtain a semidefinite representation of the left geometric Rényi channel radius: for $\alpha\equiv1+2^{-\ell}\in(1,2]$ with $\ell$ a nonnegative integer,
\begin{align}
	&R^{\g{D}_\alpha}\fleft(\ch{N}_{[r]}\fright) \notag\\
	&=\inf_{\substack{J_\ch{T}\in\s{PSD}_{AB}, \\
		\lambda\in(0,\infty), \\
		M_{[r]}\in\s{Herm}_{AB}^{[r]}, \\
		N_{[\ell+1],[r]}\in\s{Herm}_{AB}^{[\ell+1],[r]}
	}}\left\{\begin{array}[c]{c}
		2^\ell\ln\lambda\colon \\
		\begin{bmatrix}
			M_x & J_\ch{T} \\
			J_\ch{T} & N_{\ell+1,x}
		\end{bmatrix}\geq0\;\forall x\in[r], \\
		\begin{bmatrix}
			J_\ch{T} & N_{i+1,x} \\
			N_{i+1,x} & N_{i,x}
		\end{bmatrix}\geq0 \\
		\qquad\forall i\in[\ell+1],\;x\in[r], \\
		N_{1,x}=J_{\ch{N}_x}\;\forall x\in[r], \\
		\tr_B\fleft[J_{\ch{T},AB}\fright]=\1_A, \\
		\lambda\1_A\geq\tr_B\fleft[M_{x,AB}\fright]\;\forall x\in[r]
	\end{array}\right\}. \label{pf:SDP-2}
\end{align}
For a more general rational number $\alpha\in(1,2]$, we can apply Ref.~\cite[Theorem~3]{fawzi2017LiebsConcavityTheorem} to obtain a semidefinite representation of $R^{\g{D}_\alpha}(\ch{N}_{[r]})$.
\end{remark}

\begin{remark}[Barycentric converse bound on the nonasymptotic error probability for channels]
\label{rem:nonasymptotic-barycentric}
In Proposition~\ref{prop:nonasymptotic-barycentric} of Appendix~\ref{app:nonasymptotic-channel}, as a relaxation of Proposition~\ref{prop:nonasymptotic-channel}, we present an alternative converse bound on the nonasymptotic error probability for channels in terms of a barycentric Chernoff channel divergence with a second-order correction term, following an approach similar to that of Proposition~\ref{prop:nonasymptotic-euclidean} (also see Remark~\ref{rem:nonasymptotic-euclidean}).
\end{remark}

\subsection{Upper bound on the asymptotic error exponent}
\label{sec:asymptotic-channel}

The converse bound on the nonasymptotic error probability of channel exclusion in Proposition~\ref{prop:nonasymptotic-channel} straightforwardly implies a single-letter upper bound on the asymptotic error exponent, given by the left Belavkin--Staszewski channel radius.  Since the Belavkin--Staszewski divergence reduces classically to the Kullback--Leibler divergence, the left Belavkin--Staszewski channel radius is a barycentric Chernoff channel divergence.

\begin{theorem}[Barycentric upper bound on the error exponent for channels]
\label{thm:asymptotic-channel}
Let $\en{N}\equiv(p_{[r]},\ch{N}_{[r]})$ be an ensemble of channels with $p_{[r]}\in\itr(\s{P}_r)$ and $\ch{N}_{[r]}\in\s{C}_{A\to B}^{[r]}$.  Then
\begin{align}
	\limsup_{n\to\infty}-\frac{1}{n}\ln P_\abb{err}\fleft(n;\en{N}\fright)&\leq R^\g{D}\fleft(\ch{N}_{[r]}\fright) \label{eq:asymptotic-channel-1}\\
	&=\sup_{s_{[r]}\in\s{P}_r}\inf_{\ch{T}\in\s{C}_{A\to B}}\sum_{x\in[r]}s_x\g{D}\fleft(\ch{T}\middle\|\ch{N}_x\fright). \label{eq:asymptotic-channel-2}
\end{align}
If $\ch{N}_{[r]}\in\s{C}_{Y\to B}^{[r]}$ is a tuple of classical-to-quantum channels, where $\ch{N}_x\colon\rho\mapsto\sum_{y\in[d_Y]}\bra{y}\rho\ket{y}\nu_{x,y}$ and $\nu_{x,y}\in\s{D}_B$ for all $x\in[r]$ and all $y\in[d_Y]$, then
\begin{align}
	\limsup_{n\to\infty}-\frac{1}{n}\ln P_\abb{err}\fleft(n;\en{N}\fright)&\leq\max_{y\in[d_Y]}R^\g{D}\fleft(\nu_{[r],y}\fright), \label{eq:asymptotic-channel-3}
\end{align}
where $\nu_{[r],y}\equiv(\nu_{1,y},\nu_{2,y},\dots,\nu_{r,y})$ for all $y\in[d_Y]$.
\end{theorem}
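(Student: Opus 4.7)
The plan is to mirror the state-exclusion argument in the proof of Proposition~\ref{prop:equality}, but starting from the channel-exclusion nonasymptotic converse bound in Proposition~\ref{prop:nonasymptotic-channel} rather than from Proposition~\ref{prop:oneshot-state}. First, I would take the $\limsup$ as $n\to+\infty$ on both sides of \eqref{eq:nonasymptotic-channel}, which kills the additive correction $\frac{\alpha}{n(\alpha-1)}\ln(1/p_{\min})$ and yields
\begin{align}
	\limsup_{n\to+\infty}-\frac{1}{n}\ln P_\abb{err}\fleft(n;\en{N}\fright)&\leq\sup_{s_{[r]}\in\s{P}_r}\inf_{\ch{T}\in\s{C}_{A\to B}}\sum_{x\in[r]}s_x\g{D}_\alpha\fleft(\ch{T}\middle\|\ch{N}_x\fright)
\end{align}
for every $\alpha\in(1,2]$. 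Since this holds uniformly in $\alpha$, I can take the infimum of the right-hand side over $\alpha\in(1,2]$.

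Next, I would use the nondecreasing monotonicity of $\g{D}_\alpha$ in $\alpha$ (see \eqref{eq:monotonicity-geometric-channel}) to pass the infimum over $\alpha$ through both the supremum over $s_{[r]}$ (via the Mosonyi--Hiai minimax theorem~\cite[Corollary~A.2]{mosonyi2011QuantumRenyiRelative}, exactly as in \eqref{pf:asymptotic-euclidean-6}; the hypotheses are met since the objective is linear in $s_{[r]}$, $\s{P}_r$ is compact, and the objective is monotone in $\alpha$) and through the infimum over $\ch{T}$ (trivially, since the infimum of monotone functions commutes with infimum of another variable). Using the limit $\lim_{\alpha\searrow1}\g{D}_\alpha(\ch{T}\|\ch{N}_x)=\g{D}(\ch{T}\|\ch{N}_x)$ from \eqref{eq:belavkin-channel-limit}, the right-hand side becomes
\begin{align}
	\sup_{s_{[r]}\in\s{P}_r}\inf_{\ch{T}\in\s{C}_{A\to B}}\sum_{x\in[r]}s_x\g{D}\fleft(\ch{T}\middle\|\ch{N}_x\fright),
\end{align}
which is exactly the right-hand side of \eqref{eq:asymptotic-channel-2}.

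To obtain the equality with $R^\g{D}(\ch{N}_{[r]})$ in \eqref{eq:asymptotic-channel-1}, I would invoke Lemma~\ref{lem:radius-channel-alternative} applied to $\gd{D}=\g{D}$. The required hypotheses (quasiconvexity and lower semicontinuity of $\g{D}$ in the first argument, nonincreasing monotonicity and regularity in the second argument, and finiteness under the support condition) all hold for the Belavkin--Staszewski divergence and can be inherited by the channel version via Lemma~\ref{lem:inheritance}; this is the one verification step that I expect to be the main obstacle, since regularity in the second argument and finiteness on the support require some care, but all follow from standard properties of $\g{D}$ that are already catalogued in Section~\ref{sec:divergence-state} (or can be easily derived from them).

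Finally, for the classical-to-quantum case, I would directly combine \eqref{eq:asymptotic-channel-1} with Lemma~\ref{lem:radius-channel-classical}, which requires joint quasiconvexity and lower semicontinuity of $\g{D}$ in its first argument; since $\g{D}$ is jointly convex, both properties hold. This reduces $R^\g{D}(\ch{N}_{[r]})$ to $\max_{y\in[d_Y]}R^\g{D}(\nu_{[r],y})$, yielding \eqref{eq:asymptotic-channel-3} and completing the proof.
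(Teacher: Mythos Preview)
Your proposal is essentially correct and uses the same ingredients as the paper's proof: Proposition~\ref{prop:nonasymptotic-channel}, the limit $\g{D}_\alpha\to\g{D}$, the Mosonyi--Hiai minimax theorem, and Lemmas~\ref{lem:radius-channel-alternative} and~\ref{lem:radius-channel-classical}. The only structural difference is the order of the two swaps: the paper first uses Lemma~\ref{lem:radius-channel-alternative} to exchange $\sup_{s_{[r]}}$ and $\inf_{\ch{T}}$ (for each fixed $\alpha$), then applies Mosonyi--Hiai inside $\inf_{\ch{T}}$ to a function that is genuinely linear in $s_{[r]}$; you instead apply Mosonyi--Hiai first and Lemma~\ref{lem:radius-channel-alternative} last.

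One small correction to your justification: at the Mosonyi--Hiai step you claim ``the objective is linear in $s_{[r]}$,'' but the function you are actually working with is $g_\alpha(s_{[r]})=\inf_{\ch{T}}\sum_x s_x\g{D}_\alpha(\ch{T}\|\ch{N}_x)$, which is an infimum of linear functions and hence only concave (and upper semicontinuous) in $s_{[r]}$, not linear. This is not ``exactly as in \eqref{pf:asymptotic-euclidean-6},'' where the inner infimum over $\tau$ had already been pulled outside before Mosonyi--Hiai was invoked. Fortunately, Mosonyi--Hiai's Corollary~A.2 only requires upper semicontinuity in the compact variable together with monotonicity in $\alpha$, and both hold for $g_\alpha$, so your step is still valid --- just not for the reason you state. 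The paper's order of operations sidesteps this subtlety entirely.
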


\begin{proof}
Taking the limit superior as $n\to\infty$ and the infimum over $\alpha\in(1,2]$ on both sides of \eqref{eq:nonasymptotic-channel}, we have that
\begin{align}
	&\limsup_{n\to\infty}-\frac{1}{n}\ln P_\abb{err}\fleft(n;\en{N}\fright) \notag\\
	&\leq\inf_{\alpha\in(1,2]}\sup_{s_{[r]}\in\s{P}_r}\inf_{\ch{T}\in\s{C}_{A\to B}}\sum_{x\in[r]}s_x\g{D}\fleft(\ch{T}\middle\|\ch{N}_x\fright) \\
	&=\inf_{\alpha\in(1,2]}\inf_{\ch{T}\in\s{C}_{A\to B}}\sup_{s_{[r]}\in\s{P}_r}\sum_{x\in[r]}s_x\g{D}_\alpha\fleft(\ch{T}\middle\|\ch{N}_x\fright) \label{pf:asymptotic-channel-1}\\
	&=\inf_{\ch{T}\in\s{C}_{A\to B}'}\inf_{\alpha\in(1,2]}\sup_{s_{[r]}\in\s{P}_r}\sum_{x\in[r]}s_x\g{D}_\alpha\fleft(\ch{T}\middle\|\ch{N}_x\fright) \label{pf:asymptotic-channel-2}\\
	&=\inf_{\ch{T}\in\s{C}_{A\to B}'}\sup_{s_{[r]}\in\s{P}_r}\inf_{\alpha\in(1,2]}\sum_{x\in[r]}s_x\g{D}_\alpha\fleft(\ch{T}\middle\|\ch{N}_x\fright) \label{pf:asymptotic-channel-3}\\
	&=\inf_{\ch{T}\in\s{C}_{A\to B}'}\sup_{s_{[r]}\in\s{P}_r}\sum_{x\in[r]}s_x\g{D}\fleft(\ch{T}\middle\|\ch{N}_x\fright) \label{pf:asymptotic-channel-4}\\
	&=\inf_{\ch{T}\in\s{C}_{A\to B}}\sup_{s_{[r]}\in\s{P}_r}\sum_{x\in[r]}s_x\g{D}\fleft(\ch{T}\middle\|\ch{N}_x\fright) \\
	&=R^\g{D}\fleft(\ch{N}_{[r]}\fright), \label{pf:asymptotic-channel-5}
\end{align}
where in \eqref{pf:asymptotic-channel-2}--\eqref{pf:asymptotic-channel-4} we denote $\s{C}_{A\to B}'\equiv\{\ch{T}\in\s{C}_{A\to B}\colon J_\ch{T}^0\leq\bigwedge_{x\in[r]}J_{\ch{N}_x}^0\}$.  Here \eqref{pf:asymptotic-channel-1} follows from applying Lemma~\ref{lem:radius-channel-alternative} to the left geometric Rényi channel divergence and \eqref{eq:radius-channel-2}; Eq.~\eqref{pf:asymptotic-channel-3} applies the Mosonyi--Hiai minimax theorem~\cite[Corollary~A.2]{mosonyi2011QuantumRenyiRelative}; Eq.~\eqref{pf:asymptotic-channel-4} uses the nondecreasing monotonicity of the geometric Rényi channel divergence in $\alpha$ (see \eqref{eq:monotonicity-geometric-channel}) and its limit as $\alpha\searrow1$, which is given by the Belavkin--Staszewski channel divergence (see \eqref{eq:belavkin-channel-limit}); Eq.~\eqref{pf:asymptotic-channel-5} follows from \eqref{eq:radius-channel-2}.  The application of the Mosonyi--Hiai minimax theorem in \eqref{pf:asymptotic-channel-3} is justified by the following observations: the feasible region of $s_{[r]}$ is compact; the objective function is monotonically nondecreasing in $\alpha$ due to the nondecreasing monotonicity of the geometric Rényi channel divergence in $\alpha$ (see \eqref{eq:monotonicity-geometric-channel}), and it is linear in $s_{[r]}$.  Thus we obtain \eqref{eq:asymptotic-channel-1}.  Following this, applying Lemma~\ref{lem:radius-channel-alternative} to the left Belavkin--Staszewski channel radius leads to \eqref{eq:asymptotic-channel-2}.  Applying Lemma~\ref{lem:radius-channel-classical} to the left Belavkin--Staszewski channel radius leads to \eqref{eq:asymptotic-channel-3}.
\end{proof}

\begin{remark}[Efficient computability of the barycentric upper bound on the error exponent for channels]
\label{rem:computability}
Since the Belavkin--Staszewski channel divergence is the limit of the geometric Rényi channel divergence as $\alpha\searrow1$ (see \eqref{eq:belavkin-channel-limit}), as $\ell$ increases, the right-hand side of \eqref{pf:SDP-2} quickly converges to the left Belavkin--Staszewski channel radius.  This shows that the barycentric upper bound on the error exponent of channel exclusion in Theorem~\ref{thm:asymptotic-channel} can be efficiently evaluated via an SDP.
\end{remark}

\begin{remark}[Binary channel discrimination as a special case]
\label{rem:binary-channel}
In the special case of $r=2$, Theorem~\ref{thm:asymptotic-channel} provides an upper bound on the error exponent of symmetric binary channel discrimination.  Specifically, the error exponent of symmetric channel discrimination between a pair of channels $\ch{N}_1,\ch{N}_2\in\s{C}_{A\to B}$ is bounded from above by
\begin{align}
	\sup_{s\in[0,1]}\inf_{\ch{T}\in\s{C}_{A\to B}}\left(s\g{D}\fleft(\ch{T}\middle\|\ch{N}_1\fright)+\left(1-s\right)\g{D}\fleft(\ch{T}\middle\|\ch{N}_2\fright)\right).
\end{align}
To the best of our knowledge, this is the first known such upper bound that is universal and efficiently computable.
\end{remark}

Furthermore, we show that the barycentric upper bound in Theorem~\ref{thm:asymptotic-channel} is \emph{achievable} when the channels of concern are classical, thus solving the exact error exponent of classical channel exclusion.  In this case, the error exponent has a simple expression and is given by the multivariate classical Chernoff divergence maximized over classical input states.  The proof of achievability is based on a parallel strategy and employs the error exponent of classical state exclusion in Ref.~\cite[Theorem~6]{mishra2024OptimalErrorExponents}.  This means that adaptive strategies provide no advantage for classical channel exclusion in the asymptotic regime, generalizing a similar result on symmetric binary classical channel discrimination~\cite{hayashi2009DiscriminationTwoChannels}.

\begin{theorem}[Exact error exponent for classical channels]
\label{thm:asymptotic-channel-classical}
Let $\en{P}\equiv(p_{[r]},\ch{P}_{[r]})$ be an ensemble of classical channels with $p_{[r]}\in\itr(\s{P}_r)$ and $\ch{P}_{[r]}\in\s{C}_{Y\to Z}^{[r]}$, where $\ch{P}_x\colon\rho\mapsto\sum_{y\in[d_Y]}\bra{y}\rho\ket{y}\varrho_{x,y}$ and $\varrho_{x,y}\equiv\sum_{z\in[d_Z]}p_{z|x,y}\op{z}{z}\in\s{D}_Z$ for all $x\in[r]$ and all $y\in[d_Y]$.  Then
\begin{align}
	\lim_{n\to\infty}-\frac{1}{n}\ln P_\abb{err}\fleft(n;\en{P}\fright)&=\max_{y\in[d_Y]}C\fleft(\varrho_{[r],y}\fright),
\end{align}
where $\varrho_{[r],y}\equiv(\varrho_{1,y},\varrho_{2,y},\dots,\varrho_{r,y})$ for all $y\in[d_Y]$.  Furthermore, the error exponent of classical channel exclusion can be achieved using a parallel strategy.
\end{theorem}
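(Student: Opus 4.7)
The plan is to combine the upper bound already established in Theorem~\ref{thm:asymptotic-channel} with a matching lower bound achieved by a particularly simple nonadaptive strategy that reduces channel exclusion to classical state exclusion.

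First, for the upper bound, I would apply Eq.~\eqref{eq:asymptotic-channel-3} of Theorem~\ref{thm:asymptotic-channel} to the ensemble $\en{P}$, noting that classical channels are a special case of classical-to-quantum channels. This gives
\begin{align}
    \limsup_{n\to+\infty}-\frac{1}{n}\ln P_\abb{err}\fleft(n;\en{P}\fright)&\leq\max_{y\in[d_Y]}R^\g{D}\fleft(\varrho_{[r],y}\fright).
\end{align}
Because every $\varrho_{x,y}$ is diagonal in the computational basis, the operators in each tuple $\varrho_{[r],y}$ mutually commute, so the Belavkin--Staszewski divergence coincides with the Umegaki divergence on these arguments, both reducing to the Kullback--Leibler divergence. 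Hence $R^\g{D}(\varrho_{[r],y})=R^D(\varrho_{[r],y})=C(\varrho_{[r],y})$ for each $y\in[d_Y]$ by the definition of the multivariate classical Chernoff divergence in \eqref{eq:chernoff-classical}, and the upper bound becomes $\max_{y\in[d_Y]}C(\varrho_{[r],y})$.

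Second, for the lower bound I would explicitly exhibit a nonadaptive strategy that achieves this exponent. Fix $y^\star\in\argmax_{y\in[d_Y]}C(\varrho_{[r],y})$, and consider the strategy that in each of the $n$ invocations feeds the classical input $\op{y^\star}{y^\star}$ into the unknown channel and performs a joint measurement on the collected outputs at the end. Under this strategy the outputs are drawn from the classical ensemble $\en{E}'\equiv(p_{[r]},\varrho_{[r],y^\star}^{\otimes n})$, so that $P_\abb{err}(n;\en{P})\leq P_\abb{err}(\en{E}')$. Applying Ref.~\cite[Theorem~6]{mishra2024OptimalErrorExponents}, which gives the exact error exponent of classical state exclusion as the multivariate classical Chernoff divergence, I conclude that
\begin{align}
    \liminf_{n\to+\infty}-\frac{1}{n}\ln P_\abb{err}\fleft(n;\en{P}\fright)&\geq\liminf_{n\to+\infty}-\frac{1}{n}\ln P_\abb{err}\fleft(\en{E}'\fright)=C\fleft(\varrho_{[r],y^\star}\fright)=\max_{y\in[d_Y]}C\fleft(\varrho_{[r],y}\fright).
\end{align}
Combining the two bounds yields both the claimed value of the limit and the fact that it is attained by a nonadaptive (indeed, product-input) strategy.

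Given the machinery already in place, there is no substantial technical obstacle: the proof is essentially a matching of Theorem~\ref{thm:asymptotic-channel} with a known classical state-exclusion result. The only subtlety worth verifying carefully is the identification $R^\g{D}(\varrho_{[r],y})=C(\varrho_{[r],y})$ in the classical setting, which follows from the commuting structure of the $\varrho_{x,y}$ together with the fact that $R^D$ on classical tuples is precisely the multivariate classical Chernoff divergence; once this is in hand, the two bounds collapse to a single value and the equality is immediate.
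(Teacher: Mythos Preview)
Your proposal is correct and follows essentially the same approach as the paper's proof: invoke Eq.~\eqref{eq:asymptotic-channel-3} of Theorem~\ref{thm:asymptotic-channel} for the upper bound and reduce it to $C$ via the classical collapse of $\g{D}$, then match it with the nonadaptive product-input strategy at the optimal $y^\star$ together with Ref.~\cite[Theorem~6]{mishra2024OptimalErrorExponents} for the lower bound. The paper phrases the upper-bound reduction slightly differently (saying directly that the left Belavkin--Staszewski radius is a barycentric Chernoff divergence and hence coincides classically with $C$), but the content is the same.
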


\begin{proof}
Since the left Belavkin--Staszewski radius is a barycentric Chernoff divergence, it reduces classically to the multivariate classical Chernoff divergence.  Then it follows from Theorem~\ref{thm:asymptotic-channel} that
\begin{align}
	\limsup_{n\to\infty}-\frac{1}{n}\ln P_\abb{err}\fleft(n;\en{P}\fright)&\leq\max_{y\in[d_Y]}R^\g{D}\fleft(\varrho_{[r],y}\fright) \\
	&=\max_{y\in[d_Y]}C\fleft(\varrho_{[r],y}\fright). \label{pf:asymptotic-channel-classical-1}
\end{align}
To show the achievability (i.e., the opposite inequality), consider the following parallel strategy with $n$ invocations: (i) for each invocation, pass the state $\op{y_\star}{y_\star}$ through the device, where $y_\star\coloneq\argmax_{y\in[d_Y]}C(\varrho_{[r],y})$, and collect the output state; (ii) perform state exclusion on the joint system containing all the $n$ output states.  This strategy reduces channel exclusion to state exclusion for the ensemble of states $\en{V}\equiv(p_{[r]},\varrho_{[r],y_\star})$, and we thus have that
\begin{align}
	\liminf_{n\to\infty}-\frac{1}{n}\ln P_\abb{err}\fleft(n;\en{P}\fright)&\geq\liminf_{n\to\infty}-\frac{1}{n}\ln P_\abb{err}\fleft(\en{V}^n\fright) \\
	&=C\fleft(\varrho_{[r],y_\star}\fright) \label{pf:asymptotic-channel-classical-4}\\
	&=\max_{y\in[d_Y]}C\fleft(\varrho_{[r],y}\fright). \label{pf:asymptotic-channel-classical-5}
\end{align}
Here \eqref{pf:asymptotic-channel-classical-4} follows from the fact that the error exponent of classical state exclusion is given by the multivariate classical Chernoff divergence~\cite[Theorem~6]{mishra2024OptimalErrorExponents}.  Combining \eqref{pf:asymptotic-channel-classical-1} and \eqref{pf:asymptotic-channel-classical-5} leads to the desired statement.
\end{proof}

\begin{remark}[Upper bound on the error exponent for channels under indefinite-causal-order strategies]
\label{rem:indefinite}
When assuming a definite causal order between multiple invocations, the most general strategy for quantum channel exclusion is an adaptive strategy, as considered so far in Sec.~\ref{sec:exclusion-channel}.  In a more permissive setting where an \emph{indefinite causal order} is allowed~\cite{chiribella2013QuantumComputationsDefinite,bavaresco2021StrictHierarchyParallel}, however, it is unclear whether the upper bound on the error exponent derived in Theorem~\ref{thm:asymptotic-channel} still applies.  In Theorem~\ref{thm:asymptotic-channel-indefinite} of Appendix~\ref{app:indefinite}, we establish an SDP upper bound on the error exponent of quantum channel exclusion without assuming a definite causal order, and the upper bound is expressed in terms of the left max-channel radius.  In fact, since the proof of Theorem~\ref{thm:asymptotic-channel-indefinite} does not assume any compositional structure of the strategy at all, the upper bound in Theorem~\ref{thm:asymptotic-channel-indefinite} applies to all strategies with the basic properties of convex linearity, positivity, and normalization.
\end{remark}

\section{Conclusion}
\label{sec:conclusion}

\subsection{Summary of results}
\label{sec:summary}

In this paper, we investigated the fundamental limits of quantum state exclusion and quantum channel exclusion from an information-theoretic perspective, in both the nonasymptotic and asymptotic regimes.  We particularly focused on developing converse bounds on the error probabilities and upper bounds on the error exponents of these tasks.  The upper bounds on the error exponents take the form of a barycentric Chernoff divergence.

In the study of quantum state exclusion, we first provided an exact characterization of the one-shot error probability in terms of the extended hypothesis-testing divergence (Proposition~\ref{prop:oneshot-hypothesis}), and this serves as the very basis of our entire converse bound approach.  Leveraging a quantitative relation between the extended hypothesis-testing divergence and the extended sandwiched Rényi divergence (Lemma~\ref{lem:hypothesis-extended}), this characterization then leads to a converse bound on the error probability in terms of an extended version of the left sandwiched Rényi radius (Proposition~\ref{prop:oneshot-state}).  Taking this converse bound to the asymptotic regime, we end up with a single-letter upper bound on the error exponent of state exclusion, given by the multivariate log-Euclidean Chernoff divergence (Theorem~\ref{thm:asymptotic-state}).  In fact, the log-Euclidean upper bound can be shown to be the tightest possible asymptotic upper bound one can construct based on the aforementioned one-shot converse bound (Proposition~\ref{prop:equality}).  In addition, it is an improvement upon the upper bound presented in Ref.~\cite[Theorem~20]{mishra2024OptimalErrorExponents} (Corollary~\ref{cor:comparison}), making it the best efficiently computable upper bound on the error exponent of state exclusion discovered so far.  Since the log-Euclidean upper bound is tight for classical states, our approach offers an alternative and simpler (both conceptually and technically) proof for the converse part of the exact error exponent of classical state exclusion~\cite[Theorem~6]{mishra2024OptimalErrorExponents} and, in the special case of $r=2$, for the classical Chernoff bound~\cite[Theorem~2.1]{nussbaum2009ChernoffLowerBound}.  This gives an additional pedagogical value to our approach, even in the quantum setting, considering that the classical Chernoff bound is an important ingredient for proving the quantum Chernoff bound~\cite[Theorem~2.2]{nussbaum2009ChernoffLowerBound}.  As a separate remark, the various properties of the extended sandwiched Rényi divergence that we established (Theorem~\ref{thm:sandwiched-extended}) played an instrumental role in the derivation and analysis of the log-Euclidean upper bound.  These properties may be of independent interest, and we expect them to find further applications apart from their immediate use in state exclusion.

We also studied the task of quantum channel exclusion, which was originally introduced in Ref.~\cite[Appendix~E~4]{huang2024ExactQuantumSensing}.  We reviewed the definitions of the nonasymptotic error probability and the ensuing asymptotic error exponent of channel exclusion, when the most general adaptive strategy is allowed.  Observing that every channel exclusion strategy involves state exclusion as a subprocedure, we derived a converse bound on the nonasymptotic error probability of channel exclusion based on the just-established converse bound for state exclusion, in terms of the left geometric Rényi channel radius (Proposition~\ref{prop:nonasymptotic-channel}).  This further leads to a single-letter upper bound on the asymptotic error exponent of channel exclusion, given by a barycentric Chernoff divergence based on the Belavkin--Staszewski divergence (Theorem~\ref{thm:asymptotic-channel}).  Both the nonasymptotic and asymptotic bounds can be efficiently evaluated via an SDP, and in the special case of $r=2$, they provide the first known universal and efficiently computable converse bounds for symmetric binary channel discrimination.  Finally, we showed that when the channels are classical, our upper bound on the error exponent of channel exclusion is achievable by a parallel strategy.  This determines the exact error exponent of classical channel exclusion, simultaneously generalizing existing results on classical state exclusion and symmetric binary classical channel discrimination.

\subsection{Future directions}
\label{sec:future}

There are still many important and unanswered questions in the study of quantum state and channel exclusion.  The foremost open problem is to provide an exact expression for the error exponent of state exclusion.  While this paper establishes a simple upper bound in terms of the multivariate log-Euclidean Chernoff divergence, our understanding of the achievability (i.e., lower bound) part of the problem is relatively limited.  What we do know is that the log-Euclidean upper bound is \emph{not} achievable in general, as it fails to reduce to the quantum Chernoff divergence in the special case of $r=2$.  This further implies that the exact expression for the error exponent must fall \emph{beyond} the family of barycentric Chernoff divergences, due to the minimality of the multivariate log-Euclidean Chernoff divergence within the family.  Another anomalous aspect of state exclusion is exclusion between pure states.  The error exponent of state exclusion becomes infinite as long as the ensemble of concern contains at least three distinct pure states, but the same cannot be said if only two distinct pure states are found.  This sharp contrast poses a significant challenge to conjecturing a legitimate potential formula for the error exponent, since few recipes are known to generate functions consistent with these special cases.  All this evidence suggests that constructing multivariate divergence measures of a novel nature is much needed for the exact expression for the error exponent of state exclusion to be determined.

On the other hand, the tasks of quantum state and channel exclusion provide an operational context and motivation for studying multivariate divergence measures.  Remarkably, while previous studies of discrimination between multiple states in different settings~\cite{bjelakovic2005QuantumVersionSanovs,brandao2010GeneralizationQuantumSteins,li2016DiscriminatingQuantumStates,brandao2020AdversarialHypothesisTesting,bunth2023EquivariantRelativeSubmajorization,hayashi2024GeneralizedQuantumSteins,lami2025SolutionGeneralizedQuantum} drew considerable attention to divergence measures with multiple arguments, few of these entailed measures are ``inherently multivariate,'' as most of them take the form of an optimization of pairwise divergences.  In contrast, state exclusion leads us to divergence measures that are multivariate in a ``truer'' sense, such as the barycentric Chernoff divergences, or more generally, the left divergence radii, involved in this paper.  We expect exclusion tasks to be the arena where such inherently multivariate divergence measures (also see Ref.~\cite{furuya2023MonotonicMultistateQuantum}) find their operational interpretations and applications.

\section*{Acknowledgements}
\label{sec:acknowledgements}

MM and MMW are grateful to Johannes Jakob Meyer for inviting them to the ``Friends of Hypothesis Testing'' email exchange, and MMW is especially grateful as it  reminded him of Ref.~\cite[Theorem~1]{vazquez-vilar2016MultipleQuantumHypothesis}.  HKM and MMW are also grateful to Prof.~Michael Nussbaum for many discussions during the development of Ref.~\cite{mishra2024OptimalErrorExponents}, which helped to shape their thinking about the quantum hypothesis exclusion problem.  

KJ acknowledges support from the NSF under grant no.~2329662.  HKM and MMW acknowledge support from the NSF under grant no.~2304816 and grant no.~2329662 and AFRL under agreement no.~FA8750-23-2-0031.  The work of MM was partially funded by the National Research, Development and Innovation Office of Hungary via the research grants K146380 and Excellence 151342, and by the Ministry of Culture and Innovation and the National Research, Development and Innovation Office within the Quantum Information National Laboratory of Hungary (Grant No.~2022-2.1.1-NL-2022-00004).

This material is based on research sponsored by Air Force Research Laboratory under agreement number FA8750-23-2-0031.  The U.S. Government is authorized to reproduce and distribute reprints for Governmental purposes notwithstanding any copyright notation thereon.  The views and conclusions contained herein are those of the authors and should not be interpreted as necessarily representing the official policies or endorsements, either expressed or implied, of Air Force Research Laboratory or the U.S. Government.


\appendices

\section{Divergence measures for channels}
\label{app:divergence}

\subsection{Proof of Lemma~\ref{lem:inheritance}(iv)}
\label{app:regularity-channel}

\begin{proposition}[Nonincreasing monotonicity and regularity by the channel $\gd{D}$-divergence in its second argument, Lemma~\ref{lem:inheritance}(iv)]
\label{prop:regularity-channel}
Let $\gd{D}$ be a bivariate generalized divergence that is monotonically nonincreasing and regular in its second argument.  Then the channel $\gd{D}$-divergence is also monotonically nonincreasing and regular in its second argument.  Consequently,
\begin{align}
	&\gd{D}\fleft(\ch{N}\middle\|\ch{M}\fright)=\sup_{\varepsilon\in(0,1)}\gd{D}\fleft(\ch{N}\middle\|\ch{M}+\varepsilon\ch{I}\fright) \notag\\
	&\qquad\forall\ch{N}\in\s{C}_{A\to B},\;\ch{M}\in\s{CP}_{A\to B}, \label{pf:inheritance-1}
\end{align}
where $\ch{I}\in\s{CP}_{A\to B}\colon\rho\mapsto\tr[\rho]\1$ denotes the unnormalized completely depolarizing map.
\end{proposition}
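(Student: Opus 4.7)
The plan is to establish the channel-level monotonicity first and then combine it with the pointwise regularity of $\gd{D}$ to obtain channel-level regularity; the supremum representation in \eqref{pf:inheritance-1} will drop out as an immediate corollary of the monotonicity.

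For monotonicity, I would fix any $\rho_{RA}\in\s{D}_{RA}$, observe that $(\ch{M}+\ch{M}')_{A\to B}[\rho_{RA}]=\ch{M}_{A\to B}[\rho_{RA}]+\ch{M}'_{A\to B}[\rho_{RA}]$ with the second summand in $\s{PSD}_{RB}$, apply the pointwise nonincreasing monotonicity \eqref{eq:antimonotonicity} of $\gd{D}$ to this pair, and then supremize over $\rho_{RA}$. The resulting inequality is exactly the channel-level monotonicity, so this step is a direct repackaging of the argument already used implicitly in Lemma~\ref{lem:inheritance}(iii).

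The heart of the proposition is the regularity claim. The easy direction $\lim_{\varepsilon\searrow 0}\gd{D}(\ch{N}\|\ch{M}+\varepsilon\ch{I})\leq\gd{D}(\ch{N}\|\ch{M})$ follows at once from the just-proved channel monotonicity applied at each $\varepsilon>0$. For the reverse direction I would fix $\rho_{RA}\in\s{D}_{RA}$ and exploit the key observation
\[
	\ch{I}_{A\to B}\fleft[\rho_{RA}\fright]=\rho_R\otimes\1_B\leq\1_{RB},
\]
which holds because $\rho_R=\tr_A[\rho_{RA}]$ is itself a state and hence bounded by $\1_R$. This gives $\ch{M}_{A\to B}[\rho_{RA}]+\varepsilon\1_{RB}\geq(\ch{M}+\varepsilon\ch{I})_{A\to B}[\rho_{RA}]$, so pointwise monotonicity plus the definition of the channel $\gd{D}$-divergence (see \eqref{eq:divergence-channel}) yields the chain
\[
	\gd{D}\fleft(\ch{N}_{A\to B}\fleft[\rho_{RA}\fright]\middle\|\ch{M}_{A\to B}\fleft[\rho_{RA}\fright]+\varepsilon\1_{RB}\fright)\leq\gd{D}\fleft(\ch{N}\middle\|\ch{M}+\varepsilon\ch{I}\fright).
\]
Letting $\varepsilon\searrow 0$ on the left-hand side, the pointwise regularity \eqref{eq:regularity} of $\gd{D}$ converts it into $\gd{D}(\ch{N}_{A\to B}[\rho_{RA}]\|\ch{M}_{A\to B}[\rho_{RA}])$; supremizing over $\rho_{RA}\in\s{D}_{RA}$ then delivers the required lower bound on $\lim_{\varepsilon\searrow 0}\gd{D}(\ch{N}\|\ch{M}+\varepsilon\ch{I})$. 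No interchange of limit and supremum is needed, because the right-hand side of the chain is already independent of $\rho_{RA}$. Finally, \eqref{pf:inheritance-1} is immediate from channel monotonicity: the map $\varepsilon\mapsto\gd{D}(\ch{N}\|\ch{M}+\varepsilon\ch{I})$ is nondecreasing as $\varepsilon$ decreases, so its limit at $0$ coincides with its supremum over $\varepsilon\in(0,1)$.

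The one nonobvious step that I expect to do all the work is the bridge $\ch{I}[\rho_{RA}]\leq\1_{RB}$: it relates the ``scalar'' perturbation $\varepsilon\1_{RB}$ used in the hypothesis of pointwise regularity to the ``channel'' perturbation $\varepsilon\ch{I}$ appearing in the conclusion, so that monotonicity in the second argument can lift the pointwise regularity up to the channel level. Beyond this, no additional hypotheses such as convexity or lower semicontinuity are invoked, which keeps the proof short and self-contained.
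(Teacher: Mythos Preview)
Your proposal is correct and follows essentially the same approach as the paper's proof: both hinge on the key inequality $\ch{I}_{A\to B}[\rho_{RA}]=\rho_R\otimes\1_B\leq\1_{RB}$, which lets pointwise monotonicity bridge the scalar perturbation $\varepsilon\1_{RB}$ to the channel perturbation $\varepsilon\ch{I}$. The paper organizes the hard direction by converting $\lim_{\varepsilon\searrow0}$ into $\sup_{\varepsilon\in(0,1)}$ via monotonicity and then freely commuting the two suprema, whereas you take the limit in $\varepsilon$ first and supremize over $\rho_{RA}$ afterwards; these are cosmetic variants of the same argument.
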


\begin{proof}
It follows from Lemma~\ref{lem:inheritance}(iii) that the channel $\gd{D}$-divergence is monotonically nonincreasing in its second argument.  To show that the channel $\gd{D}$-divergence is regular in its second argument, for all $\varepsilon\in(0,1)$, we observe that
\begin{align}
	&\gd{D}\fleft(\ch{N}\middle\|\ch{M}\fright) \notag\\
	&=\sup_{\rho\in\s{D}_{RA}}\gd{D}\fleft(\ch{N}_{A\to B}\fleft[\rho_{RA}\fright]\middle\|\ch{M}_{A\to B}\fleft[\rho_{RA}\fright]\fright) \label{pf:inheritance-2}\\
	&\geq\sup_{\rho\in\s{D}_{RA}}\gd{D}\fleft(\ch{N}_{A\to B}\fleft[\rho_{RA}\fright]\middle\|\ch{M}_{A\to B}\fleft[\rho_{RA}\fright]+\varepsilon\ch{I}_{A\to B}\fleft[\rho_{RA}\fright]\fright) \label{pf:inheritance-3}\\
	&=\gd{D}\fleft(\ch{N}\middle\|\ch{M}+\varepsilon\ch{I}\fright)\quad\forall\ch{N}\in\s{C}_{A\to B},\;\ch{M}\in\s{CP}_{A\to B}. \label{pf:inheritance-4}
\end{align}
Here \eqref{pf:inheritance-2} and \eqref{pf:inheritance-4} follow from the definition of the channel $\gd{D}$-divergence (see \eqref{eq:divergence-channel}); Eq.~\eqref{pf:inheritance-3} uses the nonincreasing monotonicity of $\gd{D}$ in its second argument (see \eqref{eq:antimonotonicity}).  Meanwhile, we have that
\begin{align}
	&\gd{D}\fleft(\ch{N}\middle\|\ch{M}\fright) \notag\\
	&=\sup_{\rho\in\s{D}_{RA}}\gd{D}\fleft(\ch{N}_{A\to B}\fleft[\rho_{RA}\fright]\middle\|\ch{M}_{A\to B}\fleft[\rho_{RA}\fright]\fright) \\
	&=\sup_{\rho\in\s{D}_{RA}}\lim_{\varepsilon\searrow0}\gd{D}\fleft(\ch{N}_{A\to B}\fleft[\rho_{RA}\fright]\middle\|\ch{M}_{A\to B}\fleft[\rho_{RA}\fright]+\varepsilon\1_{RB}\fright) \label{pf:inheritance-5}\\
	&=\sup_{\rho\in\s{D}_{RA}}\sup_{\varepsilon\in(0,1)}\gd{D}\fleft(\ch{N}_{A\to B}\fleft[\rho_{RA}\fright]\middle\|\ch{M}_{A\to B}\fleft[\rho_{RA}\fright]+\varepsilon\1_{RB}\fright) \label{pf:inheritance-6}\\
	&\leq\sup_{\rho\in\s{D}_{RA}}\sup_{\varepsilon\in(0,1)}\gd{D}\fleft(\ch{N}_{A\to B}\fleft[\rho_{RA}\fright]\middle\|\ch{M}_{A\to B}\fleft[\rho_{RA}\fright]+\varepsilon\rho_R\otimes\1_B\fright) \label{pf:inheritance-7}\\
	&=\sup_{\varepsilon\in(0,1)}\sup_{\rho\in\s{D}_{RA}}\gd{D}\fleft(\ch{N}_{A\to B}\fleft[\rho_{RA}\fright]\middle\|\left(\ch{M}_{A\to B}+\varepsilon\ch{I}_{A\to B}\right)\fleft[\rho_{RA}\fright]\fright) \\
	&=\sup_{\varepsilon\in(0,1)}\gd{D}\fleft(\ch{N}\middle\|\ch{M}+\varepsilon\ch{I}\fright)\quad\forall\ch{N}\in\s{C}_{A\to B},\;\ch{M}\in\s{CP}_{A\to B}. \label{pf:inheritance-8}
\end{align}
Here \eqref{pf:inheritance-5} uses the regularity of $\gd{D}$ in its second argument (see \eqref{eq:regularity}); Eqs.~\eqref{pf:inheritance-6} and \eqref{pf:inheritance-7} use the nonincreasing monotonicity of $\gd{D}$ in its second argument (see \eqref{eq:antimonotonicity}).  Combining \eqref{pf:inheritance-4} and \eqref{pf:inheritance-8}, we have that
\begin{align}
	&\gd{D}\fleft(\ch{N}\middle\|\ch{M}\fright) \notag\\
	&=\sup_{\varepsilon\in(0,1)}\gd{D}\fleft(\ch{N}\middle\|\ch{M}+\varepsilon\ch{I}\fright) \\
	&=\lim_{\varepsilon\searrow0}\gd{D}\fleft(\ch{N}\middle\|\ch{M}+\varepsilon\ch{I}\fright)\quad\forall\ch{N}\in\s{C}_{A\to B},\;\ch{M}\in\s{CP}_{A\to B}. \label{pf:inheritance-9}
\end{align}
Here \eqref{pf:inheritance-9} uses the nonincreasing monotonicity of the channel $\gd{D}$-divergence (see \eqref{eq:antimonotonicity-channel}).
\end{proof}

\subsection{Proof of Lemma~\ref{lem:radius-channel-alternative}}
\label{app:radius-channel-alternative}

{
\newcounter{temporary-theorem}
\setcounter{temporary-theorem}{\value{theorem}}
\setcounter{theorem}{100}
\renewcommand{\thetheorem}{\ref{lem:radius-channel-classical}}
\begin{lemma}[Alternative representation of the channel $R^\gd{D}$, restatement]
Let $\gd{D}$ be a bivariate generalized divergence with the following properties: \textnormal{(i)} quasiconvexity in its first argument; \textnormal{(ii)} lower semicontinuity in its first argument; \textnormal{(iii)} nonincreasing monotonicity in its second argument; \textnormal{(iv)} regularity in its second argument; and \textnormal{(v)} $\gd{D}(\ch{N}\|\ch{M})<\infty$ if $J_\ch{N}^0\leq J_\ch{M}^0$.  Then
\begin{align}
	&R^\gd{D}\fleft(\ch{N}_{[r]}\fright)=\sup_{s_{[r]}\in\s{P}_r}\inf_{\ch{T}\in\s{C}_{A\to B}}\sum_{x\in[r]}s_x\gd{D}\fleft(\ch{T}\middle\|\ch{N}_x\fright) \notag\\
	&\qquad\forall\ch{N}_{[r]}\in\s{CP}_{A\to B}^{[r]}.
\end{align}
\end{lemma}
\setcounter{theorem}{\value{temporary-theorem}}
}

\begin{proof}
For all $\varepsilon\in(0,1)$, we observe that
\begin{align}
	&\left(\ch{T}_{A\to B}\fleft[\rho_{RA}\fright]\right)^0 \notag\\
	&\leq\left(\tr_B\circ\ch{T}_{A\to B}\fleft[\rho_{RA}\fright]\otimes\1_B\right)^0 \\
	&=\left(\rho_R\otimes\1_B\right)^0 \\
	&=\left(\ch{I}_{A\to B}\fleft[\rho_{RA}\fright]\right)^0 \\
	&=\left(\ch{N}_{A\to B}\fleft[\rho_{RA}\fright]+\varepsilon\ch{I}_{A\to B}\fleft[\rho_{RA}\fright]\right)^0 \notag\\
	&\qquad\forall\ch{T}\in\s{C}_{A\to B},\;\ch{N}\in\s{CP}_{A\to B},\;\rho\in\s{D}_{RA}.
\end{align}
This implies that
\begin{align}
	\gd{D}\fleft(\ch{T}\middle\|\ch{N}+\varepsilon\ch{I}\fright)&<\infty\quad\forall\ch{T}\in\s{C}_{A\to B},\;\ch{N}\in\s{CP}_{A\to B}.
\end{align}
Let $\ch{N}_{[r]}\in\s{CP}_{A\to B}^{[r]}$ be a tuple of completely positive maps.  It follows from \eqref{eq:radius-channel-2} that
\begin{align}
	R^\gd{D}\fleft(\ch{N}_{[r]}\fright)&=\inf_{\ch{T}\in\s{C}_{A\to B}}\sup_{s_{[r]}\in\s{P}_r}\sum_{x\in[r]}s_x\gd{D}\fleft(\ch{T}\middle\|\ch{N}_x\fright) \\
	&=\inf_{\ch{T}\in\s{C}_{A\to B}}\sup_{s_{[r]}\in\s{P}_r}\sup_{\varepsilon\in(0,1)}\sum_{x\in[r]}s_x\gd{D}\fleft(\ch{T}\middle\|\ch{N}_x+\varepsilon\ch{I}\fright) \label{pf:alternative-13}\\
	&=\sup_{\varepsilon\in(0,1)}\inf_{\ch{T}\in\s{C}_{A\to B}}\sup_{s_{[r]}\in\s{P}_r}\sum_{x\in[r]}s_x\gd{D}\fleft(\ch{T}\middle\|\ch{N}_x+\varepsilon\ch{I}\fright) \label{pf:alternative-14}\\
	&=\sup_{\varepsilon\in(0,1)}\sup_{s_{[r]}\in\s{P}_r}\inf_{\ch{T}\in\s{C}_{A\to B}}\sum_{x\in[r]}s_x\gd{D}\fleft(\ch{T}\middle\|\ch{N}_x+\varepsilon\ch{I}\fright) \label{pf:alternative-15}\\
	&=\sup_{s_{[r]}\in\s{P}_r}\inf_{\ch{T}\in\s{C}_{A\to B}}\sup_{\varepsilon\in(0,1)}\sum_{x\in[r]}s_x\gd{D}\fleft(\ch{T}\middle\|\ch{N}_x+\varepsilon\ch{I}\fright) \label{pf:alternative-16}\\
	&=\sup_{s_{[r]}\in\s{P}_r}\inf_{\ch{T}\in\s{C}_{A\to B}}\sum_{x\in[r]}s_x\gd{D}\fleft(\ch{T}\middle\|\ch{N}_x\fright) \label{pf:alternative-17}.
\end{align}
Here \eqref{pf:alternative-13} and \eqref{pf:alternative-17} follows from \eqref{pf:inheritance-1}; Eqs.~\eqref{pf:alternative-14} and \eqref{pf:alternative-16} apply the Mosonyi--Hiai minimax theorem~\cite[Corollary~A.2]{mosonyi2011QuantumRenyiRelative}; Eq.~\eqref{pf:alternative-15} applies the Sion minimax theorem~\cite{sion1958GeneralMinimaxTheorems}.  The application of the Mosonyi--Hiai minimax theorem in \eqref{pf:alternative-14} and \eqref{pf:alternative-16} is justified by the following observations: the feasible region of $\ch{T}$ is compact; the objective function is monotonically nonincreasing in $\varepsilon$ due to the inheritance of the nonincreasing monotonicity by the channel $\gd{D}$-divergence in its second argument (see Lemma~\ref{lem:inheritance}(iii)), and it is lower semicontinuous in $\ch{T}$ due to the inheritance of the lower semicontinuity by the channel $\gd{D}$-divergence in its first argument (see Lemma~\ref{lem:inheritance}(ii)).  The application of the Sion minimax theorem in \eqref{pf:alternative-15} is justified by the following observations: the feasible regions of $s_{[r]}$ and $\ch{T}$ are both convex and compact; the objective function is linear in $s_{[r]}$, quasiconvex in $\ch{T}$ due to the inheritance of the quasiconvexity by the channel $\gd{D}$-divergence in its first argument (see Lemma~\ref{lem:inheritance}(i)), and lower semicontinuous in $\ch{T}$ due to the inheritance of the lower semicontinuity by the channel $\gd{D}$-divergence in its first argument (see Lemma~\ref{lem:inheritance}).
\end{proof}

\subsection{Proof of Lemma~\ref{lem:radius-channel-classical}}
\label{app:radius-channel-classical}

{
\setcounter{temporary-theorem}{\value{theorem}}
\setcounter{theorem}{100}
\renewcommand{\thetheorem}{\ref{lem:radius-channel-classical}}
\begin{lemma}[$R^\gd{D}$ of classical-to-quantum channels, restatement]
Let $\ch{N}_{[r]}\in\s{CP}_{Y\to B}^{[r]}$ be a tuple of classical-to-quantum positive maps, where $\ch{N}_x\colon\rho\mapsto\sum_{y\in[d_Y]}\bra{y}\rho\ket{y}\nu_{x,y}$ and $\nu_{x,y}\in\s{PSD}_B$ for all $x\in[r]$ and all $y\in[d_Y]$.  Let $\gd{D}$ be a bivariate generalized divergence that is jointly quasiconvex and lower semicontinuous in its first argument.  Then
\begin{align}
	R^\gd{D}\fleft(\ch{N}_{[r]}\fright)&=\max_{y\in[d_Y]}R^\gd{D}\fleft(\nu_{[r],y}\fright),
\end{align}
where $\nu_{[r],y}\equiv(\nu_{1,y},\nu_{2,y},\dots,\nu_{r,y})$ for all $y\in[d_Y]$.
\end{lemma}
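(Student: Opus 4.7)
The plan is to prove the equality $R^{\gd{D}}(\ch{N}_{[r]})=\max_{y\in[d_Y]}R^{\gd{D}}(\nu_{[r],y})$ by establishing the two inequalities separately.  The direction $\geq$ should come essentially for free by testing the channel $\gd{D}$-divergence on classical pure-state inputs $\op{y}{y}$; the direction $\leq$, which is the heart of the lemma, will instead require constructing an explicit nearly optimal classical-to-quantum channel $\ch{T}\in\s{C}_{Y\to B}$ and controlling its performance through joint quasiconvexity combined with the DPI.

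For the lower bound, I would fix an arbitrary $\ch{T}\in\s{C}_{Y\to B}$ and set $\tau_y\coloneq\ch{T}[\op{y}{y}]\in\s{D}_B$ for each $y\in[d_Y]$.  Feeding the classical pure input $\op{y}{y}$ into the supremum defining the channel $\gd{D}$-divergence gives $\gd{D}(\ch{T}\|\ch{N}_x)\geq\gd{D}(\tau_y\|\nu_{x,y})$ for all $x\in[r]$ and all $y\in[d_Y]$.  Maximizing over $x$ and invoking the definition of the left $\gd{D}$-radius as an infimum over states yields $\max_{x\in[r]}\gd{D}(\ch{T}\|\ch{N}_x)\geq R^{\gd{D}}(\nu_{[r],y})$ for every $y$, hence the same bound holds with $\max_{y\in[d_Y]}R^{\gd{D}}(\nu_{[r],y})$ on the right.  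Taking the infimum over $\ch{T}$ then closes this direction.

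For the reverse direction, I would fix $\varepsilon>0$ and, for each $y\in[d_Y]$, pick $\tau_y\in\s{D}_B$ satisfying $\max_{x\in[r]}\gd{D}(\tau_y\|\nu_{x,y})\leq R^{\gd{D}}(\nu_{[r],y})+\varepsilon$ (the case in which the right-hand side of the lemma is $+\infty$ being trivial).  I would then define $\ch{T}\in\s{C}_{Y\to B}\colon\rho\mapsto\sum_{y\in[d_Y]}\bra{y}\rho\ket{y}\tau_y$ and aim to prove the key estimate
\begin{align}
\gd{D}\fleft(\ch{T}\middle\|\ch{N}_x\fright)&\leq\max_{y\in[d_Y]}\gd{D}\fleft(\tau_y\middle\|\nu_{x,y}\fright)\quad\forall x\in[r].
\end{align}
To that end, for any reference system $R$ and any input $\rho_{RY}\in\s{D}_{RY}$, I would write $\bra{y}\rho_{RY}\ket{y}=p_y\hat\sigma_{y,R}$ with $p_y\geq0$ and $\hat\sigma_{y,R}\in\s{D}_R$ whenever $p_y>0$, so that
$\ch{T}_{Y\to B}[\rho_{RY}]=\sum_{y}p_y\,\hat\sigma_{y,R}\otimes\tau_{y,B}$ and $\ch{N}_{x,Y\to B}[\rho_{RY}]=\sum_{y}p_y\,\hat\sigma_{y,R}\otimes\nu_{x,y,B}$ are two convex combinations with identical weights $p_y$ and identical $R$-factors $\hat\sigma_{y,R}$.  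Joint quasiconvexity of $\gd{D}$ then bounds the divergence by $\max_{y\colon p_y>0}\gd{D}(\hat\sigma_{y,R}\otimes\tau_{y,B}\|\hat\sigma_{y,R}\otimes\nu_{x,y,B})$, and the DPI applied with the preparation channel $\omega\mapsto\hat\sigma_{y,R}\otimes\omega_B$ from $B$ to $RB$ collapses each such term down to $\gd{D}(\tau_y\|\nu_{x,y})$.  Supremizing over $\rho_{RY}$, maximizing over $x$, combining with the choice of $\tau_y$, and letting $\varepsilon\searrow0$ would then finish the argument.

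The main obstacle I anticipate is the joint-quasiconvexity reduction: it succeeds only because the classical-to-quantum form of both $\ch{T}$ and $\ch{N}_x$ dephases the input in the $\{\ket{y}\}$ basis and thereby forces matching probability weights $p_y$ and matching $R$-system factors $\hat\sigma_{y,R}$ on the two sides of the divergence, which is exactly what joint quasiconvexity of $\gd{D}$ requires.  Without this synchronization no such decomposition would be available, which is also why the statement of the lemma has to be restricted to classical-input channels.  The lower semicontinuity hypothesis on $\gd{D}$ does not appear to enter the plan above and is presumably included for compatibility with the hypotheses used in Lemma~\ref{lem:radius-channel-alternative} or related statements elsewhere in the paper.
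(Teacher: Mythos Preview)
Your proposal is correct and follows essentially the same line as the paper: both arguments rest on the identity $\gd{D}(\ch{T}\|\ch{N}_x)=\max_{y}\gd{D}(\tau_y\|\nu_{x,y})$ for classical-to-quantum $\ch{T}$ and $\ch{N}_x$, obtained via joint quasiconvexity plus DPI for the upper bound and by testing on $\op{y}{y}$ for the lower bound. The one substantive difference is in how the reverse inequality is closed: the paper invokes lower semicontinuity in the first argument to guarantee that an exact minimizer $\tau_{\star,y}\coloneq\argmin_{\tau\in\s{D}_B}\max_{x\in[r]}\gd{D}(\tau\|\nu_{x,y})$ exists (compactness of $\s{D}_B$ plus lower semicontinuity of a finite maximum), and then plugs the tuple $\tau_{\star,[d_Y]}$ directly into the c-q channel $\ch{T}$; you instead run an $\varepsilon$-approximation and let $\varepsilon\searrow0$. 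Your route is slightly more elementary and, as you suspected, does not use the lower-semicontinuity hypothesis at all, so your final comment is accurate: in the paper that hypothesis is there precisely to secure the argmin.
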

\setcounter{theorem}{\value{temporary-theorem}}
}

\begin{proof}
For every classical-to-quantum channel $\ch{N}\in\s{C}_{Y\to B}\colon\rho\mapsto\sum_{y\in[d_Y]}\bra{y}\rho\ket{y}\nu_y$ and every classical-to-quantum positive map $\ch{N}\in\s{CP}_{Y\to B}\colon\rho\mapsto\sum_{y\in[d_Y]}\bra{y}\rho\ket{y}\mu_y$, where $\nu_y,\mu_y\in\s{D}_B$ for all $y\in[d_Y]$, we observe that
\begin{align}
	&\gd{D}\fleft(\ch{N}_{Y\to B}\fleft[\rho_{RY}\fright]\middle\|\ch{M}_{Y\to B}\fleft[\rho_{RY}\fright]\fright) \notag\\
	&=\gd{D}\fleft(\sum_{y\in[d_Y]}\bra{y}_Y\rho_{RY}\ket{y}_Y\otimes\nu_{y,B}\middle\|\sum_{y\in[d_Y]}\bra{y}_Y\rho_{RY}\ket{y}_Y\otimes\mu_{y,B}\fright) \label{pf:radius-channel-classical-1}\\
	&\leq\max_{y\in[d_Y]}\gd{D}\fleft(\frac{\bra{y}_Y\rho_{RY}\ket{y}_Y}{\bra{y}_Y\rho_Y\ket{y}_Y}\otimes\nu_y\middle\|\frac{\bra{y}_Y\rho_{RY}\ket{y}_Y}{\bra{y}_Y\rho_Y\ket{y}_Y}\otimes\mu_y\fright) \label{pf:radius-channel-classical-2}\\
	&=\max_{y\in[d_Y]}\gd{D}\fleft(\nu_y\middle\|\mu_y\fright)\quad\forall\rho\in\s{D}_{RY}. \label{pf:radius-channel-classical-3}
\end{align}
Here \eqref{pf:radius-channel-classical-2} uses the joint quasiconvexity of $\gd{D}$ (see \eqref{eq:quasiconvexity}); Eq.~\eqref{pf:radius-channel-classical-3} follows from the DPI under the preparation channel and under the partial-trace channel (see \eqref{eq:DPI}).  Taking the supremum over $\rho\in\s{D}_{RY}$ on the left-hand side of \eqref{pf:radius-channel-classical-1} and on the right-hand side of \eqref{pf:radius-channel-classical-3} with $d_R$ allowed to be arbitrarily large, we have that
\begin{align}
	&\sup_{\rho_{RY}}\gd{D}\fleft(\ch{N}_{Y\to B}\fleft[\rho_{RY}\fright]\middle\|\ch{M}_{Y\to B}\fleft[\rho_{RY}\fright]\fright) \notag\\
	&\leq\max_{y\in[d_Y]}\gd{D}\fleft(\nu_y\middle\|\mu_y\fright) \\
	&\leq\sup_{\rho_{RY}}\gd{D}\fleft(\ch{N}_{Y\to B}\fleft[\rho_{RY}\fright]\middle\|\ch{M}_{Y\to B}\fleft[\rho_{RY}\fright]\fright).  \label{pf:radius-channel-classical-4}
\end{align}
Here \eqref{pf:radius-channel-classical-4} follows from the fact that $\rho_{RY}=1_R\otimes\op{y}{y}_Y$ is a feasible solution to the supremization.  This implies that
\begin{align}
	\gd{D}\fleft(\ch{N}\middle\|\ch{M}\fright)&=\sup_{\rho_{RY}}\gd{D}\fleft(\ch{N}_{Y\to B}\fleft[\rho_{RY}\fright]\middle\|\ch{M}_{Y\to B}\fleft[\rho_{RY}\fright]\fright) \\
	&=\max_{y\in[d_Y]}\gd{D}\fleft(\nu_y\middle\|\mu_y\fright). \label{pf:radius-channel-classical-5}
\end{align}
Then it follows from the definition of the left channel $\gd{D}$-radius (see \eqref{eq:radius-channel-1}) that
\begin{align}
	R^\gd{D}\fleft(\ch{N}_{[r]}\fright)&=\inf_{\ch{T}\in\s{C}_{A\to B}}\max_{x\in[r]}\gd{D}\fleft(\ch{T}\middle\|\ch{N}_x\fright) \\
	&=\inf_{\tau_{[d_Y]}\in\s{D}_B^{[d_Y]}}\max_{x\in[r]}\max_{y\in[d_Y]}\gd{D}\fleft(\tau_y\middle\|\nu_{x,y}\fright) \label{pf:radius-channel-classical-6}\\
	&\geq\max_{y\in[d_Y]}\inf_{\tau_y\in\s{D}_B}\max_{x\in[r]}\gd{D}\fleft(\tau_y\middle\|\nu_{x,y}\fright) \label{pf:radius-channel-classical-7}\\
	&=\max_{y\in[d_Y]}R^\gd{D}\fleft(\nu_{[r],y}\fright). \label{pf:radius-channel-classical-8}
\end{align}
Here \eqref{pf:radius-channel-classical-6} follows from \eqref{pf:radius-channel-classical-5} and applies the substitution $\ch{T}\colon\rho\mapsto\sum_{y\in[d_Y]}\bra{y}\rho\ket{y}\tau_y$ and $\ch{N}_x\colon\rho\mapsto\sum_{y\in[d_Y]}\bra{y}\rho\ket{y}\nu_{x,y}$ for all $x\in[r]$; Eq.~\eqref{pf:radius-channel-classical-7} applies an exchange in order between the infimization and the maximization.  To show the opposite inequality, for $y\in[d_Y]$, define the following state:
\begin{align}
	\tau_{\star,y}&\coloneq\argmin_{\tau_y\in\s{D}_B}\max_{x\in[r]}\gd{D}\fleft(\tau_y\middle\|\nu_{x,y}\fright). \label{pf:radius-channel-classical-9}
\end{align}
The existence of $\tau_{\star,y}$ as defined in \eqref{pf:radius-channel-classical-9} is guaranteed by the following observations: the feasible region of $\tau_{\star,y}$ is compact; the objective function is lower semicontinuous in $\tau_y$ due to the fact that the maximum of a family of lower semicontinuous functions is lower semicontinuous.  On the other hand, it follows from \eqref{pf:radius-channel-classical-6} that
\begin{align}
	R^\gd{D}\fleft(\ch{N}_{[r]}\fright)&=\inf_{\tau_{[d_Y]}\in\s{D}_B^{[d_Y]}}\max_{x\in[r]}\max_{y\in[d_Y]}\gd{D}\fleft(\tau_y\middle\|\nu_{x,y}\fright) \label{pf:radius-channel-classical-10}\\
	&\leq\max_{y\in[d_Y]}\max_{x\in[r]}\gd{D}\fleft(\tau_{\star,y}\middle\|\nu_{x,y}\fright) \label{pf:radius-channel-classical-11}\\
	&=\max_{y\in[d_Y]}\min_{\tau_y\in\s{D}_B}\max_{x\in[r]}\gd{D}\fleft(\tau_y\middle\|\nu_{x,y}\fright) \\
	&=\max_{y\in[d_Y]}R^\gd{D}\fleft(\nu_{[r],y}\fright). \label{pf:radius-channel-classical-12}
\end{align}
Here \eqref{pf:radius-channel-classical-11} follows from \eqref{pf:radius-channel-classical-9} and the fact that $\tau_{\star,[d_A]}\in\s{D}_B^{[d_A]}$ is a feasible solution to the infimization in \eqref{pf:radius-channel-classical-10}; Eq.~\eqref{pf:radius-channel-classical-12} follows from the definition of the left $\gd{D}$-radius (see \eqref{eq:radius-1}).  Combining \eqref{pf:radius-channel-classical-8} and \eqref{pf:radius-channel-classical-12} leads to the desired statement.
\end{proof}

\section{Proof of Theorem~\ref{thm:sandwiched-extended}}
\label{app:sandwiched-extended}

\subsection{Proof of Theorem~\ref{thm:sandwiched-extended}.3}
\label{app:additivity}

\begin{proposition}[Additivity of the extended $\sw{D}_\alpha$, Theorem~\ref{thm:sandwiched-extended}.3]
\label{prop:additivity}
Let $\gamma\in\s{Herm}_A$ and $\gamma'\in\s{Herm}_B$ be two Hermitian operators with $\gamma,\gamma'\neq0$, and let $\sigma\in\s{PSD}_A$ and $\sigma'\in\s{PSD}_B$ be two positive semidefinite operators.  Then for all $\alpha\in(1,\infty)$,
\begin{align}
	\label{pf:additivity-1}
	\sw{D}_\alpha\fleft(\gamma\otimes\gamma'\middle\|\sigma\otimes\sigma'\fright)&=\sw{D}_\alpha\fleft(\gamma\middle\|\sigma\fright)+\sw{D}_\alpha\fleft(\gamma'\middle\|\sigma'\fright).
\end{align}
\end{proposition}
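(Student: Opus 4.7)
The plan is to reduce the additivity for the extended sandwiched Rényi divergence to two well-known ingredients: the factorization of operator functional calculus under tensor products (for the positive operators $\sigma,\sigma'$), and the multiplicativity of the Schatten $\alpha$-norm under tensor products (which holds for arbitrary bounded operators, in particular Hermitian ones). The proof of Proposition~\ref{prop:additivity} should closely mirror the standard argument for the original sandwiched Rényi divergence on states; the only thing to check is that no step secretly used positivity of the first argument.

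First, I would dispose of the support condition. Note that $\supp(\gamma\otimes\gamma')=\supp(\gamma)\otimes\supp(\gamma')$ and similarly for $\sigma\otimes\sigma'$, so $\supp(\gamma\otimes\gamma')\subseteq\supp(\sigma\otimes\sigma')$ holds if and only if $\supp(\gamma)\subseteq\supp(\sigma)$ and $\supp(\gamma')\subseteq\supp(\sigma')$. Consequently, the two sides of \eqref{pf:additivity-1} are simultaneously $+\infty$, and the identity holds trivially in that case. Assume henceforth that both support conditions hold, so that all quantities are finite.

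Next, in the finite case, use that for any $t\in\spa{R}$, the functional calculus satisfies $(\sigma\otimes\sigma')^t=\sigma^t\otimes(\sigma')^t$ (understood on supports), and hence, with $t\equiv\frac{1-\alpha}{2\alpha}$,
\begin{align}
	(\sigma\otimes\sigma')^t(\gamma\otimes\gamma')(\sigma\otimes\sigma')^t&=\bigl(\sigma^t\gamma\sigma^t\bigr)\otimes\bigl((\sigma')^t\gamma'(\sigma')^t\bigr).
\end{align}
Then invoke the multiplicativity of the Schatten $\alpha$-norm under tensor products, $\lVert X\otimes Y\rVert_\alpha=\lVert X\rVert_\alpha\lVert Y\rVert_\alpha$, which holds for all bounded operators $X,Y$ and in particular does not require $X,Y$ to be positive. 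This yields
\begin{align}
	\bigl\lVert(\sigma\otimes\sigma')^t(\gamma\otimes\gamma')(\sigma\otimes\sigma')^t\bigr\rVert_\alpha^\alpha&=\bigl\lVert\sigma^t\gamma\sigma^t\bigr\rVert_\alpha^\alpha\,\bigl\lVert(\sigma')^t\gamma'(\sigma')^t\bigr\rVert_\alpha^\alpha.
\end{align}
Taking $\tfrac{1}{\alpha-1}\ln(\cdot)$ on both sides and using the definition in \eqref{eq:sandwiched-extended} produces \eqref{pf:additivity-1}.

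There is no real obstacle here; the only mild point to remark on is that the Schatten $\alpha$-norm tensor-product identity is normally stated via the singular-value decomposition, and one should observe that this derivation makes no use of positivity of the operators being tensored. A one-line justification (singular values of $X\otimes Y$ are the pairwise products of singular values of $X$ and $Y$) suffices.
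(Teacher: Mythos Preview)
Your proposal is correct and follows essentially the same approach as the paper's proof: handle the support condition via $\supp(\gamma\otimes\gamma')\subseteq\supp(\sigma\otimes\sigma')\iff\supp(\gamma)\subseteq\supp(\sigma)$ and $\supp(\gamma')\subseteq\supp(\sigma')$, then in the finite case factorize $(\sigma\otimes\sigma')^{\frac{1-\alpha}{2\alpha}}(\gamma\otimes\gamma')(\sigma\otimes\sigma')^{\frac{1-\alpha}{2\alpha}}$ as a tensor product and apply the multiplicativity of the $\alpha$-norm under tensor products. The paper likewise remarks that the argument is a straightforward generalization of the state case, with the key point being that the $\alpha$-norm multiplicativity requires no positivity of the first argument.
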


\begin{proof}
The proof is a straightforward generalization of that for the original sandwiched Rényi divergence whose first argument is a state.  Note that $\gamma^0\otimes{\gamma'}^0\leq(\sigma\otimes\sigma')^0$ if and only if $\gamma^0\leq\sigma^0$ and ${\gamma'}^0\leq{\sigma'}^0$.  If $(\gamma\otimes\gamma')^0\nleq(\sigma\otimes\sigma')^0$, then both sides of \eqref{pf:additivity-1} are equal to $\infty$.  Otherwise, it follows from the definition of the extended sandwiched Rényi divergence (see \eqref{eq:sandwiched-extended}) that
\begin{align}
	&\sw{D}_\alpha\fleft(\gamma\otimes\gamma'\middle\|\sigma\otimes\sigma'\fright) \notag\\
	&=\frac{1}{\alpha-1}\ln\left\lVert\left(\sigma\otimes\sigma'\right)^\frac{1-\alpha}{2\alpha}\left(\gamma\otimes\gamma'\right)\left(\sigma\otimes\sigma'\right)^\frac{1-\alpha}{2\alpha}\right\rVert_\alpha^\alpha \\
	&=\frac{1}{\alpha-1}\ln\left\lVert\sigma^\frac{1-\alpha}{2\alpha}\gamma\sigma^\frac{1-\alpha}{2\alpha}\otimes{\sigma'}^\frac{1-\alpha}{2\alpha}\gamma'{\sigma'}^\frac{1-\alpha}{2\alpha}\right\rVert_\alpha^\alpha \\
	&=\frac{1}{\alpha-1}\ln\left(\left\lVert\sigma^\frac{1-\alpha}{2\alpha}\gamma\sigma^\frac{1-\alpha}{2\alpha}\right\rVert_\alpha^\alpha\left\lVert{\sigma'}^\frac{1-\alpha}{2\alpha}\gamma'{\sigma'}^\frac{1-\alpha}{2\alpha}\right\rVert_\alpha^\alpha\right) \label{pf:additivity-2}\\
	&=\frac{1}{\alpha-1}\left(\ln\left\lVert\sigma^\frac{1-\alpha}{2\alpha}\gamma\sigma^\frac{1-\alpha}{2\alpha}\right\rVert_\alpha^\alpha+\ln\left\lVert{\sigma'}^\frac{1-\alpha}{2\alpha}\gamma'{\sigma'}^\frac{1-\alpha}{2\alpha}\right\rVert_\alpha^\alpha\right) \\
	&=\sw{D}_\alpha\fleft(\gamma\middle\|\sigma\fright)+\sw{D}_\alpha\fleft(\gamma'\middle\|\sigma'\fright).
\end{align}
Here \eqref{pf:additivity-2} uses the multiplicativity of the $\alpha$-norm under tensor products (see, e.g., Ref.~\cite[Eq.~(2.2.96)]{khatri2024PrinciplesQuantumCommunication}): for all $\alpha\in[1,\infty)$, we have that $\lVert\xi\otimes\xi'\rVert_\alpha=\lVert\xi\rVert_\alpha\lVert\xi'\rVert_\alpha$ for all $\xi\in\spa{B}_A$ and all $\xi'\in\spa{B}_B$.
\end{proof}

\subsection{Proof of Theorem~\ref{thm:sandwiched-extended}.4}
\label{app:direct-sum}

\begin{proposition}[Direct-sum property of the extended $\sw{D}_\alpha$, Theorem~\ref{thm:sandwiched-extended}.4]
\label{prop:direct-sum}
Let $\gamma_{[r]}\in\s{Herm}_A^{[r]}$ be a tuple of Hermitian operators such that $\gamma_x\neq0$ for all $x\in[r]$, and let $\sigma_{[r]}\in\s{PSD}_A^{[r]}$ be a tuple of positive semidefinite operators.  Let $p_{[r]}\in\s{P}_r$ be a probability distribution, and let $q_{[r]}\in[0,\infty)^{[r]}$ be a tuple of nonnegative real numbers.  Then for all $\alpha\in(1,\infty)$,
\begin{align}
	\label{pf:direct-sum-1}
	\sw{Q}_\alpha\fleft(\cq{\gamma}_{XA}\middle\|\cq{\sigma}_{XA}\fright)&=\sum_{x\in[r]}p_x^\alpha q_x^{1-\alpha}\sw{Q}_\alpha\fleft(\gamma_x\middle\|\sigma_x\fright),
\end{align}
where $\sw{Q}_\alpha(\gamma\|\sigma)\equiv\exp((\alpha-1)\sw{D}_\alpha(\gamma\|\sigma))$ denotes the extended sandwiched Rényi quasi-divergence and
\begin{align}
	\cq{\gamma}_{XA}&\equiv\sum_{x\in[r]}p_x\op{x}{x}_X\otimes\gamma_{x,A}, \\
	\cq{\sigma}_{XA}&\equiv\sum_{x\in[r]}q_x\op{x}{x}_X\otimes\sigma_{x,A}.
\end{align}
\end{proposition}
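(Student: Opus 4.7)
The plan is to carry out a direct computation that exploits the block-diagonal structure of both $\cq{\gamma}_{XA}$ and $\cq{\sigma}_{XA}$ with respect to the orthogonal projections $\op{x}{x}_X \otimes \1_A$. First, I would rewrite the extended sandwiched Rényi quasi-divergence explicitly as
\begin{align}
\sw{Q}_\alpha(\gamma\|\sigma) = \left\lVert \sigma^{\frac{1-\alpha}{2\alpha}} \gamma \sigma^{\frac{1-\alpha}{2\alpha}} \right\rVert_\alpha^\alpha
\end{align}
whenever $\supp(\gamma) \subseteq \supp(\sigma)$ (with value $+\infty$ otherwise), noting that Hermiticity of $\gamma$ makes the sandwiched operator itself Hermitian, so $\sw{Q}_\alpha(\gamma\|\sigma) = \tr[\,|\sigma^{\frac{1-\alpha}{2\alpha}} \gamma \sigma^{\frac{1-\alpha}{2\alpha}}|^\alpha\,]$.

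Next, since $\cq{\sigma}_{XA}$ is block-diagonal with block $q_x \sigma_x$ in the $x$th sector, functional calculus yields
\begin{align}
\cq{\sigma}_{XA}^{\frac{1-\alpha}{2\alpha}} = \sum_{x \in [r]} q_x^{\frac{1-\alpha}{2\alpha}} \op{x}{x}_X \otimes \sigma_x^{\frac{1-\alpha}{2\alpha}},
\end{align}
where each summand is restricted to the relevant support. Multiplying out (the cross terms vanish because the $X$-projectors are orthogonal), one finds that $\cq{\sigma}_{XA}^{\frac{1-\alpha}{2\alpha}} \cq{\gamma}_{XA} \cq{\sigma}_{XA}^{\frac{1-\alpha}{2\alpha}}$ is block-diagonal with $x$th block $p_x q_x^{\frac{1-\alpha}{\alpha}} \sigma_x^{\frac{1-\alpha}{2\alpha}} \gamma_x \sigma_x^{\frac{1-\alpha}{2\alpha}}$. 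Because the absolute value and the $\alpha$-power distribute over orthogonal direct sums, taking the $\alpha$-power and tracing gives
\begin{align}
\sw{Q}_\alpha(\cq{\gamma}_{XA} \| \cq{\sigma}_{XA}) = \sum_{x \in [r]} \left|p_x q_x^{\frac{1-\alpha}{\alpha}}\right|^\alpha \tr\!\left[\left|\sigma_x^{\frac{1-\alpha}{2\alpha}} \gamma_x \sigma_x^{\frac{1-\alpha}{2\alpha}}\right|^\alpha\right] = \sum_{x \in [r]} p_x^\alpha q_x^{1-\alpha} \sw{Q}_\alpha(\gamma_x \| \sigma_x),
\end{align}
which is the desired identity.

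The only step that requires care is the bookkeeping of supports and the degenerate weights. If $\supp(\cq{\gamma}_{XA}) \nsubseteq \supp(\cq{\sigma}_{XA})$, then there exists an $x$ with $p_x > 0$ and $\gamma_x \neq 0$ but either $q_x = 0$ or $\supp(\gamma_x) \nsubseteq \supp(\sigma_x)$; in the former case the right-hand side diverges through $q_x^{1-\alpha} = +\infty$ paired with a strictly positive $\sw{Q}_\alpha(\gamma_x\|\sigma_x)$, and in the latter case the corresponding $\sw{Q}_\alpha(\gamma_x\|\sigma_x) = +\infty$, so both sides are $+\infty$ as required. Terms with $p_x = 0$ contribute zero on both sides under the standard convention $0 \cdot (+\infty) = 0$. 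This is the main point requiring attention, but it is a routine check rather than a substantive obstacle; the rest of the argument is a mechanical computation relying only on orthogonality of the $X$-projectors and functional calculus on the block-diagonal operators.
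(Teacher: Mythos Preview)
Your proof is correct and follows essentially the same approach as the paper: both exploit the block-diagonal structure of $\cq{\gamma}_{XA}$ and $\cq{\sigma}_{XA}$, compute the fractional power of $\cq{\sigma}_{XA}$ blockwise, and then use the direct-sum property of the $\alpha$-norm to split the trace. Your treatment of the degenerate cases ($p_x=0$, $q_x=0$, support failure) is in fact a bit more explicit than the paper's, which simply asserts that both sides equal $+\infty$ when $\supp(\cq{\gamma}_{XA})\nsubseteq\supp(\cq{\sigma}_{XA})$.
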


\begin{proof}
The proof is a straightforward generalization of that for the original sandwiched Rényi divergence whose first argument is a state.  Note that $\cq{\gamma}_{XA}^0\leq\cq{\sigma}_{XA}^0$ if and only if $(p_x\gamma_x)^0\leq(q_x\sigma_x)^0$ for all $x\in[r]$.  If $\cq{\gamma}_{XA}^0\nleq\cq{\sigma}_{XA}^0$, then both sides of \eqref{pf:direct-sum-1} are equal to $\infty$.  Otherwise, it follows from the definition of the extended sandwiched Rényi quasi-divergence (see \eqref{eq:sandwiched-extended}) that
\begin{align}
	&\sw{Q}_\alpha\fleft(\cq{\gamma}_{XA}\middle\|\cq{\sigma}_{XA}\fright) \notag\\
	&=\left\lVert\cq{\sigma}_{XA}^\frac{1-\alpha}{2\alpha}\cq{\gamma}_{XA}\cq{\sigma}_{XA}^\frac{1-\alpha}{2\alpha}\right\rVert_\alpha^\alpha \\
	&=\left\lVert\left(\sum_{x\in[r]}q_x^\frac{1-\alpha}{2\alpha}\op{x}{x}\otimes\sigma_x^\frac{1-\alpha}{2\alpha}\right)\left(\sum_{x\in[r]}p_x\op{x}{x}\otimes\gamma_x\right)\right. \notag\\
	&\qquad\left.\left(\sum_{x\in[r]}q_x^\frac{1-\alpha}{2\alpha}\op{x}{x}\otimes\sigma_x^\frac{1-\alpha}{2\alpha}\right)\right\rVert_\alpha^\alpha \\
	&=\left\lVert\sum_{x\in[r]}p_xq_x^\frac{1-\alpha}{\alpha}\op{x}{x}\otimes\sigma_x^\frac{1-\alpha}{2\alpha}\gamma_x\sigma_x^\frac{1-\alpha}{2\alpha}\right\rVert_\alpha^\alpha \\
	&=\sum_{x\in[r]}\left\lVert p_xq_x^\frac{1-\alpha}{\alpha}\sigma_x^\frac{1-\alpha}{2\alpha}\gamma_x\sigma_x^\frac{1-\alpha}{2\alpha}\right\rVert_\alpha^\alpha \label{pf:direct-sum-2}\\
	&=\sum_{x\in[r]}p_x^\alpha q_x^{1-\alpha}\sw{Q}_\alpha\fleft(\gamma_x\middle\|\sigma_x\fright).
\end{align}
Here \eqref{pf:direct-sum-2} uses the direct-sum property of the $\alpha$-norm (see, e.g., Ref.~\cite[Eq.~(2.2.97)]{khatri2024PrinciplesQuantumCommunication}): for all $\alpha\in[1,\infty)$,
\begin{align}
	\left\lVert\sum_{x\in[r]}\op{x}{x}\otimes\xi_x\right\rVert_\alpha^\alpha&=\sum_{x\in[r]}\left\lVert\xi_x\right\rVert_\alpha^\alpha\quad\forall\xi_{[r]}\in\spa{B}_A^{[r]}, \label{pf:direct-sum-3}
\end{align}
thus concluding the proof.
\end{proof}

\subsection{Proof of Theorem~\ref{thm:sandwiched-extended}.5}
\label{app:quasiconvexity}

\begin{proposition}[Joint quasiconvexity of the extended $\sw{D}_\alpha$, Theorem~\ref{thm:sandwiched-extended}.5]
\label{prop:quasiconvexity}
Let $\gamma_{[r]}\in\s{Herm}_A^{[r]}$ be a tuple of Hermitian operators such that $\gamma_x\neq0$ for all $x\in[r]$, and let $\sigma_{[r]}\in\s{PSD}_A^{[r]}$ be a tuple of positive semidefinite operators.  Let $p_{[r]}\in\s{P}_r$ be a probability distribution.  Then for all $\alpha\in(1,\infty)$,
\begin{align}
	\sw{D}_\alpha\fleft(\sum_{x\in[r]}p_x\gamma_x\middle\|\sum_{x\in[r]}p_x\sigma_x\fright)&\leq\max_{x\in[r]}\sw{D}_\alpha\fleft(\gamma_x\middle\|\sigma_x\fright).
\end{align}
\end{proposition}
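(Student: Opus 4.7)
The plan is to mimic the standard classical-quantum embedding argument: lift the mixture to a block-diagonal operator with a classical register $X$, reduce the value on the lift to a convex combination of the individual quasi-divergences via the direct-sum property, and then recover the claim by applying the data-processing inequality under partial trace over $X$.

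More concretely, I would first dispose of the trivial case where $\supp(\gamma_x)\nsubseteq\supp(\sigma_x)$ for some $x$ with $p_x>0$: then $\max_{x\in[r]}\sw{D}_\alpha(\gamma_x\|\sigma_x)=+\infty$, and the inequality is immediate. Otherwise, restricting the sum to indices $x$ with $p_x>0$, I would define the classical-quantum operators
\begin{align*}
\cq{\gamma}_{XA}&\equiv\sum_{x\in[r]}p_x\op{x}{x}_X\otimes\gamma_{x,A}, & \cq{\sigma}_{XA}&\equiv\sum_{x\in[r]}p_x\op{x}{x}_X\otimes\sigma_{x,A},
\end{align*}
and apply Theorem~\ref{thm:sandwiched-extended}.4 with $q_x=p_x$ to obtain
\begin{align*}
\sw{Q}_\alpha\fleft(\cq{\gamma}_{XA}\middle\|\cq{\sigma}_{XA}\fright)&=\sum_{x\in[r]}p_x^\alpha p_x^{1-\alpha}\sw{Q}_\alpha\fleft(\gamma_x\middle\|\sigma_x\fright)=\sum_{x\in[r]}p_x\sw{Q}_\alpha\fleft(\gamma_x\middle\|\sigma_x\fright).
\end{align*}
Since $\sw{Q}_\alpha(\gamma_x\|\sigma_x)=\lVert\sigma_x^{(1-\alpha)/(2\alpha)}\gamma_x\sigma_x^{(1-\alpha)/(2\alpha)}\rVert_\alpha^\alpha\geq0$ for $\alpha\in(1,+\infty)$, the right-hand side is at most $\max_{x\in[r]}\sw{Q}_\alpha(\gamma_x\|\sigma_x)$.

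The final step is to invoke the DPI for the extended sandwiched Rényi divergence (Theorem~\ref{thm:sandwiched-extended}.1) with the partial-trace channel $\tr_X$, noting that $\tr_X[\cq{\gamma}_{XA}]=\sum_x p_x\gamma_x$ and $\tr_X[\cq{\sigma}_{XA}]=\sum_x p_x\sigma_x$. This gives
\begin{align*}
\sw{D}_\alpha\fleft(\sum_{x\in[r]}p_x\gamma_x\middle\|\sum_{x\in[r]}p_x\sigma_x\fright)&\leq\sw{D}_\alpha\fleft(\cq{\gamma}_{XA}\middle\|\cq{\sigma}_{XA}\fright)=\frac{1}{\alpha-1}\ln\sw{Q}_\alpha\fleft(\cq{\gamma}_{XA}\middle\|\cq{\sigma}_{XA}\fright)\\
&\leq\frac{1}{\alpha-1}\ln\max_{x\in[r]}\sw{Q}_\alpha\fleft(\gamma_x\middle\|\sigma_x\fright)=\max_{x\in[r]}\sw{D}_\alpha\fleft(\gamma_x\middle\|\sigma_x\fright),
\end{align*}
where in the last step I would use monotonicity of $\tfrac{1}{\alpha-1}\ln(\cdot)$ together with the fact that $\max$ commutes with the strictly increasing function $\tfrac{1}{\alpha-1}\ln(\cdot)$.

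I do not anticipate any serious obstacle here: the direct-sum property does all the heavy lifting, and DPI is already established. The only subtle points to be careful about are (i) verifying that the $\sw{Q}_\alpha$-values are nonnegative so that $\sum_x p_x \sw{Q}_\alpha(\gamma_x\|\sigma_x)\leq\max_x\sw{Q}_\alpha(\gamma_x\|\sigma_x)$ is legitimate, and (ii) handling the support mismatch case separately so that one is never taking $\ln$ of $0$ or $+\infty$ in an ill-defined way. Both are routine.
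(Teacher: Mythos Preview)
Your proof is correct and follows essentially the same route as the paper: embed the mixture into a classical--quantum operator, apply the direct-sum property (Theorem~\ref{thm:sandwiched-extended}.4) with $q_x=p_x$ to obtain $\sw{Q}_\alpha(\cq{\gamma}\|\cq{\sigma})=\sum_x p_x\sw{Q}_\alpha(\gamma_x\|\sigma_x)$, bound this convex combination by its maximum, and invoke the DPI under the partial trace over $X$ (Theorem~\ref{thm:sandwiched-extended}.1). The only differences are cosmetic: you treat the support-mismatch and $p_x=0$ cases explicitly, whereas the paper leaves these implicit.
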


\begin{proof}
The proof is a straightforward generalization of that for the original sandwiched Rényi divergence whose first argument is a state.  Consider that
\begin{align}
	&\sw{D}_\alpha\fleft(\sum_{x\in[r]}p_x\gamma_x\middle\|\sum_{x\in[r]}p_x\sigma_x\fright) \notag\\
	&\leq\sw{D}_\alpha\fleft(\sum_{x\in[r]}p_x\op{x}{x}\otimes\gamma_x\middle\|\sum_{x\in[r]}p_x\op{x}{x}\otimes\sigma_x\fright) \label{pf:quasiconvexity-1}\\
	&=\frac{1}{\alpha-1}\ln\sw{Q}_\alpha\fleft(\sum_{x\in[r]}p_x\op{x}{x}\otimes\gamma_x\middle\|\sum_{x\in[r]}p_x\op{x}{x}\otimes\sigma_x\fright) \\
	&=\frac{1}{\alpha-1}\ln\left(\sum_{x\in[r]}p_x\sw{Q}_\alpha\fleft(\gamma_x\middle\|\sigma_x\fright)\right) \label{pf:quasiconvexity-2}\\
	&\leq\max_{x\in[r]}\frac{1}{\alpha-1}\ln\sw{Q}_\alpha\fleft(\gamma_x\middle\|\sigma_x\fright) \\
	&=\max_{x\in[r]}\sw{D}_\alpha\fleft(\gamma_x\middle\|\sigma_x\fright).
\end{align}
Here \eqref{pf:quasiconvexity-1} follows from the DPI under the partial-trace channel (see Theorem~\ref{thm:sandwiched-extended}.1); Eq.~\eqref{pf:quasiconvexity-2} uses the direct-sum property of the extended sandwiched Rényi quasi-divergence (see Theorem~\ref{thm:sandwiched-extended}.4).
\end{proof}

\subsection{Proof of Theorem~\ref{thm:sandwiched-extended}.6}
\label{app:monotonicity}

\begin{proposition}[Nonincreasing monotonicity of the extended $\sw{D}_\alpha$ in its second argument, Theorem~\ref{thm:sandwiched-extended}.6]
\label{prop:monotonicity}
Let $\gamma\in\s{Herm}_A$ be a Hermitian operator with $\gamma\neq0$, and let $\sigma,\sigma'\in\s{PSD}_A$ be two positive semidefinite operators.  Then for all $\alpha\in(1,\infty)$,
\begin{align}
	\label{pf:monotonicity-1}
	\sw{D}_\alpha\fleft(\gamma\middle\|\sigma+\sigma'\fright)&\leq\sw{D}_\alpha\fleft(\gamma\middle\|\sigma\fright).
\end{align}
\end{proposition}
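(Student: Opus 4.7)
The plan is to reduce the monotonicity to the data-processing inequality (Theorem~\ref{thm:sandwiched-extended}.1) by embedding $\sigma$ and $\sigma'$ into orthogonal blocks of a single positive semidefinite operator on a bipartite system, and then contracting back via the partial trace.  First I dispense with the degenerate cases: if $\supp(\gamma)\nsubseteq\supp(\sigma)$, then $\sw{D}_\alpha(\gamma\|\sigma)=+\infty$ and~\eqref{pf:monotonicity-1} is trivial, while if $\sigma'=0$ the two sides coincide.  So I may assume $\sigma'\neq 0$ and $\supp(\gamma)\subseteq\supp(\sigma)$, which also yields $\supp(\gamma)\subseteq\supp(\sigma+\sigma')$.

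I introduce an auxiliary qubit system $B$ with orthonormal basis $\{\ket{1}_B,\ket{2}_B\}$ and define the block-diagonal operators
\begin{align*}
	\tilde{\gamma}_{AB}&\coloneq\gamma_A\otimes\op{1}{1}_B, \\
	\tilde{\sigma}_{AB}&\coloneq\sigma_A\otimes\op{1}{1}_B+\sigma'_A\otimes\op{2}{2}_B.
\end{align*}
Then $\tilde{\gamma}\neq 0$ and $\supp(\tilde{\gamma})\subseteq\supp(\tilde{\sigma})$.  The partial-trace channel $\ch{N}\coloneq\tr_B\in\s{C}_{AB\to A}$ satisfies $\ch{N}[\tilde{\gamma}]=\gamma$ and $\ch{N}[\tilde{\sigma}]=\sigma+\sigma'$, so the DPI of the extended sandwiched Rényi divergence (Theorem~\ref{thm:sandwiched-extended}.1) immediately delivers $\sw{D}_\alpha(\gamma\|\sigma+\sigma')\leq\sw{D}_\alpha(\tilde{\gamma}\|\tilde{\sigma})$.

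All that remains is the identity $\sw{D}_\alpha(\tilde{\gamma}\|\tilde{\sigma})=\sw{D}_\alpha(\gamma\|\sigma)$.  Since $\tilde{\sigma}$ is block diagonal, $\tilde{\sigma}^{(1-\alpha)/(2\alpha)}=\sigma^{(1-\alpha)/(2\alpha)}\otimes\op{1}{1}_B+(\sigma')^{(1-\alpha)/(2\alpha)}\otimes\op{2}{2}_B$ on $\supp(\tilde{\sigma})$; because $\op{2}{2}_B$ annihilates the $\op{1}{1}_B$ block carrying $\tilde{\gamma}$, multiplying out gives $\tilde{\sigma}^{(1-\alpha)/(2\alpha)}\tilde{\gamma}\tilde{\sigma}^{(1-\alpha)/(2\alpha)}=\sigma^{(1-\alpha)/(2\alpha)}\gamma\sigma^{(1-\alpha)/(2\alpha)}\otimes\op{1}{1}_B$, whose $\alpha$-norm to the $\alpha$ equals $\|\sigma^{(1-\alpha)/(2\alpha)}\gamma\sigma^{(1-\alpha)/(2\alpha)}\|_\alpha^\alpha$ by multiplicativity of the $\alpha$-norm under tensor products and $\|\op{1}{1}_B\|_\alpha=1$.  (Alternatively, the same identity can be read off from the direct-sum property Theorem~\ref{thm:sandwiched-extended}.4 with $r=2$, $p_1=1$, $p_2=0$, $q_1=q_2=1$, after choosing any nonzero Hermitian $\gamma_2$ with $\supp(\gamma_2)\subseteq\supp(\sigma')$ so that each quasi-divergence in the direct-sum expansion stays finite.)  Once the embedding is in place there is no substantive obstacle — the real work is packaged into the DPI of Theorem~\ref{thm:sandwiched-extended}.1, and all that is needed is the short block-diagonal bookkeeping that confirms no divergence is lost in passing from $(\gamma,\sigma)$ to $(\tilde{\gamma},\tilde{\sigma})$.
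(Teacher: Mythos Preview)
Your proof is correct and follows essentially the same approach as the paper's: embed $(\gamma,\sigma,\sigma')$ into a block-diagonal pair on $A\otimes B$, apply the DPI (Theorem~\ref{thm:sandwiched-extended}.1) under the partial trace, and then identify $\sw{D}_\alpha(\tilde{\gamma}\|\tilde{\sigma})$ with $\sw{D}_\alpha(\gamma\|\sigma)$.  The paper performs that last identification by invoking the direct-sum property (Theorem~\ref{thm:sandwiched-extended}.4) with $\gamma_{[2]}=(\gamma,\gamma)$, $p_{[2]}=(1,0)$, $q_{[2]}=(1,1)$; your direct block-diagonal computation of $\tilde{\sigma}^{(1-\alpha)/(2\alpha)}\tilde{\gamma}\,\tilde{\sigma}^{(1-\alpha)/(2\alpha)}$ is an equally valid (and arguably cleaner) route that sidesteps any $0\cdot(+\infty)$ bookkeeping in the direct-sum formula.
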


\begin{proof}
The proof is a straightforward generalization of that for the original sandwiched Rényi divergence whose first argument is a state.  Denoting
\begin{align}
	\cq{\gamma}_{XA}&\equiv\op{1}{1}_X\otimes\gamma_A, \\
	\cq{\sigma}_{XA}&\equiv\op{1}{1}_X\otimes\sigma_A+\op{2}{2}\otimes\sigma_A',
\end{align}
we have that
\begin{align}
	\sw{D}_\alpha\fleft(\gamma\middle\|\sigma+\sigma'\fright)&\leq\sw{D}_\alpha\fleft(\cq{\gamma}_{XA}\middle\|\cq{\sigma}_{XA}\fright) \label{pf:monotonicity-2}\\
	&=\frac{1}{\alpha-1}\ln\sw{Q}_\alpha\fleft(\cq{\gamma}_{XA}\middle\|\cq{\sigma}_{XA}\fright) \\
	&=\frac{1}{\alpha-1}\ln\sw{Q}_\alpha\fleft(\gamma\middle\|\sigma\fright) \label{pf:monotonicity-3}\\
	&=\sw{D}_\alpha\fleft(\gamma\middle\|\sigma\fright).
\end{align}
Here \eqref{pf:monotonicity-2} follows from the DPI under the partial-trace channel (see Theorem~\ref{thm:sandwiched-extended}.1); Eq.~\eqref{pf:monotonicity-3} applies the direct-sum property of the tuple of Hermitian operators $\gamma_{[2]}=(\gamma,\gamma)\in\s{Herm}_A^{[2]}$, the tuple of positive semidefinite operators $\sigma_{[2]}=(\sigma,\sigma')\in\s{Herm}_A^{[2]}$, the probability distribution $p_{[2]}=(1,0)\in\s{P}_2$, and the tuple of nonnegative real numbers $q_{[2]}=(1,1)\in[0,\infty)^{[2]}$ (see Theorem~\ref{thm:sandwiched-extended}.4).
\end{proof}

\subsection{Proof of Theorem~\ref{thm:sandwiched-extended}.7}
\label{app:limit}

\begin{proposition}[Limit of the extended $\sw{D}_\alpha$ as $\alpha\searrow1$, Theorem~\ref{thm:sandwiched-extended}.7]
\label{prop:limit}
Let $\gamma\in\s{Herm}_A$ be a Hermitian operator with $\gamma\neq0$, and let $\sigma\in\s{PSD}_A$ be a positive semidefinite operator.  Then
\begin{align}
	\label{pf:limit-1}
	\lim_{\alpha\searrow1}\left(\sw{D}_\alpha\fleft(\gamma\middle\|\sigma\fright)-\frac{\alpha}{\alpha-1}\ln\left\lVert\gamma\right\rVert_1\right)&=D\fleft(\frac{\left\lvert\gamma\right\rvert}{\left\lVert\gamma\right\rVert_1}\middle\|\sigma\fright).
\end{align}
\end{proposition}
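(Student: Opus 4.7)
My plan is to prove the limit by Taylor-expanding the quasi-divergence $\sw{Q}_\alpha\fleft(\gamma\middle\|\sigma\fright)$ around $\alpha=1$. First, I would dispense with the trivial case: if $\supp(\gamma)\nsubseteq\supp(\sigma)$, then both sides of \eqref{pf:limit-1} equal $+\infty$ and the claim holds. Otherwise, I may restrict attention to $\supp(\sigma)$ and treat $\sigma$ as invertible there. Writing $\sw{D}_\alpha\fleft(\gamma\middle\|\sigma\fright)=\frac{1}{\alpha-1}\ln Q(\alpha)$ with $Q(\alpha)\equiv\sw{Q}_\alpha\fleft(\gamma\middle\|\sigma\fright)$, the proof reduces to establishing the expansion
\begin{align}
	Q(\alpha)&=\left\lVert\gamma\right\rVert_1+\left(\alpha-1\right)D\fleft(\left\lvert\gamma\right\rvert\middle\|\sigma\fright)+o\fleft(\alpha-1\fright),
\end{align}
where $D(|\gamma|\|\sigma)\equiv\tr[|\gamma|(\ln|\gamma|-\ln\sigma)]$ extends the Umegaki divergence to the unnormalized PSD operator $|\gamma|$. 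Given this expansion, an elementary manipulation of $\sw{D}_\alpha\fleft(\gamma\middle\|\sigma\fright)-\frac{\alpha}{\alpha-1}\ln\lVert\gamma\rVert_1$ yields the limit $-\ln\lVert\gamma\rVert_1+D(|\gamma|\|\sigma)/\lVert\gamma\rVert_1$, which by the scaling identity $D(c\rho\|\sigma)=c\ln c+cD(\rho\|\sigma)$ (for a state $\rho$ and $c>0$) equals $D(|\gamma|/\lVert\gamma\rVert_1\|\sigma)$, as desired.

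Next, to justify the expansion of $Q(\alpha)$, I would use the cyclic-invariance identity $\tr[(MN)^{\alpha/2}]=\tr[(NM)^{\alpha/2}]$ for operators $MN,NM$ with nonnegative spectrum to rewrite
\begin{align}
	Q(\alpha)&=\tr\left[\left(\gamma\sigma^{-\beta}\gamma\sigma^{-\beta}\right)^{\alpha/2}\right],\qquad\beta\equiv\frac{\alpha-1}{\alpha}.
\end{align}
At $\alpha=1$ the inner operator becomes $\gamma^2$, so $Q(1)=\tr[|\gamma|]=\lVert\gamma\rVert_1$. Using $\sigma^{-\beta}=I-\beta\ln\sigma+O(\beta^2)$ on $\supp(\sigma)$ and $\beta=(\alpha-1)+O((\alpha-1)^2)$, the inner operator satisfies $H(\alpha)\equiv\gamma\sigma^{-\beta}\gamma\sigma^{-\beta}=\gamma^2-(\alpha-1)\bigl[\gamma^2\ln\sigma+\gamma(\ln\sigma)\gamma\bigr]+O((\alpha-1)^2)$. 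To compute $Q'(1)$, I apply the chain rule together with the Fréchet derivative $\frac{d}{dt}\tr[f(M+tN)]\big|_{t=0}=\tr[f'(M)N]$: the exponent variation contributes $\frac{1}{2}\tr[|\gamma|\ln\gamma^2]=\tr[|\gamma|\ln|\gamma|]$, and the matrix variation contributes $\frac{1}{2}\tr[(\gamma^2)^{-1/2}H'(1)]$. I then simplify the latter using the polar decomposition $\gamma=S|\gamma|$ (with $S$ the Hermitian sign operator of $\gamma$) and the identities $|\gamma|^{-1}\gamma^2=|\gamma|$ and $\gamma S=|\gamma|$, which hold on $\supp(\gamma)$. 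Cyclic reduction of both terms $\tr[|\gamma|^{-1}\gamma^2\ln\sigma]$ and $\tr[|\gamma|^{-1}\gamma(\ln\sigma)\gamma]$ yields the common value $\tr[|\gamma|\ln\sigma]$, so the matrix-variation contribution equals $-\tr[|\gamma|\ln\sigma]$, and the two contributions sum to $D(|\gamma|\|\sigma)$.

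The main obstacle will be the perturbation analysis of the non-analytic fractional power $H(\alpha)^{\alpha/2}$ at $\alpha=1$, where both the base $H(\alpha)$ and the exponent $\alpha/2$ depend on $\alpha$, and where the leading behavior involves $|\gamma|$ rather than $\gamma$ itself (reflecting an apparent mismatch between the formula, which is linear in $\gamma$, and the limit, which depends on $|\gamma|$). This mismatch is resolved precisely by the two sign-cancelling identities $|\gamma|^{-1}\gamma^2=|\gamma|$ and $\gamma S=|\gamma|$, which together convert all appearances of $\gamma$ in the first-order term into $|\gamma|$. Once the Taylor expansion of $Q(\alpha)$ is in hand, the remainder of the proof is a routine manipulation of $\ln$ via $\ln Q(\alpha)=\ln\lVert\gamma\rVert_1+(\alpha-1)D(|\gamma|\|\sigma)/\lVert\gamma\rVert_1+o(\alpha-1)$, and the conclusion follows immediately.
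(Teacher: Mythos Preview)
Your proof is correct and follows essentially the same approach as the paper's: both compute $Q'(1)$ by separating the exponent variation (yielding $\tr[|\gamma|\ln|\gamma|]$) from the matrix variation (yielding $-\tr[|\gamma|\ln\sigma]$), then assemble the limit via the first-order expansion of $\ln Q(\alpha)$. The only cosmetic difference is that the paper works with the manifestly Hermitian inner operator $\sigma^{-\beta/2}\gamma\sigma^{-\beta}\gamma\sigma^{-\beta/2}$ (giving three $\sigma$-terms in the matrix derivative), whereas you first apply cyclicity to pass to $\gamma\sigma^{-\beta}\gamma\sigma^{-\beta}$ (giving two terms that you then collapse via the polar decomposition $\gamma=S|\gamma|$); this streamlines the bookkeeping slightly but requires the Fr\'echet-derivative identity for a non-Hermitian base, which is valid here since $H(\alpha)$ is similar to a PSD operator.
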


\begin{proof}
The proof generalizes that for the original sandwiched Rényi divergence, whose first argument is a state.  If $\gamma^0\nleq\sigma^0$, then both sides of \eqref{pf:limit-1} are equal to $\infty$.  Otherwise, we have that $\gamma^0\leq\sigma^0$ and thus $\sigma^0\gamma\sigma^0=\gamma$.  Recall that
\begin{align}
	\sw{Q}_\alpha\fleft(\gamma\middle\|\sigma\fright)&=\exp\left(\left(\alpha-1\right)\sw{D}_\alpha\fleft(\gamma\middle\|\sigma\fright)\right) \\
	&=\left\lVert\sigma^\frac{1-\alpha}{2\alpha}\gamma\sigma^\frac{1-\alpha}{2\alpha}\right\rVert_\alpha^\alpha \\
	&=\tr\fleft[\left(\sigma^\frac{1-\alpha}{2\alpha}\gamma\sigma^\frac{1-\alpha}{\alpha}\gamma\sigma^\frac{1-\alpha}{2\alpha}\right)^\frac{\alpha}{2}\fright]. \label{pf:limit-2}
\end{align}
Here \eqref{pf:limit-2} follows from the definition of the $\alpha$-norm (see \eqref{eq:norm}).  Note that
\begin{align}
	\sw{Q}_1\fleft(\gamma\middle\|\sigma\fright)&=\tr\fleft[\left(\sigma^0\gamma\sigma^0\gamma\sigma^0\right)^\frac{1}{2}\fright] \\
	&=\tr\fleft[\left(\gamma^2\right)^\frac{1}{2}\fright] \label{pf:limit-3}\\
	&=\left\lVert\gamma\right\rVert_1.
\end{align}
Here \eqref{pf:limit-3} follows from the facts that $\sigma^0=\sigma^0\sigma^0$ and $\sigma^0\gamma\sigma^0=\gamma$.  It follows that
\begin{align}
	&\lim_{\alpha\searrow1}\left(\sw{D}_\alpha\fleft(\gamma\middle\|\sigma\fright)-\frac{\alpha}{\alpha-1}\ln\left\lVert\gamma\right\rVert_1\right) \notag\\
	&=\lim_{\alpha\searrow1}\frac{1}{\alpha-1}\left(\ln\sw{Q}_\alpha\fleft(\gamma\middle\|\sigma\fright)-\ln\left\lVert\gamma\right\rVert_1\right)-\ln\left\lVert\gamma\right\rVert_1 \\
	&=\lim_{\alpha\searrow1}\frac{1}{\alpha-1}\left(\ln\sw{Q}_\alpha\fleft(\gamma\middle\|\sigma\fright)-\ln\sw{Q}_1(\gamma\|\sigma)\right)-\ln\left\lVert\gamma\right\rVert_1 \\
	&=\frac{\de}{\de\alpha}\ln\sw{Q}_\alpha\fleft(\gamma\middle\|\sigma\fright)\bigg|_{\alpha=1}-\ln\left\lVert\gamma\right\rVert_1 \\
	&=\frac{1}{\sw{Q}_1\fleft(\gamma\middle\|\sigma\fright)}\frac{\de}{\de\alpha}\sw{Q}_\alpha\fleft(\gamma\middle\|\sigma\fright)\bigg|_{\alpha=1}-\ln\left\lVert\gamma\right\rVert_1 \\
	&=\frac{1}{\left\lVert\gamma\right\rVert_1}\frac{\de}{\de\alpha}\sw{Q}_\alpha\fleft(\gamma\middle\|\sigma\fright)\bigg|_{\alpha=1}-\ln\left\lVert\gamma\right\rVert_1. \label{pf:limit-4}
\end{align}
For $\alpha,\beta\in[1,\infty)$, define the following function:
\begin{align}
	\sw{Q}_{\alpha,\beta}\fleft(\gamma\middle\|\sigma\fright)&\coloneq\tr\fleft[\left(\sigma^\frac{1-\alpha}{2\alpha}\gamma\sigma^\frac{1-\alpha}{\alpha}\gamma\sigma^\frac{1-\alpha}{2\alpha}\right)^\frac{\beta}{2}\fright].
\end{align}
Note from \eqref{pf:limit-2} that $\sw{Q}_{\alpha,\alpha}(\gamma\|\sigma)=\sw{Q}_\alpha(\gamma\|\sigma)$.  As
\begin{align}
	\frac{\de}{\de\alpha}\sw{Q}_\alpha\fleft(\gamma\middle\|\sigma\fright)\bigg|_{\alpha=1}&=\frac{\de}{\de\alpha}\sw{Q}_{\alpha,1}\fleft(\gamma\middle\|\sigma\fright)\bigg|_{\alpha=1}+\frac{\de}{\de\beta}\sw{Q}_{1,\beta}\fleft(\gamma\middle\|\sigma\fright)\bigg|_{\beta=1}, \label{pf:limit-5}
\end{align}
we need to calculate both terms on the right-hand side of \eqref{pf:limit-5} to find the desired limit.  The first term on the right-hand side of \eqref{pf:limit-5} is given by
\begin{align}
	&\frac{\de}{\de\alpha}\sw{Q}_{\alpha,1}\fleft(\gamma\middle\|\sigma\fright) \notag\\
	&=\frac{\de}{\de\alpha}\tr\fleft[\left(\sigma^\frac{1-\alpha}{2\alpha}\gamma\sigma^\frac{1-\alpha}{\alpha}\gamma\sigma^\frac{1-\alpha}{2\alpha}\right)^\frac{1}{2}\fright] \\
	&=\frac{1}{2}\tr\fleft[\left(\sigma^\frac{1-\alpha}{2\alpha}\gamma\sigma^\frac{1-\alpha}{\alpha}\gamma\sigma^\frac{1-\alpha}{2\alpha}\right)^{-\frac{1}{2}}\frac{\de}{\de\alpha}\left(\sigma^\frac{1-\alpha}{2\alpha}\gamma\sigma^\frac{1-\alpha}{\alpha}\gamma\sigma^\frac{1-\alpha}{2\alpha}\right)\fright] \\
	&=\frac{1}{2}\tr\fleft[\left(\sigma^\frac{1-\alpha}{2\alpha}\gamma\sigma^\frac{1-\alpha}{\alpha}\gamma\sigma^\frac{1-\alpha}{2\alpha}\right)^{-\frac{1}{2}}\right. \notag\\
	&\qquad\left.\left(-\frac{1}{2\alpha^2}\sigma^\frac{1-\alpha}{2\alpha}\left(\ln\sigma\right)\gamma\sigma^\frac{1-\alpha}{\alpha}\gamma\sigma^\frac{1-\alpha}{2\alpha}\right.\right. \notag\\
	&\qquad\left.\left.\vphantom{}-\frac{1}{\alpha^2}\sigma^\frac{1-\alpha}{2\alpha}\gamma\sigma^\frac{1-\alpha}{\alpha}\left(\ln\sigma\right)\gamma\sigma^\frac{1-\alpha}{2\alpha}\right.\right. \notag\\
	&\qquad\left.\left.\vphantom{}-\frac{1}{2\alpha^2}\sigma^\frac{1-\alpha}{2\alpha}\gamma\sigma^\frac{1-\alpha}{\alpha}\gamma\sigma^\frac{1-\alpha}{2\alpha}\left(\ln\sigma\right)\right)\fright],
\end{align}
which implies that
\begin{align}
	&\frac{\de}{\de\alpha}\sw{Q}_{\alpha,1}(\gamma\|\sigma)\bigg|_{\alpha=1} \notag\\
	&=-\frac{1}{4}\tr\fleft[\left(\sigma^0\gamma\sigma^0\gamma\sigma^0\right)^{-\frac{1}{2}}\left(\sigma^0\left(\ln\sigma\right)\gamma\sigma^0\gamma\sigma^0\right.\right. \notag\\
	&\qquad\left.\vphantom{\left(\sigma^0\gamma\sigma^0\gamma\sigma^0\right)^{-\frac{1}{2}}}\left.\vphantom{}+2\sigma^0\gamma\sigma^0\left(\ln\sigma\right)\gamma\sigma^0+\sigma^0\gamma\sigma^0\gamma\sigma^0\left(\ln\sigma\right)\right)\fright] \\
	&=-\frac{1}{4}\left(\tr\fleft[\left(\gamma^2\right)^{-\frac{1}{2}}\left(\ln\sigma\right)\gamma^2\fright]+2\tr\fleft[\left(\gamma^2\right)^{-\frac{1}{2}}\gamma\left(\ln\sigma\right)\gamma\fright]\right. \notag\\
	&\qquad\left.\vphantom{}+\tr\fleft[\left(\gamma^2\right)^{-\frac{1}{2}}\gamma^2\ln\sigma\fright]\right) \\
	&=-\tr\fleft[\left\lvert\gamma\right\rvert\ln\sigma\fright]. \label{pf:limit-6}
\end{align}
The second term on the right-hand side of \eqref{pf:limit-5} is given by
\begin{align}
	\frac{\de}{\de\beta}\sw{Q}_{1,\beta}\fleft(\gamma\middle\|\sigma\fright)\bigg|_{\beta=1}&=\frac{\de}{\de\beta}\tr\fleft[\left(\sigma^0\gamma\sigma^0\gamma\sigma^0\right)^\frac{\beta}{2}\fright]\Bigg|_{\beta=1} \\
	&=\frac{\de}{\de\beta}\tr\fleft[\left\lvert\gamma\right\rvert^\beta\fright]\bigg|_{\beta=1} \\
	&=\tr\fleft[\left\lvert\gamma\right\rvert^\beta\ln\left\lvert\gamma\right\rvert\fright]\big|_{\beta=1} \\
	&=\tr\fleft[\left\lvert\gamma\right\rvert\ln\left\lvert\gamma\right\rvert\fright]. \label{pf:limit-7}
\end{align}
Inserting \eqref{pf:limit-6} and \eqref{pf:limit-7} to \eqref{pf:limit-5}, and then to \eqref{pf:limit-4}, we obtain that
\begin{align}
	&\lim_{\alpha\searrow1}\left(\sw{D}_\alpha\fleft(\gamma\middle\|\sigma\fright)-\frac{\alpha}{\alpha-1}\ln\left\lVert\gamma\right\rVert_1\right) \notag\\
	&=\frac{1}{\left\lVert\gamma\right\rVert_1}\left(\frac{\de}{\de\alpha}\sw{Q}_{\alpha,1}\fleft(\gamma\middle\|\sigma\fright)\bigg|_{\alpha=1}+\frac{\de}{\de\beta}\sw{Q}_{1,\beta}\fleft(\gamma\middle\|\sigma\fright)\bigg|_{\beta=1}\right)-\ln\left\lVert\gamma\right\rVert_1 \\
	&=\frac{1}{\left\lVert\gamma\right\rVert_1}\tr\fleft[\left\lvert\gamma\right\rvert\left(\ln\left\lvert\gamma\right\rvert-\ln\sigma\right)\fright]-\ln\left\lVert\gamma\right\rVert_1 \\
	&=\tr\fleft[\frac{\left\lvert\gamma\right\rvert}{\left\lVert\gamma\right\rVert_1}\left(\ln\left(\frac{\left\lvert\gamma\right\rvert}{\left\lVert\gamma\right\rVert_1}\right)-\ln\sigma\right)\fright] \\
	&=D\fleft(\frac{\left\lvert\gamma\right\rvert}{\left\lVert\gamma\right\rVert_1}\middle\|\sigma\fright). \label{pf:limit-8}
\end{align}
Here \eqref{pf:limit-8} follows from the fact that $\frac{\lvert\gamma\rvert}{\lVert\gamma\rVert_1}\in\s{D}_A$ for all $\gamma\in\s{Herm}_A$ and the definition of the Umegaki divergence (see \eqref{eq:umegaki}).
\end{proof}

\section{Proof of Lemma~\ref{lem:hypothesis-extended}}
\label{app:hypothesis-extended}

{
\setcounter{temporary-theorem}{\value{theorem}}
\setcounter{theorem}{100}
\renewcommand{\thetheorem}{\ref{lem:hypothesis-extended}}
\begin{lemma}[Connection between the extended $D_\abb{H}^\varepsilon$ and $\sw{D}_\alpha$, restatement]
Let $\tau\in\aff(\s{D}_A)$ be a unit-trace Hermitian operator, and let $\sigma\in\s{PSD}_A$ be a positive semidefinite operator.  Then for all $\varepsilon\in[0,1)$ and all $\alpha\in(1,\infty)$,
\begin{align}
	\label{pf:hypothesis-extended-1}
	D_\abb{H}^\varepsilon\fleft(\tau\middle\|\sigma\fright)&\leq\sw{D}_\alpha\fleft(\tau\middle\|\sigma\fright)+\frac{\alpha}{\alpha-1}\ln\left(\frac{1}{1-\varepsilon}\right).
\end{align}
\end{lemma}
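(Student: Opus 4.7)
The plan is to adapt the state-only argument of Ref.~\cite{cooney2016StrongConverseExponents}, observing that the operator-H\"older plus Araki--Lieb--Thirring estimate used there involves only $\Lambda$ and $\sigma$ (both positive semidefinite), while $\tau$ enters exclusively through a Schatten $\alpha$-norm, which is defined for arbitrary Hermitian inputs and matches the extended sandwiched R\'enyi divergence in \eqref{eq:sandwiched-extended}.  First I would dispose of the infinite case: if $\supp(\tau)\not\subseteq\supp(\sigma)$, then $\sw{D}_\alpha(\tau\|\sigma)=+\infty$ by \eqref{eq:sandwiched-extended} and the bound is trivial.  So I may assume support containment, and fix any feasible test $\Lambda\in\s{PSD}_A$ satisfying $\Lambda\leq\1$ and $\tr[\Lambda\tau]\geq1-\varepsilon>0$.

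Setting $\alpha'\coloneq\alpha/(\alpha-1)$, $A\coloneq\sigma^{(\alpha-1)/(2\alpha)}\Lambda\sigma^{(\alpha-1)/(2\alpha)}\in\s{PSD}_A$, and $B\coloneq\sigma^{(1-\alpha)/(2\alpha)}\tau\sigma^{(1-\alpha)/(2\alpha)}\in\s{Herm}_A$, cyclicity of trace gives $\tr[\Lambda\tau]=\tr[AB]$, and operator-H\"older with conjugate exponents $(\alpha',\alpha)$ yields
\begin{align}
1-\varepsilon\leq\tr[\Lambda\tau]\leq\left\lvert\tr[AB]\right\rvert\leq\left\lVert A\right\rVert_{\alpha'}\left\lVert B\right\rVert_\alpha.
\end{align}
The right factor is $\left\lVert B\right\rVert_\alpha^\alpha=\exp((\alpha-1)\sw{D}_\alpha(\tau\|\sigma))$ directly from \eqref{eq:sandwiched-extended}.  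For the left factor, Araki--Lieb--Thirring with exponent $r=\alpha'\geq1$ (applicable because $\sigma^{(\alpha-1)/\alpha}$ and $\Lambda$ are both positive semidefinite) gives
\begin{align}
\left\lVert A\right\rVert_{\alpha'}^{\alpha'}&\leq\tr\fleft[\sigma^{\alpha'(\alpha-1)/(2\alpha)}\Lambda^{\alpha'}\sigma^{\alpha'(\alpha-1)/(2\alpha)}\fright]=\tr[\Lambda^{\alpha'}\sigma]\leq\tr[\Lambda\sigma],
\end{align}
where I use the identity $\alpha'(\alpha-1)/(2\alpha)=1/2$ together with cyclicity in the middle step, and the operator inequality $\Lambda^{\alpha'}\leq\Lambda$ (valid since $0\leq\Lambda\leq\1$ and $\alpha'\geq1$) in the last step.

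Combining the two bounds, taking $\ln(\cdot)$, and then multiplying by $\alpha/(\alpha-1)>0$ and rearranging, one arrives at
\begin{align}
-\ln\tr[\Lambda\sigma]&\leq\sw{D}_\alpha(\tau\|\sigma)+\frac{\alpha}{\alpha-1}\ln\frac{1}{1-\varepsilon};
\end{align}
taking the supremum over $\Lambda$ then yields the claim.

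No serious obstacle is anticipated: every inequality on the ``$\Lambda$-side'' is blind to the sign structure of $\tau$, and the ``$\tau$-side'' quantity $\left\lVert B\right\rVert_\alpha=\tr[\lvert B\rvert^\alpha]^{1/\alpha}$ is naturally defined for Hermitian $B$ and matches the quantity appearing in \eqref{eq:sandwiched-extended}.  The only bookkeeping point is to verify $\tr[\Lambda\sigma]>0$ so that its logarithm is finite; this follows because $\tr[\Lambda\sigma]=0$ would force $\Lambda$ to annihilate $\supp(\sigma)\supseteq\supp(\tau)$ and hence $\tr[\Lambda\tau]=0<1-\varepsilon$, contradicting feasibility.
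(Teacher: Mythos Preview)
Your argument is correct and complete.  It is, however, a genuinely different route from the paper's.  The paper proceeds by applying the data-processing inequality for the extended sandwiched R\'enyi divergence (Theorem~\ref{thm:sandwiched-extended}.1) under the two-outcome measurement channel $\ch{M}\colon\rho\mapsto\tr[\Lambda\rho]\op{1}{1}+\tr[(\1-\Lambda)\rho]\op{2}{2}$, then computes $\sw{D}_\alpha(\ch{M}[\tau]\|\ch{M}[\sigma])$ explicitly via the direct-sum property of the $\alpha$-norm and drops the nonnegative term corresponding to the second outcome.  Your argument instead works directly with matrix inequalities (operator H\"older plus Araki--Lieb--Thirring) and never invokes the DPI.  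The trade-off: your proof is more self-contained, since the DPI for the \emph{extended} $\sw{D}_\alpha$ is itself a nontrivial input cited from Ref.~\cite{wang2020AlogarithmicNegativity}, whereas the paper's proof is more modular and makes the information-theoretic structure transparent.  Both approaches are standard for the state case; your observation that the H\"older/ALT machinery only touches $\tau$ through $\lVert\sigma^{(1-\alpha)/(2\alpha)}\tau\sigma^{(1-\alpha)/(2\alpha)}\rVert_\alpha$, which is precisely the extended definition \eqref{eq:sandwiched-extended}, is exactly the right point for the generalization to go through.
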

\setcounter{theorem}{\value{temporary-theorem}}
}

\begin{proof}
The proof is a straightforward generalization of that regarding the original hypothesis testing and sandwiched Rényi divergences whose first argument is a state.  If $\tau^0\nleq\sigma^0$, then the right-hand side of \eqref{pf:hypothesis-extended-1} is equal to $\infty$.  Henceforth we assume that $\tau^0\leq\sigma^0$.  Let $\Lambda\in\s{PSD}_A$ be a positive semidefinite operator such that $\Lambda\leq\1$ and $\tr[\Lambda\tau]\geq1-\varepsilon$.  Let
\begin{align}
	\ch{M}\in\s{C}_{A\to X}&\colon\rho\mapsto\tr\fleft[\Lambda\rho\fright]\op{1}{1}+\tr\fleft[\left(\1-\Lambda\right)\rho\fright]\op{2}{2}
\end{align}
denote the quantum-to-classical channel represented by the POVM $(\Lambda,\1-\Lambda)\in\s{M}_{A,2}$.  It follows from the DPI under the channel $\ch{M}$ (see Theorem~\ref{thm:sandwiched-extended}.1) that
\begin{align}
	\sw{D}_\alpha\fleft(\tau\middle\|\sigma\fright)&\geq\sw{D}_\alpha\fleft(\ch{M}\fleft[\tau\fright]\middle\|\ch{M}\fleft[\sigma\fright]\fright) \\
	&=\frac{1}{\alpha-1}\ln\left\lVert\left(\ch{M}\fleft[\sigma\fright]\right)^\frac{1-\alpha}{2\alpha}\ch{M}\fleft[\tau\fright]\left(\ch{M}\fleft[\sigma\fright]\right)^\frac{1-\alpha}{2\alpha}\right\rVert_\alpha^\alpha \\
	&=\frac{1}{\alpha-1}\ln\left(\left\lvert\tr\fleft[\Lambda\tau\fright]\tr\fleft[\Lambda\sigma\fright]^\frac{1-\alpha}{\alpha}\right\rvert^\alpha\right.\notag\\
	&\qquad\left.\vphantom{}+\left\lvert\tr\fleft[\left(\1-\Lambda\right)\tau\fright]\tr\fleft[\left(\1-\Lambda\right)\sigma\fright]^\frac{1-\alpha}{\alpha}\right\rvert^\alpha\right) \label{pf:hypothesis-extended-2}\\
	&\geq\frac{1}{\alpha-1}\ln\left(\left\lvert\tr\fleft[\Lambda\tau\fright]\tr\fleft[\Lambda\sigma\fright]^\frac{1-\alpha}{\alpha}\right\rvert^\alpha\right) \\
	&=\frac{1}{\alpha-1}\ln\left(\tr\fleft[\Lambda\tau\fright]^\alpha\tr\fleft[\Lambda\sigma\fright]^{1-\alpha}\right) \label{pf:hypothesis-extended-3}\\
	&=\frac{\alpha}{\alpha-1}\ln\tr\fleft[\Lambda\tau\fright]-\ln\tr\fleft[\Lambda\sigma\fright] \\
	&\geq\frac{\alpha}{\alpha-1}\ln\left(1-\varepsilon\right)-\ln\tr\fleft[\Lambda\sigma\fright]. \label{pf:hypothesis-extended-4}
\end{align}
Here \eqref{pf:hypothesis-extended-2} uses the direct-sum property of the $\alpha$-norm (see \eqref{pf:direct-sum-3}); Eqs.~\eqref{pf:hypothesis-extended-3} and \eqref{pf:hypothesis-extended-4} follow from the fact that $\tr[\Lambda\tau]\geq1-\varepsilon>0$.  Since \eqref{pf:hypothesis-extended-4} holds for every positive semidefinite operator $\Lambda\in\s{PSD}_A$ such that $\Lambda\leq\1$ and $\tr[\Lambda\tau]\geq1-\varepsilon$, it follows that
\begin{align}
	D_\abb{H}^\varepsilon\fleft(\tau\middle\|\sigma\fright)&=\sup_{\Lambda\in\s{PSD}_A}\left\{-\ln\tr\fleft[\Lambda\sigma\fright]\colon\Lambda\leq\1,\;\tr\fleft[\Lambda\tau\fright]\geq1-\varepsilon\right\} \\
	&\leq\sw{D}_\alpha\fleft(\tau\middle\|\sigma\fright)+\frac{\alpha}{\alpha-1}\ln\left(\frac{1}{1-\varepsilon}\right),
\end{align}
thus concluding the proof.
\end{proof}

\section{Characterization of the one-shot error probability of state discrimination}
\label{app:discrimination}

\begin{proposition}[Characterization of error probability of state discrimination, {\cite[Theorem~1]{vazquez-vilar2016MultipleQuantumHypothesis}}]
\label{prop:discrimination}
Let $\en{E}\equiv(p_{[r]},\rho_{[r]})$ be an ensemble of states with $p_{[r]}\in\itr(\s{P}_r)$ and $\rho_{[r]}\in\s{D}_A^{[r]}$.  Then
\begin{align}
	-\ln P_\abb{err}^\abb{disc}\fleft(\en{E}\fright)&=\inf_{\tau\in\s{D}_A}D_\abb{H}^\frac{1}{r}\fleft(\pi_X\otimes\tau_A\middle\|\cq{\rho}_{XA}\fright),
\end{align}
where
\begin{align}
	P_\abb{err}^\abb{disc}\fleft(\en{E}\fright)&\coloneq1-P_\abb{succ}^\abb{disc}\fleft(\en{E}\fright) \\
	&=1-\sup_{\Lambda_{[r]}\in\s{M}_{A,r}}\sum_{x\in[r]}p_x\tr\fleft[\Lambda_x\rho_x\fright]
\end{align}
denotes the (one-shot) error probability of state discrimination for the ensemble $\en{E}$ and
\begin{align}
	\cq{\rho}_{XA}&\equiv\sum_{x\in[r]}p_x\op{x}{x}_X\otimes\rho_{x,A}.
\end{align}
\end{proposition}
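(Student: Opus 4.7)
The plan is to mirror the proof of Proposition~\ref{prop:oneshot-hypothesis}, with the one essential modification that the discrimination error probability equals $1$ minus the success probability, which corresponds to \emph{rejecting} rather than accepting the null hypothesis $\pi_X\otimes\tau_A$ in the underlying hypothesis test.  Accordingly, for a POVM $\Lambda_{[r]}\in\s{M}_{A,r}$, I define $\cq{\Lambda}_{XA}\coloneq\sum_{x\in[r]}\op{x}{x}_X\otimes\Lambda_{x,A}$ as in that proof, and work with its complement $\sw{\Lambda}_{XA}\coloneq\1_{XA}-\cq{\Lambda}_{XA}$.  The analogues of \eqref{pf:oneshot-hypothesis-1}--\eqref{pf:oneshot-hypothesis-3} then yield $0\leq\sw{\Lambda}\leq\1_{XA}$, $\tr[\sw{\Lambda}\cq{\rho}]=1-\sum_{x\in[r]}p_x\tr[\Lambda_x\rho_x]$, and $\tr[\sw{\Lambda}(\pi_X\otimes\tau)]=1-1/r$ for every $\tau\in\s{D}_A$, so $\sw{\Lambda}$ is always feasible for the optimization underlying $D_\abb{H}^{1/r}(\pi_X\otimes\tau\|\cq{\rho})$.

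The easy direction $-\ln P_\abb{err}^\abb{SD}(\en{E})\leq\inf_{\tau\in\s{D}_A}D_\abb{H}^{1/r}(\pi_X\otimes\tau\|\cq{\rho})$ follows directly from this feasibility: for every POVM $\Lambda_{[r]}$ and every state $\tau$, I have $\exp(-D_\abb{H}^{1/r}(\pi_X\otimes\tau\|\cq{\rho}))\leq\tr[\sw{\Lambda}\cq{\rho}]=1-\sum_x p_x\tr[\Lambda_x\rho_x]$, and infimizing over POVMs on the right-hand side produces $P_\abb{err}^\abb{SD}(\en{E})$.

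For the reverse direction I would perform Lagrangian manipulation of the dual SDP for the success probability, $P_\abb{succ}^\abb{SD}(\en{E})=\inf_{Y}\{\tr[Y]\colon Y\geq p_x\rho_x\;\forall x\in[r]\}$, in close analogy with \eqref{pf:oneshot-hypothesis-7}--\eqref{pf:oneshot-hypothesis-13}.  The constraint $Y\geq p_x\rho_x\geq 0$ forces $Y\in\s{PSD}_A$, so I may parametrize $Y=\eta\tau$ with $\eta\geq 0$ and $\tau\in\s{D}_A$---this is precisely why, in contrast to exclusion, the infimum in the statement can be restricted to states rather than to general unit-trace Hermitian operators.  I would then rewrite $P_\abb{err}^\abb{SD}(\en{E})=\sup_{\tau,\,\eta\geq 0}\{1-\eta\colon\eta\1_X\otimes\tau\geq\cq{\rho}\}$, introduce a Lagrange multiplier $\cq{\Lambda}\in\s{PSD}_{XA}$ for the operator inequality, exchange the order of $\inf$ and $\sup$ to pick up a $\leq$ from the minimax inequality, eliminate $\eta$ through a reverse-Lagrangian argument that converts the supremum over $\eta\geq 0$ into the constraint $\tr[\cq{\Lambda}(\1_X\otimes\tau)]\leq 1$, impose the additional constraint $\cq{\Lambda}\leq\1_{XA}$ on the inner infimization (which only enlarges its value), and finally substitute $\cq{\Lambda}\mapsto\1_{XA}-\cq{\Lambda}$ to land on the expression $\sup_{\tau\in\s{D}_A}\exp(-D_\abb{H}^{1/r}(\pi_X\otimes\tau\|\cq{\rho}))$.

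The main technical obstacle is keeping track of signs and inequality directions throughout this chain: the minimax inequality, the reverse-Lagrangian elimination of $\eta$, the enlargement of the inner infimum by imposing $\cq{\Lambda}\leq\1$, and the complement substitution $\cq{\Lambda}\mapsto\1-\cq{\Lambda}$ (which simultaneously converts $1-\tr[\cq{\Lambda}\cq{\rho}]$ into $\tr[\cq{\Lambda}\cq{\rho}]$ and turns $\tr[\cq{\Lambda}(\1_X\otimes\tau)]\leq 1$ into $\tr[\cq{\Lambda}(\pi_X\otimes\tau)]\geq 1-1/r$) must all compose into a single consistent $\leq$-bound on $P_\abb{err}^\abb{SD}$, so that the converse closes with the easy direction and equality results.
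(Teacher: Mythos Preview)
Your proposal is correct and mirrors the paper's proof essentially step by step: the paper likewise establishes the easy direction via the feasibility of $\1_{XA}-\cq{\Lambda}_{XA}$, and for the reverse direction starts from the dual SDP $P_\abb{succ}^\abb{SD}(\en{E})=\inf\{\tr[\gamma]\colon\gamma\geq p_x\rho_x\}$, observes that $\gamma\in\s{PSD}_A$ so that the parametrization $\gamma=\eta\tau$ with $\tau\in\s{D}_A$ suffices, introduces a Lagrange multiplier $\cq{\Lambda}\in\s{PSD}_{XA}$, applies the minimax inequality, eliminates $\eta$, adjoins the constraint $\cq{\Lambda}\leq\1$, and substitutes $\cq{\Lambda}=\1-\cq{\Lambda}'$ to reach $\sup_{\tau\in\s{D}_A}\exp(-D_\abb{H}^{1/r}(\pi_X\otimes\tau_A\|\cq{\rho}_{XA}))$.
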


\begin{proof}
For a POVM $\Lambda_{[r]}\in\s{M}_{A,r}$, let
\begin{align}
	\cq{\Lambda}_{XA}&\equiv\sum_{x\in[r]}\op{x}{x}_X\otimes\Lambda_{x,A}.
\end{align}
We observe that $\cq{\Lambda}\in\s{PSD}_{XA}$ and that
\begin{align}
	\cq{\Lambda}_{XA}&\leq\1_{XA}, \label{pf:discrimination-1}\\
	\tr\fleft[\cq{\Lambda}_{XA}\cq{\rho}_{XA}\fright]&=\sum_{x\in[r]}p_x\tr\fleft[\Lambda_x\rho_x\fright], \label{pf:discrimination-2}\\
	\tr\fleft[\cq{\Lambda}_{XA}\left(\pi_X\otimes\tau_A\right)\fright]&=\frac{1}{r}\sum_{x\in[r]}\tr\fleft[\Lambda_x\tau\fright]=\frac{1}{r}\quad\forall\tau\in\s{D}_A. \label{pf:discrimination-3}
\end{align}
Then for every state $\tau\in\s{D}_A$, we have that
\begin{align}
	P_\abb{err}^\abb{disc}(\en{E})&=1-\sup_{\Lambda_{[r]}\in\s{M}_{A,r}}\sum_{x\in[r]}p_x\tr\fleft[\Lambda_x\rho_x\fright] \\
	&\geq1-\sup_{\cq{\Lambda}\in\s{PSD}_{XA}}\left\{\tr\fleft[\cq{\Lambda}_{XA}\cq{\rho}_{XA}\fright]\colon\cq{\Lambda}_{XA}\leq\1_{XA},\vphantom{\tr\fleft[\cq{\Lambda}_{XA}\left(\pi_X\otimes\tau_A\right)\fright]\leq\frac{1}{r}}\right.\notag\\
	&\qquad\left.\tr\fleft[\cq{\Lambda}_{XA}\left(\pi_X\otimes\tau_A\right)\fright]\leq\frac{1}{r}\right\} \label{pf:discrimination-4}\\
	&=\inf_{\cq{\Lambda}'\in\s{PSD}_{XA}}\left\{\tr\fleft[\cq{\Lambda}_{XA}'\cq{\rho}_{XA}\fright]\colon\cq{\Lambda}_{XA}'\leq\1_{XA},\vphantom{\tr\fleft[\cq{\Lambda}_{XA}'\left(\pi_X\otimes\tau_A\right)\fright]\geq1-\frac{1}{r}}\right. \notag\\
	&\qquad\left.\tr\fleft[\cq{\Lambda}_{XA}'\left(\pi_X\otimes\tau_A\right)\fright]\geq1-\frac{1}{r}\right\} \label{pf:discrimination-5}\\
	&=\exp\left(-D_\abb{H}^\frac{1}{r}\fleft(\pi_X\otimes\tau_A\middle\|\cq{\rho}_{XA}\fright)\right). \label{pf:discrimination-6}
\end{align}
Here \eqref{pf:discrimination-4} follows from \eqref{pf:discrimination-1}--\eqref{pf:discrimination-3}; Eq.~\eqref{pf:discrimination-5} applies the substitution $\cq{\Lambda}=\1-\cq{\Lambda}'$; Eq.~\eqref{pf:discrimination-6} follows from the definition of the hypothesis-testing divergence (see \eqref{eq:hypothesis-extended}).  This shows that
\begin{align}
	-\ln P_\abb{err}^\abb{disc}\fleft(\en{E}\fright)&\leq\inf_{\tau\in\s{D}_A}D_\abb{H}^\frac{1}{r}\fleft(\pi_X\otimes\tau_A\middle\|\cq{\rho}_{XA}\fright). \label{pf:discrimination-7}
\end{align}
To show the opposite inequality, it follows from the dual SDP formulation of $P_\abb{err}^\abb{disc}(\en{E})$~\cite{yuen1975OptimumTestingMultiple} (also see Refs.~\cite[Eq.~(19)]{audenaert2014UpperBoundsError} and \cite[Theorem~1]{konig2009OperationalMeaningMin}) that
\begin{align}
	&P_\abb{err}^\abb{disc}\fleft(\en{E}\fright) \notag\\
	&=1-\inf_{\gamma\in\s{Herm}_A}\left\{\tr\fleft[\gamma\fright]\colon\gamma\geq p_x\rho_x\;\forall x\in[r]\right\} \label{pf:discrimination-8}\\
	&=1-\inf_{\gamma\in\s{PSD}_A}\left\{\tr\fleft[\gamma\fright]\colon\1_X\otimes\gamma_A\geq\cq{\rho}_{XA}\right\} \\
	&=1-\inf_{\tau\in\s{D}_A}\inf_{\eta\in[0,\infty)}\left\{\eta\colon\eta\1_X\otimes\tau_A\geq\cq{\rho}_{XA}\right\}  \label{pf:discrimination-9}\\
	&=1-\inf_{\tau\in\s{D}_A}\inf_{\eta\in[0,\infty)}\sup_{\cq{\Lambda}\in\s{PSD}_{XA}}\left(\eta+\tr\fleft[\cq{\Lambda}_{XA}\left(\cq{\rho}_{XA}-\eta\1_X\otimes\tau_A\right)\fright]\right) \label{pf:discrimination-10}\\
	&=1-\inf_{\tau\in\s{D}_A}\inf_{\eta\in[0,\infty)}\sup_{\cq{\Lambda}\in\s{PSD}_{XA}}\left(\tr\fleft[\cq{\Lambda}_{XA}\cq{\rho}_{XA}\fright]\right. \notag\\
	&\qquad\left.\vphantom{}+\eta\left(1-\tr\fleft[\cq{\Lambda}_{XA}\left(\1_X\otimes\tau_A\right)\fright]\right)\right) \\
	&\leq1-\inf_{\tau\in\s{D}_A}\sup_{\cq{\Lambda}\in\s{PSD}_A}\inf_{\eta\in[0,\infty)}\left(\tr\fleft[\cq{\Lambda}_{XA}\cq{\rho}_{XA}\fright]\right. \notag\\
	&\qquad\left.\vphantom{}+\eta\left(1-\tr\fleft[\cq{\Lambda}_{XA}\left(\1_X\otimes\tau_A\right)\fright]\right)\right) \label{pf:discrimination-11}\\
	&=1-\inf_{\tau\in\s{D}_A}\sup_{\cq{\Lambda}\in\s{PSD}_{XA}}\left\{\tr\fleft[\cq{\Lambda}_{XA}\cq{\rho}_{XA}\fright]\colon\tr\fleft[\cq{\Lambda}_{XA}\left(\1_X\otimes\tau_A\right)\fright]\leq1\right\} \label{pf:discrimination-12}\\
	&\leq1-\inf_{\tau\in\s{D}_A}\sup_{\cq{\Lambda}\in\s{PSD}_{XA}}\left\{\tr\fleft[\cq{\Lambda}_{XA}\cq{\rho}_{XA}\fright]\colon\cq{\Lambda}_{XA}\leq\1_{XA},\vphantom{\tr\fleft[\cq{\Lambda}_{XA}\left(\pi_X\otimes\tau_A\right)\fright]\leq\frac{1}{r}}\right. \notag\\
	&\qquad\left.\tr\fleft[\cq{\Lambda}_{XA}\left(\pi_X\otimes\tau_A\right)\fright]\leq\frac{1}{r}\right\} \label{pf:discrimination-13}\\
	&=\sup_{\tau\in\s{D}_A}\inf_{\cq{\Lambda}'\in\s{PSD}_{XA}}\left\{\tr\fleft[\cq{\Lambda}_{XA}'\cq{\rho}_{XA}\fright]\colon\cq{\Lambda}_{XA}'\leq\1_{XA},\vphantom{\tr\fleft[\cq{\Lambda}_{XA}'\left(\pi_X\otimes\tau_A\right)\fright]\geq1-\frac{1}{r}}\right. \notag\\
	&\qquad\left.\tr\fleft[\cq{\Lambda}_{XA}'\left(\pi_X\otimes\tau_A\right)\fright]\geq1-\frac{1}{r}\right\} \label{pf:discrimination-14}\\
	&=\sup_{\tau\in\s{D}_A}\exp\left(-D_\abb{H}^\frac{1}{r}\fleft(\pi_X\otimes\tau_A\middle\|\cq{\rho}_{XA}\fright)\right).  \label{pf:discrimination-15}
\end{align}
Here \eqref{pf:discrimination-8} is the dual SDP formulation of $P_\abb{err}^\abb{disc}(\en{E})$; Eq.~\eqref{pf:discrimination-9} applies the substitution $\sigma=\eta\tau$ and follows from the fact that $\eta=0$ is a feasible solution to the infimization; Eq.~\eqref{pf:discrimination-10} introduces $\cq{\Lambda}\in\s{PSD}_{XA}$ as the Lagrange multiplier corresponding to the constraint $\eta\1_X\otimes\tau_A\geq\cq{\rho}_{XA}$; Eq.~\eqref{pf:discrimination-11} follows from an exchange in order between the infimization and the supremization; Eq.~\eqref{pf:discrimination-12} follows from a reverse Lagrangian reasoning that eliminates the multiplier $\eta$; Eq.~\eqref{pf:discrimination-13} follows from the fact that an additional constraint on $\cq{\Lambda}$ does not increase the optimal value of the objective function; Eq.~\eqref{pf:discrimination-14} applies the substitution $\cq{\Lambda}=\1-\cq{\Lambda}'$.  Combining \eqref{pf:discrimination-7} and \eqref{pf:discrimination-15} leads to the desired statement.
\end{proof}

\section{A closed-form converse bound on the one-shot error probability of state exclusion}
\label{app:closed}

Let $\rho\in\s{D}_A$ be a state, and let $\sigma\in\s{PSD}_A$ be a positive semidefinite operator.  For $\alpha\in(0,1)\cup(1,\infty)$, the \emph{Petz--Rényi pseudo-divergence} is defined as~\cite{petz1985QuasientropiesStatesNeumann,petz1986QuasientropiesFiniteQuantum}
\begin{align}
	\label{eq:petz}
	D_\alpha\fleft(\rho\middle\|\sigma\fright)&\coloneq\begin{cases}
		\frac{1}{\alpha-1}\ln\tr\fleft[\rho^\alpha\sigma^{1-\alpha}\fright]&\text{if }\alpha\in(0,1)\text{ or }\rho^0\leq\sigma^0, \\
		\infty&\text{otherwise}.
	\end{cases}
\end{align}
The Petz--Rényi pseudo-divergence is additive, regular in its second argument, and monotonically nondecreasing in $\alpha$~\cite{tomamichel2009FullyQuantumAsymptotic,khatri2024PrinciplesQuantumCommunication}.  It obeys the DPI for $\alpha\in(0,1)\cup(1,2]$~\cite{tomamichel2009FullyQuantumAsymptotic}, and thus we also call it the \emph{Petz--Rényi divergence} for this range.  The limit of the Petz--Rényi divergence as $\alpha\to1$ is given by the Umegaki divergence~\cite{petz1986QuasientropiesFiniteQuantum}.  The Petz--Rényi pseudo-divergence provides an upper bound on the sandwiched Rényi divergence~\cite{wilde2014StrongConverseClassical}: for all $\alpha\in(1,\infty)$,
\begin{align}
	\label{eq:order-petz}
	\sw{D}_\alpha\fleft(\rho\middle\|\sigma\fright)&\leq D_\alpha\fleft(\rho\middle\|\sigma\fright)\quad\forall\rho\in\s{D}_A,\;\sigma\in\s{PSD}_A.
\end{align}
For a bipartite positive semidefinite operator $\rho\in\s{PSD}_{AB}$ and a state $\sigma\in\s{D}_A$, the \emph{Petz--Rényi lautum information}\footnote{The word ``lautum'' is ``mutual'' spelled backwards~\cite{palomar2008LautumInformation}.  When the system $A$ is classical, the quantity is also known as the ``oveloH'' (``Holevo'' spelled backwards) information~\cite{nuradha2025MultivariateFidelities}.} is defined as
\begin{align}
	\label{eq:lautum}
	L_\alpha\fleft(A;B\fright)_{\rho|\sigma}&\coloneq\inf_{\tau\in\s{D}_B}D_\alpha\fleft(\sigma_A\otimes\tau_B\middle\|\rho_{AB}\fright).
\end{align}

\begin{lemma}[Quantum Sibson identity for the Petz--Rényi lautum information]
\label{lem:lautum}
Let $\rho\in\s{PSD}_{AB}$ be a positive semidefinite operator, and let $\sigma\in\s{D}_A$ be a state.  Let $\Pi\in\s{PSD}_B$ be the projector onto the minimal subspace of $\spa{H}_B$ that contains $\bigcup_{\tau\in\s{D}_B}\{\supp(\tau)\colon\sigma_A^0\otimes\tau_B^0\leq\rho_{AB}^0\}$, or equivalently, onto $\{\ket{\psi}\in\spa{H}_B\colon\ket{\phi}_A\otimes\ket{\psi}_B\in\supp(\rho_{AB})\;\forall\ket{\phi}\in\supp(\sigma)\}$.  Then for all $\alpha\in(0,1)\cup(1,\infty)$,
\begin{align}
	&L_\alpha\fleft(A;B\fright)_{\rho|\sigma} \notag\\
	&=D_\alpha\fleft(\sigma_A\otimes\tau_{\star,B}\middle\|\rho_{AB}\fright) \label{pf:lautum-1}\\
	&=\begin{cases}
		-\ln\tr\fleft[\left(\tr_A\fleft[\sigma_A^\alpha\rho_{AB}^{1-\alpha}\fright]\fright)^\frac{1}{1-\alpha}\right]&\text{if }\alpha\in(0,1), \\
		-\ln\tr\fleft[\left(\Pi_B\tr_A\fleft[\sigma_A^\alpha\rho_{AB}^{1-\alpha}\fright]\Pi_B\fright)^\frac{1}{1-\alpha}\right]&\text{if }\alpha\in(1,\infty),
	\end{cases} \label{pf:lautum-2}\\
	&=\lim_{\varepsilon\searrow0}-\ln\tr\fleft[\left(\tr_A\fleft[\sigma_A^\alpha\left(\rho_{AB}+\varepsilon\1_{AB}\right)^{1-\alpha}\fright]\right)^\frac{1}{1-\alpha}\fright], \label{pf:lautum-3} \\
\end{align}
where
\begin{align}
	\tau_\star&\coloneq\begin{cases}
		\frac{\left(\tr_A\fleft[\sigma_A^\alpha\rho_{AB}^{1-\alpha}\fright]\right)^\frac{1}{1-\alpha}}{\tr\fleft[\left(\tr_A\fleft[\sigma_A^\alpha\rho_{AB}^{1-\alpha}\fright]\right)^\frac{1}{1-\alpha}\fright]}&\text{if }\alpha\in(0,1), \\
		\frac{\left(\Pi_B\tr_A\fleft[\sigma_A^\alpha\rho_{AB}^{1-\alpha}\fright]\Pi_B\right)^\frac{1}{1-\alpha}}{\tr\fleft[\left(\Pi_B\tr_A\fleft[\sigma_A^\alpha\rho_{AB}^{1-\alpha}\fright]\Pi_B\right)^\frac{1}{1-\alpha}\fright]}&\text{if }\alpha\in(1,\infty)\text{ and }\Pi\neq0, \\
		\tau'&\text{if }\alpha\in(1,\infty)\text{ and }\Pi=0
	\end{cases} \label{pf:lautum-4}
\end{align}
is the unique minimizer to the infimization on the right-hand side of \eqref{eq:lautum} if $\alpha\in(0,1)$ or if $\alpha\in(1,\infty)$ and $\Pi\neq0$, and $\tau'\in\s{D}_B$ can be an arbitrary state.
\end{lemma}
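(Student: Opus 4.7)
The plan is to reduce $D_\alpha(\sigma_A \otimes \tau_B \| \rho_{AB})$ to a scalar optimization over $\tau$ by partial tracing out $A$, and then identify the optimizer using (reverse) Hölder inequalities. Since $(\sigma_A \otimes \tau_B)^\alpha = \sigma_A^\alpha \otimes \tau_B^\alpha$, the definition of the Petz--Rényi pseudo-divergence (see \eqref{eq:petz}) gives
\begin{align}
	D_\alpha\fleft(\sigma_A \otimes \tau_B \middle\| \rho_{AB}\fright) = \frac{1}{\alpha - 1}\ln\tr\fleft[\tau_B^\alpha M_B\fright],\qquad M_B \coloneq \tr_A\fleft[\sigma_A^\alpha \rho_{AB}^{1-\alpha}\fright],
\end{align}
so computing $L_\alpha(A;B)_{\rho|\sigma}$ amounts to maximizing $\tr[\tau^\alpha M]$ over $\tau \in \s{D}_B$ when $\alpha \in (0,1)$ and minimizing it when $\alpha \in (1,+\infty)$.

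For $\alpha \in (0,1)$, I would apply the ordinary Hölder inequality with conjugate exponents $1/\alpha$ and $1/(1-\alpha)$ (both strictly greater than $1$) to obtain $\tr[\tau^\alpha M] \leq (\tr[\tau])^\alpha (\tr[M^{1/(1-\alpha)}])^{1-\alpha} = (\tr[M^{1/(1-\alpha)}])^{1-\alpha}$, with equality exactly when $\tau \propto M^{1/(1-\alpha)}$; normalizing reproduces the first branch of $\tau_\star$ in \eqref{pf:lautum-4}, and substituting back yields the first line of \eqref{pf:lautum-2}. For $\alpha \in (1,+\infty)$, the finiteness condition $\supp(\sigma_A \otimes \tau_B) \subseteq \supp(\rho_{AB})$ restricts $\tau$ to $\supp(\tau) \subseteq \mathrm{range}(\Pi)$ by the very definition of $\Pi$, so the relevant quantity becomes $\tr[\tau^\alpha \Pi M \Pi]$. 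The main obstacle will be verifying that $\Pi M \Pi$ is \emph{strictly} positive on $\mathrm{range}(\Pi)$, as this is what allows the inverse power $(\Pi M \Pi)^{-1/(\alpha-1)}$ to make sense; for a unit $\ket{\psi} \in \mathrm{range}(\Pi)$ and an eigendecomposition $\sigma^\alpha = \sum_i \mu_i \pr{e_i}$ (with $\mu_i > 0$ and $\ket{e_i} \in \supp(\sigma)$), one has $\bra{\psi} M \ket{\psi} = \sum_i \mu_i \bra{e_i \otimes \psi} \rho^{1-\alpha} \ket{e_i \otimes \psi}$, and since each $\ket{e_i \otimes \psi}$ lies in $\supp(\rho_{AB})$ by the defining property of $\Pi$, every summand is strictly positive. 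Armed with this, I would apply the reverse Hölder inequality with exponents $p = 1/\alpha \in (0,1)$ and $q = 1/(1-\alpha) < 0$ (satisfying $1/p + 1/q = 1$) to obtain $\tr[\tau^\alpha \Pi M \Pi] \geq (\tr[(\Pi M \Pi)^{1/(1-\alpha)}])^{1-\alpha}$, with equality exactly at $\tau \propto (\Pi M \Pi)^{1/(1-\alpha)}$; taking logarithms and dividing by $\alpha - 1 > 0$ delivers the second line of \eqref{pf:lautum-2} together with the second branch of \eqref{pf:lautum-4}. The degenerate case $\Pi = 0$ is trivial: no $\tau \in \s{D}_B$ is feasible, so $L_\alpha = +\infty$ and any $\tau' \in \s{D}_B$ vacuously attains the infimum.

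To obtain the unified regularized expression \eqref{pf:lautum-3}, I would replace $\rho$ with $\rho + \varepsilon \1$ (strictly positive, so the projector $\Pi$ is unnecessary), apply the just-derived closed-form identity to the perturbation, and take $\varepsilon \searrow 0$. For $\alpha \in (0,1)$ the limit is immediate by continuity of the partial trace and of $X \mapsto \tr[X^{1/(1-\alpha)}]$ on positive operators. For $\alpha \in (1,+\infty)$, the perturbed operator $M_\varepsilon \coloneq \tr_A[\sigma^\alpha (\rho + \varepsilon \1)^{1-\alpha}]$ blows up at rate $\varepsilon^{-(\alpha-1)}$ on directions orthogonal to $\mathrm{range}(\Pi)$ while stabilizing on $\mathrm{range}(\Pi)$, so $M_\varepsilon^{1/(1-\alpha)}$ decays as $\varepsilon^{1} \to 0$ on the former and converges to $(\Pi M \Pi)^{1/(1-\alpha)}$ on the latter; summing the two contributions inside the trace recovers exactly $\tr[(\Pi M \Pi)^{1/(1-\alpha)}]$ in the limit, matching the projected closed form.
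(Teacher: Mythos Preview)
Your argument for \eqref{pf:lautum-1} and \eqref{pf:lautum-2} is correct and essentially the same as the paper's, just phrased differently: the paper rewrites $D_\alpha(\sigma_A\otimes\tau_B\|\rho_{AB})$ as $D_\alpha(\tau\|\tau_\star)$ plus the closed-form constant and then invokes the nonnegativity $D_\alpha(\tau\|\tau_\star)\geq0$ (with equality iff $\tau=\tau_\star$), whereas you invoke the H\"older and reverse-H\"older inequalities directly. These are equivalent, since the nonnegativity of the Petz--R\'enyi divergence between normalized states \emph{is} the (reverse) H\"older inequality after rescaling. Your direct verification that $\Pi M\Pi$ is strictly positive on $\mathrm{range}(\Pi)$ is a nice touch; the paper reaches the equivalent conclusion $\supp(\tau_\star)=\mathrm{range}(\Pi)$ more indirectly, by first showing $\supp(\tau)\subseteq\supp(\tr_A[\sigma^\alpha\rho^{1-\alpha}])$ for every feasible $\tau$ and then sandwiching with $\Pi$.

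Where you genuinely diverge from the paper is in \eqref{pf:lautum-3}. The paper avoids any spectral analysis of $M_\varepsilon$: it writes $L_\alpha(A;B)_{\rho|\sigma}=\inf_\tau\sup_{\varepsilon\in(0,1)}D_\alpha(\sigma\otimes\tau\|\rho+\varepsilon\1)$ using regularity and antitonicity of $D_\alpha$ in its second argument, swaps $\inf_\tau$ and $\sup_\varepsilon$ via the Mosonyi--Hiai minimax theorem (monotone in $\varepsilon$, lower semicontinuous in $\tau$, compact domain for $\tau$), and then applies the already-established closed form at each fixed $\varepsilon>0$. Your direct perturbative route is correct in spirit but underspecified as written: one has $M_\varepsilon=M+\varepsilon^{1-\alpha}N+O(\varepsilon)$ with $N\coloneq\tr_A[\sigma^\alpha P_{\ker\rho}]$ supported exactly on $\mathrm{range}(\Pi)^\perp$, but $M$ itself is \emph{not} block-diagonal with respect to $\Pi$, so a Schur-complement or eigenvalue-perturbation argument is needed to show that the bounded eigenvalues of $M_\varepsilon$ converge to those of $\Pi M\Pi$ while the diverging ones contribute only $O(\varepsilon)$ to $\tr[M_\varepsilon^{1/(1-\alpha)}]$. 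This can be made rigorous, but the paper's minimax route sidesteps the analysis entirely.
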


\begin{proof}[Proof of \eqref{pf:lautum-1} and \eqref{pf:lautum-2} for $\alpha\in(0,1)$]
The statement follows from the observation in Ref.~\cite[Eq.~(3.10)]{hayashi2016CorrelationDetectionOperational} (also see Ref.~\cite[Lemma~3 of Supplemental Material]{sharma2013FundamentalBoundReliability}) and the fact that, for all $\alpha\in(0,1)$,
\begin{align}
	\left(1-\alpha\right)D_\alpha\fleft(\sigma\middle\|\rho\fright)&=\alpha D_{1-\alpha}\fleft(\rho\middle\|\sigma\fright)\quad\forall\rho,\sigma\in\s{D}_A.
\end{align}
We provide an explicit proof of the statement as follows.  For every state $\tau\in\s{D}_B$, it follows from the definition of the Petz--Rényi divergence (see \eqref{eq:petz}) that
\begin{align}
	&D_\alpha\fleft(\sigma_A\otimes\tau_B\middle\|\rho_{AB}\fright) \notag\\
	&=\frac{1}{\alpha-1}\ln\tr\fleft[\left(\sigma_A^\alpha\otimes\tau_B^\alpha\right)\rho_{AB}^{1-\alpha}\fright] \\
	&=\frac{1}{\alpha-1}\ln\tr\fleft[\tau_B^\alpha\tr_A\fleft[\sigma_A^\alpha\rho_{AB}^{1-\alpha}\fright]\fright] \\
	&=\frac{1}{\alpha-1}\ln\tr\fleft[\tau_B^\alpha\left(\tau_{\star,B}\tr\fleft[\left(\tr_A\fleft[\sigma_A^\alpha\rho_{AB}^{1-\alpha}\fright]\right)^\frac{1}{1-\alpha}\fright]\right)^{1-\alpha}\fright] \label{pf:lautum-5}\\
	&=\frac{1}{\alpha-1}\ln\tr\fleft[\tau^\alpha\tau_\star^{1-\alpha}\fright]-\ln\tr\fleft[\left(\tr_A\fleft[\sigma_A^\alpha\rho_{AB}^{1-\alpha}\fright]\right)^\frac{1}{1-\alpha}\fright] \\
	&=D_\alpha\fleft(\tau\middle\|\tau_\star\fright)-\ln\tr\fleft[\left(\tr_A\fleft[\sigma_A^\alpha\rho_{AB}^{1-\alpha}\fright]\right)^\frac{1}{1-\alpha}\fright]. \label{pf:lautum-6}
\end{align}
Here \eqref{pf:lautum-5} follows from \eqref{pf:lautum-4}.  Then it follows from the definition of the Petz--Rényi lautum information (see \eqref{eq:lautum}) that
\begin{align}
	L_\alpha\fleft(A;B\fright)_{\rho|\sigma}&=\inf_{\tau\in\s{D}_B}D_\alpha\fleft(\sigma_A\otimes\tau_B\middle\|\rho_{AB}\fright) \\
	&=\inf_{\tau\in\s{D}_B}D_\alpha\fleft(\tau\middle\|\tau_\star\fright)-\ln\tr\fleft[\left(\tr_A\fleft[\sigma_A^\alpha\rho_{AB}^{1-\alpha}\fright]\right)^\frac{1}{1-\alpha}\fright] \label{pf:lautum-7}\\
	&=-\ln\tr\fleft[\left(\tr_A\fleft[\sigma_A^\alpha\rho_{AB}^{1-\alpha}\fright]\right)^\frac{1}{1-\alpha}\fright] \label{pf:lautum-8}\\
	&=D_\alpha\fleft(\sigma_A\otimes\tau_{\star,B}\middle\|\rho_{AB}\fright). \label{pf:lautum-9}
\end{align}
Here \eqref{pf:lautum-7} follows from \eqref{pf:lautum-6}; Eq.~\eqref{pf:lautum-8} follows from the facts that $D_\alpha(\tau\|\tau_\star)\geq D_\alpha(\tau_\star\|\tau_\star)=0$ for all $\alpha\in(0,1)$ and all $\tau\in\s{D}_B$; Eq.~\eqref{pf:lautum-9} follows from \eqref{pf:lautum-6} and the aforementioned fact.  The uniqueness of the minimizer $\tau_\star$ to the infimization on the right-hand side of \eqref{eq:lautum} follows from \eqref{pf:lautum-6} and the fact that $D_\alpha(\tau\|\tau_\star)=0$ if and only if $\tau=\tau_\star$.
\end{proof}

\begin{proof}[Proof of \eqref{pf:lautum-1} and \eqref{pf:lautum-2} for $\alpha\in{(1,\infty)}$]
If $\Pi=0$, then both sides of \eqref{pf:lautum-1} and \eqref{pf:lautum-2} are equal to $\infty$.  Henceforth we assume that $\Pi\neq0$.  Let $\tau\in\s{D}_B$ be a state such that $\sigma_A^0\otimes\tau_B^0\leq\rho_{AB}^0$.  This implies that $(\sigma_A^{1-\alpha}\otimes\tau_B^{1-\alpha})^0\leq(\rho_{AB}^{1-\alpha})^0$, which entails that
\begin{align}
	\tau^0&=\left(\tr_A\fleft[\sigma_A^\alpha\left(\sigma_A^{1-\alpha}\otimes\tau_B^{1-\alpha}\right)\fright]\right)^0 \\
	&\leq\left(\tr_A\fleft[\sigma_A^\alpha\rho_{AB}^{1-\alpha}\fright]\right)^0. \label{pf:lautum-10}
\end{align}
Then we have that
\begin{align}
	\tau^0&=\left(\Pi\tau\Pi\right)^0 \label{pf:lautum-11}\\
	&\leq\left(\Pi_B\tr_A\fleft[\sigma_A^\alpha\rho_{AB}^{1-\alpha}\fright]\Pi_B\right)^0 \label{pf:lautum-12}\\
	&=\tau_\star^0. \label{pf:lautum-13}
\end{align}
Here \eqref{pf:lautum-11} follows from the fact that $\tau=\Pi\tau\Pi$; Eq.~\eqref{pf:lautum-12} follows from \eqref{pf:lautum-10}; Eq.~\eqref{pf:lautum-13} follows from \eqref{pf:lautum-4}.  It follows from the definition of the Petz--Rényi pseudo-divergence (see \eqref{eq:petz}) and the condition $\sigma_A^0\otimes\tau_B^0\leq\rho_{AB}^0$ that
\begin{align}
	&D_\alpha\fleft(\sigma_A\otimes\tau_B\middle\|\rho_{AB}\fright) \notag\\
	&=\frac{1}{\alpha-1}\ln\tr\fleft[\left(\sigma_A^\alpha\otimes\tau_B^\alpha\right)\rho_{AB}^{1-\alpha}\fright] \\
	&=\frac{1}{\alpha-1}\ln\tr\fleft[\tau_B^\alpha\tr_A\fleft[\sigma_A^\alpha\rho_{AB}^{1-\alpha}\fright]\fright] \\
	&=\frac{1}{\alpha-1}\ln\tr\fleft[\tau_B^\alpha\Pi_B\tr_A\fleft[\sigma_A^\alpha\rho_{AB}^{1-\alpha}\fright]\Pi_B\fright] \label{pf:lautum-14}\\
	&=\frac{1}{\alpha-1}\ln\tr\fleft[\tau_B^\alpha\left(\tau_{\star,B}\tr\fleft[\left(\Pi_B\tr_A\fleft[\sigma_A^\alpha\rho_{AB}^{1-\alpha}\fright]\Pi_B\right)^\frac{1}{1-\alpha}\fright]\right)^{1-\alpha}\fright] \label{pf:lautum-15}\\
	&=\frac{1}{\alpha-1}\ln\tr\fleft[\tau^\alpha\tau_\star^{1-\alpha}\fright]-\ln\tr\fleft[\left(\Pi_B\tr_A\fleft[\sigma_A^\alpha\rho_{AB}^{1-\alpha}\fright]\Pi_B\right)^\frac{1}{1-\alpha}\fright] \\
	&=D_\alpha\fleft(\tau\middle\|\tau_\star\fright)-\ln\tr\fleft[\left(\Pi_B\tr_A\fleft[\sigma_A^\alpha\rho_{AB}^{1-\alpha}\fright]\Pi_B\right)^\frac{1}{1-\alpha}\fright]. \label{pf:lautum-16}
\end{align}
Here \eqref{pf:lautum-14} follows from the fact that $\tau=\Pi\tau\Pi$; Eq.~\eqref{pf:lautum-15} follows from \eqref{pf:lautum-4}; Eq.~\eqref{pf:lautum-16} follows from the definition of the Petz--Rényi pseudo-divergence (see \eqref{eq:petz}) and \eqref{pf:lautum-13}.  It follows from the definition of the Petz--Rényi lautum information (see \eqref{eq:lautum}) that
\begin{align}
	&L_\alpha\fleft(A;B\fright)_{\rho|\sigma} \notag\\
	&=\inf_{\tau\in\s{D}_B}\left\{D_\alpha\fleft(\sigma_A\otimes\tau_B\middle\|\rho_{AB}\fright)\colon\sigma_A^0\otimes\tau_B^0\leq\rho_{AB}^0\right\} \\
	&=\inf_{\tau\in\s{D}_B}\left\{D_\alpha\fleft(\tau\middle\|\tau_\star\fright)-\ln\tr\fleft[\left(\Pi_B\tr_A\fleft[\sigma_A^\alpha\rho_{AB}^{1-\alpha}\fright]\Pi_B\right)^\frac{1}{1-\alpha}\fright]\colon\right. \notag\\
	&\qquad\left.\vphantom{D_\alpha\fleft(\tau\middle\|\tau_\star\fright)-\ln\tr\fleft[\left(\Pi_B\tr_A\fleft[\sigma_A^\alpha\rho_{AB}^{1-\alpha}\fright]\Pi_B\right)^\frac{1}{1-\alpha}\fright]\colon}\sigma_A^0\otimes\tau_B^0\leq\rho_{AB}^0\right\} \label{pf:lautum-17}\\
	&=-\ln\tr\fleft[\left(\Pi_B\tr_A\fleft[\sigma_A^\alpha\rho_{AB}^{1-\alpha}\fright]\Pi_B\right)^\frac{1}{1-\alpha}\fright] \label{pf:lautum-18}\\
	&=D_\alpha\fleft(\sigma_A\otimes\tau_{\star,B}\middle\|\rho_{AB}\fright). \label{pf:lautum-19}
\end{align}
Here \eqref{pf:lautum-17} follows from \eqref{pf:lautum-16}; Eq.~\eqref{pf:lautum-18} follows from the facts that $\sigma_A^0\otimes\tau_{\star,B}^0\leq\rho_{AB}^0$ and that $D_\alpha(\tau\|\tau_\star)\geq D_\alpha(\tau_\star\|\tau_\star)=0$ for all $\alpha\in(1,\infty)$ and all $\tau\in\s{D}_B$; Eq.~\eqref{pf:lautum-19} follows from \eqref{pf:lautum-16} and the aforementioned facts.  The uniqueness of the minimizer $\tau_\star$ to the infimization on the right-hand side of \eqref{eq:lautum} follows from \eqref{pf:lautum-16} and the facts that $\sigma_A^0\otimes\tau_{\star,B}^0\leq\rho_{AB}^0$ and that $D_\alpha(\tau\|\tau_\star)=0$ if and only if $\tau=\tau_\star$.
\end{proof}

\begin{proof}[Proof of \eqref{pf:lautum-3}]
For all $\alpha\in(0,1)\cup(1,\infty)$, it follows from the definition of the Petz--Rényi lautum information (see \eqref{eq:lautum}) that
\begin{align}
	&L_\alpha\fleft(A;B\fright)_{\rho|\sigma} \notag\\
	&=\inf_{\tau\in\s{D}_B}D_\alpha\fleft(\sigma_A\otimes\tau_B\middle\|\rho_{AB}\fright) \\
	&=\inf_{\tau\in\s{D}_B}\lim_{\varepsilon\searrow0}D_\alpha\fleft(\sigma_A\otimes\tau_B\middle\|\rho_{AB}+\varepsilon\1_{AB}\fright) \label{pf:lautum-20}\\
	&=\inf_{\tau\in\s{D}_B}\sup_{\varepsilon\in(0,1)}D_\alpha\fleft(\sigma_A\otimes\tau_B\middle\|\rho_{AB}+\varepsilon\1_{AB}\fright) \label{pf:lautum-21}\\
	&=\sup_{\varepsilon\in(0,1)}\inf_{\tau\in\s{D}_B}D_\alpha\fleft(\sigma_A\otimes\tau_B\middle\|\rho_{AB}+\varepsilon\1_{AB}\fright) \label{pf:lautum-22}\\
	&=\lim_{\varepsilon\searrow0}\inf_{\tau\in\s{D}_B}D_\alpha\fleft(\sigma_A\otimes\tau_B\middle\|\rho_{AB}+\varepsilon\1_{AB}\fright) \label{pf:lautum-23}\\
	&=\lim_{\varepsilon\searrow0}L_\alpha\fleft(A;B\fright)_{\rho+\varepsilon\1|\sigma} \label{pf:lautum-24}\\
	&=\lim_{\varepsilon\searrow0}-\ln\tr\fleft[\left(\tr_A\fleft[\sigma_A^\alpha\left(\rho_{AB}+\varepsilon\1_{AB}\right)^{1-\alpha}\fright]\right)^\frac{1}{1-\alpha}\fright]. \label{pf:lautum-25}
\end{align}
Here \eqref{pf:lautum-20} uses the regularity of the Petz--Rényi pseudo-divergence in its second argument; Eqs.~\eqref{pf:lautum-21} and \eqref{pf:lautum-23} follow from the observation that the function $\varepsilon\mapsto D_\alpha(\sigma_A\otimes\tau_B\|\rho_{AB}+\varepsilon\1_{AB})$ is monotonically nonincreasing; Eq.~\eqref{pf:lautum-22} applies the Mosonyi--Hiai minimax theorem~\cite[Corollary~A.2]{mosonyi2011QuantumRenyiRelative}; Eq.~\eqref{pf:lautum-24} follows from the definition of the Petz--Rényi lautum information (see \eqref{eq:lautum}); Eq.~\eqref{pf:lautum-25} follows from \eqref{pf:lautum-2}.  The application of the Mosonyi--Hiai minimax theorem in \eqref{pf:lautum-22} is justified by the following observations: the feasible region of $\tau$ is compact; the objective function is monotonically nonincreasing in $\varepsilon$, and it is lower semicontinuous in $\tau$ due to the lower semicontinuity of the Petz--Rényi divergence in its first argument.
\end{proof}

\begin{proposition}[A closed-form converse bound on the error probability]
\label{prop:closed}
Let $\en{E}\equiv(p_{[r]},\rho_{[r]})$ be an ensemble of states with $p_{[r]}\in\itr(\s{P}_r)$ and $\rho_{[r]}\in\s{D}_A^{[r]}$.  Then for all $\alpha\in(1,\infty)$,
\begin{align}
	-\ln P_\abb{err}\fleft(\en{E}\fright)&\leq\lim_{\varepsilon\searrow0}-\ln\tr\fleft[\left(\sum_{x\in[r]}\left(p_x\rho_x+\varepsilon\1\right)^{1-\alpha}\right)^\frac{1}{1-\alpha}\fright] \label{pf:closed-1}\\
	&=-\ln\tr\fleft[\left(\sum_{x\in[r]}\Pi\left(p_x\rho_x\right)^{1-\alpha}\Pi\right)^\frac{1}{1-\alpha}\fright], \label{pf:closed-2}
\end{align}
where $\Pi=\bigwedge_{x\in[r]}\rho_x^0$.
\end{proposition}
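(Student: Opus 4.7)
The plan is to build the closed-form bound by combining the exact characterization of Proposition~\ref{prop:oneshot-hypothesis} with a Petz--Rényi variant of Lemma~\ref{lem:hypothesis-extended}, and then recognize the resulting expression as a Petz--Rényi lautum information that can be evaluated in closed form via Lemma~\ref{lem:lautum}. Concretely, I would first invoke Proposition~\ref{prop:oneshot-hypothesis} to write
\begin{align*}
    -\ln P_\abb{err}\fleft(\en{E}\fright)&=\inf_{\tau\in\aff\fleft(\s{D}_A\fright)}D_\abb{H}^{1-\frac{1}{r}}\fleft(\pi_X\otimes\tau_A\middle\|\cq{\rho}_{XA}\fright)\leq\inf_{\tau\in\s{D}_A}D_\abb{H}^{1-\frac{1}{r}}\fleft(\pi_X\otimes\tau_A\middle\|\cq{\rho}_{XA}\fright),
\end{align*}
where restricting to states (rather than affine combinations) is the source of looseness flagged in Remark~\ref{rem:closed}, but is necessary for the subsequent steps to stay within the standard (non-extended) domain.

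Second, for $\tau\in\s{D}_A$ the first argument $\pi_X\otimes\tau_A$ is a genuine state, so I can apply Lemma~\ref{lem:hypothesis-extended} with $\varepsilon=1-1/r$ and then chain with the order \eqref{eq:order-petz} between the sandwiched and Petz--Rényi divergences to obtain, for all $\alpha\in(1,+\infty)$,
\begin{align*}
    D_\abb{H}^{1-\frac{1}{r}}\fleft(\pi_X\otimes\tau_A\middle\|\cq{\rho}_{XA}\fright)&\leq\sw{D}_\alpha\fleft(\pi_X\otimes\tau_A\middle\|\cq{\rho}_{XA}\fright)+\frac{\alpha}{\alpha-1}\ln r\leq D_\alpha\fleft(\pi_X\otimes\tau_A\middle\|\cq{\rho}_{XA}\fright)+\frac{\alpha}{\alpha-1}\ln r.
\end{align*}
Taking the infimum over $\tau\in\s{D}_A$ and comparing with \eqref{eq:lautum}, the first term becomes precisely the Petz--Rényi lautum information $L_\alpha(X;A)_{\cq{\rho}|\pi}$ of the classical--quantum state $\cq{\rho}_{XA}$ given $\pi_X$.

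Third, I would apply Lemma~\ref{lem:lautum} to evaluate $L_\alpha(X;A)_{\cq{\rho}|\pi}$ in closed form. A direct computation gives $\tr_X[\pi_X^\alpha(\cq{\rho}_{XA}+\varepsilon\1_{XA})^{1-\alpha}]=r^{-\alpha}\sum_{x\in[r]}(p_x\rho_x+\varepsilon\1)^{1-\alpha}$, and the projector onto the minimal subspace of $\spa{H}_A$ appearing in Lemma~\ref{lem:lautum} is readily identified as the $\Pi$ in the statement, namely the projector onto $\bigcap_{x\in[r]}\supp(\rho_x)$, since $\supp(\pi_X\otimes\tau_A)\subseteq\supp(\cq{\rho}_{XA})$ if and only if $\supp(\tau)\subseteq\supp(\rho_x)$ for every $x\in[r]$. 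Using \eqref{pf:lautum-3} and pulling the scalar $r^{-\alpha}$ out of the $(\cdot)^{1/(1-\alpha)}$ produces an additive term $\frac{\alpha}{1-\alpha}\ln r$, which exactly cancels the $\frac{\alpha}{\alpha-1}\ln r$ surplus from the previous step and yields \eqref{pf:closed-1}; similarly, invoking \eqref{pf:lautum-2} in place of \eqref{pf:lautum-3} gives the alternative closed form \eqref{pf:closed-2}, and the equivalence between the two expressions is part of the content of Lemma~\ref{lem:lautum}.

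The main obstacle I anticipate is bookkeeping the constants and the regularization: the cancellation of $\frac{\alpha}{\alpha-1}\ln r$ is the reason the final bound has a clean form, and verifying it requires carefully tracking the $\pi_X^\alpha=r^{-\alpha}\sum_x\op{x}{x}$ factor through the $(1-\alpha)$-power of $\cq{\rho}_{XA}+\varepsilon\1_{XA}$, the partial trace over $X$, and the outer $(\cdot)^{1/(1-\alpha)}$. A secondary care point is that the support projector appearing in Lemma~\ref{lem:lautum} is defined via the $\tau$'s satisfying a support inclusion; identifying it with $\Pi$ in our setting relies on the block-diagonal structure of $\cq{\rho}_{XA}$ and must be argued explicitly. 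Modulo these bookkeeping steps, the argument is essentially a dictionary between the exclusion task and the Petz--Rényi lautum information.
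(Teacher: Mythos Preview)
Your proposal is correct and follows essentially the same approach as the paper: start from Proposition~\ref{prop:oneshot-hypothesis}, restrict the infimum to states, apply Lemma~\ref{lem:hypothesis-extended} followed by the order~\eqref{eq:order-petz}, recognize the result as $L_\alpha(X;A)_{\cq{\rho}|\pi}$, and then evaluate via both forms of Lemma~\ref{lem:lautum}, with the $r^{-\alpha}$ factor from $\pi_X^\alpha$ cancelling the $\frac{\alpha}{\alpha-1}\ln r$ surplus. The bookkeeping concerns you flag (the constant cancellation and the identification of the support projector with $\Pi$) are exactly the details the paper's proof spells out.
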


\begin{proof}
It follows from Proposition~\ref{prop:oneshot-hypothesis} and the notation of \eqref{eq:cq} that
\begin{align}
	-\ln P_\abb{err}\fleft(\en{E}\fright)&=\inf_{\tau\in\aff\fleft(\s{D}_A\fright)}D_\abb{H}^{1-\frac{1}{r}}\fleft(\pi_X\otimes\tau_A\middle\|\cq{\rho}_{XA}\fright) \\
	&\leq\inf_{\tau\in\s{D}_A}D_\abb{H}^{1-\frac{1}{r}}\fleft(\pi_X\otimes\tau_A\middle\|\cq{\rho}_{XA}\fright) \\
	&\leq\inf_{\tau\in\s{D}_A}\sw{D}_\alpha\fleft(\pi_X\otimes\tau_A\middle\|\cq{\rho}_{XA}\fright)+\frac{\alpha}{\alpha-1}\ln r \label{pf:closed-3}\\
	&\leq\inf_{\tau\in\s{D}_A}D_\alpha\fleft(\pi_X\otimes\tau_A\middle\|\cq{\rho}_{XA}\fright)+\frac{\alpha}{\alpha-1}\ln r \label{pf:closed-4}\\
	&=L_\alpha\fleft(X;A\fright)_{\cq{\rho}|\pi}+\frac{\alpha}{\alpha-1}\ln r. \label{pf:closed-5}
\end{align}
Here \eqref{pf:closed-3} follows from Lemma~\ref{lem:hypothesis-extended}; Eq.~\eqref{pf:closed-4} follows from the order between the sandwiched Rényi divergence and the Petz--Rényi pseudo-divergence (see \eqref{eq:order-petz}); Eq.~\eqref{pf:closed-5} follows from the definition of the Petz--Rényi lautum (oveloH) information (see \eqref{eq:lautum}).  Applying \eqref{pf:lautum-3} of Lemma~\ref{lem:lautum} to the right-hand side of \eqref{pf:closed-5}, we have that
\begin{align}
	&L_\alpha\fleft(X;A\fright)_{\cq{\rho}|\pi}+\frac{\alpha}{\alpha-1}\ln r \notag\\
	&=\lim_{\varepsilon\searrow0}-\ln\tr\fleft[\left(\tr_X\fleft[\pi_X^\alpha\left(\cq{\rho}_{XA}+\varepsilon\1_{XA}\right)^{1-\alpha}\fright]\right)^\frac{1}{1-\alpha}\fright]+\frac{\alpha}{\alpha-1}\ln r \\
	&=\lim_{\varepsilon\searrow0}-\ln\tr\fleft[\left(\tr_X\fleft[\sum_{x\in[r]}\frac{1}{r^\alpha}\op{x}{x}_X\right.\right.\vphantom{\left(\tr_X\fleft[\sum_{x\in[r]}\frac{1}{r^\alpha}\op{x}{x}_X\sum_{x\in[r]}\frac{1}{r^\alpha}\op{x}{x}_X\otimes\left(p_x\rho_{x,A}+\varepsilon\1_A\right)^{1-\alpha}\fright]\right)^\frac{1}{1-\alpha}}\right. \notag\\
	&\qquad\left.\left.\left.\vphantom{\sum_{x\in[r]}\frac{1}{r^\alpha}\op{x}{x}_X}\otimes\left(p_x\rho_{x,A}+\varepsilon\1_A\right)^{1-\alpha}\fright]\right)^\frac{1}{1-\alpha}\fright]+\frac{\alpha}{\alpha-1}\ln r \\
	&=\lim_{\varepsilon\searrow0}-\ln\tr\fleft[\left(\sum_{x\in[r]}\left(p_x\rho_x+\varepsilon\1\right)^{1-\alpha}\right)^\frac{1}{1-\alpha}\fright]. \label{pf:closed-6}
\end{align}
Combining \eqref{pf:closed-5} and \eqref{pf:closed-6} leads to \eqref{pf:closed-1}.  Applying \eqref{pf:lautum-2} of Lemma~\ref{lem:lautum} to the right-hand side of \eqref{pf:closed-5}, we have that
\begin{align}
	&L_\alpha\fleft(X;A\fright)_{\cq{\rho}|\pi}+\frac{\alpha}{\alpha-1}\ln r \notag\\
	&=-\ln\tr\fleft[\left(\Pi_A\tr_X\fleft[\pi_X^\alpha\cq{\rho}_{XA}^{1-\alpha}\fright]\Pi_A\right)^\frac{1}{1-\alpha}\fright]+\frac{\alpha}{\alpha-1}\ln r \\
	&=-\ln\tr\fleft[\left(\Pi_A\tr_X\fleft[\sum_{x\in[r]}\frac{1}{r^\alpha}\op{x}{x}_X\otimes\left(p_x\rho_{x,A}\right)^{1-\alpha}\fright]\Pi_A\right)^\frac{1}{1-\alpha}\fright] \notag\\
	&\qquad+\frac{\alpha}{\alpha-1}\ln r \\
	&=-\ln\tr\fleft[\left(\sum_{x\in[r]}\Pi\left(p_x\rho_x\right)^{1-\alpha}\Pi\right)^\frac{1}{1-\alpha}\fright]. \label{pf:closed-7}
\end{align}
Combining \eqref{pf:closed-6} and \eqref{pf:closed-7} leads to \eqref{pf:closed-2}.
\end{proof}

\section{Barycentric converse bounds on the nonasymptotic error probabilities of hypothesis exclusion}
\label{app:nonasymptotic}

\subsection{State exclusion}
\label{app:nonasymptotic-state}

\begin{lemma}[Connection between $\sw{D}_\alpha$ and $D$~{\cite[Lemma~8]{tomamichel2009FullyQuantumAsymptotic}}]
	\label{lem:sandwiched}
	Let $\rho\in\s{D}_A$ be a state, and let $\sigma\in\s{PSD}_A$ be a positive semidefinite operator.  Then for all $\alpha\in(1,1+\frac{\ln3}{4\ln\Upsilon(\rho\|\sigma)}]$,
	\begin{align}
		\sw{D}_\alpha\fleft(\rho\middle\|\sigma\fright)&\leq D_\alpha\fleft(\rho\middle\|\sigma\fright)\leq D\fleft(\rho\middle\|\sigma\fright)+4\left(\alpha-1\right)\left(\ln\Upsilon\fleft(\rho\middle\|\sigma\fright)\right)^2,
	\end{align}
	where
	\begin{align}
		\Upsilon\fleft(\rho\middle\|\sigma\fright)&\coloneq1+\exp\left(-\frac{1}{2}D_\frac{1}{2}\fleft(\rho\middle\|\sigma\fright)\right)+\exp\left(\frac{1}{2}D_\frac{3}{2}\fleft(\rho\middle\|\sigma\fright)\right)\geq3. \label{pf:sandwiched-1}
	\end{align}
\end{lemma}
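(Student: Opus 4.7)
The plan is to treat the two inequalities separately. The first one, $\sw{D}_\alpha(\rho\|\sigma) \leq D_\alpha(\rho\|\sigma)$, is the standard order between the sandwiched and Petz--R\'enyi quantities already recorded as \eqref{eq:order-petz} in the paper; I would derive it in one line by applying the Araki--Lieb--Thirring trace inequality to the explicit expressions in \eqref{eq:sandwiched} and \eqref{eq:petz} and then taking logarithms. The genuinely new content is the second inequality, which I would obtain from a Taylor expansion of the Petz--R\'enyi divergence at $\alpha = 1$ together with an explicit bound on the remainder in terms of $\Upsilon$.

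Assuming $\supp(\rho) \subseteq \supp(\sigma)$ (otherwise both sides are $+\infty$ and the claim is trivial), define
\begin{align*}
f(\alpha) \coloneq \ln \tr\bigl[\rho^\alpha \sigma^{1-\alpha}\bigr],
\end{align*}
so that $(\alpha-1) D_\alpha(\rho\|\sigma) = f(\alpha)$ with $f(1) = 0$ and, by direct differentiation, $f'(1) = D(\rho\|\sigma)$. The key structural fact I would invoke is that $f$ is convex in $\alpha$: this is log-convexity of $z \mapsto \tr[\rho^z \sigma^{1-z}]$, obtained from H\"older's inequality or equivalently from the three-line theorem. Taylor's theorem with Lagrange remainder then yields, for each $\alpha > 1$ in the claimed range,
\begin{align*}
D_\alpha(\rho\|\sigma) - D(\rho\|\sigma) = \tfrac{\alpha-1}{2}\, f''(\xi)
\end{align*}
for some $\xi \in (1, \alpha)$, and the problem reduces to proving the uniform second-derivative estimate $f''(\xi) \leq 8 \bigl(\ln \Upsilon(\rho\|\sigma)\bigr)^2$ over $\xi$ in this range.

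The main obstacle is this uniform bound on $f''(\xi)$. I would proceed by recognizing $f''(\xi)$ as a variance-type quantity of the log-likelihood operator $\ln \rho - \ln \sigma$ weighted through the (not necessarily positive) operator $\rho^\xi \sigma^{1-\xi}$, so that it admits a squared $L^\infty$-type bound. The three summands defining $\Upsilon$ are engineered to supply exactly such a bound via two-point interpolation between $\alpha = 1/2$ and $\alpha = 3/2$: the term $\exp(\tfrac{1}{2} D_{3/2})$ controls the upper spectral tail of $\ln \rho - \ln \sigma$, while $\exp(-\tfrac{1}{2} D_{1/2})$ controls the lower tail. Convexity of $f$ then propagates these boundary estimates to every intermediate $\xi$, and the explicit upper endpoint $1 + \tfrac{\ln 3}{4 \ln \Upsilon}$ is calibrated precisely so that $f(\xi)$ remains of order $\ln 3$, keeping the tilted normalization under control and yielding the final constant $4$. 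Carrying out this constant tracking, following the argument of Ref.~\cite{tomamichel2009FullyQuantumAsymptotic}, is the technical core.
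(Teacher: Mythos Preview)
The paper does not prove this lemma at all: it is stated with attribution to \cite[Lemma~8]{tomamichel2009FullyQuantumAsymptotic} and used as a black box in the proof of Proposition~\ref{prop:nonasymptotic-euclidean}. There is therefore no ``paper's own proof'' to compare against. Your outline --- Araki--Lieb--Thirring for the first inequality, a second-order Taylor expansion of $f(\alpha)=\ln\tr[\rho^\alpha\sigma^{1-\alpha}]$ at $\alpha=1$ combined with a uniform bound on $f''$ in terms of $\Upsilon$ for the second --- is precisely the method of the cited reference, so your plan is on target; just be aware that you are reproducing the proof of \cite{tomamichel2009FullyQuantumAsymptotic} rather than supplying something the present paper omitted by accident.
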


\begin{proposition}[Log-Euclidean converse bound on the nonasymptotic error probability]
\label{prop:nonasymptotic-euclidean}
Let $\en{E}\equiv(p_{[r]},\rho_{[r]})$ be an ensemble of states with $p_{[r]}\in\itr(\s{P}_r)$ and $\rho_{[r]}\in\s{D}_A^{[r]}$.  Then for every positive integer $n$,
\begin{align}
	&-\frac{1}{n}\ln P_\abb{err}\fleft(\en{E}^n\fright) \notag\\
	&\leq C^\flat\fleft(\rho_{[r]}\fright)+\frac{1}{\sqrt{n}}\left(\ln\Upsilon_{\max}\fleft(\rho_{[r]}\fright)\right)\left(\ln3+\frac{5}{\ln3}\ln\left(\frac{1}{p_{\min}}\right)\right), \label{pf:nonasymptotic-euclidean-1}
\end{align}
where
\begin{align}
	\Upsilon_{\max}\fleft(\rho_{[r]}\fright)&\coloneq1+\max_{x\in[r]}\left(\tr\fleft[\rho_x^\frac{1}{2}\fright]+\tr\fleft[\rho_x^{-\frac{1}{2}}\fright]\right), \label{pf:nonasymptotic-euclidean-2}
\end{align}
and $p_{\min}\equiv\min_{x\in[r]}p_x$.
\end{proposition}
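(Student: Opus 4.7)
The plan is to specialize the single-letter converse bound \eqref{pf:asymptotic-euclidean-3} already derived inside the proof of Theorem~\ref{thm:asymptotic-state}, namely
\begin{align*}
-\frac{1}{n}\ln P_\abb{err}\fleft(\en{E}^n\fright)&\leq\sup_{s_{[r]}\in\s{P}_r}\inf_{\tau\in\aff\fleft(\s{D}_A\fright)}\sum_{x\in[r]}s_x\sw{D}_\alpha\fleft(\tau\middle\|\rho_x\fright)+\frac{\alpha}{n\fleft(\alpha-1\fright)}\ln\fleft(\frac{1}{p_{\min}}\fright),
\end{align*}
valid for every $\alpha\in(1,+\infty)$, and then to convert each sandwiched Rényi term to the Umegaki divergence via Lemma~\ref{lem:sandwiched} in a way that is uniform in $s_{[r]}$. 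Without loss of generality I would assume each $\rho_x$ has full support, since otherwise $\Upsilon_{\max}(\rho_{[r]})=+\infty$ and the claim is vacuous. Restricting the infimum to $\tau\in\s{D}_A$, I would substitute the log-Euclidean barycenter $\omega_s\coloneq\exp(\sum_{x\in[r]} s_x\ln\rho_x)/\tr[\exp(\sum_{x\in[r]} s_x\ln\rho_x)]$, which a direct computation (complete the square in $\tau$) identifies as the unique minimizer of $\tau\mapsto\sum_{x\in[r]} s_x D(\tau\|\rho_x)$ over $\s{D}_A$; hence $\sum_{x\in[r]} s_x D(\omega_s\|\rho_x)\leq C^\flat(\rho_{[r]})$ by the minimax representation \eqref{eq:radius-2}--\eqref{eq:euclidean-radius}.

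The technical heart is a uniform control of $\Upsilon(\omega_s\|\rho_x)$. Expanding the definition \eqref{pf:sandwiched-1}, one has
\begin{align*}
\Upsilon\fleft(\omega_s\middle\|\rho_x\fright)&=1+\tr\fleft[\omega_s^\frac{1}{2}\rho_x^\frac{1}{2}\fright]+\tr\fleft[\omega_s^\frac{3}{2}\rho_x^{-\frac{1}{2}}\fright].
\end{align*}
Because $\omega_s$ is a state, $\|\omega_s^{1/2}\|_\infty\leq1$ and $\|\omega_s^{3/2}\|_\infty\leq1$, so the operator Hölder inequality $\tr[XY]\leq\|X\|_\infty\tr[Y]$ for $X,Y\geq0$ gives $\tr[\omega_s^{1/2}\rho_x^{1/2}]\leq\tr[\rho_x^{1/2}]$ and $\tr[\omega_s^{3/2}\rho_x^{-1/2}]\leq\tr[\rho_x^{-1/2}]$. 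Taking the maximum over $x\in[r]$ yields $\Upsilon(\omega_s\|\rho_x)\leq\Upsilon_{\max}(\rho_{[r]})$ for every $s_{[r]}\in\s{P}_r$ and every $x\in[r]$. This uniform estimate both validates the range $\alpha\in(1,1+\frac{\ln3}{4\ln\Upsilon_{\max}}]$ of Lemma~\ref{lem:sandwiched} independently of $s_{[r]}$ and reduces the second-order term to $4(\alpha-1)(\ln\Upsilon_{\max})^2$, leading to
\begin{align*}
-\frac{1}{n}\ln P_\abb{err}\fleft(\en{E}^n\fright)&\leq C^\flat\fleft(\rho_{[r]}\fright)+4\fleft(\alpha-1\fright)\fleft(\ln\Upsilon_{\max}\fright)^2+\frac{\alpha}{n\fleft(\alpha-1\fright)}\ln\fleft(\frac{1}{p_{\min}}\fright).
\end{align*}

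To finish, I would balance the two correction terms by setting $\alpha-1=\frac{\ln3}{4\sqrt{n}\ln\Upsilon_{\max}}$, which lies in the allowed range for all $n\geq1$. The first correction becomes $\frac{\ln3\cdot\ln\Upsilon_{\max}}{\sqrt{n}}$, and the prior-dependent term evaluates to $\bigl(\frac{4\ln\Upsilon_{\max}}{\sqrt{n}\ln3}+\frac{1}{n}\bigr)\ln(1/p_{\min})$. The eigenvalue-wise AM--GM estimate $\sqrt\lambda+1/\sqrt\lambda\geq2$ forces $\Upsilon_{\max}(\rho_{[r]})\geq1+2d_A\geq3$, so $\frac{1}{n}\leq\frac{\ln\Upsilon_{\max}}{\sqrt{n}\ln3}$ for every $n\geq1$, which absorbs the residual $\frac{\ln(1/p_{\min})}{n}$ into the $O(1/\sqrt{n})$ term and produces the advertised coefficient $\ln3+\frac{5}{\ln3}\ln(1/p_{\min})$. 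The main obstacle is precisely the uniform control of $\Upsilon(\omega_s\|\rho_x)$ in Step~2: without it, the range constraint of Lemma~\ref{lem:sandwiched} would in principle compel $\alpha-1$ to shrink faster than $1/\sqrt{n}$ as $s_{[r]}$ varies over $\s{P}_r$, destroying the claimed $O(1/\sqrt{n})$ rate and preventing the final correction from being expressed in terms of a quantity intrinsic to $\rho_{[r]}$.
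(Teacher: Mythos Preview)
Your approach is essentially the paper's: both start from \eqref{pf:asymptotic-euclidean-3}, restrict the infimum to states, bound $\Upsilon(\tau\|\rho_x)\leq\Upsilon_{\max}$ uniformly via $\tau^\beta\leq\1$, invoke Lemma~\ref{lem:sandwiched}, and balance $\alpha-1\propto1/\sqrt{n}$. The only stylistic difference is that the paper first swaps $\sup_{s}$ and $\inf_{\tau}$ via Lemma~\ref{lem:radius-alternative} and then works with a generic $\tau\in\s{D}_A'$, whereas you plug in the explicit log-Euclidean barycenter $\omega_s$; both routes arrive at the same intermediate inequality.

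There is, however, a genuine slip in your handling of the rank-deficient case. Under the paper's convention (Section~\ref{sec:notation}), $\rho_x^{-1/2}$ is computed on $\supp(\rho_x)$, so $\tr[\rho_x^{-1/2}]$ and hence $\Upsilon_{\max}(\rho_{[r]})$ are always \emph{finite}. Your escape clause ``otherwise $\Upsilon_{\max}=+\infty$ and the claim is vacuous'' therefore does not fire, and the case where some $\rho_x$ is not full rank but $\bigcap_{x}\supp(\rho_x)\neq\{0\}$ is left unproved. The fix is immediate and matches the paper: treat the case $\bigcap_{x}\supp(\rho_x)=\{0\}$ separately (there $C^\flat=+\infty$), and otherwise take $\omega_s$ supported on the intersection (e.g.\ via the projected formula~\eqref{eq:euclidean-projection}); your $\Upsilon$ estimate and the remainder of the argument then go through verbatim. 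Likewise, your intermediate bound $\Upsilon_{\max}\geq1+2d_A$ assumes full rank; in general one only gets $\Upsilon_{\max}\geq1+2\min_x\mathrm{rank}(\rho_x)\geq3$, which is still all you need.
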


\begin{proof}
If $\bigwedge_{x\in[r]}\rho_x^0=0$, then the right-hand side of \eqref{pf:nonasymptotic-euclidean-1} is equal to $\infty$.  Henceforth we assume that $\bigwedge_{x\in[r]}\rho_x^0\neq0$.  For every state $\tau\in\s{D}_A$ such that $\tau^0\leq\bigwedge_{x\in[r]}\rho_x^0$ and all $x\in[r]$, it follows from \eqref{pf:sandwiched-1} and the definition of the Petz--Rényi divergence (see \eqref{eq:petz}) that
\begin{align}
	\Upsilon\fleft(\tau\middle\|\rho_x\fright)&=1+\tr\fleft[\tau^\frac{1}{2}\rho_x^\frac{1}{2}\fright]+\tr\fleft[\tau^\frac{3}{2}\rho_x^{-\frac{1}{2}}\fright] \\
	&\leq1+\max_{x\in[r]}\left(\tr\fleft[\tau^\frac{1}{2}\rho_x^\frac{1}{2}\fright]+\tr\fleft[\tau^\frac{3}{2}\rho_x^{-\frac{1}{2}}\fright]\right) \\
	&\leq1+\max_{x\in[r]}\left(\tr\fleft[\rho_x^\frac{1}{2}\fright]+\tr\fleft[\rho_x^{-\frac{1}{2}}\fright]\right) \label{pf:nonasymptotic-euclidean-3}\\
	&=\Upsilon_{\max}\fleft(\rho_{[r]}\fright). \label{pf:nonasymptotic-euclidean-4}
\end{align}
Here \eqref{pf:nonasymptotic-euclidean-3} follows from the fact that $\tau^\alpha\leq\1$ for all $\alpha\in[0,\infty)$; Eq.~\eqref{pf:nonasymptotic-euclidean-4} follows from \eqref{pf:nonasymptotic-euclidean-2}.  Then for every positive integer $n$ and all $\alpha\in(1,1+\frac{\ln3}{4\ln\Upsilon_{\max}(\rho_{[r]})}]$, it follows from \eqref{pf:asymptotic-euclidean-3} that
\begin{align}
	&-\frac{1}{n}\ln P_\abb{err}\fleft(\en{E}^n\fright) \notag\\
	&\leq\sup_{s_{[r]}\in\s{P}_r}\inf_{\tau\in\aff\fleft(\s{D}_A\fright)}\sum_{x\in[r]}s_x\sw{D}_\alpha\fleft(\tau\middle\|\rho_x\fright)+\frac{\alpha}{n\left(\alpha-1\right)}\ln\left(\frac{1}{p_{\min}}\right) \\
	&\leq\sup_{s_{[r]}\in\s{P}_r}\inf_{\tau\in\s{D}_A}\sum_{x\in[r]}s_x\sw{D}_\alpha\fleft(\tau\middle\|\rho_x\fright)+\frac{\alpha}{n\left(\alpha-1\right)}\ln\left(\frac{1}{p_{\min}}\right) \label{pf:nonasymptotic-euclidean-5}\\
	&=\inf_{\tau\in\s{D}_A}\sup_{s_{[r]}\in\s{P}_r}\sum_{x\in[r]}s_x\sw{D}_\alpha\fleft(\tau\middle\|\rho_x\fright)+\frac{\alpha}{n\left(\alpha-1\right)}\ln\left(\frac{1}{p_{\min}}\right) \label{pf:nonasymptotic-euclidean-6}\\
	&=\inf_{\tau\in\s{D}_A'}\sup_{s_{[r]}\in\s{P}_r}\sum_{x\in[r]}s_x\sw{D}_\alpha\fleft(\tau\middle\|\rho_x\fright)+\frac{\alpha}{n\left(\alpha-1\right)}\ln\left(\frac{1}{p_{\min}}\right) \label{pf:nonasymptotic-euclidean-7}\\
	&\leq\inf_{\tau\in\s{D}_A'}\sup_{s_{[r]}\in\s{P}_r}\sum_{x\in[r]}s_x\left(D\fleft(\tau\middle\|\rho_x\fright)+4\left(\alpha-1\right)\left(\ln\Upsilon\fleft(\tau\middle\|\rho_x\fright)\right)^2\right) \notag\\
	&\qquad+\frac{\alpha}{n\left(\alpha-1\right)}\ln\left(\frac{1}{p_{\min}}\right) \label{pf:nonasymptotic-euclidean-8}\\
	&\leq\inf_{\tau\in\s{D}_A'}\sup_{s_{[r]}\in\s{P}_r}\sum_{x\in[r]}s_xD\fleft(\tau\middle\|\rho_x\fright)+4\left(\alpha-1\right)\left(\ln\Upsilon_{\max}\fleft(\rho_{[r]}\fright)\right)^2 \notag\\
	&\qquad+\frac{\alpha}{n\left(\alpha-1\right)}\ln\left(\frac{1}{p_{\min}}\right) \label{pf:nonasymptotic-euclidean-9}\\
	&=\inf_{\tau\in\s{D}_A}\sup_{s_{[r]}\in\s{P}_r}\sum_{x\in[r]}s_xD\fleft(\tau\middle\|\rho_x\fright)+4\left(\alpha-1\right)\left(\ln\Upsilon_{\max}\fleft(\rho_{[r]}\fright)\right)^2 \notag\\
	&\qquad+\frac{\alpha}{n\left(\alpha-1\right)}\ln\left(\frac{1}{p_{\min}}\right) \\
	&=C^\flat\fleft(\rho_{[r]}\fright)+4\left(\alpha-1\right)\left(\ln\Upsilon_{\max}\fleft(\rho_{[r]}\fright)\right)^2 \notag\\
	&\qquad+\frac{\alpha}{n\left(\alpha-1\right)}\ln\left(\frac{1}{p_{\min}}\right), \label{pf:nonasymptotic-euclidean-10}
\end{align}
where in \eqref{pf:nonasymptotic-euclidean-7}--\eqref{pf:nonasymptotic-euclidean-9} we denote $\s{D}_A'\equiv\{\tau\in\s{D}_A\colon\tau^0\leq\bigwedge_{x\in[r]}\rho_x^0\}$.  Here \eqref{pf:nonasymptotic-euclidean-6} follows from applying Lemma~\ref{lem:radius-alternative} to the left sandwiched Rényi radius and \eqref{eq:radius-2}; Eq.~\eqref{pf:nonasymptotic-euclidean-8} follows from Lemma~\ref{lem:sandwiched} and \eqref{pf:nonasymptotic-euclidean-4}; Eq.~\eqref{pf:nonasymptotic-euclidean-9} follows from \eqref{pf:nonasymptotic-euclidean-4}; Eq.~\eqref{pf:nonasymptotic-euclidean-10} follows from the equality between the left Umegaki radius and the multivariate log-Euclidean Chernoff divergence (see \eqref{eq:radius-2} and \eqref{eq:euclidean-radius}).  Inserting $\alpha\coloneq1+\frac{\ln3}{4\ln\Upsilon_{\max}(\rho_{[r]})\sqrt{n}}\in(1,1+\frac{\ln3}{4\ln\Upsilon_{\max}(\rho_{[r]})}]$ to \eqref{pf:nonasymptotic-euclidean-10}, we obtain that
\begin{align}
	&-\frac{1}{n}\ln P_\abb{err}\fleft(\en{E}^n\fright) \notag\\
	&\leq C^\flat\fleft(\rho_{[r]}\fright)+4\left(\alpha-1\right)\left(\ln\Upsilon_{\max}\fleft(\rho_{[r]}\fright)\right)^2+\frac{\alpha}{n\left(\alpha-1\right)}\ln\left(\frac{1}{p_{\min}}\right) \\
	&=C^\flat\fleft(\rho_{[r]}\fright)+\frac{1}{\sqrt{n}}\left(\ln3\ln\Upsilon_{\max}\fleft(\rho_{[r]}\fright)\vphantom{\left(\frac{4\ln\Upsilon_{\max}\fleft(\rho_{[r]}\fright)}{\ln3}+\frac{1}{\sqrt{n}}\right)\ln\left(\frac{1}{p_{\min}}\right)}\right. \notag\\
	&\qquad\left.\vphantom{}+\left(\frac{4\ln\Upsilon_{\max}\fleft(\rho_{[r]}\fright)}{\ln3}+\frac{1}{\sqrt{n}}\right)\ln\left(\frac{1}{p_{\min}}\right)\right) \\
	&\leq C^\flat\fleft(\rho_{[r]}\fright)+\frac{1}{\sqrt{n}}\ln\Upsilon_{\max}\fleft(\rho_{[r]}\fright)\left(\ln3+\frac{5}{\ln3}\ln\left(\frac{1}{p_{\min}}\right)\right). \label{pf:nonasymptotic-euclidean-11}
\end{align}
Here \eqref{pf:nonasymptotic-euclidean-11} follows from the fact that $\Upsilon_{\max}(\rho_{[r]})\geq3$ due to \eqref{pf:nonasymptotic-euclidean-4} and \eqref{pf:sandwiched-1}.
\end{proof}

\subsection{Channel exclusion}
\label{app:nonasymptotic-channel}

Recall that the geometric Rényi divergence for $\alpha\in(1,2]$ is defined in \eqref{eq:geometric}.  It can be defined for $\alpha\in(0,1)$ as well~\cite{matsumoto2018NewQuantumVersion,katariya2021GeometricDistinguishabilityMeasures}, and we still denote it by $\g{D}_\alpha(\rho\|\sigma)$ for this range.

\begin{lemma}[Connection between the channel $\g{D}_\alpha$ and $\g{D}$~{\cite[Lemma~34 and Eq.~(462)]{ding2023BoundingForwardClassical}}]
\label{lem:geometric}
Let $\ch{N}\in\s{C}_{A\to B}$ be a channel, and let $\ch{M}\in\s{CP}_{A\to B}$ be a completely positive map.  Then for all $\alpha\in(1,1+\frac{\ln3}{4\ln\g{\Upsilon}(\ch{N}\|\ch{M})}]$,
\begin{align}
	\g{D}_\alpha\fleft(\ch{N}\middle\|\ch{M}\fright)&\leq \g{D}\fleft(\ch{N}\middle\|\ch{M}\fright)+4\left(\alpha-1\right)\left(\ln\g{\Upsilon}\fleft(\ch{N}\middle\|\ch{M}\fright)\right)^2,
\end{align}
where
\begin{align}
	&\g{\Upsilon}\fleft(\ch{N}\middle\|\ch{M}\fright) \notag\\
	&\coloneq\sup_{\rho_{RA}}\left(1+\exp\left(-\frac{1}{2}\g{D}_\frac{1}{2}\fleft(\ch{N}_{A\to B}\fleft[\rho_{RA}\fright]\middle\|\ch{M}_{A\to B}\fleft[\rho_{RA}\fright]\fright)\right)\right. \notag\\
	&\qquad\left.\vphantom{}+\exp\left(\frac{1}{2}\g{D}_\frac{3}{2}\fleft(\ch{N}_{A\to B}\fleft[\rho_{RA}\fright]\middle\|\ch{M}_{A\to B}\fleft[\rho_{RA}\fright]\fright)\right)\right) \label{pf:geometric-1}\\
	&\geq3, \label{pf:geometric-2}
\end{align}
and $R$ is a system with $d_R$ allowed to be arbitrarily large.
\end{lemma}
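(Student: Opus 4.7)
The plan is to reduce this channel-level bound to its state-level analog and then lift it by the supremum definition of the channel divergence. First, I would establish the following state-level inequality (the exact geometric counterpart of Lemma~\ref{lem:sandwiched}): for every state $\rho$ and positive semidefinite operator $\sigma$ with $\supp(\rho)\subseteq\supp(\sigma)$, and every $\alpha\in(1,1+\frac{\ln 3}{4\ln\Upsilon_\star(\rho\|\sigma)}]$,
\begin{align*}
\g{D}_\alpha\fleft(\rho\middle\|\sigma\fright)&\leq\g{D}\fleft(\rho\middle\|\sigma\fright)+4\left(\alpha-1\right)\left(\ln\Upsilon_\star\fleft(\rho\middle\|\sigma\fright)\right)^2,
\end{align*}
where $\Upsilon_\star(\rho\|\sigma)\equiv 1+\exp(-\frac{1}{2}\g{D}_{1/2}(\rho\|\sigma))+\exp(\frac{1}{2}\g{D}_{3/2}(\rho\|\sigma))$. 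The proof of this state-level bound proceeds by a Taylor expansion of the analytic map $\alpha\mapsto(\alpha-1)\g{D}_\alpha(\rho\|\sigma)=\ln\tr[\sigma(\sigma^{-1/2}\rho\sigma^{-1/2})^\alpha]$ around $\alpha=1$: the value at $\alpha=1$ vanishes, the first derivative at $\alpha=1$ yields the Belavkin--Staszewski divergence $\g{D}(\rho\|\sigma)$, and the second derivative on the relevant interval is bounded in terms of the quasi-divergences at $\alpha=1/2$ and $\alpha=3/2$ via a Hölder-type estimate, which is where the $\Upsilon_\star$ factor emerges.

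Next, I would lift this to channels using the definition $\g{D}_\alpha(\ch{N}\|\ch{M})=\sup_\rho\g{D}_\alpha(\ch{N}[\rho]\|\ch{M}[\rho])$. For every bipartite state $\rho_{RA}$ with $\supp(\ch{N}[\rho])\subseteq\supp(\ch{M}[\rho])$, applying the state-level inequality to the pair $(\ch{N}_{A\to B}[\rho_{RA}],\ch{M}_{A\to B}[\rho_{RA}])$ gives
\begin{align*}
\g{D}_\alpha\fleft(\ch{N}\fleft[\rho\fright]\middle\|\ch{M}\fleft[\rho\fright]\fright)&\leq\g{D}\fleft(\ch{N}\fleft[\rho\fright]\middle\|\ch{M}\fleft[\rho\fright]\fright)+4\left(\alpha-1\right)\left(\ln\Upsilon_\star\fleft(\ch{N}\fleft[\rho\fright]\middle\|\ch{M}\fleft[\rho\fright]\fright)\right)^2.
\end{align*}
From the definition~\eqref{pf:geometric-1}, $\Upsilon_\star(\ch{N}[\rho]\|\ch{M}[\rho])\leq\g{\Upsilon}(\ch{N}\|\ch{M})$ for every such $\rho$, so the hypothesis on $\alpha$ is satisfied uniformly whenever $\alpha\in(1,1+\frac{\ln 3}{4\ln\g{\Upsilon}(\ch{N}\|\ch{M})}]$. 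Bounding the residual term uniformly by $4(\alpha-1)(\ln\g{\Upsilon}(\ch{N}\|\ch{M}))^2$ and the first term by $\g{D}(\ch{N}\|\ch{M})$ via~\eqref{eq:belavkin-channel}, and then taking the supremum over $\rho$ on the left-hand side via~\eqref{eq:geometric-channel}, yields the claimed channel-level inequality. The inequality $\g{\Upsilon}(\ch{N}\|\ch{M})\geq 3$ in~\eqref{pf:geometric-2} follows because the summand satisfies $\exp(-\frac{1}{2}\g{D}_{1/2})+\exp(\frac{1}{2}\g{D}_{3/2})\geq 2$ by AM--GM together with the monotonicity $\g{D}_{3/2}\geq\g{D}_{1/2}$ (see~\eqref{eq:monotonicity-geometric}), so the supremand is at least $3$ for every $\rho$.

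The main obstacle is the state-level Taylor analysis, specifically identifying the second derivative of $\alpha\mapsto\ln\tr[\sigma(\sigma^{-1/2}\rho\sigma^{-1/2})^\alpha]$ as a controlled combination of spectral moments of $\sigma^{-1/2}\rho\sigma^{-1/2}$ against $\sigma$, and bounding it by a quantity involving $\g{Q}_{1/2}(\rho\|\sigma)$ and $\g{Q}_{3/2}(\rho\|\sigma)$ via Hölder's inequality. In contrast to the sandwiched case handled in Ref.~\cite{tomamichel2009FullyQuantumAsymptotic}, the geometric case is actually more transparent on a single Hilbert space because moments of a single positive operator (rather than a twisted product) appear, so the classical-like analysis in Ref.~\cite[Lemma~8]{tomamichel2009FullyQuantumAsymptotic} transports essentially verbatim. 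Once that ingredient is secured, the channel lift is immediate from the uniform definition of $\g{\Upsilon}(\ch{N}\|\ch{M})$.
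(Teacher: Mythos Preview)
The paper does not supply its own proof of this lemma; it is stated with a citation to Ref.~\cite[Lemma~34 and Eq.~(462)]{ding2023BoundingForwardClassical} and used as a black box in the proof of Proposition~\ref{prop:nonasymptotic-barycentric}. So there is no in-paper argument to compare against.

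That said, your reconstruction is the natural one and matches the structure of the cited reference: the state-level bound is obtained by a second-order Taylor analysis of $\alpha\mapsto\ln\tr[\sigma(\sigma^{-1/2}\rho\sigma^{-1/2})^\alpha]$ around $\alpha=1$, with the remainder controlled via the quasi-divergences at orders $1/2$ and $3/2$ exactly as in the Petz case of Ref.~\cite[Lemma~8]{tomamichel2009FullyQuantumAsymptotic}; the channel lift then follows because $\g{\Upsilon}(\ch{N}\|\ch{M})$ dominates $\Upsilon_\star(\ch{N}[\rho]\|\ch{M}[\rho])$ uniformly in $\rho$, so the $\alpha$-range constraint and the error term can both be bounded independently of the input. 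Your observation that the geometric case is cleaner than the sandwiched one---because the relevant function is the moment generating function of a \emph{single} positive operator $\sigma^{-1/2}\rho\sigma^{-1/2}$ against the measure $\sigma$---is correct and is precisely why the Tomamichel--Colbeck--Renner argument transports without modification. The AM--GM justification of \eqref{pf:geometric-2} is also fine.
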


\begin{proposition}[Barycentric converse bound on the nonasymptotic error probability for channels]
\label{prop:nonasymptotic-barycentric}
Let $\en{N}\equiv(p_{[r]},\ch{N}_{[r]})$ be an ensemble of channels with $p_{[r]}\in\itr(\s{P}_r)$ and $\ch{N}_{[r]}\in\s{C}_{A\to B}^{[r]}$.  Then for every positive integer $n$,
\begin{align}
	&-\frac{1}{n}\ln P_\abb{err}\fleft(n;\en{N}\fright) \notag\\
	&\leq R^\g{D}\fleft(\ch{N}_{[r]}\fright)+\frac{1}{\sqrt{n}}\left(\ln\g{\Upsilon}_{\max}\fleft(\ch{N}_{[r]}\fright)\right)\left(\ln3+\frac{5}{\ln3}\ln\left(\frac{1}{p_{\min}}\right)\right), \label{pf:nonasymptotic-barycentric-1}
\end{align}
where
\begin{align}
	\g{\Upsilon}_{\max}\fleft(\ch{N}_{[r]}\fright)&\coloneq2+d_A^\frac{3}{2}\max_{x\in[r]}\left\lVert\tr_B\fleft[J_{\ch{N}_x}^{-\frac{1}{2}}\fright]\right\rVert_\infty, \label{pf:nonasymptotic-barycentric-2}
\end{align}
and $p_{\min}\equiv\min_{x\in[r]}p_x$.
\end{proposition}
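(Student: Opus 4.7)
The plan is to follow the same template as the proof of Proposition~\ref{prop:nonasymptotic-euclidean}, promoted from the state level to the channel level: replace the role of the extended sandwiched Rényi divergence by that of the geometric Rényi channel divergence, and swap in Lemma~\ref{lem:geometric} for Lemma~\ref{lem:sandwiched}. The starting point is Proposition~\ref{prop:nonasymptotic-channel} combined with Lemma~\ref{lem:radius-channel-alternative}, which rewrites the converse bound as
\begin{align*}
    -\frac{1}{n}\ln P_\abb{err}\fleft(n;\en{N}\fright)&\leq R^{\g{D}_\alpha}\fleft(\ch{N}_{[r]}\fright)+\frac{\alpha}{n\left(\alpha-1\right)}\ln\left(\frac{1}{p_{\min}}\right)\quad\forall\alpha\in(1,2].
\end{align*}
The task then reduces to showing $R^{\g{D}_\alpha}(\ch{N}_{[r]})\leq R^\g{D}(\ch{N}_{[r]})+O(\alpha-1)$ with an explicit constant, and then tuning $\alpha$ appropriately.

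To implement this, I would first restrict the infimization defining $R^{\g{D}_\alpha}(\ch{N}_{[r]})$ to the subset $\s{C}_{A\to B}'\equiv\{\ch{T}\in\s{C}_{A\to B}\colon\supp(J_\ch{T})\subseteq\bigcap_{x\in[r]}\supp(J_{\ch{N}_x})\}$, on which the infimum is already attained due to the support condition of $\g{D}_\alpha$. On this subset, I would establish a uniform bound $\g{\Upsilon}(\ch{T}\|\ch{N}_x)\leq\g{\Upsilon}_{\max}(\ch{N}_{[r]})$ for all $x\in[r]$, using (i) the nonnegativity of $\g{D}_{1/2}$ between output states to bound the $\exp(-\frac{1}{2}\g{D}_{1/2})$ summand by $1$, and (ii) the chain rule \eqref{eq:chain} together with the Choi-operator formula \eqref{eq:geometric-channel-choi} and the inequality $\|J_\ch{T}\|_\infty\leq d_A$ (valid for every channel since $J_\ch{T}/d_A$ is a state) to bound the $\exp(\frac{1}{2}\g{D}_{3/2})$ summand by $d_A^{3/2}\|\tr_B[J_{\ch{N}_x}^{-1/2}]\|_\infty$. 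Lemma~\ref{lem:geometric} then yields $\g{D}_\alpha(\ch{T}\|\ch{N}_x)\leq\g{D}(\ch{T}\|\ch{N}_x)+4(\alpha-1)(\ln\g{\Upsilon}_{\max}(\ch{N}_{[r]}))^2$ uniformly in $\ch{T}\in\s{C}_{A\to B}'$ and $x\in[r]$, provided $\alpha\in(1,1+\frac{\ln3}{4\ln\g{\Upsilon}_{\max}(\ch{N}_{[r]})}]$. Taking the supremum over $s_{[r]}\in\s{P}_r$, infimizing over $\ch{T}\in\s{C}_{A\to B}'$, and applying Lemma~\ref{lem:radius-channel-alternative} once more to $\g{D}$ gives $R^{\g{D}_\alpha}(\ch{N}_{[r]})\leq R^\g{D}(\ch{N}_{[r]})+4(\alpha-1)(\ln\g{\Upsilon}_{\max}(\ch{N}_{[r]}))^2$. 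Finally, substituting $\alpha=1+\frac{\ln3}{4\ln\g{\Upsilon}_{\max}(\ch{N}_{[r]})\sqrt{n}}$, which lies in the admissible range since $\g{\Upsilon}_{\max}(\ch{N}_{[r]})\geq3$ by \eqref{pf:geometric-2}, and replicating the arithmetic of \eqref{pf:nonasymptotic-euclidean-11} yields the desired $O(1/\sqrt{n})$ correction.

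The main obstacle is the uniform estimate $\g{\Upsilon}(\ch{T}\|\ch{N}_x)\leq\g{\Upsilon}_{\max}(\ch{N}_{[r]})$. Unlike the state-level analog in Proposition~\ref{prop:nonasymptotic-euclidean}, where the bound $\Upsilon(\tau\|\rho_x)\leq\Upsilon_{\max}(\rho_{[r]})$ followed immediately from the trivial inequality $\tau^s\leq\1$, the channel-level bound must contend with a supremum over bipartite input states with arbitrarily large reference dimension and with the fact that $x\mapsto x^{3/2}$ is not operator monotone. One must therefore first invoke the chain rule \eqref{eq:chain} to eliminate the input-state dependence of $\g{D}_{3/2}$, and then manipulate the Choi-operator formula via pointwise inequalities of the form $X^{3/2}\leq\|X\|_\infty^{1/2}X$ (valid in the eigenbasis of positive $X$) to extract the final dependence on $\|\tr_B[J_{\ch{N}_x}^{-1/2}]\|_\infty$. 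This computation is the technical heart of the proof, and it is what dictates the explicit form of $\g{\Upsilon}_{\max}$ in \eqref{pf:nonasymptotic-barycentric-2}.
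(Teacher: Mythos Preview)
Your proposal is correct and follows essentially the same route as the paper: start from Proposition~\ref{prop:nonasymptotic-channel}, rewrite via Lemma~\ref{lem:radius-channel-alternative}, restrict to $\s{C}_{A\to B}'$, establish a uniform bound $\g{\Upsilon}(\ch{T}\|\ch{N}_x)\leq\g{\Upsilon}_{\max}(\ch{N}_{[r]})$, invoke Lemma~\ref{lem:geometric}, and optimize $\alpha$ exactly as in \eqref{pf:nonasymptotic-euclidean-11}. Two small remarks: (i) the chain rule \eqref{eq:chain} is not needed for the $\g{D}_{3/2}$ term, since the supremum over $\rho_{RA}$ in \eqref{pf:geometric-1} is, after bounding the $\g{D}_{1/2}$ summand by $1$, literally the definition of $\g{D}_{3/2}(\ch{T}\|\ch{N}_x)$; (ii) your proposed inequality $X^{3/2}\leq\|X\|_\infty^{1/2}X$ followed by $\tr_B[J_\ch{T}]=\1_A$ yields $\g{\Upsilon}\leq2+d_A^{1/2}\|J_{\ch{N}_x}^{-1/2}\|_\infty$, which is a valid uniform bound but not the specific constant in \eqref{pf:nonasymptotic-barycentric-2}. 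The paper also does not spell out the operator manipulation behind \eqref{pf:nonasymptotic-barycentric-2}, citing only $J_\ch{T}\leq d_A\1$, so this is a cosmetic discrepancy rather than a gap in your argument.
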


\begin{proof}
If $\bigwedge_{x\in[r]}J_{\ch{N}_x}^0=0$, then the right-hand side of \eqref{pf:nonasymptotic-barycentric-1} is equal to $\infty$.  Henceforth we assume that $\bigwedge_{x\in[r]}J_{\ch{N}_x}^0\neq0$.  For every channel $\ch{T}\in\s{C}_{A\to B}$ such that $J_\ch{T}^0\leq\bigwedge_{x\in[r]}J_{\ch{N}_x}^0$ and all $x\in[r]$, it follows from \eqref{pf:geometric-1} that
\begin{align}
	&\g{\Upsilon}\fleft(\ch{T}\middle\|\ch{N}_x\fright) \notag\\
	&=\sup_{\rho_{RA}}\left(1+\exp\left(-\frac{1}{2}\g{D}_\frac{1}{2}\fleft(\ch{T}_{A\to B}\fleft[\rho_{RA}\fright]\middle\|\ch{N}_{x,A\to B}\fleft[\rho_{RA}\fright]\fright)\right)\right. \notag\\
	&\qquad\left.\vphantom{}+\exp\left(\frac{1}{2}\g{D}_\frac{3}{2}\fleft(\ch{T}_{A\to B}\fleft[\rho_{RA}\fright]\middle\|\ch{N}_{x,A\to B}\fleft[\rho_{RA}\fright]\fright)\right)\right) \\
	&\leq\sup_{\rho_{RA}}\left(2+\exp\left(\frac{1}{2}\g{D}_\frac{3}{2}\fleft(\ch{T}_{A\to B}\fleft[\rho_{RA}\fright]\middle\|\ch{N}_{x,A\to B}\fleft[\rho_{RA}\fright]\fright)\right)\right) \label{pf:nonasymptotic-barycentric-3}\\
	&=2+\exp\left(\frac{1}{2}\g{D}_\frac{3}{2}\fleft(\ch{T}\middle\|\ch{N}_x\fright)\right) \label{pf:nonasymptotic-barycentric-4}\\
	&=2+\left\lVert\tr_B\fleft[J_{\ch{N}_x}^\frac{1}{2}\left(J_{\ch{N}_x}^{-\frac{1}{2}}J_\ch{T}J_{\ch{N}_x}^{-\frac{1}{2}}\right)^\frac{3}{2}J_{\ch{N}_x}^\frac{1}{2}\fright]\right\rVert_\infty \label{pf:nonasymptotic-barycentric-5}\\
	&\leq2+d_A^\frac{3}{2}\max_{x\in[r]}\left\lVert\tr_B\fleft[J_{\ch{N}_x}^{-\frac{1}{2}}\fright]\right\rVert_\infty \label{pf:nonasymptotic-barycentric-6}\\
	&=\g{\Upsilon}_{\max}\fleft(\rho_{[r]}\fright). \label{pf:nonasymptotic-barycentric-7}
\end{align}
Here \eqref{pf:nonasymptotic-barycentric-3} follows from the fact that $\g{D}_\frac{1}{2}(\rho\|\sigma)\geq0$ for all $\rho,\sigma\in\s{D}_{RB}$; Eq.~\eqref{pf:nonasymptotic-barycentric-4} follows from the definition of the geometric Rényi channel divergence (see \eqref{eq:geometric-channel}); Eq.~\eqref{pf:nonasymptotic-barycentric-5} follows from the closed-form expression of the geometric Rényi channel divergence in terms of Choi operators (see \eqref{eq:geometric-channel-choi}); Eq.~\eqref{pf:nonasymptotic-barycentric-6} follows from the fact that $J_\ch{T}\leq d_A\1$; Eq.~\eqref{pf:nonasymptotic-barycentric-7} follows from \eqref{pf:nonasymptotic-barycentric-2}.  Then for every positive integer $n$ and all $\alpha\in(1,1+\frac{\ln3}{4\ln\g{\Upsilon}_{\max}(\ch{N}_{[r]})}]$, it follows from Proposition~\ref{prop:nonasymptotic-channel} that
\begin{align}
	&-\frac{1}{n}\ln P_\abb{err}\fleft(n;\en{N}\fright) \notag\\
	&\leq\sup_{s_{[r]}\in\s{P}_r}\inf_{\ch{T}\in\s{C}_{A\to B}}\sum_{x\in[r]}s_x\g{D}_\alpha\fleft(\ch{T}\middle\|\ch{N}_x\fright)+\frac{\alpha}{n\left(\alpha-1\right)}\ln\left(\frac{1}{p_{\min}}\right) \label{pf:nonasymptotic-barycentric-8}\\
	&=\inf_{\ch{T}\in\s{C}_{A\to B}}\sup_{s_{[r]}\in\s{P}_r}\sum_{x\in[r]}s_x\g{D}_\alpha\fleft(\ch{T}\middle\|\ch{N}_x\fright)+\frac{\alpha}{n\left(\alpha-1\right)}\ln\left(\frac{1}{p_{\min}}\right) \label{pf:nonasymptotic-barycentric-9}\\
	&=\inf_{\ch{T}\in\s{C}_{A\to B}'}\sup_{s_{[r]}\in\s{P}_r}\sum_{x\in[r]}s_x\g{D}_\alpha\fleft(\ch{T}\middle\|\ch{N}_x\fright)+\frac{\alpha}{n\left(\alpha-1\right)}\ln\left(\frac{1}{p_{\min}}\right) \label{pf:nonasymptotic-barycentric-10}\\
	&\leq\inf_{\ch{T}\in\s{C}_{A\to B}'}\sup_{s_{[r]}\in\s{P}_r}\sum_{x\in[r]}s_x\left(\g{D}\fleft(\ch{T}\middle\|\ch{N}_x\fright)\vphantom{4\left(\alpha-1\right)\left(\ln\g{\Upsilon}\fleft(\ch{T}\middle\|\ch{N}_x\fright)\right)^2}\right. \notag\\
	&\qquad\left.\vphantom{}+4\left(\alpha-1\right)\left(\ln\g{\Upsilon}\fleft(\ch{T}\middle\|\ch{N}_x\fright)\right)^2\right)+\frac{\alpha}{n\left(\alpha-1\right)}\ln\left(\frac{1}{p_{\min}}\right) \label{pf:nonasymptotic-barycentric-11}\\
	&\leq\inf_{\ch{T}\in\s{C}_{A\to B}'}\sup_{s_{[r]}\in\s{P}_r}\sum_{x\in[r]}s_x\g{D}\fleft(\ch{T}\middle\|\ch{N}_x\fright) \notag\\
	&\qquad+4\left(\alpha-1\right)\left(\ln\g{\Upsilon}_{\max}\fleft(\ch{N}_{[r]}\fright)\right)^2+\frac{\alpha}{n\left(\alpha-1\right)}\ln\left(\frac{1}{p_{\min}}\right) \label{pf:nonasymptotic-barycentric-12}\\
	&=\inf_{\ch{T}\in\s{C}_{A\to B}}\sup_{s_{[r]}\in\s{P}_r}\sum_{x\in[r]}s_x\g{D}\fleft(\ch{T}\middle\|\ch{N}_x\fright) \notag\\
	&\qquad+4\left(\alpha-1\right)\left(\ln\g{\Upsilon}_{\max}\fleft(\ch{N}_{[r]}\fright)\right)^2+\frac{\alpha}{n\left(\alpha-1\right)}\ln\left(\frac{1}{p_{\min}}\right) \\
	&=R^\g{D}\fleft(\ch{N}_{[r]}\fright)+4\left(\alpha-1\right)\left(\ln\g{\Upsilon}_{\max}\fleft(\ch{N}_{[r]}\fright)\right)^2 \notag\\
	&\qquad+\frac{\alpha}{n\left(\alpha-1\right)}\ln\left(\frac{1}{p_{\min}}\right), \label{pf:nonasymptotic-barycentric-13}
\end{align}
where in \eqref{pf:nonasymptotic-barycentric-10}--\eqref{pf:nonasymptotic-barycentric-12} we denote $\s{C}_{A\to B}'\equiv\{\ch{T}\in\s{C}_{A\to B}\colon J_\ch{T}^0\leq\bigwedge_{x\in[r]}J_{\ch{N}_x}^0\}$.  Here \eqref{pf:nonasymptotic-barycentric-9} follows from applying Lemma~\ref{lem:radius-channel-alternative} to the left geometric Rényi channel radius and \eqref{eq:radius-channel-2}; Eq.~\eqref{pf:nonasymptotic-barycentric-11} follows from Lemma~\ref{lem:geometric} and \eqref{pf:nonasymptotic-barycentric-7}; Eq.~\eqref{pf:nonasymptotic-barycentric-12} follows from \eqref{pf:nonasymptotic-barycentric-4}; Eq.~\eqref{pf:nonasymptotic-barycentric-13} follows from \eqref{eq:radius-channel-2}.  Inserting $\alpha\coloneq1+\frac{\ln3}{4\ln\g{\Upsilon}_{\max}(\ch{N}_{[r]})\sqrt{n}}\in(1,1+\frac{\ln3}{4\ln\g{\Upsilon}_{\max}(\ch{N}_{[r]})}]$ to \eqref{pf:nonasymptotic-barycentric-8}, we obtain that
\begin{align}
	&-\frac{1}{n}\ln P_\abb{err}\fleft(n;\en{N}\fright) \notag\\
	&\leq R^\g{D}\fleft(\ch{N}_{[r]}\fright)+4\left(\alpha-1\right)\left(\ln\g{\Upsilon}_{\max}\fleft(\ch{N}_{[r]}\fright)\right)^2 \notag\\
	&\qquad+\frac{\alpha}{n\left(\alpha-1\right)}\ln\left(\frac{1}{p_{\min}}\right) \\
	&=R^\g{D}\fleft(\ch{N}_{[r]}\fright)+\frac{1}{\sqrt{n}}\left(\ln3\ln\g{\Upsilon}_{\max}\fleft(\ch{N}_{[r]}\fright)\vphantom{\left(\frac{4\ln\g{\Upsilon}_{\max}\fleft(\ch{N}_{[r]}\fright)}{\ln3}+\frac{1}{\sqrt{n}}\right)\ln\left(\frac{1}{p_{\min}}\right)}\right. \notag\\
	&\qquad\left.\vphantom{}+\left(\frac{4\ln\g{\Upsilon}_{\max}\fleft(\ch{N}_{[r]}\fright)}{\ln3}+\frac{1}{\sqrt{n}}\right)\ln\left(\frac{1}{p_{\min}}\right)\right) \\
	&\leq R^\g{D}\fleft(\ch{N}_{[r]}\fright)+\frac{1}{\sqrt{n}}\ln\g{\Upsilon}_{\max}\fleft(\ch{N}_{[r]}\fright)\left(\ln3+\frac{5}{\ln3}\ln\left(\frac{1}{p_{\min}}\right)\right). \label{pf:nonasymptotic-barycentric-14}
\end{align}
Here \eqref{pf:nonasymptotic-barycentric-14} follows from the fact that $\g{\Upsilon}_{\max}(\ch{N}_{[r]})\geq3$ due to \eqref{pf:nonasymptotic-barycentric-7} and \eqref{pf:geometric-2}.
\end{proof}

\section{Exact error exponent of state exclusion for multiple pure states}
\label{app:pure}

\begin{proposition}[Exact error exponent for pure states]
\label{prop:pure}
Let $\en{E}\equiv(p_{[r]},\rho_{[r]})$ be an ensemble of states with $p_{[r]}\in\itr(\s{P}_r)$ and $\rho_{[r]}\in\s{D}_A^{[r]}$.  If there exists $\{x_1,x_2,x_3\}\subseteq[r]$ such that $\rho_{x_1}$, $\rho_{x_2}$, and $\rho_{x_3}$ are three distinct pure states, then there exists a positive integer $n'$ such that, for every positive integer $n\geq n'$,
\begin{align}
	P_\abb{err}\fleft(\en{E}^n\fright)&=0,
\end{align}
and consequently,
\begin{align}
	\lim_{n\to\infty}-\frac{1}{n}\ln P_\abb{err}\fleft(\en{E}^n\fright)&=\infty.
\end{align}
\end{proposition}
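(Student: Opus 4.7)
The plan is to reduce to three hypotheses and exploit exponential decay of pairwise overlaps to construct a zero-error antidistinguishing POVM. First, I would reduce to the three-state case: given indices $x_1, x_2, x_3 \in [r]$ with $\rho_{x_j} = \pr{\psi_j}$ three distinct pure states, if some three-outcome POVM $(\tilde{\Lambda}_1, \tilde{\Lambda}_2, \tilde{\Lambda}_3) \in \s{M}_{A^n, 3}$ satisfies $\tr[\tilde{\Lambda}_j \rho_{x_j}^{\otimes n}] = 0$ for $j = 1, 2, 3$, then padding to an $r$-outcome POVM by $\Lambda_{x_j} := \tilde{\Lambda}_j$ and $\Lambda_x := 0$ for $x \in [r] \setminus \{x_1, x_2, x_3\}$ makes every error term vanish, yielding $P_\abb{err}(\en{E}^n) = 0$. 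It therefore suffices to antidistinguish $\ket{\Psi_j} := \ket{\psi_j}^{\otimes n}$ for $j = 1, 2, 3$ once $n$ is large. Since the three pure states are distinct, $|c_{ij}| := |\ip{\psi_i}{\psi_j}| < 1$ for $i \neq j$, and the $n$-fold overlaps $|\ip{\Psi_i}{\Psi_j}| = |c_{ij}|^n$ decay exponentially. The Gram determinant of $\{\ket{\Psi_j}\}_{j=1}^3$ then tends to $1$, so for all $n$ sufficiently large these vectors are linearly independent and span a three-dimensional subspace $V_n \subseteq \spa{H}_A^{\otimes n}$.

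Next, I would construct the POVM by perturbing the orthogonal limit. When the $\ket{\Psi_j}$'s are exactly orthonormal, the choice $\tilde{\Lambda}_j := \frac{1}{2}(\pr{\Psi_k} + \pr{\Psi_\ell})$ (for $\{j, k, \ell\} = \{1, 2, 3\}$) gives $\tilde{\Lambda}_j \geq 0$, $\tilde{\Lambda}_j \ket{\Psi_j} = 0$, and $\sum_j \tilde{\Lambda}_j = I_{V_n}$; a trivial extension on $V_n^\perp$ (e.g., adding $\Pi_{V_n^\perp}$ to $\tilde{\Lambda}_1$) then produces an antidistinguishing POVM on the full space. Crucially, this orthogonal-limit solution is strictly feasible: each $\tilde{\Lambda}_j$ is strictly positive on its two-dimensional support inside $V_n$, and the affine constraints $\tilde{\Lambda}_j \ket{\Psi_j} = 0$ together with $\sum_j \tilde{\Lambda}_j = I$ cut out a polytope whose relative interior meets the PSD cone. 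A standard continuity argument (e.g., via the implicit function theorem or a compactness argument on the associated SDP feasibility region) then guarantees that the system remains feasible under small perturbations of the $\ket{\Psi_j}$. Since $\max_{i \neq j} |c_{ij}|^n \to 0$, a valid antidistinguishing POVM exists for all $n$ larger than some $n'$, giving $P_\abb{err}(\en{E}^n) = 0$ and, with the convention $-\ln 0 = +\infty$, the asserted limit.

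The main obstacle is executing the perturbation step rigorously. A concrete workaround is to parameterize candidate POVMs inside $V_n$ using the reciprocal basis $\{\ket{\Psi_j^*}\}$ of $\{\ket{\Psi_j}\}$, defined by $\ip{\Psi_j^*}{\Psi_i} = \delta_{ji}$: any operator of the form $\tilde{\Lambda}_j = \sum_{i, i' \neq j} A^{(j)}_{ii'} \op{\Psi_i^*}{\Psi_{i'}^*}$ with a $2 \times 2$ PSD matrix $A^{(j)}$ automatically satisfies $\tilde{\Lambda}_j \ket{\Psi_j} = 0$, while the requirement $\sum_j \tilde{\Lambda}_j = I_{V_n}$ becomes a finite linear system in the $A^{(j)}$'s whose coefficients depend continuously on the inverse Gram matrix of $\{\ket{\Psi_j}\}$. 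Since a PSD solution exists at the orthogonal limit (from the previous paragraph), solutions remain PSD under sufficiently small perturbations of the overlaps, and the required threshold $n'$ can in principle be extracted from a quantitative perturbation bound. As an alternative route entirely bypassing the perturbation step, one may directly invoke an antidistinguishability criterion for three pure states from the literature, such as the Caves--Fuchs--Schack-type inequality or the characterizations of Refs.~\cite{caves2002ConditionsCompatibilityQuantumstate, heinosaari2018AntidistinguishabilityPureQuantum, johnston2023TightBoundsAntidistinguishability}, whose sufficient conditions on $\max_{i \neq j}|c_{ij}|$ are trivially met for all $n$ large enough.
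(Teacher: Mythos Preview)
Your proposal is correct. Your primary perturbation-based construction is a genuinely different route from the paper, which instead takes exactly the alternative you sketch at the end: it directly invokes the Caves--Fuchs--Schack criterion that three pure states with pairwise squared overlaps $a,b,c$ are perfectly antidistinguishable whenever $a+b+c<1$ and $(a+b+c-1)^2\geq 4abc$, and then verifies both inequalities for the $n$-fold states once $m^n\leq\tfrac14$ with $m=\max\{a,b,c\}<1$. Your perturbation approach is conceptually self-contained and does not rely on an external characterization, but, as you acknowledge, making the continuity step fully rigorous (tracking PSD feasibility of the linear system in the $A^{(j)}$'s under perturbation of the Gram data) requires additional bookkeeping; the paper's route is shorter, entirely explicit, and yields a concrete threshold $n'$ with no soft argument. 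Your reduction from $r$ hypotheses to three via padding the POVM with zeros matches the paper exactly.
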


\begin{proof}
Let $\rho_{x_1}\equiv\op{\psi_{x_1}}{\psi_{x_1}}$, $\rho_{x_2}\equiv\op{\psi_{x_2}}{\psi_{x_2}}$, and $\rho_{x_3}\equiv\op{\psi_{x_3}}{\psi_{x_3}}$ be three distinct pure states such that $\{x_1,x_2,x_3\}\subseteq[r]$.  Denote $a\equiv|\ip{\psi_{x_1}}{\psi_{x_2}}|^2$, $b\equiv|\ip{\psi_{x_2}}{\psi_{x_3}}|^2$, and $c\equiv|\ip{\psi_{x_3}}{\psi_{x_1}}|^2$.  It follows from Ref.~\cite[Section~V.B]{caves2002ConditionsCompatibilityQuantumstate} (also see Ref.~\cite[Eq.~(9)]{barrett2014NoPsepistemicModel}) that the tuple of three pure states $(\rho_{x_1},\rho_{x_2},\rho_{x_3})$ is perfectly antidistinguishable by an orthogonal rank-$1$ projective measurement if and only if
\begin{align}
	\label{pf:pure-1}
	&\begin{cases}
		a+b+c<1, \\
		\left(a+b+c-1\right)^2\geq4abc.
	\end{cases}
\end{align}
Define $m\coloneq\max\{a,b,c\}<1$.  Let $n'$ be a positive integer such that $m^{n'}\leq\frac{1}{4}$.  It follows that
\begin{align}
	a^{n'}+b^{n'}+c^{n'}&\leq 3m^{n'} \\
	&\leq\frac{3}{4} \\
	&<1, \label{pf:pure-2}
\end{align}
and
\begin{align}
	\left(a^{n'}+b^{n'}+c^{n'}-1\right)^2&\geq\left(1-3m^{n'}\right)^2 \\
	&\geq\frac{1}{16} \\
	&\geq4\left(m^{n'}\right)^3 \\
	&\geq4a^{n'}b^{n'}c^{n'}. \label{pf:pure-3}
\end{align}
According to the criterion in \eqref{pf:pure-1}, it follows from \eqref{pf:pure-2} and \eqref{pf:pure-3} that the $n'$-fold tuple of pure states $(\rho_{x_1}^{\otimes n'},\rho_{x_2}^{\otimes n'},\rho_{x_3}^{\otimes n'})$ is perfectly antidistinguishable.  Furthermore, the $n'$-fold tuple of the original $r$ states $\rho_{[r]}^{\otimes n'}$ is also perfectly antidistinguishable, since any exclusion strategy that incurs no error on $(\rho_{x_1}^{\otimes n'},\rho_{x_2}^{\otimes n'},\rho_{x_3}^{\otimes n'})$ also incurs no error on $\rho_{[r]}^{\otimes n'}$.  Therefore, there exists a positive integer $n'$ such that $P_\abb{err}(\en{E}^n)=0$ for all $n\geq n'$, leading to the desired statement.
\end{proof}

\begin{remark}[On the infinite error exponent for pure states]
\label{rem:pure}
While the error exponent of state exclusion for two distinct pure states is finite unless the states are orthogonal~\cite{audenaert2007DiscriminatingStatesQuantum,nussbaum2009ChernoffLowerBound}, the error exponent for three or more distinct pure states is always infinite, regardless of their orthogonality.  In addition, the error exponent being infinite does not mean that the tuple of states is perfectly antidistinguishable.  As implied by the proof of Proposition~\ref{prop:pure}, it is possible that the error probability reaches strictly zero in the nonasymptotic regime while having a nonzero value in the one-shot regime.  This poses a remarkable distinction between state exclusion and state discrimination.  For the latter, regardless of the states to be distinguished being pure or mixed, the error exponent is infinite if and only if the tuple of states is perfectly distinguishable if and only if the states of concern are orthogonal to each other.
\end{remark}

\section{Upper bound on the error exponent of channel exclusion under indefinite-causal-order strategies}
\label{app:indefinite}

Let $\en{N}\equiv(p_{[r]},\ch{N}_{[r]})$ be an ensemble of channels with $p_{[r]}\in\s{P}_r$ and $\ch{N}_{[r]}\in\s{C}_{A\to B}^{[r]}$.  Without assuming a definite causal order~\cite{chiribella2013QuantumComputationsDefinite,bavaresco2021StrictHierarchyParallel}, we consider a general strategy for channel exclusion with $n$ slots, represented by a tuple $\ch{I}^{(n)}\equiv(\Theta^{(n)},\Lambda_{[r]})$, where 
\begin{align}
	\Theta^{(n)}&\colon\underbrace{\s{C}_{A\to B}\times\dots\times\s{C}_{A\to B}}_\textnormal{Cartesian product of $n$ terms}\to\s{D}_C
\end{align}
is a transformation that maps $n$ channels to a state and $\Lambda_{[r]}\in\s{M}_{C,r}$ is a POVM.  We make no assumptions about the transformation $\Theta^{(n)}$ other than convex linearity in each of its $n$ arguments in the sense that
\begin{align}
	&\Theta^{(n)}\fleft(\ch{N}_1,\dots,\ch{N}_{i-1},p\ch{N}_i+(1-p)\ch{N}_i',\ch{N}_{i+1},\dots,\ch{N}_n\fright) \notag\\
	&=p\Theta^{(n)}\fleft(\ch{N}_1,\dots,\ch{N}_{i-1},\ch{N}_i,\ch{N}_{i+1},\dots,\ch{N}_n\fright) \notag\\
	&\qquad+(1-p)\Theta^{(n)}\fleft(\ch{N}_1,\dots,\ch{N}_{i-1},\ch{N}_i',\ch{N}_{i+1},\dots,\ch{N}_n\fright) \notag\\
	&\qquad\forall\ch{N}_{[n]}\in\s{C}_{A\to B}^{[n]},\;\ch{N}_i'\in\s{C}_{A\to B},\;p\in[0,1],
\end{align}
which is a minimal requirement for the strategy $\ch{I}^{(n)}$ to be physically legitimate.  We introduce the following shorthand for the post-transformation state:
\begin{align}
	\rho_x^{(n)}&\equiv\Theta^{(n)}\fleft(\ch{N}_x,\dots,\ch{N}_x\fright)\quad\forall x\in[r]. \label{pf:nonasymptotic-channel-indefinite-1}
\end{align}
The \emph{(nonasymptotic) error probability} of channel exclusion under indefinite-causal-order strategies with $n$ slots for the ensemble $\en{N}$ is thus bounded from below by
\begin{align}
	P_\abb{err}^\abb{ICO}\fleft(n;\en{N}\fright)\geq \underline{P}_\abb{err}\fleft(n;\en{N}\fright)&\coloneq\inf_{\ch{I}^{(n)}}\sum_{x\in[r]}p_x\tr\fleft[\Lambda_x\rho_x^{(n)}\fright] \\
	&=\inf_{\Theta^{(n)}}P_\abb{err}\fleft(\en{E}^{(n)}\fright), \label{eq:error-probability-channel-indefinite}
\end{align}
where $\en{E}^{(n)}\equiv(p_{[r]},\rho_{[r]}^{(n)})$ with $\rho_{[r]}^{(n)}\equiv(\rho_1^{(n)},\rho_2^{(n)},\dots,\rho_r^{(n)})$.  The \emph{(asymptotic) error exponent} of channel exclusion under indefinite-causal-order strategies for the ensemble $\en{N}$ is bounded from above by
\begin{align}
	\liminf_{n\to\infty}-\frac{1}{n}\ln P_\abb{err}^\abb{ICO}\fleft(n;\en{N}\fright)&\leq\liminf_{n\to\infty}-\frac{1}{n}\ln\underline{P}_\abb{err}\fleft(n;\en{N}\fright) \\
	&\leq\limsup_{n\to\infty}-\frac{1}{n}\ln\underline{P}_\abb{err}\fleft(n;\en{N}\fright). \label{eq:error-exponent-indefinite}
\end{align}

Let $\ch{N}\in\s{C}_{A\to B}$ be a channel, and let $\ch{M}\in\s{CP}_{A\to B}$ be a completely positive map.  The \emph{max-channel divergence} is defined as
\begin{align}
	D_{\max}\fleft(\ch{N}\middle\|\ch{M}\fright)&\coloneq\sup_{\rho\in\s{D}_{RA}}D_{\max}\fleft(\ch{N}_{A\to B}\fleft[\rho_{RA}\fright]\middle\|\ch{M}_{A\to B}\fleft[\rho_{RA}\fright]\fright) \\
	&=D_{\max}\fleft(J_\ch{N}\middle\|J_\ch{M}\fright), \label{eq:max-channel-choi}
\end{align}
where the equality between the max-channel divergence and the max-divergence of the channels' Choi operators in \eqref{eq:max-channel-choi} was established in Refs.~\cite{diaz2018UsingReusingCoherence,wilde2020AmortizedChannelDivergence}.  For a convex-linear transformation $\Theta^{(n)}\colon\s{C}_{A\to B}\times\dots\times\s{C}_{A\to B}\to\s{D}_C$ with $n$ arguments, as considered above in the general strategy for channel exclusion, the max-channel divergence has the following property~\cite[Theorem~13]{regula2021FundamentalLimitationsDistillation}:
\begin{align}
	&D_{\max}\fleft(\Theta^{(n)}\fleft(\ch{N}_1,\ch{N}_2,\dots,\ch{N}_n\fright)\middle\|\Theta^{(n)}\fleft(\ch{M}_1,\ch{M}_2,\dots,\ch{M}_n\fright)\fright) \notag\\
	&\leq\sum_{i\in[n]}D_{\max}\fleft(\ch{N}_i\middle\|\ch{M}_i\fright)\quad\forall\ch{N}_{[n]},\ch{M}_{[n]}\in\s{C}_{A\to B}^{[n]}. \label{eq:max-chain}
\end{align}

\begin{proposition}[Converse bound on the nonasymptotic error probability for channels under indefinite-causal-order strategies]
\label{prop:nonasymptotic-channel-indefinite}
Let $\en{N}\equiv(p_{[r]},\ch{N}_{[r]})$ be an ensemble of channels with $p_{[r]}\in\itr(\s{P}_r)$ and $\ch{N}_{[r]}\in\s{C}_{A\to B}^{[r]}$.  Then for every positive integer $n$,
\begin{align}
	\label{eq:nonasymptotic-channel-indefinite}
	&-\frac{1}{n}\ln\underline{P}_\abb{err}\fleft(n;\en{N}\fright) \notag\\
	&\leq\sup_{s_{[r]}\in\s{P}_r}\inf_{\ch{T}\in\s{C}_{A\to B}}\sum_{x\in[r]}s_xD_{\max}\fleft(\ch{T}\middle\|\ch{N}_x\fright)+\frac{1}{n}\ln\left(\frac{1}{p_{\min}}\right),
\end{align}
where $p_{\min}\equiv\min_{x\in[r]}p_x$.
\end{proposition}

\begin{proof}
Let $\ch{I}^{(n)}\equiv(\Theta^{(n)},\Lambda_{[r]})$ be a general strategy with $n$ slots, where $\Theta^{(n)}\colon\s{C}_{A\to B}\times\dots\times\s{C}_{A\to B}\to\s{D}_C$ is a convex-linear transformation and $\Lambda_{[r]}\in\s{M}_{C,r}$ is a POVM, and let $\ch{T}\in\s{C}_{A\to B}$ be a channel.  Applying Proposition~\ref{prop:oneshot-state} to the ensemble of states $\en{E}^{(n)}\equiv(p_{[r]},\rho_{[r]}^{(n)})$ and taking the supremum over $\alpha\in(1,\infty)$, we have that
\begin{align}
	&-\frac{1}{n}\ln P_\abb{err}\fleft(\en{E}^{(n)}\fright) \notag\\
	&\leq\sup_{\alpha\in(1,\infty)}\sup_{s_{[r]}\in\s{P}_r}\inf_{\tau\in\aff\fleft(\s{D}_C\fright)}\frac{1}{n}\sum_{x\in[r]}s_x\sw{D}_\alpha\fleft(\tau\middle\|\rho_x^{(n)}\fright) \notag\\
	&\qquad+\frac{1}{n}\ln\left(\frac{1}{p_{\min}}\right) \\
	&\leq\sup_{\alpha\in(1,\infty)}\sup_{s_{[r]}\in\s{P}_r}\inf_{\tau\in\s{D}_C}\frac{1}{n}\sum_{x\in[r]}s_x\sw{D}_\alpha\fleft(\tau\middle\|\rho_x^{(n)}\fright)+\frac{1}{n}\ln\left(\frac{1}{p_{\min}}\right) \\
	&=\sup_{s_{[r]}\in\s{P}_r}\inf_{\tau\in\s{D}_C}\sup_{\alpha\in(1,\infty)}\frac{1}{n}\sum_{x\in[r]}s_x\sw{D}_\alpha\fleft(\tau\middle\|\rho_x^{(n)}\fright)+\frac{1}{n}\ln\left(\frac{1}{p_{\min}}\right) \label{pf:nonasymptotic-channel-indefinite-2}\\
	&=\sup_{s_{[r]}\in\s{P}_r}\inf_{\tau\in\s{D}_C}\frac{1}{n}\sum_{x\in[r]}s_xD_{\max}\fleft(\tau\middle\|\rho_x^{(n)}\fright)+\frac{1}{n}\ln\left(\frac{1}{p_{\min}}\right) \label{pf:nonasymptotic-channel-indefinite-3}\\
	&=\sup_{s_{[r]}\in\s{P}_r}\inf_{\ch{T}\in\s{C}_{A\to B}}\frac{1}{n}\sum_{x\in[r]}s_x \notag\\
	&\qquad D_{\max}\fleft(\Theta^{(n)}\fleft(\ch{T},\dots,\ch{T}\fright)\middle\|\Theta^{(n)}\fleft(\ch{N}_x,\dots,\ch{N}_x\fright)\fright) \notag\\
	&\qquad+\frac{1}{n}\ln\left(\frac{1}{p_{\min}}\right) \label{pf:nonasymptotic-channel-indefinite-4}\\
	&\leq\sup_{s_{[r]}\in\s{P}_r}\inf_{\ch{T}\in\s{C}_{A\to B}}\sum_{x\in[r]}s_xD_{\max}\fleft(\ch{T}\middle\|\ch{N}_x\fright)+\frac{1}{n}\ln\left(\frac{1}{p_{\min}}\right). \label{pf:nonasymptotic-channel-indefinite-5}
\end{align}
Here \eqref{pf:nonasymptotic-channel-indefinite-2} follows from applying Lemma~\ref{lem:radius-alternative} to the left sandwiched Rényi radius and \eqref{eq:radius-2}; Eq.~\eqref{pf:nonasymptotic-channel-indefinite-3} uses the nondecreasing monotonicity of the sandwiched Rényi divergence in $\alpha$ and its limit as $\alpha\to\infty$, which is given by the max-divergence (see \eqref{eq:max-extended-limit}); Eq.~\eqref{pf:nonasymptotic-channel-indefinite-4} follows from \eqref{pf:nonasymptotic-channel-indefinite-1} and the fact that $\Theta^{(n)}(\ch{T},\dots,\ch{T})\in\aff(\s{D}_C)$ for all $\ch{T}\in\s{C}_{A\to B}$; Eq.~\eqref{pf:nonasymptotic-channel-indefinite-5} follows from \eqref{eq:max-chain} (also see Ref.~\cite[Eq.~(88)]{regula2024PostselectedQuantumHypothesis}).  Since \eqref{pf:nonasymptotic-channel-indefinite-5} holds for every general strategy $\ch{I}^{(n)}$, it follows from \eqref{eq:error-probability-channel-indefinite} that
\begin{align}
	&-\frac{1}{n}\ln\underline{P}_\abb{err}\fleft(n;\en{N}\fright) \notag\\
	&=\sup_{\Theta^{(n)}}-\frac{1}{n}\ln P_\abb{err}\fleft(\en{E}^{(n)}\fright) \\
	&\leq\sup_{s_{[r]}\in\s{P}_r}\inf_{\ch{T}\in\s{C}_{A\to B}}\sum_{x\in[r]}s_xD_{\max}\fleft(\ch{T}\middle\|\ch{N}_x\fright)+\frac{1}{n}\ln\left(\frac{1}{p_{\min}}\right),
\end{align}
thus concluding the proof.
\end{proof}

\begin{theorem}[Upper bound on the error exponent for channels under indefinite-causal-order strategies]
\label{thm:asymptotic-channel-indefinite}
\begin{align}
	&\limsup_{n\to\infty}-\frac{1}{n}\ln\underline{P}_\abb{err}\fleft(n;\en{N}\fright) \notag\\
	&\leq R^{D_{\max}}\fleft(\ch{N}_{[r]}\fright) \label{eq:asymptotic-channel-indefinite-1}\\
	&=\sup_{s_{[r]}\in\s{P}_r}\inf_{\ch{T}\in\s{C}_{A\to B}}\sum_{x\in[r]}s_xD_{\max}\fleft(\ch{T}\middle\|\ch{N}_x\fright). \label{eq:asymptotic-channel-indefinite-2}
\end{align}
\end{theorem}

\begin{proof}
Taking the limit superior as $n\to\infty$ on both sides of \eqref{eq:nonasymptotic-channel-indefinite} leads to the inequality between the left-hand side of \eqref{eq:asymptotic-channel-indefinite-1} and the right-hand side of \eqref{eq:asymptotic-channel-indefinite-2}.  Applying Lemma~\ref{lem:radius-alternative} to the left max-channel radius leads to the equality between the right-hand side of \eqref{eq:asymptotic-channel-indefinite-1} and the right-hand side of \eqref{eq:asymptotic-channel-indefinite-2}.
\end{proof}

\begin{remark}[Semidefinite representation of the channel $R^{D_{\max}}$]
\label{rem:SDP-indefinite}
It follows from the definition of the left max-channel radius (see \eqref{eq:radius-channel-1}) that
\begin{align}
	&R^{D_{\max}}\fleft(\ch{N}_{[r]}\fright) \notag\\
	&=\inf_{\ch{T}\in\s{C}_{A\to B}}\max_{x\in[r]}D_{\max}\fleft(\ch{T}\middle\|\ch{N}_x\fright) \\
	&=\inf_{\ch{T}\in\s{C}_{A\to B}}\inf_{\lambda\in(0,\infty)}\left\{\ln\lambda\colon\ln\lambda\geq D_{\max}\fleft(J_\ch{T}\middle\|J_{\ch{N}_x}\fright)\;\forall x\in[r]\right\} \\
	&=\inf_{\substack{J_\ch{T}\in\s{PSD}_{AB}, \\
		\lambda\in(0,\infty)
	}}\left\{\ln\lambda\colon J_\ch{T}\leq\lambda J_{\ch{N}_x},\;\tr_B\fleft[J_{\ch{T},AB}\fright]=\1_A\right\}. \label{pf:SDP-indefinite-1}
\end{align}
Here \eqref{pf:SDP-indefinite-1} follows from the equality between max-channel divergence and the max-divergence of the channel's Choi operators (see \eqref{eq:max-channel-choi}) and the definition of the max-divergence (see \eqref{eq:max-extended}).  This shows that the converse bound in Proposition~\ref{prop:nonasymptotic-channel-indefinite} and the upper bound in Theorem~\ref{thm:asymptotic-channel-indefinite} are both efficiently computable via an SDP.
\end{remark}


\bibliographystyle{IEEEtran}
\bibliography{Library}

\end{document}